\newtheorem{fact}[theorem]{Fact}
 \def\cE{c_{\text{E}}}
 \def\gamE{\gamma_{\text{E}}}
 \def\calP{\mathcal{P}}
 \def\calR{\mathcal{R}}
 \let\phi=\varphi
 \def\calT{\mathcal{T}}
\let\epsilon=\varepsilon
\def\calH{\mathcal{H}}
\newcommand{\homs}[2]{\mbox{\ensuremath{\mathsf{Hom}(#1 \to #2)}}}
\newcommand{\embs}[2]{\mbox{\ensuremath{\mathsf{Emb}(#1 \to #2)}}}
\newcommand{\subs}[2]{\mbox{\ensuremath{\mathsf{Sub}(#1 \to #2)}}}
\newcommand{\colsubs}[2]{\mbox{\ensuremath{\mathsf{ColSub}(#1 \to #2)}}}
\newcommand{\colsubsval}[2]{\mbox{\ensuremath{\mathsf{ColSub}_{\mathsf{val}}(#1 \to #2)}}}
\newcommand{\auts}[1]{\ensuremath{\mathsf{Aut}(#1)}}
\newcommand{\fptred}{\ensuremath{\leq^{\mathrm{fpt}}_{\mathrm{T}}}}
\newcommand{\cphomsprob}{\text{\sc{cp-Hom}}}
\newcommand{\subsprob}{\text{\sc{Sub}}}
\newcommand{\W}[1]{\ensuremath{\mathsf{W[#1]}}}
\newcommand{\leaves}{\mathsf{L}}
\newcommand{\nonleaves}{\mathsf{NL}}
\newcommand{\degl}{\mathsf{deg}_{\leaves}}
\newcommand{\degla}[1]{\mathsf{deg}_{\leaves,#1}}
\newcommand{\degnl}{\mathsf{deg}_{\nonleaves}}
\newcommand{\tmax}{{t_\mathsf{max}}}
\newcommand{\fork}{\mathrm{F}}
\newcommand{\hal}{\mathrm{C}}
\newcommand{\starnum}{\mathrm{S}}
\newcommand{\colT}{{T_\mathsf{col}}}
\newcommand{\colPhi}{{\Phi_\mathsf{col}}}
\newcommand{\colTprime}{{T'_\mathsf{col}}}
\newcommand{\colS}{{\mathcal{S}_\mathsf{col}}}
\newcommand{\colSprime}{{\mathcal{S}'_\mathsf{col}}}
\newcommand{\colR}{\mathcal{R}_\mathsf{col}}
\newcommand{\colP}{\mathcal{P}_\mathsf{col}}
\newcommand{\redc}[2][red,fill=red]{\tikz[baseline=-0.5ex]\draw[#1,radius=#2] (0,0) circle ;}
\newcommand{\greenc}[2][green!80!blue,fill=green!80!blue]{\tikz[baseline=-0.5ex]\draw[#1,radius=#2] (0,0) circle ;}
\newcommand{\bluec}[2][blue,fill=blue]{\tikz[baseline=-0.5ex]\draw[#1,radius=#2] (0,0) circle ;}
\newcommand{\yellowc}[2][yellow!50!orange,fill=yellow!50!orange]{\tikz[baseline=-0.5ex]\draw[#1,radius=#2] (0,0) circle ;}
\newcommand{\blackc}[2][black,fill=black]{\tikz[baseline=-0.5ex]\draw[#1,radius=#2] (0,0) circle ;}
\newcommand{\brownc}[2][cyan,fill=cyan]{\tikz[baseline=-0.5ex]\draw[#1,radius=#2] (0,0) circle ;}
\def\fracture#1#2{\ensuremath{#1\raisebox{.2ex}{\rotatebox[origin=c]{-15}{$\sharp$}}#2}}
\title{
Parameterised and Fine-grained Subgraph Counting, modulo $2$\thanks{For the purpose of Open Access, the
authors have applied a CC BY public copyright licence to any Author Accepted Manuscript version arising
from this submission. All data is provided in full in the results section of this paper.}}
\titlerunning{Parameterised and Fine-grained Subgraph Counting, modulo $2$} 
\author{Leslie Ann Goldberg}{Department of Computer Science, University of Oxford}{}{}{}
\author{Marc Roth}{Department of Computer Science, University of Oxford}{}{}{}
\authorrunning{L.\,A. Goldberg and M. Roth} 
\keywords{modular counting, parameterised complexity, fine-grained complexity, subgraph counting} 
\begin{document}

\maketitle

\begin{abstract}
Given a class of graphs $\calH$, the problem $\oplus\subsprob(\calH)$ is defined as follows. The input is  a graph  $H\in \calH$ together with an arbitrary graph $G$.
The problem is to compute, 
modulo $2$, the number of subgraphs of $G$ that are isomorphic to $H$. 
The goal of this research is to determine for which classes  $\calH$ the problem 
$\oplus\subsprob(\calH)$
is fixed-parameter tractable (FPT), i.e., solvable in time $f(|H|)\cdot |G|^{O(1)}$.

Curticapean, Dell, and Husfeldt (ESA 2021) conjectured that $\oplus\subsprob(\calH)$ is FPT if and only if the class of allowed patterns $\calH$ is \emph{matching splittable}, 
which means that for some fixed $B$, every $H \in \calH$ can be turned into a matching (a graph in which every vertex has degree at most~$1$) by removing at most~$B$ vertices. 

Assuming the randomised Exponential Time Hypothesis, we prove their conjecture for (I) all hereditary pattern classes $\calH$, and (II) all tree pattern classes, i.e., all classes $\calH$ 
such that every $H\in \calH$ is a tree. 

We also establish almost tight fine-grained upper and lower bounds for the case of hereditary patterns (I).
\end{abstract}
\section{Introduction}
The last two decades have seen remarkable progress in the classification of subgraph counting problems: Given a small \emph{pattern} graph $H$ and a large \emph{host} graph $G$, how often does $H$ occur as a subgraph if $G$? Since it was discovered that subgraph counts from small patterns reveal global properties of complex networks~\cite{Miloetal02,Miloetal04}, subgraph counting has also found several applications in fields such as biology~\cite{Nogaetat08,Schilleretal15} genetics~\cite{Tranetal13}, phylogeny~\cite{Kuchaievetal10}, and data mining~\cite{Babis17}. Moreover, the theoretical study of subgraph counting and related problems has led to many deep structural insights, establishing both new algorithmic techniques and tight lower bounds under the lenses of fine-grained and parameterised complexity theory~\cite{FlumG04,DalmauJ04,ChenTW08,CurticapeanM14,CurticapeanDM17,Bressan21,BeraGLSS22}.
 
Without any additional restrictions, the subgraph counting problem is infeasible.
The complexity class ~$\#\W{1}$ is the parameterised complexity class analogous to NP (see Section~\ref{sec:prelims} for more detail). Under standard assumptions,
problems that are $\#\W{1}$-hard are not \emph{fixed-parameter tractable} (FPT). The canonical complete problem for~$\#\W{1}$, the problem of counting $k$-cliques, corresponds to the 
special case of the subgraph counting problem where $H$ is a clique of size~$k$.
This problem   cannot be solved in time $f(k)\cdot n^{o(k)}$ for any function $f$ unless the Exponential Time Hypothesis (ETH) fails~\cite{Chenetal05,Chenetal06}. 
Due to this hardness result, the research focus in this area  shifted to the question: Under which restrictions on the patterns $H$ and the hosts $G$ is algorithmic progress possible? More precisely, under which restrictions can the problem be solved in time $f(|H|)\cdot |G|^{O(1)}$, for some computable function $f$? Instances that can be solved within such a run time bound are called \emph{fixed-parameter tractable} (FPT); allowing a potential super-polynomial overhead in the size of the pattern $|H|$ formalises the assumption that $H$ is assumed to be (significantly) smaller than $G$.

If only the patterns are restricted, then the situation is fully understood. Formally, given a class $\calH$ of patterns, the problem $\#\subsprob(\calH)$ asks, given as input a graph $H\in \calH$ and an arbitrary graph $G$, to compute the number of subgraphs of $G$ that are isomorphic to $H$. Following initial work by Flum and Grohe~\cite{FlumG04} and by Curticapean~\cite{Curticapean13}, Curticapean and Marx~\cite{CurticapeanM14} proved that, under standard assumptions, $\#\subsprob(\calH)$ is FPT if and only if $\calH$ has bounded matching number, that is, if there is a positive integer $B$ such that the size of any matching in any graph in $\calH$ is at most $B$. They also proved that all FPT cases are polynomial-time solvable.

In stark contrast, almost nothing is known for the decision version $\subsprob(\calH)$. Here, the task is to correctly decide whether there is a copy of $H\in \calH$ in $G$, rather than to count the copies. It is known that $\subsprob(\calH)$ is FPT whenever $\calH$ has bounded treewidth (see e.g.\ \cite[Chapter 13]{FlumG06}), and it is conjectured that those are all FPT cases. However, resolving this conjecture belongs to the ``most infamous'' open problems in parameterised complexity theory~\cite[Chapter 33.1]{DowneyF13}.

\subsection{Counting Modulo 2}
To interpolate between the fully understood realm of (exact) counting and the barely understood realm of decision, Curticapean, Dell and Husfeldt proposed the study of counting subgraphs, modulo $2$~\cite{CurticapeanDH21}. Formally, they introduced the problem $\oplus\subsprob(\calH)$, which expects as input a graph $H\in \calH$ and an arbitrary graph $G$, and the goal is to compute \emph{modulo 2} the number of subgraphs of $G$ isomorphic to $H$.

The study of counting modulo $2$ received significant attention from the viewpoint of classical, structural, and fine-grained complexity theory. For example, one way to state Toda's Theorem~\cite{Toda91} is $\mathrm{PH}\subseteq \mathrm{P}^{\oplus \mathrm{P}}$, implying that counting satisfying assignments of a CNF, modulo $2$, is at least as hard as the polynomial hierarchy. Another example is the quest to classify the complexity of counting modulo $2$ the homomorphisms to a fixed graph, which was very recently resolved by Bulatov and Kazeminia~\cite{BulatovK22}. There has also been work 
by Abboud, Feller, and Weimann~\cite{AbboudFW20}
on the fine-grained complexity of counting modulo~2 the number of 
triangles in a graph that satisfy certain weight constraints. 

In their work~\cite{CurticapeanDH21}, Curticapean, Dell and Husfeldt proved that the problem of counting $k$-matchings modulo $2$, that is, the problem $\oplus\subsprob(\calH)$ where $\calH$ is the class of all $1$-regular graphs, is fixed-parameter tractable, where the parameter~$k$ is $|H|$. 
Since the exact counting version of this problem is $\#\W{1}$-hard~\cite{Curticapean13},  their result 
provides an example where counting modulo $2$  is \emph{strictly} easier than exact counting (subject to complexity assumptions).
The complexity class $\oplus\W{1}$ can be defined via the complete problem of counting $k$-cliques modulo~$2$.
Crucially, $\oplus\W{1}$-hard problems are not fixed-parameter tractable, unless the randomised ETH (rETH) fails.
Curticapean et al.~\cite{CurticapeanDH21} proved that counting $k$-paths modulo $2$ is 
$\oplus\W{1}$-hard. 
Since \emph{finding} a $k$-path in a graph $G$ is fixed-parameter tractable via colour-coding~\cite{AlonYZ95}, this hardness result 
provides an example where 
counting modulo $2$  is \emph{strictly} harder than decision (subject to complexity assumptions).
Combining those observations, it appears that
counting subgraphs modulo $2$ may lie strictly in between the complexity of decision and the complexity of exact counting.

A \emph{matching} is a graph whose maximum degree is at most~$1$.
The \emph{matching-split number} of a graph $H$ is the minimum size of a set $S\subseteq V(H)$ such that $H\setminus S$ is a matching. A class of graphs $\calH$ is called \emph{matching splittable} if there is a positive integer $B$ such that the matching-split number of any $H\in\calH$ is at most $B$. For example, the class of all matchings   is matching splittable while the class of all cycles is not. 
Curticapean, Dell and Husfeldt extended their 
FTP algorithm for counting $k$-matchings modulo~$2$ to obtain an FPT algorithm for
$\oplus\subsprob(\calH)$  for any matching-splittable class $\calH$. On this basis, 
they   then made the following conjecture.

\begin{conjecture}[\cite{CurticapeanDH21}]\label{conj:subsmod2}
    $\oplus\subsprob(\calH)$ is FPT if and only if $\calH$ is matching splittable.
\end{conjecture}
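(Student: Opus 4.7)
The ``if'' direction of Conjecture~\ref{conj:subsmod2} is already established by Curticapean, Dell and Husfeldt, so the plan is to prove the ``only if'' direction for the two universes targeted in the paper: (I) hereditary pattern classes and (II) tree pattern classes. In both cases the goal is to show that a non-matching-splittable $\calH$ yields an $\oplus\W{1}$-hard problem, which under rETH rules out fixed-parameter tractability.

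The backbone of the hardness proof should be a \emph{complexity monotonicity} argument adapted to $\mathbb{F}_2$. For any fixed $H$, the quantity $\#\subs{H}{G} \bmod 2$ admits an $\mathbb{F}_2$-linear expansion in terms of counts of simpler pattern-counting problems (homomorphism counts from quotients/fractures of $H$, or induced-subgraph counts). If a pattern $H'$ appears in this expansion with an odd coefficient and counting $H'$ modulo $2$ is already $\oplus\W{1}$-hard, then $\oplus\subsprob(\calH)$ inherits the hardness. The strategy is therefore: (a) identify a short list of ``canonical hard'' patterns whose mod-$2$ counting is $\oplus\W{1}$-hard (cliques, paths, and variants produce these, typically via reductions from $\oplus\clique$ or from known $\oplus$-path hardness); and (b) show that every non-matching-splittable $H$ in the class admits a reduction to $\oplus\subsprob(\calH)$ from some canonical hard pattern of comparable size, by pointing to an $H'$ in its expansion with odd $\mathbb{F}_2$-coefficient.

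For case~(I), I would first prove a Ramsey-flavoured structural dichotomy: a hereditary class that is not matching splittable must contain, as induced subgraphs, arbitrarily large members of at least one family on a short canonical list (cliques, bicliques, long induced paths, long induced cycles, and matchings decorated by a small dense core are the natural candidates). For each canonical family, hardness mod~$2$ is established by a direct reduction tailored to that shape; then, because $\calH$ is hereditary, the large canonical pattern genuinely belongs to $\calH$, so the hardness propagates. Tracking the size blow-ups across this reduction should simultaneously yield the fine-grained lower bounds advertised in the introduction. The tree case~(II) replaces the Ramsey input with a combinatorial analysis of trees: trees of unbounded matching-split number must contain many branching internal vertices that no bounded deletion set can eliminate simultaneously, forcing them to embed canonical infinite subfamilies (subdivided stars, spiders, caterpillars with many long internal branches). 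Hardness for each such canonical tree family would be established by an \emph{ad hoc} $\mathbb{F}_2$-interpolation from $\oplus$-counting of $k$-paths or $k$-matchings into the desired tree pattern.

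The main obstacle in both cases is the \textbf{non-vanishing of the expansion coefficients modulo $2$}. Over $\mathbb{Q}$, complexity monotonicity guarantees that the ``extremal'' pattern in the expansion carries a nonzero coefficient, and this is enough. Over $\mathbb{F}_2$, those coefficients are governed by parities of automorphism-group sizes and of Möbius values on the relevant partition/quotient lattice, and they can collapse due to even orbits; the reduction must select an expansion basis (fractures, quotients, or coloured variants such as $\cpsubs{\cdot}{\cdot}$) for which an odd coefficient on a hard witness is visibly unavoidable. I expect most of the technical effort---especially in the tree case, where the available structural flexibility is limited---to go into this parity bookkeeping, essentially producing a mod-$2$ analogue of the Curticapean--Dell--Marx extraction theorem on the lattice of fractures of $H$.
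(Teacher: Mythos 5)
Your high-level plan does match the paper's philosophy: the ``if'' direction is quoted from Curticapean--Dell--Husfeldt, the hardness direction goes through a mod-$2$ complexity-monotonicity argument over the lattice of fractures, and the central technical obstacle you flag (parities of M\"obius/automorphism coefficients, cured by passing to edge-coloured/fractured intermediates) is exactly the one the paper solves. However, there are two concrete gaps that would derail the execution.

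First, your structural dichotomy for hereditary classes is wrong. The correct statement (Jansen--Marx, Theorem~3.5, quoted as Theorem~\ref{thm:matchsplit_characterisation}) is that a non-matching-splittable hereditary class must contain \emph{all cliques}, \emph{all bicliques}, \emph{all $P_2$-packings}, or \emph{all triangle packings}. Your candidate list (``long induced paths, long induced cycles, matchings decorated by a small dense core'') misses the two packing families and includes families that are not forced: the hereditary closure of the class of triangle packings is not matching splittable, yet every induced path in a disjoint union of triangles has at most two edges, so no long induced paths or cycles appear. The packing families are precisely the crux of the hereditary case, because their \emph{decision} versions are FPT, so no isolation-lemma shortcut from decision is available; the paper must run the full fracture machinery there (a unique fracture with odd coefficient for triangle packings, and an odd count $3^{|V(H)|}$ of contributing fractures for $P_2$-packings), reducing from colour-prescribed homomorphisms out of expander-derived graphs of unbounded treewidth.

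Second, for trees your proposed hardness sources do not suffice. Counting $k$-matchings modulo~$2$ is \emph{fixed-parameter tractable} (that is the CDH upper bound the whole paper is built around), so it cannot serve as a reduction source. Counting $k$-paths modulo~$2$ is hard and the paper does use an $s$-$t$-path reduction, but only for the subcase of unbounded $2$-path length; the remaining cases (unbounded star number and unbounded fork number in the paper's trichotomy) require reductions from $\oplus\cphomsprob(\mathcal{Q})$ for graph classes $\mathcal{Q}$ of unbounded treewidth built from cubic (Hamiltonian or bipartite) expanders, together with gadget host graphs $\hat{G}$ in which the tree embeds edge-colourfully in essentially a unique way --- and, in the fork case, only up to an even number of ``invalid'' embeddings that must be cancelled by a separate parity argument. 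That last step (equality only modulo~$2$, not on the nose) is a genuinely new ingredient your sketch does not anticipate.
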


A class $\mathcal{H}$ of graphs is called \emph{hereditary}
 if it is closed under vertex removal.
Intriguingly, if Conjecture~\ref{conj:subsmod2} is true, then the FPT criterion for counting subgraphs modulo $2$ ($\oplus\subsprob(\calH)$) would coincide with the polynomial-time criterion for finding subgraphs ($\subsprob(\calH)$) for hereditary pattern classes $\calH$ as established by Jansen and Marx.

\begin{theorem}[\cite{JansenM15}]\label{thm:decsubspoly}
Let $\calH$ be a hereditary class of graphs and assume $\mathrm{P}\neq \mathrm{NP}$. Then
    $\subsprob(\calH)$ is solvable in polynomial time if and only if $\calH$ is matching splittable.
\end{theorem}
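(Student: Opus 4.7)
The plan is a ``guess-and-match'' algorithm. Given an instance $(H,G)$ with $H \in \calH$, first compute a set $S \subseteq V(H)$ of size at most $B$ such that $H \setminus S$ is a matching; this can be done by brute force in time $|V(H)|^{O(B)}$. Next, enumerate all injective maps $\phi : S \to V(G)$; there are $|V(G)|^{O(B)}$ of them. For each such $\phi$, the task of extending $\phi$ to a subgraph embedding of $H$ into $G$ reduces to placing the matching edges and the isolated vertices of $H \setminus S$ into $G \setminus \phi(S)$ while respecting their adjacency constraints to $\phi(S)$. Grouping the vertices of $G \setminus \phi(S)$ by their adjacency pattern to $\phi(S)$ turns this into a constrained general-matching problem, solvable in polynomial time via Edmonds' algorithm, with isolated-vertex placements handled by a separate Hall-type feasibility check.

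\textbf{Backward direction (hereditary, not matching splittable $\Rightarrow$ NP-hard).} I would base this on a structural dichotomy for graphs of large matching-split number: there exist finitely many ``hard'' infinite graph families $\mathcal{F}_1,\dots,\mathcal{F}_r$ such that every graph whose matching-split number exceeds some function $f(k)$ contains, as an induced subgraph, a member of some $\mathcal{F}_i$ on at least $k$ vertices. Natural candidates are cliques $\{K_n\}$, bicliques $\{K_{n,n}\}$, long induced paths $\{P_n\}$, long induced cycles $\{C_n\}$, and possibly subdivided stars. Since $\calH$ is hereditary and has unbounded matching-split number, pigeonhole forces $\calH$ to contain arbitrarily large members of some single $\mathcal{F}_i$. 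It then suffices to show $\subsprob(\mathcal{F}_i)$ is NP-hard (from CLIQUE for $\{K_n\}$, HAMILTONIAN PATH/CYCLE for $\{P_n\}$ and $\{C_n\}$, BALANCED BICLIQUE for $\{K_{n,n}\}$, etc.) and to transfer this hardness to $\subsprob(\calH)$ via a padding construction that embeds a given $\mathcal{F}_i$-instance $(H_i,G_i)$ as the ``hard part'' of some $H \in \calH$, padding $G_i$ with vertices that can host only the remaining ``free'' part of $H$.

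\textbf{Main obstacle.} The crux is the structural dichotomy itself. A direct application of Ramsey's theorem gives a large clique or a large independent set inside any sufficiently big $H$, but this alone is insufficient: one must also handle the ``sparse yet matching-split-hard'' regime, where $H$ has no large clique or biclique and yet no $B$-vertex deletion turns it into a matching. Extracting a long induced path, cycle, or subdivided star from this regime in a form robust enough to support a uniform reduction is where most of the work lies. A secondary difficulty is making the padding/embedding gadget uniform across an entire hard family $\mathcal{F}_i$ rather than tailored to a single pattern, since the reduction must work for infinitely many pattern sizes simultaneously while certifying that no ``spurious'' copy of $H$ can be formed that sidesteps the intended use of the embedded $\mathcal{F}_i$-instance.
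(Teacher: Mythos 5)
First, a point of order: the paper does not prove this statement at all --- Theorem~\ref{thm:decsubspoly} is imported verbatim from Jansen and Marx~\cite{JansenM15} and is never reproved here. So there is no in-paper argument to compare against line by line; what the paper does use from that source is the structural classification (Theorem~\ref{thm:matchsplit_characterisation}), which is exactly the ingredient your proposal identifies as its ``main obstacle''.

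On the merits, your forward direction is the right idea (guess the image of the splitting set, reduce the residual placement to a matching computation), though the phrase ``solvable in polynomial time via Edmonds' algorithm'' hides a real step: after fixing $\phi(S)$ you must select vertex-disjoint edges of $G\setminus\phi(S)$ realising a prescribed multiset of endpoint-types, and naive formulations of this (exact matching with coloured edges) are not known to be in P; one needs the monotone/superset structure of the constraints, or a gadget reduction to plain maximum matching, to make this work. The genuine gap, however, is in the backward direction. Your proposed list of unavoidable structures --- cliques, bicliques, long induced paths, long induced cycles, subdivided stars --- is provably insufficient, and the counterexamples are exactly the two classes this paper labours over in Sections~\ref{sec:triangle} and~\ref{sec:p2_pack}: the class $\mathcal{F}_{P_2}$ of $P_2$-packings and the class $\mathcal{F}_{K_3}$ of triangle packings. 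A graph $kP_2$ (or $kK_3$) has matching-split number $\Theta(k)$, yet it contains no clique or biclique on more than three vertices, no induced path or cycle longer than three vertices, and no subdivided star (indeed, subdivided stars have matching-split number $1$ and should not appear on any hardness list). So a hereditary class consisting of all $P_2$-packings falsifies the dichotomy as you state it. The correct basis, per Theorem~\ref{thm:matchsplit_characterisation}, is cliques, bicliques, $P_2$-packings and triangle packings; the corresponding NP-hardness then comes from \textsc{Clique}, \textsc{Balanced Biclique}, $P_3$-partition and triangle packing. Finally, your padding worry is moot: once the dichotomy is in hand, hereditariness gives $\mathcal{F}_i\subseteq\calH$ outright (any large induced copy of a family member inside some $H\in\calH$ is itself a member of $\calH$), so the hard instances of $\subsprob(\mathcal{F}_i)$ are literally instances of $\subsprob(\calH)$ and no embedding gadget is needed.
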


\noindent Jansen and Marx also conjecture that the condition of $\calH$ being hereditary can~be~removed.
\begin{conjecture}[\cite{JansenM15}]\label{conj:decsubspoly}
    $\subsprob(\calH)$ is solvable in polynomial time if and only if $\calH$ is matching splittable.
\end{conjecture}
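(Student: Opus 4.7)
The plan is to prove both directions of the biconditional, starting from Theorem~\ref{thm:decsubspoly} and attempting to remove the hereditariness assumption from the hard direction.

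For the forward implication (matching splittable implies polynomial time), hereditariness is not actually needed in the algorithmic half of Theorem~\ref{thm:decsubspoly}, so the same strategy should suffice in general: given $H \in \calH$ together with a witness set $S \subseteq V(H)$ of size at most $B$ such that $H \setminus S$ is a matching, enumerate all $|V(G)|^B$ possible images of $S$ in $G$, and for each guess reduce the residual problem to a bipartite matching or $f$-factor computation encoding the attachment constraints inherited from $S$.

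The substantive content is the reverse direction: if $\calH$ is not matching splittable, then $\subsprob(\calH)$ is NP-hard. My approach has two steps. First, a structural dichotomy: show that unboundedness of the matching-split number forces $\calH$ to contain, for arbitrarily large $n$, a graph $H_n$ of one of finitely many ``canonical hard shapes''---for instance, graphs built around long paths or cycles, large bicliques, large subdivided stars, or large cliques. In the hereditary case, Jansen and Marx reached such shapes by passing to induced subgraphs; here we must classify the shape of the whole graphs in $\calH$ directly, which I would attempt via a Ramsey-type pigeonhole argument on the edges incident to whichever vertex set remains after deleting a minimum matching-split witness. Second, for each canonical shape I would design a direct polynomial-time reduction from a standard NP-hard problem such as \textsc{Clique}, \textsc{Hamiltonian Cycle}, or \textsc{$3$-SAT}, whose output host graph contains exactly the desired copies of $H_n$ and no spurious ones.

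The main obstacle sits in this second step. Standard subgraph-isomorphism reductions attach gadgets to a target pattern, but the resulting host graph typically contains the pattern as a subgraph in many unintended ways; when $\calH$ is not hereditary, one cannot fall back on smaller patterns in $\calH$, so each reduction must be sharp enough to produce only copies of the specific $H_n \in \calH$ and to rule out every other graph in $\calH$ simultaneously. Combined with the fact that Conjecture~\ref{conj:decsubspoly} is tightly linked to the infamous open parameterised conjecture on $\subsprob(\calH)$ mentioned earlier in the excerpt, this suggests that a complete proof will require genuinely new structural or algorithmic insight beyond the current toolkit of subgraph-containment reductions, and it is plausible that only partial progress---say, for further restricted (but non-hereditary) families such as trees or bounded-degree graphs---is within reach of existing techniques.
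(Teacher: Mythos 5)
The statement you were given is Conjecture~\ref{conj:decsubspoly}, which the paper presents explicitly as an \emph{open conjecture} due to Jansen and Marx~\cite{JansenM15}; the paper contains no proof of it, and none is known. What is actually proved (in~\cite{JansenM15}, and merely quoted here as Theorem~\ref{thm:decsubspoly}) is the hereditary special case. Your forward direction is essentially correct and matches the known argument: guessing the images of the at most $B$ split vertices and reducing the residual problem to a matching computation does not use hereditariness, so ``matching splittable implies polynomial time'' holds for arbitrary $\calH$. The entire difficulty, as you correctly sense, lies in the converse.

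Your plan for the converse --- a structural dichotomy yielding finitely many canonical hard shapes, followed by tailored NP-hardness reductions --- is precisely the template of the hereditary proof (Theorem~\ref{thm:matchsplit_characterisation} supplies cliques, bicliques, $P_2$-packings and triangle packings as the shapes). The gap is that no analogue of that dichotomy is known for non-hereditary $\calH$, and a Ramsey-type pigeonhole argument does not obviously provide one: Ramsey arguments extract large substructures \emph{inside} members of $\calH$, but a non-hereditary class need not contain those substructures as members, so there is nothing in $\calH$ to reduce from. Your second step is likewise not a routine engineering matter: building hosts that contain no spurious copies of $H_n$ and simultaneously exclude every other member of an arbitrary class is exactly where known reduction techniques break down, and the paper itself notes that even the FPT classification of the decision problem $\subsprob(\calH)$ is among the ``most infamous'' open problems in parameterised complexity. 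So your proposal is a sensible research programme, and you rightly flag its incompleteness, but it is not a proof; there is no proof in the paper to compare it against, because the statement remains open.
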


Conjectures~\ref{conj:subsmod2} and~\ref{conj:decsubspoly} have the remarkable consequence that $\oplus\subsprob(\calH)$ is FPT if and only if $\subsprob(\calH)$ is solvable in polynomial time. 
In the current work we establish this consequence
for all hereditary pattern classes.

\subsection{Our Contributions}\label{sec:contributions}
We resolve Conjecture~\ref{conj:subsmod2} for all hereditary classes $\calH$, as well as for every class $\calH$ consisting only of trees; note that the upper bounds were shown in~\cite{CurticapeanDH21} and that the lower bounds are the novel part.

\begin{restatable}{theorem}{mainhereditary}\label{thm:main_hereditary}
Let $\mathcal{H}$ be a hereditary class of graphs. If $\mathcal{H}$ is matching splittable, then $\oplus\subsprob(\mathcal{H})$ is fixed-parameter tractable. Otherwise, the problem is $\oplus\W{1}$-complete and, assuming rETH, cannot be solved in time $f(|H|)\cdot |G|^{o(|V(H)|/\log |V(H)|)}$ for any function $f$.
\end{restatable}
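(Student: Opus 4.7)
The upper bound is already established by Curticapean, Dell, and Husfeldt in~\cite{CurticapeanDH21}, so all of the new work is in the lower bound. I would route the argument through the modulo-$2$ version of the \emph{spasm} framework. For any pattern $H$, the subgraph count can be written as a $\mathbb{Z}$-linear combination
\[
|\subs{H}{G}| \;=\; \sum_\rho a_\rho^H \cdot |\homs{H/\rho}{G}|,
\]
where $\rho$ ranges over partitions of $V(H)$ whose blocks are independent in $H$, and $H/\rho$ is the corresponding quotient graph. Reducing this identity modulo $2$, summands with even coefficient vanish; the surviving set $\calS_\oplus(H) := \{\,H/\rho : a_\rho^H \text{ odd}\,\}$ is the \emph{mod-$2$ spasm} of $H$. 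By the complexity-monotonicity principle modulo~$2$ developed in~\cite{CurticapeanDH21}, the problem $\oplus\subsprob(\calH)$ is parameterised-Turing interreducible with $\oplus\homsprob(\bigcup_{H \in \calH} \calS_\oplus(H))$. Since $\oplus\homsprob(\{F\})$ is $\oplus\W{1}$-hard whenever $F$ has unbounded treewidth, and takes time $|G|^{\Omega(\mathrm{tw}(F)/\log \mathrm{tw}(F))}$ under rETH (the modular analogue of Marx's homomorphism lower bound), the theorem is reduced to a purely combinatorial statement about mod-$2$ spasms.

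The central lemma I would prove is: \emph{there exists a constant $c>0$ such that every non-matching-splittable hereditary class $\calH$ contains, for each sufficiently large $k$, a graph $H \in \calH$ with $|V(H)| = O(k)$ and a graph $F \in \calS_\oplus(H)$ of treewidth at least $ck$.} The first step is to invoke the structural analysis underlying the Jansen--Marx dichotomy (Theorem~\ref{thm:decsubspoly}): a hereditary non-matching-splittable class contains, for every $k$, a graph with matching-split number $\Omega(|V(H)|)$, and the witnessing induced substructure can be restricted, using heredity, to a canonical ``hard'' induced subgraph of size $O(k)$ built from a large matching together with prescribed additional edges through a small common core.

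The second step is to construct, inside such a canonical witness, an explicit partition $\rho$ such that (i)~$H/\rho$ contains a $K_{\Omega(k)}$-minor, forcing large treewidth, and (ii)~the coefficient $a_\rho^H$ is odd. M\"obius inversion on the partition lattice shows that modulo $2$, $a_\rho^H$ is precisely the parity of the number of surjective homomorphisms from $H$ onto $H/\rho$. I would design $\rho$ by identifying endpoints of the large matching in a prescribed pattern that makes the quotient both dense (for the minor) and rigid (so that the only surjective preimages of $H$ onto $H/\rho$ come from an obvious, odd-size set of symmetries).

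The principal obstacle is step~(ii): high-treewidth quotients often have even coefficients and thus vanish modulo~$2$, which is the main reason the mod-$2$ classification is markedly more delicate than the exact one. Overcoming this requires a tailored parity analysis for each canonical hard induced subgraph produced in step one, and I expect most of the proof effort to be concentrated there. Once the lemma is in hand, the fine-grained bound follows from the reduction chain $\oplus\clique \leq \oplus\homsprob(\{F\}) \leq \oplus\subsprob(\{H\})$ with $|V(H)| = O(k)$, lifting the $n^{\Omega(k/\log k)}$ rETH lower bound for counting $k$-cliques modulo $2$ to the claimed $|G|^{\Omega(|V(H)|/\log |V(H)|)}$ bound for $\oplus\subsprob(\calH)$.
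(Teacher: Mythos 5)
There is a genuine gap at the very first step: the transformation you propose is not well-defined modulo $2$. The integer-coefficient M\"obius identity on the partition lattice expresses $\#\embs{H}{G}$, not $\#\subs{H}{G}$, as a linear combination of homomorphism counts; passing from embeddings to subgraphs requires dividing by $|\auts{H}|$, so the coefficients $a_H(F)$ in the sub-to-hom basis are of the form $\mu(F,H)\cdot|\auts{H}|^{-1}$ and are genuinely non-integral (already $\#\subs{K_2}{G}=\tfrac12\#\homs{K_2}{G}$). For every pattern relevant to the hard cases here --- cliques, bicliques, triangle packings $kK_3$, and $P_2$-packings $kP_2$ --- the automorphism group has even order, so your ``mod-$2$ spasm'' $\calS_\oplus(H)$ does not exist as described, and the interreducibility with $\oplus\homsprob$ that your whole argument rests on fails. (Equivalently: $\oplus\embs{H}{G}=|\auts{H}|\cdot\#\subs{H}{G}\bmod 2$ is identically zero when $|\auts{H}|$ is even, so the embedding-basis identity carries no information about $\oplus\subs{H}{G}$.) The paper explicitly identifies this as the central obstacle and circumvents it by moving to an edge-coloured intermediate problem: giving each edge of $H$ a distinct colour kills all nontrivial colour-preserving automorphisms, the analogous basis change over \emph{fractures} of a coloured quotient graph $Q$ then has integer coefficients whose parity can be analysed (Lemmas~\ref{lem:complexity_monotonicty}, \ref{lem:trianglepackings}, \ref{lem:P2_hardness}), and inclusion--exclusion (Lemma~\ref{lem:remove_edgecols}) removes the colours at the end.

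Two further points. First, your structural step does not match what is available: the paper uses the Jansen--Marx classification (Theorem~\ref{thm:matchsplit_characterisation}), by which a non-matching-splittable hereditary class contains all cliques, all bicliques, all $P_2$-packings, or all triangle packings; cliques and bicliques are then handled by an isolation-lemma reduction from the decision problem, which is unavailable for $P_2$- and triangle packings since their decision versions are FPT --- these are exactly the cases where the coloured machinery is indispensable. Second, even granting a well-defined basis, your characterisation of $a_\rho^H$ modulo $2$ as ``the parity of the number of surjective homomorphisms from $H$ onto $H/\rho$'' is not the correct coefficient; the coefficients are M\"obius-function values divided by $|\auts{H}|$, and controlling their parity is precisely where the difficulty lies.
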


\begin{restatable}{theorem}{main}\label{thm:main}
Let $\mathcal{T}$ be a recursively enumerable class of trees. If $\mathcal{T}$ is matching splittable, then $\oplus\subsprob(\calT)$ is fixed-parameter tractable. Otherwise $\oplus\subsprob(\calT)$ is $\oplus\W{1}$-complete.
\end{restatable}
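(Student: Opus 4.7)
The forward direction (FPT for matching-splittable classes) is immediate from the algorithm of Curticapean, Dell and Husfeldt~\cite{CurticapeanDH21}, so the task is to establish $\oplus\W{1}$-hardness for tree classes $\calT$ that are not matching splittable; membership in $\oplus\W{1}$ is automatic. The plan is to apply \emph{complexity monotonicity} over~$\mathbb{F}_2$: one writes
\[
\#\subs{T}{G} \;\equiv\; \sum_\sigma \alpha_\sigma \cdot \#\homs{T/\sigma}{G} \pmod{2},
\]
where $\sigma$ ranges over partitions of $V(T)$ (equivalently, over surjective quotient maps) and $\alpha_\sigma$ is a M\"obius-type coefficient on the partition lattice. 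If, for some $\sigma^{\star}$, the coefficient $\alpha_{\sigma^{\star}}$ is odd and the quotient $T/\sigma^{\star}$ has sufficiently complicated structure (say, contains a large clique or has treewidth growing with the matching-split number of~$T$), then an oracle for $\oplus\subsprob(\calT)$ together with standard interpolation can extract $\#\homs{T/\sigma^{\star}}{G} \bmod 2$, which in turn inherits $\oplus\W{1}$-hardness from the hardness of $\oplus\clique$.

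\textbf{Step 1: structure.} I would first prove a structural lemma: every tree $T$ with matching-split number at least~$k$ admits a canonical obstruction of size linear in~$k$, such as a system of $\Omega(k)$ vertex-disjoint $P_3$-subtrees whose centers cannot be simultaneously destroyed by removing fewer than~$k$ vertices. This is the tree-specific analogue of the structural extraction underlying the Curticapean--Marx dichotomy for $\#\subsprob$, but refined from the matching number to the strictly more demanding matching-split number, and restricted to the sparsest possible pattern class.

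\textbf{Step 2: hard quotient with odd coefficient.} Using the obstruction from Step~1, I would design a partition $\sigma^{\star}$ that identifies the centers of the $P_3$'s (and possibly additional vertices chosen with care) so that $T/\sigma^{\star}$ realises a biclique, a clique, or at least a graph of treewidth growing with~$k$. The critical ingredient is then to certify that $\alpha_{\sigma^{\star}}$ is odd. Since $\alpha_\sigma$ is a signed sum of $\pm 1$ contributions over the partition-lattice interval above~$\sigma$, this verification amounts to exhibiting a natural involution on that interval with an odd number of fixed points. I expect this parity computation to be the \textbf{main obstacle}: the sparsity of trees both removes the symmetries that typically drive easy cancellations and hinders the discovery of clean involutions, so the argument will likely require a carefully chosen quotient supported by a tree-specific combinatorial invariant (for instance, exploiting the distinction between leaves and non-leaves, or the uniqueness of paths in trees).

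\textbf{Step 3: reduction.} With an odd coefficient at a hard quotient in hand, the mod-$2$ complexity-monotonicity framework of~\cite{CurticapeanDH21} transfers $\oplus\W{1}$-hardness from $\oplus\hom$ on the quotient class to $\oplus\subsprob(\calT)$. The recursive enumerability of $\calT$ ensures uniform reductions: given a target parameter, one enumerates $\calT$ until a tree of sufficiently large matching-split number appears, then applies the structural lemma, the coefficient-parity verification, and interpolation to complete the reduction from $\oplus\clique$.
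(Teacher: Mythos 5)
There is a genuine gap, and it sits exactly at the foundation of your plan. The congruence
\[
\#\subs{T}{G} \;\equiv\; \sum_\sigma \alpha_\sigma \cdot \#\homs{T/\sigma}{G} \pmod{2}
\]
is not available: in the subgraph-to-homomorphism basis the coefficients have the form $\mu(\emptyset,\sigma)\cdot|\auts{T}|^{-1}$, and the trees relevant to the hard cases (large stars, many forks, long caterpillars) all have automorphism groups of even order, so reducing these coefficients modulo $2$ is not well-defined. This is precisely the obstacle that Curticapean, Dell and Husfeldt identified as the reason no mod-$2$ analogue of the Curticapean--Dell--Marx framework was known, so your Step~3 cannot simply ``transfer hardness via the mod-$2$ complexity-monotonicity framework of [CurticapeanDH21]'' --- that framework does not exist in the uncoloured setting. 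The missing idea is to first pass to an edge-coloured intermediate problem (counting edge-colourful copies of $T$, i.e.\ the fractured-graph framework), where the only colour-preserving automorphism of a pattern without isolated vertices is the identity, so the analogous expansion over \emph{fractures} of a coloured quotient graph $Q$ has integer coefficients whose parity one can actually control; the colours are then removed at the very end by inclusion--exclusion. Your Step~2 correctly anticipates that the parity of the key coefficient is delicate, but frames it as an involution count on a partition-lattice interval for a sum that is not even defined over $\mathbb{F}_2$.

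Beyond this, Steps~1 and~2 are too coarse to carry the argument. The actual structural dichotomy for non-matching-splittable tree classes is a trichotomy --- unbounded fork number, unbounded star number, or unbounded caterpillar ($\hal$-) number --- and each branch requires its own gadget: one builds $Q$ from a cubic (bipartite or Hamiltonian) expander $\Delta$ by threading its edges through rays and $2$-paths of $T$, constructs a host $\hat{G}$ from any $Q$-coloured instance, and then proves a uniqueness statement that every edge-colourful copy of $T$ in $\hat{G}$ induces one fixed fracture $\tau$ (or, in the fork case, that the copies inducing other fractures cancel in pairs modulo $2$, which is where an odd-parity count such as $3^{|V(H)|}$ or an explicit pairing of ``invalid'' copies appears). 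A single ``hard quotient with odd coefficient'' chosen abstractly from an obstruction of size $\Omega(k)$ does not substitute for these constructions, because it is the geometry of $Q$ and $\hat{G}$ that forces the coefficient in question to be odd and simultaneously makes the quotient an expander minor of controlled treewidth.
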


The requirement that the class of trees $\mathcal{T}$ needs to be recursively enumerable is a standard technicality - the reason for it is that the function~$f$ in the running time in the standard definition of an FPT algorithm is required to be computable. It turns out that having $\mathcal{T}$ recursively enumerable is enough for this.

In order to prove our classifications, we   adapt the by-now-standard technique for analysing subgraph counting problems established by Curticapean, Dell and Marx~\cite{CurticapeanDM17}. 
Let $\#\subs{H}{G}$ denote the number of subgraphs of a graph~$G$ that are isomorphic to a graph~$H$ and let $\#\homs{F}{G}$ denotes the number of homomorphisms (edge-preserving mappings) from a graph~$F$ to a graph~$G$. 
Given a graph~$H$, there is a function $a_H$ from graphs to rationals with finite support such that the following holds for any graph $G$:
\begin{equation}\label{eq:intro_hombasis}
    \#\subs{H}{G} = \sum_F a_H(F) \cdot \#\homs{F}{G}\,,
\end{equation}
where the sum is over all (isomorphism types of) graphs. 
Since $a_H$   has finite support,   $a_H(F)=0$ for all but finitely-many graphs~$F$.
Thus, equation~\eqref{eq:intro_hombasis} allows us to express the solution to the
\emph{exact} counting problem  
as a finite linear combination of homomorphism counts. In a nutshell, the framework of~\cite{CurticapeanDM17} states that computing the function $G \mapsto \#\subs{H}{G}$ is hard to compute if and only if there is a graph $F$ of high treewidth with $a_H(F)\neq 0$. This translates the complexity of (exact) subgraph counting to the purely combinatorial problem of understanding the coefficients $a_H$. One might hope that this strategy transfers to counting modulo $2$ as well. Unfortunately, this is not possible as Equation~\eqref{eq:intro_hombasis} might not be well-defined if arithmetic is done modulo $2$. The reason for this is the fact that the coefficients $a_H(F)$ are of the form $\mu(F,H) \times |\auts{H}|^{-1}$, where $\mu(F,H)$ is an integer, and $\auts{H}$ is the automorphism group of the graph $H$~\cite{CurticapeanDM17}. Thus there is, a priori, no hope to extend the framework to counting modulo $2$ for pattern graphs with an even number of automorphisms. In fact, according to Curticapean, Dell and Husfeldt~\cite{CurticapeanDH21}, the absence of a comparable framework for counting modulo $2$ is one of the main challenges for establishing the hardness part of Conjecture~\ref{conj:subsmod2}, and it is the main reason why the reductions in~\cite{CurticapeanDH21} use more classical, gadget-based reductions.

In this work, we solve the problem of patterns with an even number of automorphisms by considering a colourful intermediate problem. More concretely, we will equip each edge of the pattern $H$ with a distinct colour and show that it will be sufficient to consider only automorphisms that preserve the colours.
If $H$ has no isolated vertices, then this is only the trivial automorphism.  Formally, the coloured approach will be based on the notion of so-called \emph{fractured} graphs introduced by Peyerimhoff et al.\ \cite{PeyerimhoffRSSVW22}.

\subsection*{Organisation of the Paper}
We start by introducing some basic terminology in Section~\ref{sec:prelims}. The formal definitions of our graph colourings, as well as colour-preserving homomorphisms and embeddings can be found in Section~\ref{sec:col_prelims}, and the majority of the paper will consider the coloured setting as it allows us to get rid of automorphism groups of even size. This is formalised in Section~\ref{sec:fractures} using the framework of \emph{fractured graphs} originally introduced in~\cite{PeyerimhoffRSSVW22}. An introduction to parameterised and fine-grained complexity theory, including the definition of our computational problems and the statement of the randomised Exponential Time Hypothesis, can be found in Section~\ref{sec:param}; moreover, this section contains a self-contained and formal exposition of the complexity monotonicty principle for coloured graphs in the modular setting, stating intuitively that the computation, modulo $2$, of a finite linear combination of homomorphism counts between coloured graphs is precisely as hard as computing, modulo $2$, the hardest term with an odd coefficient. Additionally, Section~\ref{sec:param} contains the formal statement of the reduction from the coloured setting to the uncoloured setting via the principle of inclusion and exclusion. Note that this reduction is necessary for obtaining our main results (Theorems~\ref{thm:main_hereditary} and~\ref{thm:main}), which classify the complexity of the uncoloured problem $\oplus\subsprob(\mathcal{H})$.

Having completed the set-up, we continue in Section~\ref{sec:hereditary} with the treatment of $\oplus\subsprob(\mathcal{H})$ for hereditary $\mathcal{H}$, i.e., with the proof of Theorem~\ref{thm:main_hereditary}. We note that, on a technical level, understanding the hereditary case is much easier than the case of trees. However, almost all of the key techniques and ideas that become necessary to classify the case of trees are already used in Section~\ref{sec:hereditary}, although in a much simpler way. For this reason, we consider Section~\ref{sec:hereditary} also as a warm-up for getting used to the framework of fractured graphs. Concretely, we can outline our treatment of hereditary classes as follows: Using a result of Jansen and Marx~\cite{JansenM15}, each hereditary class of graphs $\mathcal{H}$ is either matching splittable, or it fully contains one of the following four subclasses: (I) The class of all cliques, (II) the class of all bicliques, (III) the class of all triangle packings (disjoint unions of triangles), or (IV) the class of all $P_2$-packings (disjoint unions of paths with two edges). For proving the classification of $\oplus\subsprob(\mathcal{H})$ for hereditary $\calH$ (Theorem~\ref{thm:main_hereditary}), it thus suffices to show that each of the four cases (I) - (IV) is hard. Since the problems of \emph{deciding} whether a graph contains a $k$-clique or whether a graph contains a $k$-by-$k$-biclique are already hard, the problem of counting their respective occurences modulo $2$ (cases (I) and (II)) can easily shown to be hard using a variation of the Isolation Lemma due to Williams et al.\ \cite{WilliamsWWY15}. The majority of Section~\ref{sec:hereditary} is thus dedicated to establishing hardness for triangle packings (III) in Section~\ref{sec:triangle} and for $P_2$-packings (IV) in Section~\ref{sec:p2_pack}. 

The classification of $\oplus\subsprob(\mathcal{T})$ for classes of trees $\mathcal{T}$ (Theorem~\ref{thm:main}) can be found in Section~\ref{sec:trees}. In the first step, we establish a graph-theoretical classification of classes of trees that are not matching splittable. To this end, we first introduce three structural invariants of trees (the definitions are rather technical and can be found right at the beginning of Section~\ref{sec:trees}): The \emph{fork number}, the \emph{star number}, and the \emph{$\hal$-number}. We then show that each class $\calT$ of trees is either matching splittable, or it satisfies at least one of the following properties:
\begin{itemize}
    \item[(1)] $\calT$ has unbounded $\hal$-number,
    \item[(2)] $\calT$ has unbounded star number, or
    \item[(3)] $\calT$ has unbounded fork number.
\end{itemize}
The central steps of the proof of Theorem~\ref{thm:main} are then hardness proofs for the previous three cases: Case (1) is treated in Section~\ref{sec:caseH}, Case (2) is treated in Section~\ref{sec:caseStar}, and Case (3) is treated in Section~\ref{sec:caseF}. Finally, we collect the intractabilty results for all cases in Section~\ref{sec:collect_trees} to prove Theorem~\ref{thm:main}.

\section{Preliminaries}\label{sec:prelims}

Let $f:A_1\times A_2 \rightarrow B$ be a function. For each $a_1\in A_1$ we write $f(a_1,\star): A_{2} \to B$ for the function that maps $a_2\in A_{2}$ to $f(a_1,a_2)$.

Graphs in this work are undirected and without self loops. A \emph{homomorphism} from a graph $H$ to a graph $G$ is a mapping $\varphi$ from the vertices $V(H)$ of $H$ to the vertices $V(G)$ of $G$ such that for each edge $e=\{u,v\}\in E(H)$ of $H$, the image $\varphi(e)=\{\varphi(u),\varphi(v)\}$ is an edge of $G$. A homomorphism is called an \emph{embedding} if it is injective. We write $\homs{H}{G}$ and $\embs{H}{G}$ for the sets of homomorphisms and embeddings, respectively, from $H$ to $G$. An embedding $\varphi\in\embs{H}{G}$ is called an \emph{isomorphism} if it is bijective and $\{u,v\}\in E(H) \Leftrightarrow \{\varphi(u),\varphi(v)\}\in E(G)$. We say that $H$ and $G$ are \emph{isomorphic}, denoted by $H\cong G$, if an isomorphism from $H$ to $G$ exists. A \emph{graph invariant} $\iota$ is a function from graphs to rationals such that $\iota(H)=\iota(G)$ for each pair of isomorphic graphs $H$ and $G$.

A \emph{subgraph} of $G$ is a graph $G'$ with $V(G')\subseteq V(G)$ and $E(G')\subseteq E(G)$. We write $\subs{H}{G}$ for the set of all subgraphs of $G$ that are isomorphic to $H$. Given a subset of vertices $S\subseteq V(G)$ of a graph $G$, we write $G[S]$ for the graph induced by $S$, that is, $G[S]$ has vertices $S$ and edges $\{ \{u,v\} \subseteq S \mid \{u,v\} \in E(G)\}$.

We denote by $\mathsf{tw}(G)$ the \emph{treewidth} of the graph $G$. Since we will rely on treewidth purely in a black-box manner, we omit the technical definition and refer the reader to~\cite[Chapter 7]{CyganFKLMPPS15}.

Given any graph invariant $\iota$ 
(such as treewidth) and a class of graphs $\mathcal{G}$, we say that $\iota$ is \emph{bounded} in $\mathcal{G}$ if there is a non-negative integer $B$ such that, for all $G\in \mathcal{G}$, 
$\iota(G)\leq B$. 
Otherwise we say that $\iota$ is \emph{unbounded} in $\mathcal{G}$.

Given a graph $H=(V,E)$, a \emph{splitting set} of $H$ is a subset of vertices $S$ such that 
every vertex in
$H[V\setminus S]$ 
has degree at most~$1$.  The \emph{matching-split number} of $H$ is the minimum size of a splitting set of $H$. A class of graphs $\calH$ is called \emph{matching splittable} if the matching-split number of $\calH$ is bounded.

\subsection{Colour-Preserving Homomorphisms and Embeddings}\label{sec:col_prelims}
A homomorphism~$c$ from a graph~$G$ to a graph~$Q$ is sometimes called a ``$Q$-colouring'' of~$G$.
A $Q$-\emph{coloured} graph is a pair consisting of a graph $G$ and a homomorphism $c$  from~$G$ to~$Q$.  
Note that the identity function $\mathsf{id}_Q$ on $V(Q)$ is a $Q$-colouring of~$Q$. If 
a homomorphism~$c$ from~$G$ to~$Q$
is vertex surjective, then we call $(G,c)$ a \emph{surjectively} $Q$-coloured graph.

\begin{definition}[$c_{\text{E}}$]\label{def:cE}
 \label{rem:Qcol_is_Ecol}
A $Q$-colouring $c$ of a graph $G$ induces a (not necessarily proper) edge-colouring $c_{\text{E}}\colon E(G) \to E(Q)$ given by 
$c_{\text{E}}(\{u,v\}) = \{c(u),c(v)\}$. 
\end{definition}

\noindent \textbf{Notation:} Given a $Q$-coloured graph $(G,c)$ and a vertex $u\in V(Q)$, we will use the capitalised letter $U$ to denote the subset of vertices of $G$ that are coloured by $c$ with $u$, that is, $U:=c^{-1}(u)\subseteq V(G)$. 

\medskip

Given two $Q$-coloured graphs $(H,c_H)$ and $(G,c_G)$, we call a homomorphism $\varphi$ from~$H$ to~$G$ \emph{colour-preserving} if for each $v\in V(H)$ we have $c_G(\varphi(v))=c_H(v)$. We note the special case in which $Q=H$ and $c_H$ is the identity~$\mathsf{id}_Q$; then the condition simplifies to $c_G(\varphi(v))=v$.
A colour-preserving embedding of~$(H,c_H)$ in~$(G,c_G)$
is a vertex injective colour-preserving homomorphism
from $(H,c_H)$ to $(G,c_G)$. We write $\homs{(H,c_H)}{(G,c_G)}$ and $\embs{(H,c_H)}{(G,c_G)}$ for the sets of all colour-preserving homomorphisms and embeddings, respectively, from $(H,c_H)$ to $(G,c_G)$.

Let $k$ be a positive integer, let $H$ be a graph with $k$ edges, and let $(G,\gamma)$ be a pair consisting of a graph $G$ and a 
function that  maps each edge of $G$ to one of $k$ distinct colours.
We refer to $\gamma$ as a ``$k$-edge colouring'' of~$G$. 
For example, in most of our applications 
we will fix a graph $Q$ with $k$~edges and a $Q$-colouring $c$ of~$G$
and we will take $\gamma$ to be the edge-colouring $c_E$ from Definition~\ref{def:cE}.  We write $\colsubs{H}{(G,\gamma)}$ for the set of all subgraphs of $G$ that are isomorphic to $H$ and that contain each of the $k$ edge colours precisely once.

\subsection{Fractures and Fractured Graphs}\label{sec:fractures}
In this work, we will crucially rely on and extend the framework of \emph{fractured} graphs as introduced in~\cite{PeyerimhoffRSSVW22}.

\begin{definition}[Fractures]\label{def:fracGeneral}
Let $Q$~be a graph. For each vertex $v$ of~$Q$, let $E_Q(v)$ be the set of edges of~$Q$ that are incident to~$v$.
A \emph{fracture} of $Q$ is a tuple $\rho=(\rho_v)_{v\in
    V(Q)}$, where for each vertex $v$ of~$Q$, $\rho_v$ is a partition of   $E_Q(v)$.
\end{definition}

\begin{figure}[t]
    \centering
    \begin{tikzpicture}[scale=1.5]
        \node[vertex,inner sep=.4ex,label={[label distance=.03]below:\(v\)}] (m) at (0, 0) {};

        \draw[very thick,red] (m) -- ++(135:1);
        \draw[very thick,green!80!blue] (m) -- ++(180:1);
        \draw[very thick,blue] (m) -- ++(-135:1);

        \draw[very thick,yellow!50!orange] (m) -- ++(45:1);
        \draw[very thick] (m) -- ++(0:1);
        \draw[very thick,cyan] (m) -- ++(-45:1);

        \begin{scope}[shift={(5,0)}]
            \begin{scope}[scale=2.4]
                \kowaen{0,0}{-90/90/white,90/270/white}{1};
            \end{scope}
            \node[label={[label distance=.03]below:\(v^{B_1}\)}]  at (1-2) {};
            \draw[very thick,red] (1-2) -- ++(135:1);
            \draw[very thick,green!80!blue] (1-2) -- ++(180:1);
            \draw[very thick,blue] (1-2) -- ++(-135:1);
            \node[label={[label distance=.03]below:\(v^{B_2}\)}]  at
                (1-1) {};
            \draw[very thick,yellow!50!orange] (1-1) -- ++(45:1);
            \draw[very thick] (1-1) -- ++(0:1);
            \draw[very thick,cyan] (1-1) -- ++(-45:1);
            \begin{scope}[scale=1.8]
                \kowaen{0,0}{-90/90/black,90/270/black}{1};
            \end{scope}
        \end{scope}
    \end{tikzpicture}
    \caption{\label{fig:fracture} Illustration of the construction of a
        fractured graph from~\cite{PeyerimhoffRSSVW22}. The left picture shows a vertex $v$ of a graph~$Q$ with incident
        edges $E_Q(v)=\{
        \redc{2pt},\greenc{2pt},\bluec{2pt},\yellowc{2pt},\blackc{2pt},\brownc{2pt}\}$. The right
        picture shows the splitting of $v$ in the construction of the fractured
        graph~$\fracture{Q}{\sigma}$ for a fracture $\sigma$ satisfying that the partition
        $\sigma_v$ contains two blocks $B_1 =\{ \redc{2pt},\greenc{2pt},\bluec{2pt}\}$, and
    $B_2=\{\yellowc{2pt},\blackc{2pt},\brownc{2pt}\}$.}
\end{figure}

Note that a fracture describes how to split (or how to \emph{fracture}) each vertex of a
given graph: for each vertex~$v$, create a vertex $v^B$ for each block $B$ in the partition
$\rho_v$; edges originally incident to $v$ are made incident to $v^B$ if and only if they
are contained in $B$. We call the resulting graph the \emph{fractured graph $\fracture{H}{\rho}$};
a formal definition is given in Definition~\ref{def:mrhoGeneral}, a visualisation is given in
Figure~\ref{fig:fracture}.

\begin{definition}[Fractured Graph $\fracture{Q}{\rho}$]\label{def:mrhoGeneral}
    Given a graph $Q$, we consider the matching $M_Q$ containing one edge for each edge of $Q$;
    formally,
    \[
        V(M_Q) := \bigcup_{e = \{u,v\} \in E(Q)} \{ u_e, v_e \}\quad\text{and}\quad
        E(M_Q) := \{ \{ u_e, v_e \} \mid e = \{u,v\} \in E(Q)\}.
    \]

    For a fracture $\rho$ of $Q$, we define the graph $\fracture{Q}{\rho}$ to be the quotient
    graph of $M_Q$ under the equivalence relation on $V(M_Q)$ which identifies two
    vertices $v_e, w_f$ of $M_Q$ if and only if $v=w$ and $e,f$ are in the same block $B$
    of the partition $\rho_v$ of $E_Q(v)$. We write $v^B$ for the vertex of
    $\fracture{Q}{\rho}$ given by the equivalence class of the vertices $v_e$ (for which
    $e \in B$) of $M_Q$.
\end{definition}

\begin{definition}[Canonical $Q$-colouring $c_\rho$]\label{def:fracture_Q_cols}
Let $Q$ be a graph and let $\rho$ be a fracture of $Q$. The \emph{canonical} $Q$-\emph{colouring} of the fractured graph $\fracture{Q}{\rho}$ maps $v^B$ to $v$ for each $v\in V(Q)$ and block $B\in \rho_v$, and is denoted by $c_\rho$.
\end{definition}
Observe that $c_\rho$ is the identity in $V(Q)$ if $\rho$ is the coarsest fracture (that is, each partition $\rho_v$ only contains one block, in which case $\fracture{Q}{\rho}=Q$).

\subsection{Parameterised and Fine-grained Computation}\label{sec:param}
A \emph{parameterised computational problem} is a pair consisting of a function $P:\Sigma^\ast \to \{0,1\}$ and a computable parameterisation $\kappa: \Sigma^\ast \to \mathbb{N}$. A \emph{fixed-parameter tractable} (FPT) \emph{algorithm} for $(P,\kappa)$ is an algorithm that computes $P$ and runs, on input $x\in \Sigma^\ast$, in time $f(\kappa(x))\cdot |x|^{O(1)}$ for some computable function $f$. We call $(P,\kappa)$ \emph{fixed-parameter tractable} (FPT) if an FPT algorithm for $(P,\kappa)$ exists.

A parameterised Turing-reduction from $(P,\kappa)$ to $(P',\kappa')$ is an FPT algorithm for $(P,\kappa)$ that is equipped with oracle access to $P'$ and for which there is a computable function $g$ such that, on input $x$, each oracle query $y$ satisfies $\kappa'(y)\leq g(\kappa(x))$.  We write $(P,\kappa)\fptred (P',\kappa')$ if a parameterised Turing-reduction from $(P,\kappa)$ to $(P',\kappa')$ exists. This guarantees
that fixed-parameter tractability of $(P',\kappa')$ implies fixed-parameter tractability of $(P,\kappa)$.
For a more comprehensive introduction, we refer the reader the standard textbooks~\cite{CyganFKLMPPS15} and~\cite{FlumG06}.

\paragraph*{Counting modulo 2 and the rETH}
The lower bounds in this work will rely on the hardness of the parameterised complexity class $\oplus\W{1}$, which can be considered a parameterised equivalent of $\oplus\mathrm{P}$. Following~\cite{CurticapeanDH21}, we define $\oplus\W{1}$ via the complete problem $\oplus\textsc{Clique}$: Given as input a graph $G$ and a positive integer $k$, the goal is to compute the number of $k$-cliques in $G$ modulo $2$, i.e., to compute $\oplus\subs{K_k}{G}$. The problem is parameterised by $k$. 
A parameterised problem $(P,\kappa)$ is called $\oplus\W{1}$\emph{-hard} if $\oplus\textsc{Clique}\fptred (P,\kappa)$, and it is called $\oplus\W{1}$\emph{-complete} if, additionally, $(P,\kappa) \fptred \oplus\textsc{Clique}$.

Modifications of the classical Isolation Lemma (see e.g.\ \cite{BjorklundDH15} and~\cite{WilliamsWWY15}) yield a \emph{randomised} parameterised Turing reduction from finding a $k$-clique to computing the parity of the number of $k$-cliques. In combination with existing fine-grained lower bounds for finding a $k$-clique~\cite{Chenetal05,Chenetal06}, it can then be shown that $\oplus\textsc{Clique}$ cannot be solved in time $f(k) \cdot |G|^{o(k)}$ for any function $f$, unless the randomised Exponential Time Hypothesis fails:
\begin{definition}[rETH,~\cite{ImpagliazzoP01}]
    The \emph{randomised Exponential Time Hypothesis} (rETH) asserts that $3$-\textsc{SAT} cannot be solved by a randomised algorithm in time $\exp{o(n)}$, where $n$ is the number of variables of the input formula.
\end{definition}
As an immediate consequence, the rETH implies that $\oplus\W{1}$-hard problems are not fixed-parameter tractable. 

For the lower bounds in this work, we won't reduce from $\oplus\textsc{Clique}$ directly, but instead from the following, more general problem:
\begin{definition}[$\oplus\cphomsprob$]
Let $\mathcal{H}$ be a class of graphs. The problem $\oplus\cphomsprob(\mathcal{H})$ 
has as input   a graph $H\in\mathcal{H}$ and a surjectively $H$-coloured graph $(G,c)$. The goal is to compute $\oplus\homs{(H,\mathsf{id}_H)}{(G,c)}$. The problem is parameterised by $|H|$.
\end{definition}

The following lower bound was proved independently in~\cite{PeyerimhoffR0SV21MFCS,PeyerimhoffRSSVW22} and~\cite{CurticapeanDH21}.
\begin{theorem}\label{thm:cphom_lower_bound}
Let $\mathcal{H}$ be a recursively enumerable class of graphs. If the treewidth of $\mathcal{H}$ is unbounded then $\oplus\cphomsprob(\mathcal{H})$ is $\oplus\W{1}$-hard and, assuming the rETH, it cannot be solved in time $f(|H|)\cdot |G|^{o(\mathsf{tw}(H)/\log \mathsf{tw}(H))}$ for any function $f$.
\end{theorem}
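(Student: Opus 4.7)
The plan is to reduce from $\oplus\textsc{Clique}$, which, under rETH, is not solvable in time $f(k)\cdot n^{o(k)}$. For each $H\in\mathcal{H}$ of sufficiently large treewidth $t=\mathsf{tw}(H)$, I will construct a polynomial-time Turing reduction from $\oplus\textsc{Clique}$ at parameter $k=\Omega(t/\log t)$ to a single invocation of $\oplus\cphomsprob(\mathcal{H})$ on pattern $H$. Recursive enumerability of $\mathcal{H}$ is used only to make the selection of a suitable $H\in\mathcal{H}$ for each $k$ computable, as required by the FPT convention.

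The combinatorial engine is the theorem used in Marx's $\#\W{1}$-hardness result for homomorphism counting: any graph of treewidth $t$ contains, as a topological minor, a graph of maximum degree~$3$ and treewidth $\Omega(t/\log t)$, and hence a $K_k$ topological minor with $k=\Omega(t/\log t)$. This yields, inside $H$, a family of $k$ pairwise vertex-disjoint connected \emph{branch sets} $B_1,\dots,B_k\subseteq V(H)$ together with $\binom{k}{2}$ pairwise internally vertex-disjoint paths $P_{ij}$, each linking $B_i$ to $B_j$. This structure is the rigid scaffold onto which the clique instance will be encoded.

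Given a $k$-clique instance $G'$, build the surjectively $H$-coloured host $(G,c)$ as follows. For each $i\in[k]$ and each $u\in V(G')$, add a disjoint copy $B_i^u$ of $B_i$ to $G$ and extend $c$ to map it identically onto $B_i\subseteq V(H)$. For each pair $i<j$ and each edge $\{u,u'\}\in E(G')$, add a copy of the path $P_{ij}$ glued at its endpoints to $B_i^u$ and $B_j^{u'}$, coloured by $c$ through the identity on $P_{ij}$; for non-edges of $G'$, omit the path. Because the source is $(H,\mathsf{id}_H)$ and $c$ is colour-preserving onto $V(H)$, every $\varphi\in\homs{(H,\mathsf{id}_H)}{(G,c)}$ must map each branch set $B_i$ wholly into a single copy $B_i^{u_i}$, selecting a $k$-tuple $(u_1,\dots,u_k)$, and the linking paths admit a valid image if and only if this tuple is a $k$-clique in $G'$.

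The crucial and technically delicate point—the main obstacle of the proof—is parity preservation. In the classical uncoloured Marx construction each clique contributes a number of homomorphisms equal to a product of gadget ``gate sizes'' that can be even, thereby vanishing modulo $2$. Here the colour-preserving constraint combined with the source colouring $\mathsf{id}_H$ completely rigidifies the image: every vertex $v\in V(H)$ has a unique candidate target class $V=c^{-1}(v)$, and by designing each $B_i^u$ and each copy of $P_{ij}$ to be an exact isomorphic copy (and not a blown-up gadget), one arranges that every $k$-clique of $G'$ contributes exactly one colour-preserving homomorphism and every non-clique tuple contributes none. Consequently $\oplus\homs{(H,\mathsf{id}_H)}{(G,c)}\equiv\oplus\subs{K_k}{G'}\pmod{2}$, giving $\oplus\W{1}$-hardness. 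Moreover, an algorithm for $\oplus\cphomsprob(\mathcal{H})$ running in time $f(|H|)\cdot|G|^{o(t/\log t)}$ would decide $\oplus\textsc{Clique}$ in time $f'(k)\cdot n^{o(k)}$ since $|G|=\mathrm{poly}(n,|H|)$ and $k=\Theta(t/\log t)$, contradicting rETH and yielding the claimed fine-grained lower bound.
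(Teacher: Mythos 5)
The paper does not actually prove Theorem~\ref{thm:cphom_lower_bound}: it imports it from \cite{PeyerimhoffRSSVW22} and \cite{CurticapeanDH21}, where the argument goes through Marx's ``Can you beat treewidth?'' embedding framework. Measured against that, your reconstruction has two genuine gaps. The first is structural: the combinatorial engine you invoke does not exist. A graph of treewidth $t$ need not contain a $K_k$ topological minor for any $k\geq 5$ --- a cubic expander has treewidth $\Theta(n)$ but maximum degree $3$, so it has no vertex that could serve as a branch vertex of a $K_5$-subdivision. Even for ordinary minors, what Marx's theorem supplies is not a $K_k$-minor with pairwise \emph{vertex-disjoint} branch sets and $k=\Omega(t/\log t)$; it supplies an embedding with \emph{bounded congestion}, where each vertex of $H$ may be shared by several branch sets, and the entire $t/\log t$ loss lives in that relaxation. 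Your construction relies essentially on disjointness (each $B_i^u$ is ``an exact isomorphic copy''), so it cannot be repaired by merely substituting the correct embedding theorem: overlapping branch sets force a genuine gadget analysis of how many preimages each solution of the source instance has.

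The second gap is arithmetic and is independently fatal, at exactly the step you flag as the delicate one. With one family of copies $\{B_i^u\}_{u\in V(G')}$ per index $i$, the colour-preserving homomorphisms from $(H,\mathsf{id}_H)$ to your host are in bijection with \emph{ordered} $k$-tuples of distinct, pairwise adjacent vertices of $G'$; their number is $k!\cdot\#\subs{K_k}{G'}$, which is $\equiv 0 \pmod 2$ for every $k\geq 2$. So the claimed congruence $\oplus\homs{(H,\mathsf{id}_H)}{(G,c)}\equiv\oplus\subs{K_k}{G'}$ is false --- the left-hand side is identically zero --- and the reduction computes nothing. The standard fix is to reduce from a partitioned (multicolour) clique problem so that each solution contributes exactly one homomorphism, but then you must separately establish that the \emph{parity} of partitioned clique is $\oplus\W{1}$-hard with the right fine-grained exponent, and the textbook reduction from Clique to Multicolour Clique again multiplies counts by $k!$. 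Handling precisely this kind of parity loss (via colour-prescribed formulations throughout, and isolation-type arguments at the base) is the substance of the cited proofs; it is why the theorem is not a routine transfer of the exact-counting lower bound.
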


Next is the central problem in this work.
\begin{definition}[$\oplus\subsprob$]
Let $\mathcal{H}$ be a class of graphs. The problem $\oplus\subsprob(\mathcal{H})$ 
has as input   a graph $H\in\mathcal{H}$ and a graph $G$. The goal is to compute $\oplus\subs{H}{G}$. The problem is parameterised by $|H|$.
\end{definition}

For example, writing $\mathcal{K}$ for the set of all complete graphs, the problem $\oplus\subsprob(\mathcal{K})$ is equivalent to $\oplus\textsc{Clique}$.

\paragraph*{Complexity Monotonicity and Inclusion-Exclusion}
Throughout this work, we will rely on two important tools introduced in~\cite{PeyerimhoffRSSVW22}. For the sake of being self-contained, we encapsulate them below in individual lemmas.

The first tool is an adaptation of the so-called Complexity Monotonicity principle to the realm of fractured graphs and modular counting (see \cite[Sections 4.1 and 6.3]{PeyerimhoffRSSVW22} for a detailed treatment and for a proof). Intuitively, the subsequent lemma states that evaluating, modulo~$2$, a linear combination of colour-prescribed homomorphism counts from fractured graphs, is as hard as evaluating its hardest term with an odd coefficient.
\begin{lemma}[\cite{PeyerimhoffRSSVW22}]\label{lem:complexity_monotonicty}
    There is a deterministic algorithm $\mathbb{A}$ and a computable function $f$ such that the following conditions are satisfied:
    \begin{enumerate}
        \item $\mathbb{A}$ expects as input a graph $Q$ and a $Q$-coloured graph $(G,c)$.
        \item $\mathbb{A}$ is equipped with oracle access to a function
        \[ (G',c') \mapsto \sum_\rho a(\rho) \cdot \oplus\homs{(\fracture{Q}{\rho},c_\rho)}{(G',c')} \mod 2  \,,\]
        where the sum is over all fractures of $Q$ and $a$ is a function from fractures of $Q$ to integers. 
        \item Each oracle query $(G',c')$ is of size at most $f(|Q|)\cdot |G|$.  
        \item $\mathbb{A}$ computes $\oplus\homs{(\fracture{Q}{\rho},c_\rho)}{(G,c)}$ for each fracture $\rho$ with $a(\rho)\neq 0 \mod 2$.
        \item The running time of $\mathbb{A}$ is bounded by $f(|Q|)\cdot |G|^{O(1)}$.
    \end{enumerate}
\end{lemma}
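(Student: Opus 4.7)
The plan is to adapt the complexity-monotonicity principle to the modular setting via zeta/Möbius inversion on the lattice of fractures of $Q$. For each fracture $\sigma$ of $Q$, we construct a $Q$-coloured host $(G_\sigma, c_\sigma)$ by ``pre-fracturing'' $(G,c)$ according to $\sigma$, query the oracle on each such host, and invert the resulting linear system over $\mathbb{F}_2$ to recover the values $\oplus\homs{(\fracture{Q}{\rho}, c_\rho)}{(G, c)} \bmod 2$ for each fracture $\rho$ with $a(\rho) \not\equiv 0 \bmod 2$.

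The construction of $(G_\sigma, c_\sigma)$ mirrors Definition~\ref{def:mrhoGeneral} but is applied to the host rather than to $Q$: for each vertex $u \in V(Q)$ and each $v \in c^{-1}(u)$, we create $|\sigma_u|$ copies $v^B$ of $v$, one per block $B \in \sigma_u$, each coloured $u$ under $c_\sigma$. For each edge $e = \{v,v'\}$ of $G$ with $c_{\text{E}}(e) = \{u,u'\} \in E(Q)$, we insert a single edge of $G_\sigma$ between $v^{B}$ and $(v')^{B'}$, where $B \in \sigma_u$ and $B' \in \sigma_{u'}$ are the unique blocks of $\sigma$ containing $\{u,u'\}$. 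The size of $(G_\sigma, c_\sigma)$ is at most $|V(Q)| \cdot (|V(G)| + |E(G)|) \leq f(|Q|)\cdot |G|$ for a suitable computable $f$, meeting the stated query-size bound.

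The combinatorial core of the argument is the identity
\[
\#\homs{(\fracture{Q}{\rho}, c_\rho)}{(G_\sigma, c_\sigma)} \;=\; [\rho \preceq \sigma] \cdot \#\homs{(\fracture{Q}{\rho}, c_\rho)}{(G, c)},
\]
where $\rho \preceq \sigma$ denotes that $\rho_u$ refines $\sigma_u$ as a partition of $E_Q(u)$ for every $u \in V(Q)$. Indeed, for any colour-preserving $\phi' : \fracture{Q}{\rho} \to G_\sigma$, the image of $u^B$ (with $B \in \rho_u$) must take the form $v^{B'}$, and the construction of $G_\sigma$ forces every edge $e \in B$ to lie in the same block $B' \in \sigma_u$; this is feasible precisely when $\rho \preceq \sigma$, and in that case $B'$ is uniquely determined by $(u,B)$, yielding a bijection between colour-preserving homomorphisms $\fracture{Q}{\rho} \to G_\sigma$ and those $\fracture{Q}{\rho} \to G$.

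It follows that each oracle query on $(G_\sigma, c_\sigma)$ returns $V[\sigma] = \sum_{\rho \preceq \sigma} y(\rho) \bmod 2$ with $y(\rho) := a(\rho) \cdot \oplus\homs{(\fracture{Q}{\rho}, c_\rho)}{(G, c)} \bmod 2$; this is the zeta transform of $y$ on the fracture lattice of $Q$. Since the zeta matrix is upper triangular with $1$'s on its diagonal in any linear extension of $\preceq$, back-substitution over $\mathbb{F}_2$ recovers $y(\rho)$ for every $\rho$, and whenever $a(\rho) \not\equiv 0 \bmod 2$ this equals the desired $\oplus\homs{(\fracture{Q}{\rho}, c_\rho)}{(G, c)} \bmod 2$. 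The total cost is $f(|Q|)$ oracle queries plus $f(|Q|)^{O(1)}$ further arithmetic, matching the claimed $f(|Q|)\cdot|G|^{O(1)}$ bound. The main obstacle I anticipate is the careful verification of the combinatorial identity, particularly the forcing argument that pins down $B'$ from the local edge structure incident to $u^B$; once this step is rigorous, the modular inversion is immediate, because every diagonal entry of the zeta matrix equals $1$ no parity cancellation occurs, and the odd-coefficient hypothesis on $a(\rho)$ is exactly what allows us to isolate $\oplus\homs{(\fracture{Q}{\rho}, c_\rho)}{(G, c)}$ from the product $a(\rho) \cdot \oplus\homs{(\fracture{Q}{\rho}, c_\rho)}{(G, c)}$ in $\mathbb{F}_2$.
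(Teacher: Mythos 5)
Your proposal is correct and is essentially the proof given in the cited source \cite{PeyerimhoffRSSVW22} (the present paper imports the lemma without reproving it): pre-fracturing the host $(G,c)$ along each fracture $\sigma$, establishing the identity $\#\homs{(\fracture{Q}{\rho},c_\rho)}{(G_\sigma,c_\sigma)}=[\rho\preceq\sigma]\cdot\#\homs{(\fracture{Q}{\rho},c_\rho)}{(G,c)}$, and inverting the resulting zeta transform over the fracture lattice, which works modulo $2$ because the transform is unitriangular. Your size and running-time accounting (boundedly many queries, each of size $f(|Q|)\cdot|G|$) is also the standard one, so nothing is missing.
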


\medskip 

\noindent The second tool is a standard application of the inclusion-exclusion principle (see e.g. \cite[Sections 4.2 and 6.3]{PeyerimhoffRSSVW22}). It will be used in the final steps of our reductions to remove the colourings.
\begin{lemma}[\cite{PeyerimhoffRSSVW22}]\label{lem:remove_edgecols}
    There is a deterministic algorithm $\mathbb{A}$ that satisfies the following conditions:
    \begin{enumerate}
        \item $\mathbb{A}$ expects as input a graph $H$ with $k$ edges, a graph $G$ and a $k$-edge colouring $\gamma$ of $G$.
        \item $\mathbb{A}$ is equipped with oracle access to the function $\oplus\subs{H}{\star}$, and each oracle query $G'$ satisfies $|G'|\leq |G|$.
        \item $\mathbb{A}$ computes $\oplus\colsubs{H}{(G,\gamma)}$.
        \item The running time of $\mathbb{A}$ is bounded by $2^{|H|}\cdot |G|^{O(1)}$.
    \end{enumerate}
\end{lemma}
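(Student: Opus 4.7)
The plan is a direct inclusion-exclusion over colour subsets, adapted to characteristic~$2$. Identify the $k$ colours with the ground set $[k]$. For each subset $A\subseteq [k]$, let $G_A$ be the spanning subgraph of $G$ obtained by keeping exactly those edges whose $\gamma$-colour lies in $A$; clearly $|G_A|\le |G|$ and $G_A$ can be constructed in time polynomial in $|G|$. For $B\subseteq [k]$, let $M(B)$ be the number of subgraphs of $G$ that are isomorphic to $H$ and whose set of attained colours equals $B$. Then
\[
\#\subs{H}{G_A}\;=\;\sum_{B\subseteq A} M(B),
\]
since an $H$-subgraph lies in $G_A$ iff all of its edge-colours lie in $A$. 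Möbius inversion on the Boolean lattice gives
\[
M([k])\;=\;\sum_{B\subseteq [k]}(-1)^{k-|B|}\,\#\subs{H}{G_B}.
\]

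Because $H$ has exactly $k$ edges, a subgraph $H'\cong H$ attains \emph{all} $k$ colours if and only if each colour is attained \emph{exactly once}, so $M([k])=\#\colsubs{H}{(G,\gamma)}$. Reducing the displayed identity modulo $2$ collapses the signs and yields
\[
\oplus\colsubs{H}{(G,\gamma)}\;\equiv\;\sum_{B\subseteq [k]}\oplus\subs{H}{G_B}\pmod{2}.
\]

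The algorithm $\mathbb{A}$ is then immediate: enumerate all $2^k$ subsets $B\subseteq [k]$, build $G_B$ in polynomial time, query the oracle for $\oplus\subs{H}{G_B}$, and return the XOR of the answers. Since $k\le |H|$ (understood as the combined size of $H$, or equivalently since $k=|E(H)|$ is a component of $|H|$), the total running time is bounded by $2^{|H|}\cdot |G|^{O(1)}$, and every oracle query is on a graph of size at most $|G|$, meeting all four required conditions. There is no real obstacle here: the only subtlety is observing the coincidence "all $k$ colours used" $\Leftrightarrow$ "each colour used exactly once", which is what forces the target to be precisely $M([k])$ and makes the standard inclusion-exclusion expansion directly applicable.
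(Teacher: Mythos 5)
Your proof is correct and is exactly the standard inclusion--exclusion argument over colour subsets that the paper attributes to~\cite{PeyerimhoffRSSVW22} (the paper itself only cites the result and does not reprove it). The key observation that ``all $k$ colours attained'' coincides with ``each colour attained exactly once'' because $H$ has exactly $k$ edges is correctly identified, and the sign collapse modulo~$2$ and the bounds on query size and running time are all handled properly.
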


\section{Classification for Hereditary Graph Classes}\label{sec:hereditary}
In this section, we will completely classify the complexity of $\oplus\subsprob(\mathcal{H})$ for hereditary classes. Let us start by restating the classification theorem.

\mainhereditary*

The proof of Theorem~\ref{thm:main_hereditary} is split in four cases, which stem from a structural property of non matching splittable hereditary graph classes $\mathcal{H}$ due to Jansen and Marx~\cite{JansenM15}. For the statement, we need to consider the following classes:
\begin{itemize}
    \item $\mathcal{F}_\omega$ is the class of all complete graphs.
    \item $\mathcal{F}_\beta$ is the class of all complete bipartite graphs.
    \item $\mathcal{F}_{P_2}$ is the class of all $P_2$-packings, that is, disjoint unions of paths with two edges.\footnote{To avoid confusion, we remark that~\cite{JansenM15} uses $P_3$ to denote the path of two edges (and three vertices). In the current work, it will be more convenient to use the number of edges of a path as index.}
    \item $\mathcal{F}_{K_3}$ is the class of all triangle packings, that is, disjoint unions of the complete graph of size $3$.
\end{itemize}

\begin{theorem}[Theorem 3.5 in~\cite{JansenM15}]\label{thm:matchsplit_characterisation}
Let $\mathcal{H}$ be a hereditary class of graphs. If $\mathcal{H}$ is not matching splittable then at least one of the following are true: (1.) $\mathcal{F}_\omega \subseteq \mathcal{H}$, (2.)  $\mathcal{F}_\beta \subseteq \mathcal{H}$, (3.) $\mathcal{F}_{P_2} \subseteq \mathcal{H}$, or (4.) $\mathcal{F}_{K_3} \subseteq \mathcal{H}$.
\end{theorem}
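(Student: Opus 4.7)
The statement is a purely combinatorial structural claim about hereditary graph classes; I would prove its contrapositive by a Ramsey-style extraction argument. Assume $\mathcal{H}$ is hereditary and omits each of the four template classes. Since $\mathcal{H}$ is closed under induced subgraphs, there is then a single integer $N$ such that no member of $\mathcal{H}$ contains $K_N$, $K_{N,N}$, the $N$-fold disjoint union $N\cdot P_2$, or $N\cdot K_3$ as an induced subgraph. The goal will be to deduce from this a uniform bound on the matching-split number throughout $\mathcal{H}$.

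The first step will be a quantitative reduction: for every graph $H$, the matching-split number is at most $3\,\nu(H)$, where $\nu(H)$ is the maximum number of pairwise vertex-disjoint copies of $P_2$ or $K_3$ in $H$. The proof of this reduction is short: let $S$ be the vertex set of a maximum such packing; if some vertex of $H-S$ had two neighbours in $H-S$, those three vertices would form a new vertex-disjoint $P_2$ or $K_3$ and extend the packing. Hence $H-S$ has maximum degree at most $1$, so $S$ is a splitting set of size at most $3\,\nu(H)$. It will therefore suffice to bound $\nu(H)$ uniformly on $\mathcal{H}$.

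For this I would argue by contradiction: if $\nu(H)$ can be made arbitrarily large on $\mathcal{H}$, I will extract from $H$ a large induced copy of one of the four templates, contradicting heredity. Pick $H \in \mathcal{H}$ with a vertex-disjoint packing of $M$ obstructions, with $M$ much larger than the Ramsey numbers that will appear below. By pigeonhole, at least $M/2$ of these obstructions share a type, say all are $P_2$'s with centres $b_i$ and leaves $a_i,c_i$ (the $K_3$ case is analogous). There are only finitely many possible bipartite adjacency patterns between two such labelled triples, so iterated Ramsey applications yield a sub-packing of size $g(M)$ on which the pattern between any two triples is identical. A finite case analysis over these homogeneous patterns will then show that the induced subgraph on the $3\,g(M)$ selected vertices always contains, as an induced subgraph, one of $K_N$, $K_{N,N}$, $N\cdot P_2$, or $N\cdot K_3$ once $g(M)\ge N$, giving the desired contradiction.

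The main obstacle I expect is the case analysis in the last step: verifying exhaustively that every homogeneous adjacency pattern between many copies of $P_2$ (and separately of $K_3$) inevitably yields one of the four templates as an induced subgraph. This is the combinatorial heart of the Jansen--Marx argument: it is finite, but the labelled triples produce several symmetry classes, and one must carefully track which template emerges in each so that nothing is missed.
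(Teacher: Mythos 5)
The paper does not prove this statement at all: it is imported verbatim as Theorem~3.5 of Jansen and Marx \cite{JansenM15}, so there is no in-paper proof to compare against. Judged on its own, your sketch is a sound reconstruction of the standard Ramsey-extraction argument (and is in the same spirit as the original proof in \cite{JansenM15}). The two reductions you state are correct: heredity converts the failure of each inclusion into a single forbidden induced subgraph $K_N$, $K_{N,N}$, $N\cdot P_2$, $N\cdot K_3$; and the maximality argument gives matching-split number at most $3\nu(H)$. The step you flag as the risky one does in fact close. After pigeonholing to one obstruction type and applying multicolour Ramsey to the $2^9$ possible ordered adjacency patterns between labelled triples, the homogeneous pattern is a $3\times 3$ Boolean matrix $P$ recording, for positions $u,v$, whether $u_i\sim v_j$ holds for $i<j$. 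If some diagonal entry of $P$ is $1$ you get a large clique; if $P$ is the zero matrix the triples induce a large $P_2$- or $K_3$-packing; and if $P$ vanishes on the diagonal but has an off-diagonal $1$ in position $(u,v)$, then taking $U=\{u_1,\dots,u_m\}$ and $V=\{v_{m+1},\dots,v_{2m}\}$ makes both sides independent (diagonal zero) and every $U$--$V$ pair adjacent (since $i<j$ throughout), yielding an induced $K_{m,m}$ regardless of whether the $(v,u)$ entry is $0$ (half-graph) or $1$. So every homogeneous pattern produces one of the four templates, and the contradiction with heredity follows once $g(M)\ge 2N$. The only points worth making explicit in a full write-up are (i) that each packed obstruction can be taken to \emph{induce} a $P_2$ or a $K_3$ (three vertices spanning a path of length two induce one or the other), and (ii) the disjoint-index-set trick above, without which the asymmetric off-diagonal case is not obviously an \emph{induced} biclique.
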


Thus, it suffices to consider cases 1.\ - 4.\ to prove Theorem~\ref{thm:main_hereditary}. We start with the easy cases of cliques and bicliques; they follow implicitly from previous works~\cite{CurticapeanDH21,DorflerRSW22,PeyerimhoffR0SV21MFCS} and we only include a proof for completeness. Note that a tight bound under rETH is known for those cases:
\begin{lemma}\label{lem:cliques_and_bicliques}
Let $\mathcal{H}$ be a hereditary class of graphs. If $\mathcal{F}_\omega \subseteq \mathcal{H}$ or $\mathcal{F}_\beta \subseteq \mathcal{H}$ then $\oplus\subsprob(\mathcal{H})$ is $\oplus\W{1}$-hard and, assuming rETH, cannot be solved in time $f(|H|)\cdot |G|^{o(|V(H)|)}$ for any function $f$.
\end{lemma}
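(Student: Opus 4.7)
The plan is to treat cliques and bicliques separately, exploiting in each case the known parameterised hardness of the corresponding \emph{decision} problem and lifting it to the parity setting via the randomised variant of the Isolation Lemma due to Williams, Wang, Williams and Yu, exactly as invoked in~\cite{CurticapeanDH21}.

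For the clique case the reduction is immediate: given an instance $(G,k)$ of $\oplus\textsc{Clique}$, I would output the pair $(K_k,G)$, which is a legal instance of $\oplus\subsprob(\mathcal{H})$ since $K_k\in\mathcal{F}_\omega\subseteq\mathcal{H}$. Because $\oplus\subs{K_k}{G}$ is by definition the parity of the number of $k$-cliques of~$G$, this is a parameterised Turing reduction with a single oracle query and the parameter preserved, so $\oplus\W{1}$-hardness follows at once. The fine-grained lower bound inherits from the standard rETH lower bound for $\oplus\textsc{Clique}$ itself, obtained by combining the $|G|^{\Omega(k)}$ bound of Chen et al.\ for the decision version of \textsc{Clique} with the Isolation Lemma. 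As $|V(K_k)|=k$, this reads as the required $f(|H|)\cdot|G|^{o(|V(H)|)}$ bound.

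For the biclique case, I would begin from Lin's theorem (as cited in~\cite{JansenM15}): deciding whether a graph contains $K_{k,k}$ as a subgraph is $\W{1}$-hard and, under ETH, cannot be solved in time $f(k)\cdot|G|^{o(k)}$ for any computable~$f$. I would then invoke the Williams et al.\ isolation argument to obtain a randomised parameterised Turing reduction from the detection of $K_{k,k}$ to the computation of $\oplus\subs{K_{k,k}}{G'}$ for a host graph $G'$ of size polynomial in $|G|$ and $k$. Since $K_{k,k}\in\mathcal{F}_\beta\subseteq\mathcal{H}$, each oracle query is a legitimate instance of $\oplus\subsprob(\mathcal{H})$ with pattern $K_{k,k}$, and the detection lower bound transfers to the rETH lower bound $f(k)\cdot|G|^{o(k)}$ for $\oplus\subsprob(\mathcal{H})$. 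Combining $|V(K_{k,k})|=2k$ with the elementary identity $o(2k)=o(k)$ yields the desired $|G|^{o(|V(H)|)}$ bound.

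The only point that needs a careful check, and which I view as the sole conceptual obstacle, is that the Isolation Lemma produces a randomised reduction whose oracle queries stay within instances of $\oplus\subsprob(\mathcal{H})$ for the \emph{fixed} pattern, i.e., that the pattern $K_{k,k}$ (respectively $K_k$) is preserved and only the host is perturbed. This is precisely the formulation given by Williams et al., and it is used in the same way for cliques in~\cite{CurticapeanDH21}; so no additional argument beyond citation is required.
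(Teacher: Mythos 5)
Your treatment of the clique case is essentially the paper's: the identity reduction from $\oplus\textsc{Clique}$ gives $\oplus\W{1}$-hardness by definition, and combining the Williams et al.\ isolation argument with the Chen et al.\ lower bound for clique \emph{detection} gives the tight rETH bound, since that reduction preserves $k$ and the host size.

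The biclique case, however, contains a genuine gap: it rests on a lower bound for \emph{detecting} $K_{k,k}$ that is not known to hold. Lin's theorem does give $\W{1}$-hardness of $k$-biclique detection, but the accompanying ETH lower bound is only $f(k)\cdot|G|^{o(\sqrt{k})}$; improving $\sqrt{k}$ to $k$ for the decision problem is a well-known open question. So even if the isolation step went through, your reduction would deliver at best $f(|H|)\cdot|G|^{o(\sqrt{|V(H)|})}$, which falls short of the claimed $|G|^{o(|V(H)|)}$. Two further, more minor, issues: a chain of the form ``\textsc{Clique} reduces to $K_{k,k}$-detection, which reduces to $\oplus\subsprob(\mathcal{F}_\beta)$'' does not establish $\oplus\W{1}$-hardness in the formal sense used in this paper (which requires a reduction \emph{from} $\oplus\textsc{Clique}$); and isolating over vertex sets is delicate for subgraph copies of $K_{k,k}$, because a fixed $2k$-set may host several (possibly an even number of) copies --- this is repaired by working with bipartite hosts, but it does need to be said. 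The paper sidesteps all of this by citing D\"orfler et al., who prove directly that counting, modulo $2$, the \emph{induced} $k$-by-$k$ bicliques in a bipartite $n$-vertex graph is $\oplus\W{1}$-hard and admits no $f(k)\cdot n^{o(k)}$ algorithm under rETH --- a counting lower bound obtained by different means and strictly stronger than anything known for the decision version --- and then observes that in a bipartite host every biclique copy with at least one edge is automatically induced.
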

\begin{proof}
If $\mathcal{F}_\omega \subseteq \mathcal{H}$ then $\oplus\W{1}$-hardness follows immediately from the fact that $\oplus\textsc{Clique}$ is the canonical $\oplus\W{1}$-complete problem~\cite{CurticapeanDH21}. For the rETH lower bound, we can reduce from the problem of \emph{deciding} the existence of a $k$-clique via a (randomised) reduction using a version of the Isolation Lemma due to Williams et al.\ \cite[Lemma~2.1]{WilliamsWWY15}. This reduction does not increase $k$ or the size of the host graph and is thus tight with respect to the well-known lower bound for the clique problem due to Chen et al.\ \cite{Chenetal05,Chenetal06}: Deciding the existence of a $k$-clique in an $n$-vertex graph cannot be done in time $f(k)\cdot n^{o(k)}$ for any function $f$, unless ETH fails. Our lower bound under rETH follows since the reduction is randomised.

If $\mathcal{F}_\beta \subseteq \mathcal{H}$, then the claim holds by~\cite[Theorem 5]{DorflerRSW22}, which established the problem of counting, modulo $2$, the induced copies of a $k$-by-$k$-biclique in an $n$-vertex bipartite graph to be $\oplus\W{1}$-hard and not solvable in time $f(k)\cdot n^{o(k)}$ for any function $f$, unless rETH fails. Since a copy of a biclique (with at least one edge) in a bipartite graph must always be induced, the claim follows. This concludes the proof of Lemma~\ref{lem:cliques_and_bicliques}.
\end{proof}

The more interesting cases are $\mathcal{F}_{P_2} \subseteq \mathcal{H}$ and $\mathcal{F}_{K_3} \subseteq \mathcal{H}$. One reason for this is that, in contrast to cliques and bicliques, the decision version of those instances are fixed-parameter tractable. Hence a reduction from the decision version via e.g.\ an isolation lemma does not help. In other words, establishing hardness for those cases requires us to rely on the full power of counting modulo $2$. More precisely, we will rely on the framework of fractures graphs (see Section~\ref{sec:prelims}). Both cases can be considered simpler applications of the machinery used in the later sections, so we will present all steps in great detail. While this might seem unnecessary given the simplicity of the constructions, we hope that it enables the reader to make themselves familiar with the general reduction strategies which will be used throughout the later sections of this work.

\subsection{Triangle Packings}\label{sec:triangle}
The goal of this subsection is to establish hardness of $\oplus\subsprob(\mathcal{F}_{K_3})$. To this end, let $\Delta$ be an infinite computable class of cubic bipartite expander graphs, and let $\mathcal{Q}=\{L(H)~|~H\in\Delta\}$ where $L(H)$ is constructed as follows: Each $v\in V(H)$ becomes a triangle with vertices $v_x$, $v_y$, and $v_z$ corresponding to the three neighbours $x$, $y$, and $z$ of $v$. Finally, for every edge $\{u,v\}\in E(H)$ we identify $v_u$ and $u_v$. In fact, $L(H)$ is just the \emph{line graph} of $H$: Every edge of $H$ becomes a vertex in $L(H)$, and two vertices of $L(H)$ are made adjacent if and only if the corresponding edges in $H$ are incident. Since all $H\in\Delta$ are bipartite (and thus triangle-free), we can easily observe the following.\footnote{Observation~\ref{obs:triangle_bijection} is also an immediate consequence of Whitney's Isomorphism Theorem implying that a triangle of a line graph corresponds to either a claw or to a triangle in its primal graph.}
\begin{observation}\label{obs:triangle_bijection}
The mapping $v \mapsto (v_x,v_y,v_z)$ is a bijection from vertices of $H$ to triangles in $L(H)$.
\end{observation}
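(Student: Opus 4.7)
The plan is to verify both directions of the bijection, using bipartiteness (hence triangle-freeness) of $H\in\Delta$ to rule out ``spurious'' triangles in $L(H)$.

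First I would check that the map is well-defined: for any $v\in V(H)$ with neighbours $x,y,z$, the three edges $\{v,x\},\{v,y\},\{v,z\}$ of $H$ pairwise share the endpoint $v$, so their images $v_x,v_y,v_z$ are pairwise adjacent in $L(H)$ by construction. Hence $(v_x,v_y,v_z)$ is indeed a triangle in $L(H)$.

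For injectivity, suppose $v\ne w$ both give the same triangle in $L(H)$. Recalling that vertices of $L(H)$ are in bijection with edges of $H$, the triangle obtained from $v$ is the set of three edges incident to $v$ in $H$, and likewise for $w$. Equality of these edge sets would force every edge incident to $v$ to also be incident to $w$; in a simple graph this is impossible since at most one edge (namely $\{v,w\}$, if it exists) is simultaneously incident to both $v$ and $w$, whereas $v$ is cubic and so has three incident edges.

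For surjectivity, consider any triangle $\{a,b,c\}$ in $L(H)$, corresponding to three edges $e_a,e_b,e_c\in E(H)$ that are pairwise incident. A standard case analysis on pairwise-incident triples of edges shows exactly two possibilities: either all three share a common endpoint $v\in V(H)$, or their endpoints form a triangle in $H$. Since every $H\in\Delta$ is bipartite and thus triangle-free, the second case is ruled out. Hence $\{e_a,e_b,e_c\}$ is precisely the set of three edges incident to some $v\in V(H)$, and the triangle is $(v_x,v_y,v_z)$, the image of $v$.

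There is no serious obstacle; the only subtlety is the case analysis for surjectivity, which is exactly the content of (the easy direction of) Whitney's theorem referenced in the footnote, and it becomes one-line once bipartiteness is invoked to kill the triangle-in-$H$ case.
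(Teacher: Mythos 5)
Your proposal is correct and matches the paper's (implicit) argument: the paper justifies the observation precisely by noting that three pairwise-incident edges of $H$ either share a vertex or form a triangle in $H$, and bipartiteness of $H\in\Delta$ rules out the latter. Your additional well-definedness and injectivity checks are routine and unproblematic.
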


We also consider the fracture of $L(H)$ that splits $L(H)$ back into $|V(H)|$ triangles; consider Figure~\ref{fig:trianglepackings} for an illustration.
\begin{definition}[$\tau(H)$]
Let $H\in \Delta$ and recall that each vertex $w$ of $L(H)$ is obtained by identifying $v_u$ and $u_v$ for some edge $\{u,v\}\in E(H)$. Moreover, $w$ has four incident edges $e_x$, $e_y$, $e_a$, $e_b$, to $v_x$, $v_y$, $u_a$, $u_b$, respectively, where $x,y,u$ are the neighbours of $v$ in $H$ and $v,a,b$ are the neighbours of $u$ in $H$. We define $\tau(H)_w := \{ \{e_x,e_y\} , \{e_a,e_b \} \}$, and we proceed similar for all vertices of $L(H)$.
\end{definition}

Next, we use that $\mathsf{tw}(L(H)) = \Omega(\mathsf{tw}(H))$ (see e.g.\ \cite{HarveyW18}). Moreover, $\mathsf{tw}(L(H))\leq |V(L(H))|$ since the treewidth of a graph is always bounded by the number of its vertices. Additionally, $|V(L(H))|=|E(H)|$ by construction. Since the graphs in $\Delta$ are cubic, we further have that $|E(H)|= \Theta(|V(H)|)$ for $H\in \Delta$. We combine those bounds with the fact that expander graphs have treewidth linear in the number of vertices (see e.g.\ \cite{GroheM09}); therefore $\Delta$ and thus $\mathcal{Q}$ have unbounded treewidth. Putting these facts together, we obtain the following.

\begin{fact}\label{fact:linear_tw}
$\mathcal{Q}$ has unbounded treewidth and $\mathsf{tw}(L(H))= \Theta(|V(L(H))|) = \Theta(|V(H)|)$ for $H\in \Delta$.
\end{fact}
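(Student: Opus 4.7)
The plan is to assemble the fact by chaining the three bounds that are already cited in the paragraph immediately preceding the statement, then observing that $\Delta$ being an infinite class of graphs of linearly growing treewidth forces $\mathcal{Q}$ to have unbounded treewidth.

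First, I would nail down the relationship $|V(L(H))| = \Theta(|V(H)|)$ for $H \in \Delta$. By construction, the vertices of $L(H)$ are in bijection with the edges of $H$, so $|V(L(H))| = |E(H)|$. Since every graph in $\Delta$ is cubic, the handshake lemma gives $|E(H)| = 3|V(H)|/2$, so $|V(L(H))| = \Theta(|V(H)|)$.

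Next, I would establish the treewidth bound $\mathsf{tw}(L(H)) = \Theta(|V(H)|)$ by sandwiching from above and below. The upper bound is immediate: for any graph $G$, $\mathsf{tw}(G) \le |V(G)|$, which together with the previous paragraph yields $\mathsf{tw}(L(H)) \le |V(L(H))| = O(|V(H)|)$. For the matching lower bound, I would invoke Harvey and Wood \cite{HarveyW18}, which gives $\mathsf{tw}(L(H)) = \Omega(\mathsf{tw}(H))$, together with the fact that cubic bipartite expanders have treewidth linear in the number of vertices (Grohe and Marx \cite{GroheM09}), i.e.\ $\mathsf{tw}(H) = \Omega(|V(H)|)$. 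Combining these two yields $\mathsf{tw}(L(H)) = \Omega(|V(H)|) = \Omega(|V(L(H))|)$, which together with the upper bound gives $\mathsf{tw}(L(H)) = \Theta(|V(L(H))|) = \Theta(|V(H)|)$.

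Finally, to conclude that $\mathcal{Q}$ has unbounded treewidth, I would use that $\Delta$ is infinite and its members are cubic, hence $|V(H)|$ grows without bound over $H \in \Delta$; by the displayed $\Theta$-estimate, $\mathsf{tw}(L(H))$ grows without bound as well, so no uniform upper bound exists on the treewidth of graphs in $\mathcal{Q}$. There is no real obstacle here; the only thing to be slightly careful about is making explicit the cubic-graph handshake argument that converts between $|V(H)|$, $|E(H)|$, and $|V(L(H))|$ so that both of the $\Theta$-expressions in the statement are justified simultaneously.
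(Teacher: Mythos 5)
Your proposal matches the paper's argument essentially verbatim: both combine $\mathsf{tw}(L(H))=\Omega(\mathsf{tw}(H))$ from Harvey--Wood, the linear treewidth of expanders, the trivial bound $\mathsf{tw}(L(H))\leq |V(L(H))|=|E(H)|$, and the cubic handshake identity $|E(H)|=\Theta(|V(H)|)$, and both deduce unboundedness from $\Delta$ being infinite. The proof is correct and takes the same route as the paper.
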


We are now able to establish hardness of $\oplus\subsprob(\mathcal{F}_{K_3})$. The proof will heavily rely on the transformation from edge-coloured subgraphs to homomorphisms established in~\cite{PeyerimhoffRSSVW22}.

\begin{figure}
    \centering
    \includegraphics[width=\textwidth]{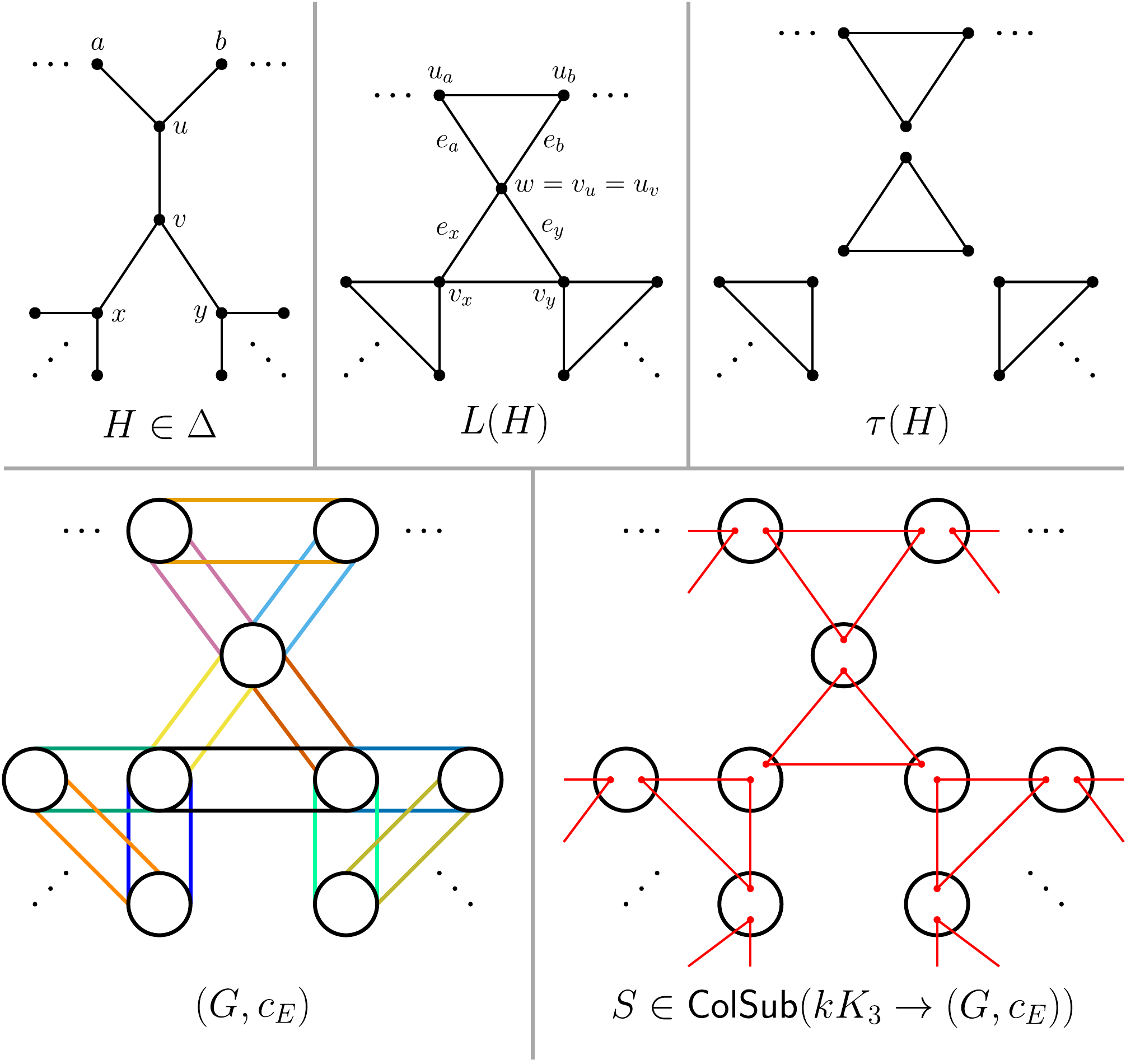}
    \caption{(\emph{Top}:) A cubic bipartite graph $H\in \Delta$, its line graph $L(H)$, and the fractured graph induced by $\tau(H)$. (\emph{Below}:) An $L(H)$-coloured graph $(G,c)$; emphasised in distinct colours is the edge-colouring $c_E$ of $G$ induced by the mapping $\{u,v\}\mapsto \{c(u),c(v)\}$. Additionally we depict an element $S\in \colsubs{kK_3}{(G,c_E)}$, that is, a subgraph of $G$ isomorphic to $kK_3$ that contains each edge colour of $G$ precisely once.} 
    \label{fig:trianglepackings}
\end{figure}

\begin{lemma}\label{lem:trianglepackings}
The problem $\oplus\subsprob(\mathcal{F}_{K_3})$ is $\oplus\W{1}$-hard. Furthermore, on input $kK_3$ and $G$, the problem cannot be solved in time $f(k)\cdot |G|^{o(k/\log k)}$ for any function $f$, unless rETH fails.
\end{lemma}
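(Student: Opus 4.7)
The plan is to establish a parameterised Turing reduction $\oplus\cphomsprob(\mathcal{Q}) \fptred \oplus\subsprob(\mathcal{F}_{K_3})$ whose oracle queries are of the form $(kK_3, G^\ast)$ with $k=|V(H)|$ and $|G^\ast|=O(|G|)$. By Fact~\ref{fact:linear_tw} and Theorem~\ref{thm:cphom_lower_bound}, $\oplus\cphomsprob(\mathcal{Q})$ is $\oplus\W{1}$-hard and cannot be solved in time $f(|L(H)|)\cdot|G|^{o(k/\log k)}$ under rETH (using $\mathsf{tw}(L(H))=\Theta(k)$), so a reduction of this form transfers both conclusions to $\oplus\subsprob(\mathcal{F}_{K_3})$.

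Given an instance $(L(H),(G,c))$ with target $\oplus\homs{(L(H),\mathsf{id}_{L(H)})}{(G,c)}$, my first step would be to invoke complexity monotonicity (Lemma~\ref{lem:complexity_monotonicty}) with $Q=L(H)$: it suffices to simulate, for every input $(G',c')$, a linear combination $\sum_\rho a(\rho)\cdot\oplus\homs{(\fracture{L(H)}{\rho},c_\rho)}{(G',c')}$ in which the coarsest fracture $\rho^\ast$ (where $\fracture{L(H)}{\rho^\ast}=L(H)$ and $c_{\rho^\ast}=\mathsf{id}_{L(H)}$) has odd coefficient. To simulate this oracle, I would use Lemma~\ref{lem:remove_edgecols} with the $\oplus\subsprob(\mathcal{F}_{K_3})$ oracle to compute $\oplus\colsubs{kK_3}{(G',c'_E)}$, where $c'_E$ is the edge-colouring induced by $c'$.

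The combinatorial heart of the proof is the modulo-$2$ identity
\[ \oplus\colsubs{kK_3}{(G',c'_E)} \equiv \sum_{\sigma \succeq \tau(H)} \mu(\tau(H),\sigma)\cdot\oplus\homs{(\fracture{L(H)}{\sigma},c_\sigma)}{(G',c')} \pmod{2}, \]
where $\succeq$ is coarsening of fractures and $\mu$ is the M\"obius function of the fracture lattice. I would prove this in three steps. (i) Since $H$ is triangle-free bipartite cubic, Observation~\ref{obs:triangle_bijection} gives a bijection between $V(H)$ and the triangles of $L(H)$, so any colourful $kK_3$-subgraph of $(G',c'_E)$ canonically labels its $k$ triangles by $V(H)$ and is in bijection with a colour-preserving embedding of $(\fracture{L(H)}{\tau(H)},c_{\tau(H)})$ into $(G',c')$; here I rely on the fact that, when $\Delta$ is chosen so that distinct vertices of $H$ have distinct neighbour triples, the colour-preserving automorphism group of $(\fracture{L(H)}{\tau(H)},c_{\tau(H)})$ is trivial. (ii) Standard hom-emb inclusion-exclusion gives $\homs{(\fracture{L(H)}{\tau(H)},c_{\tau(H)})}{(G',c')}=\sum_{\sigma\succeq\tau(H)}\embs{(\fracture{L(H)}{\sigma},c_\sigma)}{(G',c')}$ (colour-respecting identifications of the two copies of each $w\in V(L(H))$ in $\fracture{L(H)}{\tau(H)}$ are in bijection with the coarsenings of $\tau(H)$), and M\"obius inversion on this interval then yields the displayed identity. (iii) The interval $[\tau(H),\rho^\ast]$ in the fracture lattice is isomorphic to the Boolean lattice on $V(L(H))$, since for each $w\in V(L(H))$ one independently chooses whether to merge the two blocks of $\tau(H)_w$; hence $\mu(\tau(H),\rho^\ast)=(-1)^{|V(L(H))|}\equiv 1\pmod{2}$, supplying the odd coefficient that complexity monotonicity requires.

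The main obstacle is the identity above — specifically, confirming the Boolean-lattice structure of $[\tau(H),\rho^\ast]$ and the resulting odd parity of $\mu(\tau(H),\rho^\ast)$ modulo $2$. Once that is in place, the remainder is bookkeeping: verifying that the oracle queries stay linear in $|G|$ via Lemmas~\ref{lem:remove_edgecols} and~\ref{lem:complexity_monotonicty}, confirming the existence of an infinite computable $\Delta$ of cubic bipartite expanders with the required neighbour-distinguishability property (which is generic), and translating the rETH lower bound through the reduction to conclude both $\oplus\W{1}$-hardness and the $f(k)\cdot|G|^{o(k/\log k)}$ fine-grained bound.
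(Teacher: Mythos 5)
Your proposal is correct and follows essentially the same route as the paper: reduce from $\oplus\cphomsprob(\mathcal{Q})$ for line graphs of cubic bipartite expanders, use the uniqueness of the fracture $\tau(H)$ splitting $L(H)$ into $k$ triangles to show the coefficient of the coarsest fracture is $(-1)^{|V(L(H))|}$ and hence odd, then apply complexity monotonicity (Lemma~\ref{lem:complexity_monotonicty}) and inclusion-exclusion (Lemma~\ref{lem:remove_edgecols}). The only difference is presentational: you re-derive the coefficient identity via M\"obius inversion on the Boolean interval $[\tau(H),\top]$, whereas the paper imports it from \cite[Lemma~4.1 and Corollary~4.3]{PeyerimhoffRSSVW22} (and your extra neighbour-distinguishability hypothesis on $\Delta$ is not needed, since edge-colourfulness already pins down the embedding).
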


\begin{proof}
We reduce from $\oplus\cphomsprob(\mathcal{Q})$, which, by Fact~\ref{fact:linear_tw} and Theorem~\ref{thm:cphom_lower_bound}, is $\oplus\W{1}$-hard and 
for $L(H)\in \mathcal{Q}$, it
cannot be solved in time $f(|L(H)|)\cdot |G|^{o(|V(L(H))|/\log |V(L(H))|)}$, unless rETH fails.

Let $L$ and $(G,c)$ be an input instance to $\oplus\cphomsprob(\mathcal{Q})$. 
Recall that $\Delta$ is computable --- that is, there is an algorithm that takes a graph~$H$ and determines whether it is in~$\Delta$. Thus, there is an algorithm 
that takes input $L\in \mathcal{Q}$ and finds
a graph $H\in \Delta$ with $L=L(H)$.
The run time of this algorithm depends on $|L|$ but clearly not on $(G,c)$.  Let $k = |V(H)|$ and note that $|E(L(H))|=3k$, since, by construction, each vertex $v$ of $H$ becomes a triangle of $L(H)$. We consider the graph $G$ as a $3k$-edge-coloured graph, coloured by $c_E$. That is, each edge $e=\{x,y\}$ of~$G$ is assigned the colour $c_E(e) = \{c(x),c(y)\} $ which is an edge of~$L$ (see Figure~\ref{fig:trianglepackings} for an illustration).

Now, for \emph{any} $L$-coloured graph $(G',c')$ recall that $\colsubs{kK_3}{(G',c'_E)}$ is the set of subgraphs of $G'$ that are isomorphic to $kK_3$ and that include each edge colour (each edge of $L$) precisely once. 
We will see later that $\oplus\colsubs{kK_3}{(G',c'_E)}$ can be computed using our oracle for $\oplus\subsprob(\mathcal{F}_{K_3})$ using the principle of inclusion and exclusion.

It was shown in~\cite[Lemma~4.1]{PeyerimhoffRSSVW22} that there is a unique function $a$ such that for every $L$-coloured graph $(G',c')$ we have\footnote{In the language of~\cite{PeyerimhoffRSSVW22}, Equation~\eqref{eq:hombasis_triangles} is obtained by choosing $\Phi$ as the property of being isomorphic to~$kK_3$.}
\begin{equation}\label{eq:hombasis_triangles}
    \#\colsubs{kK_3}{(G',c'_E)} = \sum_{\rho} a(\rho) \cdot \homs{\fracture{L}{\rho}}{(G',c')}\,.
\end{equation}
where the sum is over all fractures of $L$. Additionally, it was shown in~\cite[Corollary~4.3]{PeyerimhoffRSSVW22} that 
\begin{equation}\label{eq:topcoeff_triangles}
    a(\top) = \sum_{\rho \in \mathsf{F}(kK_3,L)} \prod_{w\in V(L)} (-1)^{|\rho_w|-1}\cdot (|\rho_w|-1)! \,,
\end{equation}
where $\top$ is the fracture in which each partition consists only of one block (that is, $\fracture{L}{\top}=L$), and $\mathsf{F}(kK_3,L)$ is the set of all fractures $\rho$ of $L$ such that $\fracture{L}{\rho}\cong kK_3$. However, note that, by Observation~\ref{obs:triangle_bijection}, there is only way to fracture $L$ into $k$ disjoint triangles, and this fracture is given by $\tau(H)$. Thus, \eqref{eq:topcoeff_triangles} simplifies to
\begin{equation}\label{eq:topcoeff_triangles_simpl}
    a(\top) = \prod_{w\in V(L)} (-1)^{|\tau(H)_w|-1}\cdot (|\tau(H)_w|-1)! \,,
\end{equation}
which is odd since each partition of $\tau(H)$ consists of precisely two blocks (so in fact the expression in \eqref{eq:topcoeff_triangles_simpl} is $(-1)^{|V(L)|}$).

Note that the algorithm for
$\oplus\cphomsprob(\mathcal{Q})$
is supposed to compute $\oplus\homs{(L,\mathsf{id}_L)}{(G,c)}$
which is equal to  
$\oplus\homs{\fracture{L}{\top}}{(G,c_{\top})}$. Since $a(\top)$ is odd, we can invoke Lemma~\ref{lem:complexity_monotonicty} to recover this term by evaluating the entire linear combination~\eqref{eq:hombasis_triangles}, that is, by evaluating the function $\oplus\colsubs{kK_3}{\star}$. More concretely, this means that we need to compute $\oplus\colsubs{kK_3}{(G',c'_E)}$ for some $L$-coloured graphs $(G',c')$ of size at most $f(|L|)\cdot |G|$ for some computable function $f$ (see 3.\ in Lemma~\ref{lem:complexity_monotonicty}). This can easily be done using Lemma~\ref{lem:remove_edgecols} since we have oracle access to the function $\oplus\subs{kK_3}{\star}$. We emphasise that, by condition 2.\ of Lemma~\ref{lem:remove_edgecols}, each oracle query $\hat{G}$ satisfies $|\hat{G}| \leq |G'|$, where $(G',c')$ is the $L$-coloured graph for which we wish to compute $\oplus\colsubs{kK_3}{(G',c'_E)}$. Since $|(G',c')|\leq f(|L|)\cdot |G|$, we obtain that $|\hat{G}| \leq f(|L|)\cdot |G|$ as well.

Since, by Fact~\ref{fact:linear_tw},  
$k= \Theta(|kK_3|)=\Theta(|V(L)|)=\Theta(\mathsf{tw}(L))$,
our reduction yields $\oplus\W{1}$-hardness and transfers the conditional lower bound under rETH as desired.
\end{proof}

\subsection{$P_2$-packings}\label{sec:p2_pack}
Next we establish hardness for the case of $P_2$-packings. The strategy will be similar in spirit to the construction for triangle packings; however, rather then identifying a unique fracture for which the technique applies, we will encounter an \emph{odd} number of possible fractures in the current section.

Let $\Delta$ be a computable infinite class of $4$-regular expander graphs, and let $\mathcal{Q}$ be the class of all subdivisions of graphs in $\Delta$, that is $\mathcal{Q}=\{H^2~|~H \in \Delta \}$, where $H^2$ is obtained from $H$ by subdividing each edge once. 

We start by establishing an easy but convenient fact on the treewidth of the graphs in $\mathcal{Q}$.
\begin{lemma}\label{lem:Q_tw_pathpackings}
    $\mathcal{Q}$ has unbounded treewidth and $\mathsf{tw}(H^2)=\Theta(|V(H)|)$ for $H\in \Delta$.
\end{lemma}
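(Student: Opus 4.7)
The plan is to bound $\mathsf{tw}(H^2)$ from above and from below by $\Theta(|V(H)|)$. The upper bound will follow from the trivial estimate $\mathsf{tw}(G) \leq |V(G)| - 1$ together with the $4$-regularity of $H$, while the lower bound will come from a minor-contraction argument that reduces to the expander treewidth bound already invoked in Fact~\ref{fact:linear_tw}.

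First I would observe that $H^2$ has $|V(H)| + |E(H)|$ vertices, since $H^2$ is obtained from $H$ by inserting one new subdivision vertex $w_e$ per edge $e \in E(H)$ without removing any of the original vertices. Because $H$ is $4$-regular, $|E(H)| = 2|V(H)|$, giving $|V(H^2)| = 3|V(H)|$ and hence $\mathsf{tw}(H^2) \leq 3|V(H)| - 1 = O(|V(H)|)$. For the matching lower bound, the key observation is that $H$ is a minor of $H^2$: for each edge $e = \{u,v\}$ of $H$, contracting either one of the two edges $\{u, w_e\}, \{w_e, v\}$ incident to the corresponding subdivision vertex restores the edge $\{u,v\}$, and performing these contractions (which can be done independently, one edge at a time) for every $e \in E(H)$ yields a graph isomorphic to $H$. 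Since treewidth does not increase under taking minors, $\mathsf{tw}(H^2) \geq \mathsf{tw}(H)$. Combining this with the fact, already used in Fact~\ref{fact:linear_tw} and cited from \cite{GroheM09}, that expander graphs satisfy $\mathsf{tw}(H) = \Omega(|V(H)|)$, we obtain $\mathsf{tw}(H^2) = \Omega(|V(H)|)$. Together with the upper bound this yields $\mathsf{tw}(H^2) = \Theta(|V(H)|)$, and since $\Delta$ is infinite, $\mathcal{Q}$ has unbounded treewidth.

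I do not foresee a real obstacle here: all of the combinatorial work was already carried out in the choice of $\Delta$, and the remaining argument is simply the trivial treewidth upper bound together with the fact that subdivision is the reverse of edge contraction, so that minor-monotonicity immediately transfers the expander lower bound from $H$ to $H^2$. Should one prefer to avoid invoking minor-monotonicity as a black box, the same lower bound can be obtained by directly converting a tree decomposition of $H^2$ of width $k$ into a tree decomposition of $H$ of width $\leq k$, by deleting each subdivision vertex from every bag and making sure that the two endpoints of the corresponding edge appear together in at least one bag; but the minor formulation is shorter and equally valid.
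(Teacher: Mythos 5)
Your proof is correct and follows essentially the same route as the paper: the trivial bound $\mathsf{tw}(H^2)\leq |V(H^2)|=O(|V(H)|)$ for the upper bound, and minor-monotonicity of treewidth together with the expander lower bound $\mathsf{tw}(H)=\Omega(|V(H)|)$ for the lower bound. The extra detail you supply (the explicit count $|V(H^2)|=3|V(H)|$ and the edge-contraction argument showing $H$ is a minor of $H^2$) is accurate but not needed beyond what the paper states.
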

\begin{proof}
As in  Section~\ref{sec:triangle}, $\mathsf{tw}(H) = \Theta(|V(H)|)$ for $H\in\Delta$, since expanders have treewidth linear in the number of vertices. Since $H$ is a minor of $H^2$, and since taking minors cannot increase treewidth (see~\cite[Exercise 7.7]{CyganFKLMPPS15}), we thus have that $\mathsf{tw}(H^2) = \Omega(|V(H)|))$. Finally, we have $\mathsf{tw}(H^2) \leq |V(H^2)|$ since the treewidth is at most the number of vertices, and $|V(H^2)|= O(|V(H)|)$ since $H$ is $4$-regular. In combination, we obtain $\mathsf{tw}(H^2)=\Theta(|V(H)|)$ for $H\in \Delta$. Note that this also implies that $\mathcal{Q}$ has unbounded treewidth (as $\Delta$ is infinite).
\end{proof}

For what follows, given a subdivision $H^2$ of a graph $H$, it will be convenient to assume that $V(H^2)=V(H)\cup S_E$, where $S_E=\{s_e~|~e\in E(H\})$ is the set of the subdivision vertices.

\begin{definition}[Odd Fractures]\label{def:odd_fractures}
    Let $H\in \Delta$ and let $\tau$ be a fracture of $H^2$. We say that $\tau$ is \emph{odd} if the following two conditions are satisfied:
    \begin{enumerate}
        \item For each $s \in S_E$ the partition $\tau_s$ consists of two singleton blocks.
        \item For each $v \in V(H)$ the partition $\tau_v$ consists of two blocks of size $2$.
    \end{enumerate}
    Consider Figure~\ref{fig:p2packings} for a depiction of an odd fracture.
\end{definition}

The following two lemmas are crucial for our construction.
 
\begin{figure}
    \centering
    \includegraphics[width=\textwidth]{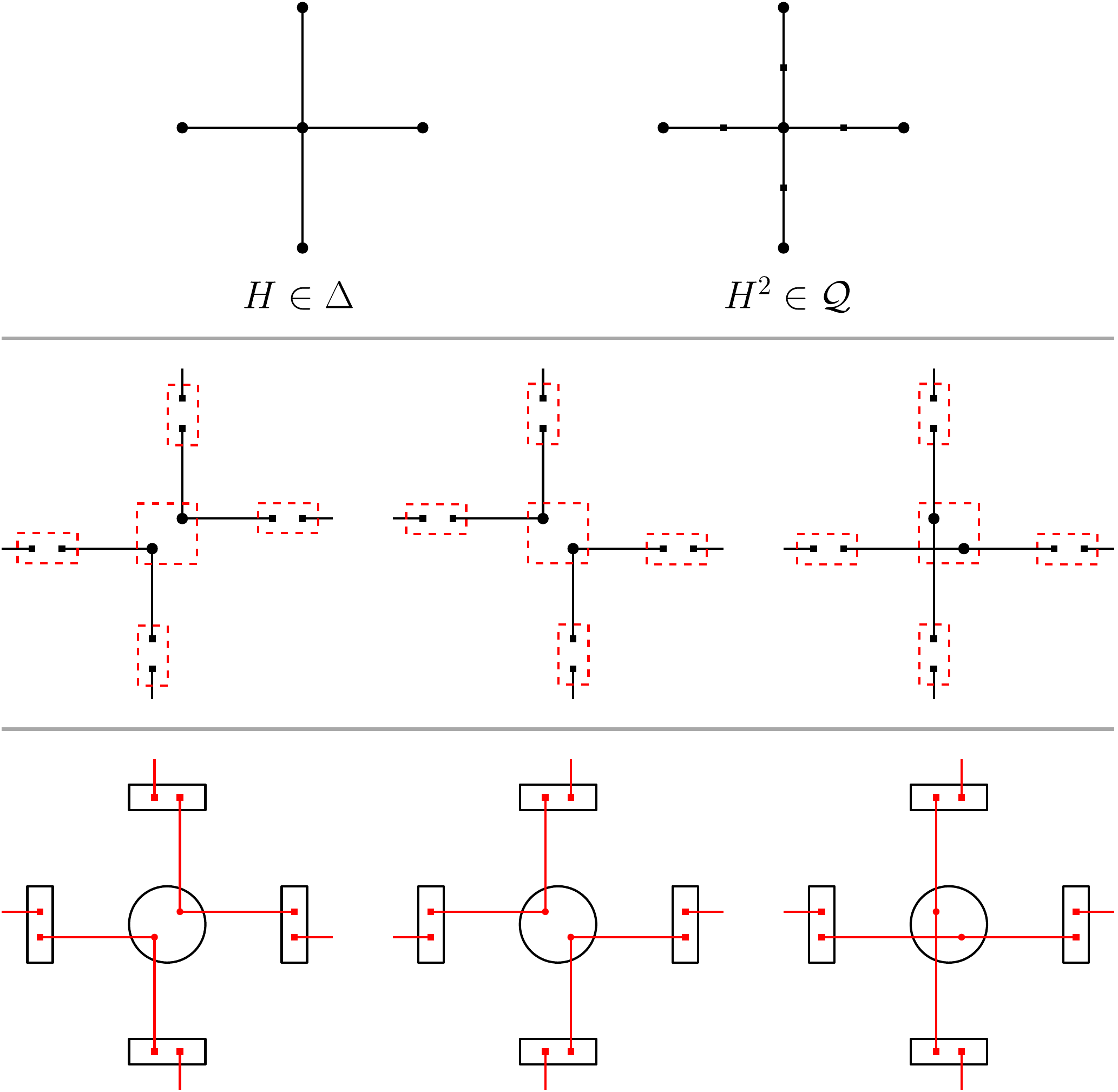}
    \caption{(\emph{Top}:) Subdividing a $4$-regular expander in $\Delta$ depicted by the neighbourhood of an individual vertex. (\emph{Centre}:) Illustrations of odd fractures (Definition~\ref{def:odd_fractures}). For each non-subdivision vertex, there are only three ways to satisfy 2.\ in Definition~\ref{def:odd_fractures}. This observation is used in Lemma~\ref{lem:odd_fractures} to show that the number of odd fractures is a power of~$3$. (\emph{Bottom}:) Elements of $\colsubs{kP_2}{(G,c_E)}$ inducing fractures of $H^2$ such that each partition has at most two blocks. Lemma~\ref{lem:isolate_P2} shows that those are precisely the odd fractures of~$H^2$.} 
    \label{fig:p2packings}
\end{figure}

\begin{lemma}\label{lem:odd_fractures}
    Let $H\in \Delta$. The number of odd fractures of $H^2$ is odd.
\end{lemma}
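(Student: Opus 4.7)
The plan is to count odd fractures exactly, verify the count is a power of~$3$, and conclude oddness.

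First I would unpack the degree conditions in $H^2$. For each subdivision vertex $s_e \in S_E$ with $e=\{u,v\}\in E(H)$, the incident edge set $E_{H^2}(s_e)$ has size exactly $2$ (namely the edges to $u$ and to $v$). Therefore the only partition of $E_{H^2}(s_e)$ into two singleton blocks is the trivial one, so condition~(1) of Definition~\ref{def:odd_fractures} forces $\tau_s$ uniquely for every $s\in S_E$. In particular the subdivision vertices contribute no free choice to an odd fracture.

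Next, because $H$ is $4$-regular, each original vertex $v \in V(H)$ has $|E_{H^2}(v)|=4$. Condition~(2) requires $\tau_v$ to be a partition of a $4$-element set into two blocks of size $2$. The number of such partitions is $\tfrac{1}{2}\binom{4}{2}=3$, so each $v\in V(H)$ contributes exactly $3$ independent choices; there is no interaction between distinct original vertices since the partitions $\tau_v$ live in disjoint edge sets.

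Multiplying over the vertices of $H^2$, the total number of odd fractures is
\[
\prod_{s\in S_E} 1 \;\cdot\; \prod_{v\in V(H)} 3 \;=\; 3^{|V(H)|}.
\]
Since every power of $3$ is odd, the number of odd fractures of $H^2$ is odd, which proves Lemma~\ref{lem:odd_fractures}. No step here is really an obstacle; the only thing worth double-checking is that condition~(1) is genuinely an empty constraint (which holds precisely because subdivision vertices have degree $2$ in $H^2$) and that condition~(2) decouples across the original vertices (which holds because incidence sets at distinct vertices are disjoint in $H^2$, even though an edge of $H^2$ is incident to one original and one subdivision vertex).
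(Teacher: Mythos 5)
Your proof is correct and follows exactly the paper's argument: the degree-$2$ subdivision vertices admit a unique admissible partition, each degree-$4$ original vertex admits exactly $3$ partitions into two blocks of size $2$, and the choices are independent, giving $3^{|V(H)|}$ odd fractures, which is odd.
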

\begin{proof}
    The first condition in Definition~\ref{def:odd_fractures} leaves only one choice for subdivision vertices. Let us thus consider a vertex $v\in V(H)=V(H^2)\setminus S_E$. Since $H$ is $4$-regular, there are $4$ incident edges to $v$. Now note that there are precisely $3$ partitions of a $4$-element set with two blocks of size $2$. Thus the total number of odd fractures of $H^2$ is $3^{|V(H)|}$, which is odd.
\end{proof}

\begin{lemma}\label{lem:isolate_P2}
    Let $H \in \Delta$, let $k=2|V(H)|$ and let $\tau$ be a fracture of $H^2$ such that $\tau_v$ consists of at most $2$ blocks for each $v\in V(H^2)$. Then $\fracture{H^2}{\tau} \cong kP_2$ if and only if $\tau$ is odd.
\end{lemma}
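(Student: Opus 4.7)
My plan is to use a degree and vertex-counting argument, leveraging the structure that (i) $H$ is $4$-regular so each $v \in V(H) \subseteq V(H^2)$ has degree $4$ in $H^2$, while each subdivision vertex $s \in S_E$ has degree $2$, and (ii) $H^2$ is bipartite with bipartition $V(H) \cup S_E$ (so all edges of $H^2$ run between $V(H)$ and $S_E$, and fracturing preserves this property). Note also that $|E(H^2)| = 2|E(H)| = 4|V(H)| = 2k$, which already matches the edge count of $kP_2$, and $k = 2|V(H)|$.

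For the ``if'' direction, I assume $\tau$ is odd. Then every $v \in V(H)$ splits into two vertices $v^{B_1}, v^{B_2}$, each of degree exactly~$2$, and every $s \in S_E$ splits into two singleton vertices, each of degree exactly~$1$. Because edges of $\fracture{H^2}{\tau}$ still only go between ``$V(H)$-pieces'' and ``$S_E$-pieces'', each vertex $v^B$ has its two neighbours among the split subdivision pieces, and each split subdivision piece has its unique neighbour among the $v^B$. Hence the connected component containing $v^B$ is exactly a path of length~$2$ with $v^B$ at its centre, and different centres yield disjoint components (a leaf has a unique neighbour so it cannot lie in two different components). There are $2|V(H)| = k$ such centres, so $\fracture{H^2}{\tau} \cong kP_2$.

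For the ``only if'' direction, I assume $\fracture{H^2}{\tau} \cong kP_2$ and derive oddness from degree and cardinality constraints. Since $kP_2$ has maximum degree $2$ and since the hypothesis already forces at most $2$ blocks per vertex, for $v \in V(H)$ the partition $\tau_v$ must have exactly $2$ blocks (one block would create a degree-$4$ vertex). The two block sizes sum to $4$ with no block of size $\geq 3$ (which would yield a degree-$3$ vertex), so the block sizes must be $(2,2)$. For $s \in S_E$, either $\tau_s$ is a singleton and $s$ contributes one degree-$2$ vertex to $\fracture{H^2}{\tau}$, or $\tau_s$ consists of two singletons and $s$ contributes two degree-$1$ vertices. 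Letting $S_2 \subseteq S_E$ denote the subdivision vertices of the second type, the number of degree-$1$ vertices in $\fracture{H^2}{\tau}$ is exactly $2|S_2|$, and this must equal $2k = 4|V(H)|$. Since $|S_E| = |E(H)| = 2|V(H)|$, this forces $|S_2| = |S_E|$, i.e.\ every $s \in S_E$ is split into two singletons. Combined with the $(2,2)$ analysis for $V(H)$, this exactly says that $\tau$ is odd.

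The only subtlety I anticipate is that degree sequences alone do \emph{not} uniquely determine $kP_2$ among disjoint unions of paths and cycles; however, this is never invoked because in the ``only if'' direction I only need the maximum-degree and degree-$1$-count constraints (which are tight), and in the ``if'' direction I argue connectivity directly from the bipartite structure of $H^2$. Thus no separate argument about ruling out longer paths or cycles is required.
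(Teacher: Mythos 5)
Your proof is correct, and both directions rest on the same underlying facts as the paper's proof (the degree structure of $H^2$ and the edge count $|E(H^2)|=2k=|E(kP_2)|$). The one genuine difference is in the ``only if'' direction, specifically in how you force every subdivision vertex to be split: the paper argues locally by contradiction (if some $s\in S_E$ is unsplit, then for its component to be a $P_2$ its two neighbours $u,v$ must each have a singleton block, whence their remaining block has size $3$ and produces a degree-$3$ vertex), whereas you first pin down the $(2,2)$ structure at every $v\in V(H)$ from the maximum-degree constraint and then count degree-$1$ vertices globally: $kP_2$ has $2k$ leaves, all leaves must come from split subdivision vertices, and $2|S_2|=2k$ forces $S_2=S_E$. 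Your global count is slightly cleaner in that it avoids the case analysis around an unsplit $s$, and your closing remark is well taken: you only ever use necessary consequences of $\fracture{H^2}{\tau}\cong kP_2$ (maximum degree and leaf count) in the ``only if'' direction, and you establish the component structure directly in the ``if'' direction, so no separate argument excluding cycles or longer paths is needed.
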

\begin{proof}
    First observe that $|E(H^2)|=2|E(H)|=4|V(H)|=2k$. Thus the number of edges of $\fracture{H^2}{\tau}$ is equal to $2k$ (for each fracture $\tau$ of $H^2$), which is also equal to the number of edges of $kP_2$.

    Thus, $\fracture{H^2}{\tau}$ is isomorphic to $kP_2$ if and only if each connected component of $\fracture{H^2}{\tau}$ is a path of length $2$. It follows immediately by Definition~\ref{def:odd_fractures} that $\tau$ being odd implies that $\fracture{H^2}{\tau}$ consists only of disjoint $P_2$. It thus remains to show the other direction.

    Assume for contradiction that there is a subdivision vertex $s \in S_E$ of $H^2$ such that $\tau_s$ consists of only one block (recall that $s$ has degree $2$, thus $\tau_s$ either consists of two singleton blocks, or of one block of size $2$). Let $e=\{u,v\}\in E(H)$ be the edge corresponding to $s$, that is, $s$ was created by subdividing $e$. Since $\fracture{H^2}{\tau}$ is a union of $P_2$, we can infer that $\tau_v$ and $\tau_u$ contain a singleton block (otherwise we would have created a connected component which is not isomorphic to $P_2$). Now recall that both $u$ and $v$ have degree $4$, since $H$ is $4$-regular. We obtain a contradiction as follows: By assumption of the lemma, we know that $\tau_v$ and $\tau_u$ can have at most two blocks. Since we have just shown that both contain a singleton block, it follows that both $\tau_v$ and $\tau_u$ contain one further block of size $3$. However, a block of size $3$ yields a vertex of degree $3$ in the fractured graph $\fracture{H^2}{\tau}$, contradicting the fact that $\fracture{H^2}{\tau}$ consists only of disjoint $P_2$.

    Thus we have established that, for each $s\in S_E$, the partition $\tau_s$ consists of two singleton blocks. Given this fact, the only way for $\fracture{H^2}{\tau}$ being a disjoint union of $P_2$ is that each partition $\tau_v$, for $v\in V(H)=V(H^2)\setminus S_E$, consists of two blocks of size $2$. 
\end{proof}

We are now able to prove our hardness result.
\begin{lemma}\label{lem:P2_hardness}
The problem $\oplus\subsprob(\mathcal{F}_{P_2})$ is $\oplus\W{1}$-hard. Furthermore, on input $kP_2$ and $G$, the problem cannot be solved in time $f(k)\cdot |G|^{o(k/\log k)}$ for any function $f$, unless rETH fails.
\end{lemma}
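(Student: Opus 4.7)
The plan is to mirror the structure of the proof of Lemma~\ref{lem:trianglepackings}, reducing from $\oplus\cphomsprob(\mathcal{Q})$, which by Lemma~\ref{lem:Q_tw_pathpackings} together with Theorem~\ref{thm:cphom_lower_bound} is $\oplus\W{1}$-hard and cannot be solved in time $f(|H^2|)\cdot |G|^{o(\mathsf{tw}(H^2)/\log \mathsf{tw}(H^2))}$ under rETH. On input $H^2\in\mathcal{Q}$ and $H^2$-coloured $(G,c)$, first recover $H\in\Delta$ (possible since $\Delta$ is computable), set $n=|V(H)|$ and $k=2n$ so that $|E(H^2)|=2k$, and view $G$ as a $2k$-edge-coloured graph via the edge colouring $c_E$ induced by $c$.

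Next, apply the hom-basis result of~\cite[Lemma~4.1]{PeyerimhoffRSSVW22}, with the graph property being ``isomorphic to $kP_2$'': there is a unique function $a$ on fractures of $H^2$ such that, for every $H^2$-coloured graph $(G',c')$,
\[ \#\colsubs{kP_2}{(G',c'_E)} = \sum_{\rho} a(\rho)\cdot \#\homs{\fracture{H^2}{\rho}}{(G',c')}, \]
and, by~\cite[Corollary~4.3]{PeyerimhoffRSSVW22},
\[ a(\top) = \sum_{\rho\,:\,\fracture{H^2}{\rho}\cong kP_2}\ \prod_{w\in V(H^2)} (-1)^{|\rho_w|-1}(|\rho_w|-1)!. \]
The combinatorial core is to show that $a(\top)$ is odd. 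Observe that a summand is odd precisely when $(|\rho_w|-1)!$ is odd for every $w$, equivalently when $|\rho_w|\leq 2$ at every vertex of $H^2$. Under this restriction, Lemma~\ref{lem:isolate_P2} characterises the fractures with $\fracture{H^2}{\rho}\cong kP_2$ as exactly the \emph{odd} ones, and Lemma~\ref{lem:odd_fractures} tells us there are $3^n$ of them, each contributing $\pm 1$. Hence $a(\top)\equiv 3^n\equiv 1 \pmod 2$.

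With $a(\top)$ odd, invoke Lemma~\ref{lem:complexity_monotonicty} to recover $\oplus\homs{\fracture{H^2}{\top}}{(G,c_\top)} = \oplus\homs{(H^2,\mathsf{id}_{H^2})}{(G,c)}$ from oracle calls, each of size at most $f(|H^2|)\cdot|G|$, to the function $(G',c')\mapsto \oplus\colsubs{kP_2}{(G',c'_E)}$; each such call is in turn answered using Lemma~\ref{lem:remove_edgecols}, which consults $\oplus\subsprob(\mathcal{F}_{P_2})$ on instances of at most the same size. Since $k = 2n = \Theta(\mathsf{tw}(H^2))$ by Lemma~\ref{lem:Q_tw_pathpackings}, the reduction preserves both $\oplus\W{1}$-hardness and the fine-grained lower bound $f(k)\cdot |G|^{o(k/\log k)}$ under rETH.

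The main obstacle is establishing the oddness of $a(\top)$. In the triangle-packing case, a single fracture $\tau(H)$ realised $kK_3$, so $a(\top)$ collapsed to one explicit non-zero term. Here an exponential family of fractures realises $kP_2$, and the argument only succeeds because the mod-$2$ coefficient identity forces $|\rho_w|\leq 2$ at every vertex, cutting the sum down to those fractures covered by Lemma~\ref{lem:isolate_P2}, and because the $4$-regularity of $H$ gives exactly three admissible size-$2$ bipartitions at each non-subdivision vertex, yielding a power-of-$3$ total that is odd.
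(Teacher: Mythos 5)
Your proposal is correct and follows essentially the same route as the paper: the same reduction from $\oplus\cphomsprob(\mathcal{Q})$, the same use of the hom-basis expansion and the coefficient formula for $a(\top)$, the same mod-$2$ reduction to fractures with at most two blocks per vertex, and the same appeal to Lemmas~\ref{lem:isolate_P2} and~\ref{lem:odd_fractures} to count $3^{|V(H)|}$ surviving fractures, followed by Lemmas~\ref{lem:complexity_monotonicty} and~\ref{lem:remove_edgecols} to finish the reduction. No gaps.
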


\begin{proof}
We reduce from $\oplus\cphomsprob(\mathcal{Q})$, which, by Lemma~\ref{lem:Q_tw_pathpackings} and Theorem~\ref{thm:cphom_lower_bound}, is $\oplus\W{1}$-hard and 
for $H'\in \mathcal{Q}$, it
cannot be solved in time $f(|H'|)\cdot |G|^{o(|V(H')|/\log |V(H')|)}$, unless rETH fails. 

Let $H'$ and $(G,c)$ be an input instance to $\oplus\cphomsprob(\mathcal{Q})$. 
There is an algorithm that takes as input a graph $H' \in \mathcal{Q}$ and finds a graph $H \in \Delta$ with $H' = H^2$ --- this is basically 2-colouring.  
The run time of this algorithm depends on $|H'|$ but clearly not on $(G,c)$.  Let $k = 2|V(H)|$ and note that $|E(H^2)|=2|E(H)|=4|V(H)|=2k$. We consider the graph $G$ as a $2k$-edge-coloured graph, coloured by $c_E$. That is, each edge $e=\{x,y\}$ of~$G$ is assigned the colour $c_E(e) = \{c(x),c(y)\} $ which is an edge of~$H'=H^2$.

Now, for \emph{any} $H^2$-coloured graph $(G',c')$ recall that $\colsubs{kP_2}{(G',c'_E)}$ is the set of subgraphs of $G'$ that are isomorphic to $kP_2$ and that include each edge colour (each edge of $H^2$) precisely once. 
We will see later that $\oplus\colsubs{kP_2}{(G',c'_E)}$ can be computed using our oracle for $\oplus\subsprob(\mathcal{F}_{P_2})$ using the principle of inclusion and exclusion.

It was shown in~\cite[Lemma~4.1]{PeyerimhoffRSSVW22} that there is a unique function $a$ such that, for every $H^2$-coloured graph $(G',c')$, 
\begin{equation}\label{eq:hombasis_P2}
    \#\colsubs{kP_2}{(G',c'_E)} = \sum_{\rho} a(\rho) \cdot \homs{\fracture{H^2}{\rho}}{(G',c')}\,.
\end{equation}
where the sum is over all fractures of $H^2$. 
As in Section~\ref{sec:triangle} from~\cite[Corollary~4.3]{PeyerimhoffRSSVW22} we know that 
\begin{equation}\label{eq:topcoeff_P2}
    a(\top) = \sum_{\rho \in \mathsf{F}(kP_2,H^2)} \prod_{w\in V(H^2)} (-1)^{|\rho_w|-1}\cdot (|\rho_w|-1)! \,,
\end{equation}
where $\top$ is the fracture in which each partition consists only of one block 
and $\mathsf{F}(kP_2,H^2)$ is the set of all fractures $\rho$ of $H^2$ such that $\fracture{H^2}{\rho}\cong kP_2$.

Our next goal is to show that $a(\top)=1 \mod 2$. First, suppose that a fracture $\rho$ contains a partition $\rho_w$ with at least three blocks. Then $(|\rho_w|-1)!=0 \mod 2$. Thus such fractures do not contribute to $a(\top)$ if arithmetic is done modulo $2$. Next, 
note that if, for each $w$, the partition $\rho_w$ contains at most $2$ blocks, then
 
\[\prod_{w\in V(H^2)} (-1)^{|\rho_w|-1}\cdot (|\rho_w|-1)! = 1 \mod 2.\]
Let  $\mathsf{Odd}(kP_2,H^2)$ be the set of all fractures~$\rho$ of $H^2$ such that  $\fracture{H^2}{\rho}\cong kP_2$ and each partition of $\rho$ consists of at most $2$ blocks. Our analysis then yields $a(\top)= |\mathsf{Odd}(kP_2,H^2)| \mod 2$. 
Finally, Lemma~\ref{lem:isolate_P2} states that $\mathsf{Odd}(kP_2,H^2)$ is precisely the set of odd fractures, and Lemma~\ref{lem:odd_fractures} thus implies that $|\mathsf{Odd}(kP_2,H^2)| =1 \mod 2$. Consequently, $a(\top)=1 \mod 2$ as well, and we have achieved the goal.

Next we can proceed similarly to  the case of triangle packings. 
As in that case, the goal is to compute
$\oplus\homs{(H^2,\mathsf{id}_{H^2})}{(G,c)}) $
which is equal to $\oplus\homs{(\fracture{H^2}{\top},c_\top)}{(G,c)}$.
Since $a(\top)$ is odd, we can invoke Lemma~\ref{lem:complexity_monotonicty}
to recover this term by evaluating the   entire linear combination~\eqref{eq:hombasis_P2}, that is, if we can evaluate the function $\oplus\colsubs{kP_2}{\star}$. 
This can be done by using Lemma~\ref{lem:remove_edgecols}.
Each call to the oracle  
is of the form $\oplus\subs{kP_2}{\hat{G}}$ where $|\hat{G}|$ is bounded by $f(k)\cdot |G|$. 

Now recall that $k\in \Theta(|V(H)|)$. By Lemma~\ref{lem:Q_tw_pathpackings}, we thus have $k= \Theta(\mathsf{tw}(H^2))$. Hence our reduction yields $\oplus\W{1}$-hardness and transfers the conditional lower bound under rETH as desired.
\end{proof}

We can now conclude the treatment of hereditary pattern classes by proving Theorem~\ref{thm:main_hereditary}, which we restate for convenience.

\mainhereditary*
\begin{proof}
The fixed-parameter tractability result was shown in~\cite{CurticapeanDH21}.
For the hardness result, using the fact that $\mathcal{H}$ is not matching splittable and Theorem~\ref{thm:matchsplit_characterisation} we obtain four cases.
    \begin{itemize}
        \item If $\mathcal{H}$ contains all cliques or all bicliques, then hardness follows from Lemma~\ref{lem:cliques_and_bicliques}.
        \item If $\mathcal{H}$ contains all triangle packings, then hardness follows from Lemma~\ref{lem:trianglepackings}.
        \item If $\mathcal{H}$ contains all $P_2$-packings, then hardness follows from Lemma~\ref{lem:P2_hardness}.
    \end{itemize}
    Since the case distinction is exhaustive, the proof is concluded.
\end{proof}

\section{Classification for Trees}\label{sec:trees}
Our overall goal is to prove Theorem~\ref{thm:main}, which we restate for convenience:

\main*

\paragraph*{Outline of Section~\ref{sec:trees}}
We begin our analysis by investigating the structural properties of classes of trees that are not matching splittable. In Lemma~\ref{lem:non-matchsplit-trees} we prove that for each such class $\calT$ (at least) one of the following parameters are unbounded: The \emph{fork number} (Definition~\ref{def:fork}), the \emph{star number} (Definition~\ref{def:star}), or the $\hal$-\emph{number} (Definition~\ref{def:hal_number}).

The remainder of this section is then split into three, largely independent, parts: Section~\ref{sec:caseH} establishes hardness of $\oplus\subsprob(\calT)$ for classes of trees $\calT$ of unbounded $\hal$-number, Section~\ref{sec:caseStar} shows hardness for unbounded star number, and Section~\ref{sec:caseF} shows hardness for unbounded fork number.

We start by introducing some terminology for trees which will be used in the remainder of this section.
\begin{definition}[$2$-paths]\label{def:2path}
A $2$-\emph{path} of length $a$ of a tree $T$ is a path $x_0,x_1,\dots,x_a$ such that $\deg(x_0)\neq 2$, $\deg(x_1)=\dots=\deg(x_{a-1})=2$ and $\deg(x_a)\neq 2$.
\end{definition}

Next we introduce rays, which are restricted $2$-paths that will be crucial in our analysis.
\begin{definition}[source, ray, $\degla{a}$, $\degl$, and $\degnl$]\label{def:ray}
Let $T$ be a tree. 
A \emph{source} of~$T$ is any vertex with degree greater than~$2$.
A \emph{ray} of length $a$ of $T$ is a $2$-path $x_0,x_1,\dots,x_a$ such that $\deg(x_0)>2$ and $\deg(x_a)=1$. We call $x_0$ the \emph{source of the ray}.
Given a vertex $s$ of degree at least $3$, we write $\degla{a}(s)$ for the number of rays of length $a$ with source $s$. We set
\[\degl(s):=\sum_a \degla{a}(s) \,.\]
Finally, we set $\degnl(s):=\deg(s)-\degl(s)$.  
\end{definition}

Next, we introduce parameters $\fork_{a,b}$, $\starnum_c$ and $\hal_d$. Our goal is then to show that, for every non-matching-splittable class of trees, at least one of those two parameters is unbounded. 

\begin{definition}[Forks and $\fork_{a,b}$]\label{def:fork}
Let $a,b$ be positive integers. A source $s$ of a tree $T$ is called an $a$-$b$-\emph{fork} if $\degnl(s)=1$ and one of the following is true
\begin{itemize}
    \item $a\neq b$ and $\degla{a}(s),\degla{b}(s)>0$.
    \item $a=b$ and $\degla{a}(s)>1$.
\end{itemize}
The $a$-$b$-\emph{fork number} of $T$, denoted by $\fork_{a,b}(T)$ is the maximum size of an independent set containing only $a$-$b$-forks. Finally, we say that a class of trees $\calT$ has \emph{unbounded fork number} if for every positive integer $B$ there 
are positive integers~$a$ and~$b$ and  a tree $T\in\calT$ such that $\fork_{a,b}(T)\geq B$.
\end{definition}

\begin{definition}[Stars and $\starnum_c$]\label{def:star}
A \emph{star} of size $k>1$ in a tree $T$ is a collection of $k$ distinct rays that have a common source $s$.
For a positive integer $c\geq 3$, a $c$-\emph{star} of size $k$ in a tree $T$ is a collection of $k$ distinct rays of length $c$ that have a common source $s$.

The $c$-\emph{star number} of a tree $T$, denoted by $\starnum_c(T)$ is the maximum size of a $c$-star in $T$. Finally, we say that a class of trees $\calT$ has \emph{unbounded star number} if for every positive integer $B$ there exists $c\geq 3$, and a tree $T\in\calT$ such that $\starnum_c(T)\geq B$.
\end{definition}

\begin{definition}[$\hal$-gadgets and $\hal_d$]\label{def:hal_number}
A $\hal$-gadget\footnote{$\hal$ stands for \emph{caterpillar}, the shape of which resembles the structure of a $\hal$-gadget.} of \emph{order} $d$ and \emph{length} $k$ in a tree $T$ is a path $x_0,\dots,x_k$ such that one of the following is true for each inner vertex $x_i\in\{1,\dots,k-1\}$:
\begin{itemize}
    \item[(i)] $\deg(x_i)=2$, that is $N(x_i)=\{x_{i-1},x_{i+1}\}$, or
    \item[(ii)] $x_i$ is a source and every neighbour $v\in N(x_i)\setminus\{x_{i-1},x_{i+1}\}$ is contained in a ray of length at most $d$ from $x_i$ to a leaf.
\end{itemize}
The $\hal_d$-number of a tree $T$, denoted by $\hal_d(T)$ is the length of the longest $\hal$-gadget of order $d$. Finally, we say that a class of trees $\calT$ has \emph{unbounded $\hal$-number} if there exists $d>0$ such that for every positive integer $B$, and a tree $T\in\calT$ such that $\hal_d(T)\geq B$.
\end{definition}
Note that the ordering of the quantifiers in the definition of the $\hal_d$-number is different from the ordering in the definition of the $c$-star-number. This is due to technical reasons which are important for the proof of Lemma~\ref{lem:non-matchsplit-trees}.

\begin{lemma}\label{lem:non-matchsplit-trees}\label{lem:ten}
Let $\calT$ be a class of trees. If $\calT$ is not matching splittable, then $\calT$ has either unbounded fork number, unbounded star number, or unbounded $\hal$-number.
\end{lemma}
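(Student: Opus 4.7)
The plan is to argue the contrapositive: assuming all three of $\fork$-, $\starnum$-, and $\hal$-numbers are bounded in $\calT$, I will show $\calT$ is matching splittable. Let $B$ be a common bound on the fork and star parameters and let $B_d$ bound $\hal_d$ for each $d$. The key object is the \emph{source tree} $T^{\mathrm{src}}$ of an arbitrary $T\in\calT$: its vertices are the sources of $T$, and its edges correspond to the 2-paths of $T$ whose two endpoints are both sources. By construction $\deg_{T^{\mathrm{src}}}(s)=\degnl(s)$, and $T^{\mathrm{src}}$ is a subtree of the topological contraction of $T$.

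The first step is to use bounded $\hal_1$ to obtain a uniform bound $M$ on the length of every 2-path of $T$: every 2-path is a $\hal$-gadget of any order because all of its internal vertices have degree~$2$. In particular, every ray of $T$ has length at most $M$. The second step uses the fork number to control $T^{\mathrm{src}}$. Every leaf of $T^{\mathrm{src}}$ is a source $s$ with $\degnl(s)=1$; since $\deg(s)\geq 3$, such an $s$ has at least two rays and is therefore an $a$-$b$-fork for some $a,b\in\{1,\dots,M\}$. Since trees are bipartite, the total number of $a$-$b$-forks is at most $2\fork_{a,b}(T)\leq 2B$, so summing over the at most $M^2$ pairs $(a,b)$ gives $|\mathrm{leaves}(T^{\mathrm{src}})|\leq 2BM^2$. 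The elementary identity $\#\text{branches}\leq\#\text{leaves}-2$ then bounds the number of branch vertices of $T^{\mathrm{src}}$ by the same order of magnitude.

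The third step, which I anticipate to be the main obstacle, bounds the length~$p$ of any maximal path $s_1,\dots,s_p$ of degree-$2$ vertices in $T^{\mathrm{src}}$, i.e.\ sources with $\degnl(s_i)=2$. Concatenating the underlying $p+1$ 2-paths in $T$ produces a path $\pi$ of length at least $p+1$ in $T$, each of whose internal vertices is either a degree-$2$ vertex of $T$ (fitting condition~(i) of Definition~\ref{def:hal_number}) or a chain-source $s_i$; in the latter case both non-leaf neighbours of $s_i$ lie on $\pi$, and every remaining neighbour of $s_i$ starts a ray of length at most~$M$, so condition~(ii) with $d=M$ holds. Therefore $\pi$ is a $\hal$-gadget of order~$M$, and the bound on $\hal_M$ forces $p+1\leq B_M$. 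The subtle point here is that the bound $\hal_1\leq B_1$ does not suffice on its own: one must invoke the $\hal$-number at the \emph{larger} order $M$ itself determined by $B_1$, and this is precisely why bounded $\hal$-number is defined as requiring a bound at every order~$d$ rather than at a single fixed order.

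Combining the previous two steps bounds $|V(T^{\mathrm{src}})|$ by a function of $B$, $M$, and $B_M$. I finish by constructing a splitting set $S$ that consists of every source of $T$ together with, for each 2-path of $T$ of length $\geq 3$, an additional $O(M)$ vertices that break it into isolated vertices and single edges. The between-source 2-paths of length $\geq 3$ are in bijection with a subset of $E(T^{\mathrm{src}})$, while any fixed source contributes at most $B(M-2)$ rays of length $\geq 3$ by the star-number bound. Since the number of sources is bounded, $|S|$ is bounded by a function of $B$, $M$, and $B_M$; and $T\setminus S$ has maximum degree~$1$, giving a uniform bound on the matching-split number over all $T\in\calT$, as required.
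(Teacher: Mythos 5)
Your proof is correct, but it is organised quite differently from the paper's. The paper proves the statement by a case split on the diameter of $\calT$: for unbounded diameter it segments a long path into blocks of length $h+2$ (where $h$ bounds $\hal_d$) and argues that each block either yields a long $\hal$-gadget or a hanging subtree containing a fork, then applies pigeonhole over the $\le d^2$ ray-length pairs; for bounded diameter it counts ``rooted forks'' and, when these are few, exhibits a small splitting set to contradict the large matching-split number. You instead prove the clean contrapositive in one pass via the source tree $T^{\mathrm{src}}$: bounded $\hal_1$ bounds $2$-path lengths by $M$, bounded fork number bounds the leaves of $T^{\mathrm{src}}$ (your factor-$2$ bipartiteness step correctly converts the independent-set formulation of $\fork_{a,b}$ into a bound on the total number of forks), the leaf count bounds the branch vertices, and bounded $\hal_M$ bounds the degree-$2$ chains of $T^{\mathrm{src}}$ --- this last step is the same structural insight as the paper's Case~1 (a chain of sources whose side-branches are all short rays is a $\hal$-gadget), but deploying it on $T^{\mathrm{src}}$ lets you dispense with the diameter dichotomy entirely. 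The explicit splitting set you then build (all sources, plus $O(M)$ vertices per long between-source $2$-path and per long ray, the latter counted via the star bound for each length $c\in\{3,\dots,M\}$) mirrors the paper's Case~2 construction. What your route buys is a uniform argument and an explicit bound on the matching-split number in terms of $B$, $M=B_1$ and $B_M$; what it costs is only some unstated bookkeeping for the degenerate cases where $T^{\mathrm{src}}$ has at most one vertex (i.e.\ $T$ is a path or a spider), which are immediate. Your remark that one must invoke the $\hal$-bound at order $M$ rather than order $1$ is exactly the quantifier subtlety the paper's Definition~\ref{def:hal_number} is designed to accommodate.
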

\begin{proof}
We can assume that there is an overall bound $d$ on the length of $2$-paths in trees in $\calT$: Otherwise, $\calT$ already has unbounded $\hal$-number (see (i) in Definition~\ref{def:hal_number})). Hence the length of every ray in any tree in $\calT$ is bounded by $d$ as well. Thus
\begin{itemize}
    \item   $\calT$ has unbounded fork number if and only if for every positive integer $B$ there are $a,b\in\{1,\dots,d\}$ and a tree $T\in\calT$ such that $\fork_{a,b}(T)\geq B$.
    \item $\calT$ has unbounded $\hal$-number if and only if $\hal_d$ is unbounded in $\calT$ (see Definition~\ref{def:hal_number})).  
 \item  $\calT$ has unbounded star number if and only if for every positive integer $s$ there is a $c\in\{3,\dots,d\}$ and a tree $T\in\calT$ such that $\starnum_c(T)\geq s$.

\end{itemize}
We split the proof into two cases.

\noindent {\bf Case 1.  $\calT$ has unbounded diameter. }

In Case~1, we show that $\calT$ has unbounded fork number or unbounded $\hal$-number.
If $\hal_d$ is unbounded in $\calT$ then $\calT$ has unbounded $\hal$-number and we are done so assume that there is a constant $h$ such that $\hal_d(T)\leq h$ for every $T\in\calT$.

Now let $B$ be a positive integer. We show that there are $a,b\in\{1,\dots,d\}$ and $T\in\calT$ with $\fork_{a,b}(T)\geq B$. To this end, we use the premise that $\calT$ has unbounded diameter. Let $k>(h+2)(B d^2+1)$ be a positive integer, and let $T\in\calT$ be such that there is a path $P=s,p_0,\dots,p_k,t$ in $T$. Observe that the deletion of all edges in $P$ decomposes $T$ into a family of
disjoint
subtrees. We write $T_i$ for the subtree that contains $p_i$. 
Now decompose $P$ into segments $P_1,P_2,\dots$ of length $h+2$. Note that a segment $P_j=p_{j_0},\dots,p_{j_{h+2}}$ yields a $\hal$-gadget of order $d$ and length $> h$ if and only if $T_{j_i}$ is either a star or an isolated vertex for each $i\in\{1,\dots,h+1\}$.

Since no such $\hal$-gadgets exist by assumption, we obtain that each segment $P_j$ of the path $P$ contains a vertex $p_{i_j}$ such that $T_{i_j}$ is neither a star nor an isolated vertex. 

Assume that $T_{i_j}$ is rooted at $p_{i_j}$. Since $T_{i_j}$ is neither an isolated vertex nor a star, there must be a (proper) descendant $v_{i_j}$ of $p_{i_j}$ (in $T_{i_j}$) such that $v_{i_j}$ is an $(a_{i_j},b_{i_j})$-fork for some $a_{i_j},b_{i_j}\in\{1,\dots,d\}$. Now note that there are at most $d^2$ pairs of integers in $\{1,\dots,d\}$. Since we have at least one fork for every segment and since there are at least $\lfloor k/(h+2) \rfloor > Bd^2 +1$  segments, we thus obtain by the pigeon-hole principle that there is a pair $a,b\in\{1,\dots,d\}$ such that, for at least $B$ segments $P_{i_j}$, the node $v_{i_j}$ is an $(a,b)$-fork in $T_{i_j}$ and thus also in $T$. Since those forks are pairwise non-adjacent, we obtain, as desired, that the $(a,b)$-fork number of $T$ is at least $B$, concluding Case~1.  

\noindent {\bf Case 2.   $\calT$ has bounded diameter.}

 Let $D$ be the assumed upper bound on the diameter of trees in $\calT$. 
  If $\calT$ has unbounded star number then we are finished. Assume instead that $\calT$ has
bounded star number. 
Then   there is a positive integer~$s$ such that for all $c\in\{3,\ldots,d\}$ and every tree $T\in \calT$, 
$\starnum_c(T) < s$.
We will show that $\calT$ has unbounded fork number.
Consider any positive integer~$B$.
We will show that there are $a,b\in\{1,\dots,d\}$ and $T\in\calT$ with $\fork_{a,b}(T)\geq B$.

Let $k> (D+1)(Bd^2+1)  (d^2s+1)$ be a positive integer. Since $\calT$ is not matching splittable, there is a tree $T\in\calT$ whose matching-split number is at least~$k$. Note that $T$ is not  a path
since every path with matching-split number at least~$k$
has length greater than $k>D$, contradicting the bound
on the diameter.

Now fix any vertex $r$ of $T$ as the root. Given a vertex $v$ of $T$, we write $T_v$ for the subtree rooted at $v$ (assuming that $r$ is the overall root). We call $v$ a \emph{rooted fork} if $T_v$ is a star --- observe that each rooted fork must indeed be a fork. Let $f$ be the number of rooted forks.   Similar to the argument in  Case~1, if $f>Bd^2+1$, then 
by the pigeon-hole principle
there are $a,b\in\{1,\dots,d\}$ such that $F_{a,b}(T)\geq B$.

Hence assume for contradiction that $f\leq Bd^2+1$.
Let $\mathcal{R}$ be the set of all rays of $T$ and recall that each ray in $\mathcal{R}$ is, by definition, a $2$-path of the form
$R = x_0,\dots,x_{d'} $
for $d'\leq d$, where $\deg(x_0) >2$ and $x_{d'}$ is a leaf.  We call a ray $R$ \emph{long} if $d'\geq 3$.
Note that the source of every ray must either be a rooted fork, or it must lie on a path from the root $r$ to one of the rooted forks. 

Let $T'$ be the subtree of $T$ induced by all vertices that lie on paths between $r$ and a rooted fork (including $r$ and all rooted forks). Since there are $f$ rooted forks and the depth of $T$ is bounded by $D$,    $|V(T')| \leq (D+1)f\leq (D+1)(Bd^2+1)$.

Consider a vertex $v$ of $T'$. Assume for contradiction that $v$ is the source of $>ds$ long rays (in $T$). Recall that for all $c\in \{3,\dots,d\}$ we have that $\starnum_c(T)<s$. Recall further that each long ray has length $d'$ for some $3\leq d'\leq d$. Thus we obtain a contradiction by the pigeon-hole principle. 

Now let $S$ be the set containing all vertices of $T'$ and all vertices of long rays. Noting that each long ray has length at most $d$, and that the source of each long ray must be a vertex of $T'$ by construction, we can use the observation that each vertex of $T'$ is the source of at most $ds$ long rays  to (generously) bound
\[|S|\leq |V(T')| + |V(T')|\cdot d\cdot ds\,.\]
Note further that $T[V(T)\setminus S]$ consists only of isolated edges and vertices: The only vertices in $V(T)\setminus S$ are non-source vertices of rays of length $<3$, the sources of which are in $T'$. Thus, $S$ is a splitting set. Finally, recalling that $|V(T')| \leq (D+1)f\leq (D+1)(Bd^2+1)$, we have
\[|S|\leq |V(T')| + |V(T')|\cdot d\cdot ds \leq (D+1)(Bd^2 +1)(d^2s+1) \,,\]
contradicting the fact that the matching-split number of $T$ is strictly larger than $(D+1)(Bd^2 +1)(d^2s+1)$. This concludes Case~2, and hence the proof.
\end{proof}

In the next three subsections, we will prove hardness of $\oplus\subsprob(\calT)$ for non-matching-splittable $\calT$ in each of the three cases given by  Lemma~\ref{lem:ten}.

\subsection{Unbounded $\hal$-number}\label{sec:caseH}
For our hardness proof,  it will be useful to find a proper sub-gadget of 
a 
$\hal$-gadget in a tree.
\begin{definition}[Strong $\hal$-gadgets, junctions, and closedness]
Let $C=x_0,\dots,x_L$ be a $\hal$-gadget of order $d$ and length $L$ in a tree $T$. We call $C$ a \emph{strong} $\hal$-gadget with $k$ \emph{junctions} if there are integers $0=i_0<i_1<\dots<i_k<i_{k+1}=L$ such that
\begin{itemize}
    \item[(I)] for all $j\in\{0,\dots,k\}$, $i_{j+1} - i_{j} >2d$, and 
    \item[(II)] for all $j\in\{1,\dots,k\}$, $x_{i_j}$ is the source of a ray $R_j$ of length
    $d$ that does not contain one of the neighbours $x_{i_j-1}$ and $x_{i_j+1}$ of $x_{i_j}$.
The vertices $x_{i_1},\ldots,x_{i_k}$ are called 
  the \emph{junctions}.
\end{itemize}
Finally, a strong $\hal$-gadget is called \emph{closed} if  neither $x_{i_1}$ nor $x_{i_k}$ are forks.\footnote{The condition of being closed rules out the special case in which $x_0$ 
    or $x_L$   are leaves of $T$.
More generally it rules out the case where there is a ray from $x_1$ including $x_0$ or from $x_{k}$ including $x_L$.    }

\end{definition}
Consider the bottom part of Figure~\ref{fig:HatGDef} for a visualisation.
We start with the following lemma which establishes the existence of a strong $\hal$-gadget with many junctions inside a long enough $\hal$-gadget.
\begin{lemma}\label{lem:theprevone}
Let $T$ be a tree such that the longest $2$-path in $T$ has length $d\geq 1$, and let $k$ be a positive integer. Then there exists $L>0$ (only depending on $k$ and $d$) such that the following is true: If $T$ contains an $\hal$-gadget of order $d$ and length $L$, then there exists $1\leq d'\leq d$ such that $T$ contains a strong $\hal$-gadget of order $d'$ with at least $k$ junctions.
\end{lemma}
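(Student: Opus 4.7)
The plan is to induct on the order $d$ via a pigeonhole-and-recurse argument. For each interior source $x_i$ of $C$, let $\ell(x_i)$ be the maximum length of a ray attached to $x_i$ via a non-path neighbour; since the longest $2$-path in $T$ has length $d$, we have $\ell(x_i)\in\{1,\dots,d\}$. An interior source of $C$ qualifies as a junction of a strong $\hal$-gadget of order $d'$ precisely when $\ell(x_i)=d'$, since then $x_i$ carries a ray of length exactly $d'$ (condition (II)) and remains admissible as an interior vertex of a $\hal$-gadget of order $d'$. I will construct a function $L_0(d,k)$ and prove by induction on $d$ that any $\hal$-gadget of order $d$ and length $\geq L_0(d,k)$ in $T$ contains the desired strong sub-gadget.

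The base case $d=1$ is easy: when $d=1$ there are no degree-$2$ vertices in $T$, so every interior vertex of $C$ is a source with $\ell=1$. Picking junctions at positions $3,6,\dots,3k$ satisfies $i_1>2$, $i_{j+1}-i_j>2$ and $L-i_k>2$ whenever $L\geq 3k+3$, which fixes $L_0(1,k):=3k+3$.

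For the inductive step, let $n_d$ denote the number of interior sources of $C$ with $\ell(x_i)=d$. If $n_d\geq (k+2)(2d+1)$, then even after discarding the at most $4d$ such sources that could lie in the buffer zones near $x_0$ or $x_L$, at least $k(2d+1)$ of them remain in the usable region $[2d+1,L-2d-1]$. A left-to-right greedy procedure -- pick a good source and then skip ahead by $2d$ positions -- consumes at most $2d+1$ candidates per selection, so it outputs $k$ junctions whose positions $i_1<\dots<i_k$ satisfy all of the spacing requirements of (I). Choosing $d'=d$ and keeping $C$ itself as the strong $\hal$-gadget completes this case. Otherwise $n_d<(k+2)(2d+1)$: the at most $(k+2)(2d+1)$ "bad" positions split $C$ into at most $(k+2)(2d+1)+1$ sub-paths whose lengths sum to $L$, so some sub-path $C'$ has length at least $L/((k+2)(2d+1)+1)$. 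Every interior source of $C'$ lies strictly between two consecutive bad positions, hence satisfies $\ell\leq d-1$; since $\hal$-gadget endpoints are unconstrained, $C'$ is a legitimate $\hal$-gadget of order $d-1$ (each of its interior vertices is either degree-$2$ or a source whose non-path neighbours sit on rays of length $\leq d-1$, inherited from $C$). Setting $L_0(d,k):=((k+2)(2d+1)+1)\cdot L_0(d-1,k)$ ensures $|C'|\geq L_0(d-1,k)$, and the inductive hypothesis applied to $C'$ yields a strong $\hal$-gadget of some order $d'\leq d-1<d$.

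The only real bookkeeping issue is condition (I) of the strong $\hal$-gadget definition, namely the "buffer" constraints $i_1>2d$ and $L-i_k>2d$; this is why the Case A threshold is $(k+2)(2d+1)$ rather than $k(2d+1)$, and why the greedy selection is restricted to the usable region rather than run on all of $C$. Once this is handled, the recurrence $L_0(d,k)=((k+2)(2d+1)+1)\cdot L_0(d-1,k)$ gives a computable function depending only on $k$ and $d$, as required.
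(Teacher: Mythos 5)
Your overall strategy --- at each order $d'$, either greedily extract $k$ well-spaced interior sources carrying a length-$d'$ off-path ray, or locate a long junction-free sub-path and recurse at order $d'-1$ --- is exactly the paper's argument, and your Case A and Case B are sound modulo trivial bookkeeping (the lengths of the sub-paths obtained by deleting the bad sources sum to slightly less than $L$, so the recurrence needs a small additive correction; compare the paper's $f(x)=x/(k+1)-2d-1$).

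The genuine gap is the base case. You justify the order-$1$ case by asserting ``when $d=1$ there are no degree-$2$ vertices in $T$'', but your induction is on the \emph{order of the gadget} while the tree $T$ --- and hence the bound $d$ on its $2$-paths --- stays fixed. When Case B has driven the order down to $1$, the sub-gadget $C'$ still lives in the original tree and may contain runs of up to $d-1$ consecutive degree-$2$ interior vertices, so ``every interior vertex of $C$ is a source with $\ell=1$'' is false and the positions $3,6,\dots,3k$ need not be sources at all. (Read the other way, if the induction variable is simultaneously the $2$-path bound of $T$, then the inductive hypothesis for $d-1$ cannot be applied to $C'\subseteq T$, since $T$ still has a $2$-path of length $d$.) The fix is short but must be stated: in an order-$1$ gadget, any maximal run of consecutive degree-$2$ interior vertices has length at most $d-1$ (otherwise $T$ would contain a $2$-path of length exceeding $d$), so interior sources occur at least every $d$ positions and a greedy selection succeeds with $L_0(1,k)=O(kd)$; alternatively, follow the paper and let the recursion descend to order $0$, where the resulting gadget is itself a $2$-path whose length exceeds $d$, contradicting the hypothesis on $T$.
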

\begin{proof}
Let $f(x)=x/(k+1) -2d-1$ and let $L$ be large enough such that $f^d(L) >d$. Let $H^d=x_0,\dots,x_L$ be a $\hal$-gadget of order $d$ and length $L$ in $T$.

Let $d'=d$. 
Note that $H^{d'}$ is a $\hal$-gadget of order $d'$ and length at least $L= f^{d-d'}(L)$ in $T$.
For each graph $H^{d'}$ with $d'\geq 1$ we will either
\begin{enumerate}[(1)]
\item    construct a strong $\hal$-gadget with $k$ junctions
with order $d'$, or 
\item  
find a subsequence $H^{d'-1}$ of $H^{d'}$ that
is an $\hal$-gadget of order $d'-1$ of length at least
$f^{d-(d'-1)}(L)$.
\end{enumerate}
If we ever do (1) we are finished. If from $d'=1$ we do (2) then we find a 2-path of length at least 
$f^{d}(L) > d$, which is a contradiction.

Here is how we proceed from $H^{d'} = y_0,\ldots,y_{\ell}$.
We set $i_0=0$. Then iteratively,
for each $j\in \{1,\ldots,k\}$
we will either construct $H^{d'-1}$ as in (2) or
we find $i_j\in \{i_{j-1} + 2d+1, \ldots,\ell\}$ 
such that $y_{i_j}$ is the source of a length-$d'$ ray
that does not contain $y_{i_j}-1$ or $y_{i_j}+1$.
If we succeed in defining $i_1,\ldots,i_k,i_{k+1}$ in this way 
then $y_0,\ldots,y_{i_{k+1}}$ 
is a strong $\hal$-gadget with $k$ 
junctions of order $d'$ so (1) is satisfied.

Let us now make this argument rigorous; again, assume that $H^{d'}=y_0,\dots, y_\ell$ is a $\hal$-gadget of order $d'$ and length $\ell \geq f^{d-d'}(L)$. Set $i_0 =0$ and, starting with $j=0$, proceed iteratively as follows:
\begin{enumerate}
    \item Let $S_j$ be the set of all indices $i\in\{i_{j-1}+2d+1,\dots,\ell\}$ such that $y_i$ is the source of a length-$d'$ ray that does not contain $y_{i-1}$ and $y_{i+1}$.
    \item If $S_j = \emptyset$ then set $\mathsf{stop}=j$ and terminate. Otherwise, set $i_j=\min S_j$ and $j\leftarrow j+1$, and go back to 1.
\end{enumerate}

We now distinguish two cases: If $\mathsf{stop}\geq k+1$, then we found indices $i_0,\dots,i_{k+1}$ such that $\hat{H}^{d'}:=y_0,\dots,y_{i_{k+1}}$ is a strong hardness gadget of order $d'$ with $k$ junctions; hence we achieved (1) and we are done. Otherwise we have $\mathsf{stop}< k+1$. Let $I_j:=\{i_j,\dots,i_{j+1} -1\}$ for all $0\leq j<\mathsf{stop}$, and let $I_\mathsf{stop} =\{i_{\mathsf{stop}},\dots,\ell\}$. By the pigeon-hole principle, at least one of those intervals, say $I_{j'}$, has size at least $\ell/(\mathsf{stop}+1)\geq \ell/(k+1)$. Now, by construction of our iterative procedure above, we find that the sub-interval $\{i_{j'}+2d+1,\dots, i_{j'+1}-1\}\subseteq I_{j'}$ contains no index $i$ such that $y_i$ is the source of a length-$d'$ ray that does not contain $y_{i-1}$ and $y_{i+1}$. Thus, the subsequence $H^{d'-1}:=y_{i_{j'}+2d+1},\dots, y_{i_{j'+1}-1}$ constitutes a $\hal$-gadget of order $d'-1$. Furthermore, $H^{d'-1}$ has length at least $\ell/(k+1)-2d-1=f(\ell)$. Since $\ell\geq f^{d-d'}(L)$, and since $f$ is monotonically increasing, we find that 
$f(\ell)\geq f^{d-(d'-1)}(L)$. Hence we achieved (2) and we can conclude this case as well.
\end{proof}

Now, by removing the first and the last junction, we can also ensure the existence of a closed strong $\hal$-gadget
\begin{corollary}\label{cor:strongHgadget}
Let $T$ be a tree such that the longest $2$-path in $T$ has length $d\geq 1$, and let $k$ be a positive integer. Then there exists $L>0$ (only depending on $k$ and $d$) such that the following is true: If $T$ contains an $\hal$-gadget of order $d$ and length $L$, then there exists $1\leq d'\leq d$ such that $T$ contains a \emph{closed} strong $\hal$-gadget of order $d'$ with at least $k$ junctions.
\end{corollary}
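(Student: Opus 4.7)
The plan is to apply Lemma~\ref{lem:theprevone} with $k+2$ in place of $k$ to obtain a strong $\hal$-gadget with two extra junctions, and then trim away the outermost junction on each side. Concretely, let $L$ be the value produced by Lemma~\ref{lem:theprevone} for the pair $(k+2, d)$, and assume $T$ contains a $\hal$-gadget of order~$d$ and length~$L$. Then $T$ contains a strong $\hal$-gadget $C = x_0, x_1, \ldots, x_N$ of some order $d' \le d$ with junctions $x_{i_1}, \ldots, x_{i_{k+2}}$. Define $C' := x_{i_1}, x_{i_1+1}, \ldots, x_{i_{k+2}}$, so that the new path endpoints are $x_{i_1}$ and $x_{i_{k+2}}$ while the new junctions are $x_{i_2}, \ldots, x_{i_{k+1}}$. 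Every spacing requirement $i_{j+1} - i_j > 2d \ge 2d'$ is inherited from~$C$, including the two new ``endpoint gaps'' at $x_{i_1}$ and $x_{i_{k+2}}$, so $C'$ is a strong $\hal$-gadget of order~$d'$ with exactly $k$ junctions; the rays $R_j$ for $j \in \{2, \ldots, k+1\}$ are unchanged.

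\textbf{Main step: closedness.} I will verify the non-trivial half of the conclusion, namely that $x_{i_2}$ and $x_{i_{k+1}}$ are not forks. I claim more generally that every ``interior'' junction $x_{i_j}$ with $j \in \{2, \ldots, k+1\}$ satisfies $\degnl(x_{i_j}) \ge 2$ and hence cannot fulfil the fork condition of Definition~\ref{def:fork}. Consider the maximal $2$-path in $T$ starting at $x_{i_j}$ and proceeding backward through $x_{i_j - 1}$: by conditions (i)/(ii) of Definition~\ref{def:hal_number}, every inner vertex of~$C$ has degree either exactly~$2$ or strictly greater than~$2$, so tracing along the backbone of~$C$ this $2$-path continues through degree-$2$ inner vertices until it hits a source. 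Since $j \ge 2$, the previous junction $x_{i_{j-1}}$ --- itself a source --- lies strictly between $x_{i_j}$ and~$x_0$, so the $2$-path must terminate at a source (either $x_{i_{j-1}}$ itself, or an earlier non-junction source between $x_{i_{j-1}}$ and~$x_{i_j}$). A source is not a leaf, so this $2$-path is not a ray; hence $x_{i_j-1}$ does not lie on any ray of $T$ emanating from $x_{i_j}$ and therefore contributes to $\degnl(x_{i_j})$. The symmetric argument applied in the forward direction through $x_{i_j + 1}$, using $j \le k+1$ to guarantee that the next junction $x_{i_{j+1}}$ is present, shows that $x_{i_j + 1}$ likewise contributes to $\degnl(x_{i_j})$. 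Consequently $\degnl(x_{i_j}) \ge 2$, as required.

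\textbf{Main obstacle.} The main obstacle --- and the reason one cannot simply reuse Lemma~\ref{lem:theprevone} directly --- is precisely the scenario highlighted by the footnote to the definition of closedness: the outermost junctions $x_{i_1}$ and $x_{i_{k+2}}$ of~$C$ genuinely may be forks, because the backward $2$-path from $x_{i_1}$ can escape all the way to the endpoint~$x_0$ of~$C$, which is allowed to be a leaf of~$T$, and symmetrically for $x_{i_{k+2}}$ in the forward direction. Discarding these two outermost junctions sidesteps this issue precisely because a previous (respectively next) junction is then always available to block the backward (respectively forward) $2$-path from ever reaching a leaf, yielding the desired closed strong $\hal$-gadget of order $d'$ with $k$ junctions.
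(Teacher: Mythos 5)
Your proposal is correct and follows essentially the same route as the paper: invoke Lemma~\ref{lem:theprevone} with $k+2$ junctions, discard the two outermost ones, and observe that the remaining interior junctions cannot be forks because the (now-discarded) neighbouring junctions are sources of degree at least $3$, so the backbone $2$-paths leaving an interior junction terminate at sources rather than leaves, forcing $\degnl\ge 2$. Your write-up just makes explicit the degree/$\degnl$ argument that the paper states in one sentence.
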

\begin{proof}
Use  Lemma~\ref{lem:theprevone} with $k+2$ rather than $k$ and observe that every strong $\hal$-gadget with $k+2$ junctions also yields a closed strong $\hal$-gadget with $k$ junctions by removing $i_1$ and $i_{k+2}$ from the list of indices. Since $x_{i_1}$ and $x_{i_{k+2}}$ must have degree at least $3$ (they are inner vertices of a $\hal$-gadget and they are junctions), we obtain that neither $x_{i_2}$ and $x_{i_{k+1}}$ can be forks of $T$.
\end{proof}

\subsubsection{Constructions of $Q$ and $\hat{G}$}
For the scope of this subsection, to avoid notational clutter, we assume the following are given:
\begin{itemize}
    \item Positive integers $k$ and $d$.
    \item A tree $T$ that contains a closed strong $\hal$-gadget $H=x_0,\dots,x_\ell$ of order $d$ with $k$ junctions $x_{i_1},\dots,x_{i_k}$. Additionally, for each $j\in[k]$, we fix a ray $R_j=x_{i_j},r_j^1,\dots,r_j^d$ of length $d$, the source of which is $x_{i_j}$ and which does not contain one of the neighbours $x_{i_j-1}$ and $x_{i_j+1}$ --- note that the $R_j$ must exist as the $x_{i_j}$ are junctions. 
    \item A $k$-vertex cubic graph $\Delta$ containing a Hamiltonian cycle $v_1,\dots,v_k,v_1$.
\end{itemize}

We emphasise that the set of edges of $\Delta$ not contained in the Hamilton cycle must constitute a perfect matching, that is, a set of $k/2$ pairwise non-incident edges. This must be satisfied since $\Delta$ is cubic.

\begin{definition}
The \emph{core} of $H$, denoted by $C(H)$, contains the subsequence $x_{i_1},x_{i_1+1},\dots,x_{i_k-1},x_{i_k}$ and the vertices of the rays $R_j$, that is
\[C(H):=\{x_{i_1},x_{i_1+1},\dots,x_{i_k-1},x_{i_k}\} \cup \bigcup_{j=1}^k V(R_j) \,.\]
\end{definition}

\begin{definition}[$Q(\Delta,T,H)$ and $\tau_Q$]\label{def:Q_tau_caseH}
Set $\ell_j := i_{j+1}-i_j$.
The graph $Q=Q(\Delta,T,H)$ is obtained from $\Delta$ as follows:
\begin{enumerate}
    \item The edge $\{v_k,v_1\}$ is deleted.
    \item For each $j\in\{1,\dots,k-1\}$
    the edge $\{v_j,v_{j+1}\}$ is replaced by 
    a path of length $\ell_j$: \[P_j = v_j,u_j^1,\dots,u_j^{\ell_j-1},v_{j+1}\,,\]
    where the $u_j^t$ are fresh vertices.
    \item Each edge $e=\{v_i,v_j\}$ not contained on the Hamilton cycle, i.e., $j\notin\{i-1,i+1\}$, is replaced by a path $P_{i,j}$ of length~$2d$:
    \[P_{i,j} = v_i,w_i^1,\dots,w_i^{d-1},m(e),w_j^{d-1},\dots,w_j^1,v_j \,,\] 
    where the $w_i^t$ and $w_j^t$ are fresh vertices.
\end{enumerate}
Finally $\tau=\tau(\Delta,T,H)$ is a fracture of $Q$ defined as follows: For each $m(e)$, the partition $\tau_{m(e)}$ contains two singleton blocks, and for all remaining vertices $v$ of $Q$ the partition $\tau_v$ only contains one block.
\end{definition}
Since $\Delta$, $T$ and $H$ are fixed in this subsection, to avoid notational clutter, we just write $Q$ and $\tau$, rather than $Q(\Delta,T,H)$ and $\tau(\Delta,T,H)$. 

It turns out that $Q$ is isomorphic to a quotient graph of $T[C(H)]$ obtained by identifying the endpoints of the rays $R_i$ and $R_j$ for every $\{v_i,v_j\}\in E(\Delta)$ with $j\notin\{i-1,i+1\}$. This induces a homomorphism from $T[C(H)]$ to $Q$ that will be useful in the construction of $\hat{G}$; hence we explicitly define this mapping below:
\begin{definition}[$\gamma$]\label{def:gamma_caseH}
We define a function $\gamma:C(H)\rightarrow V(Q)$ as follows.
\begin{enumerate}
    \item We map the sequence $x_{i_1},x_{i_1+1},\dots,x_{i_k-1},x_{i_k}$ in $C(H)$ to the sequence $v_1,\dots,v_k$ in $Q$. More precisely, for each $j\in\{1,\dots,k-1\}$ and $t\in\{1,\dots,\ell_j-1\}$, we set $\gamma(x_{i_j}):=v_j$ and $\gamma(x_{i_j+t}):=u_j^t$.
    \item For each edge $e=\{v_i,v_j\}$ of $\Delta$ with $j\notin\{i-1,i+1\}$, we map $V(R_i)$ and $V(R_j)$ to the path $P_{i,j}$. More precisely, for each $t\in\{1,\dots,d-1\}$ we set $\gamma(r_i^t):=w_i^t$ and $\gamma(r_j^t)=w_j^t$. Furthermore, we set $\gamma(r_i^d):=m(e) =: \gamma(r_j^d)$. (Note that the images of the sources of the rays $R_i$ and $R_j$ are already set in 1.)
\end{enumerate}
\end{definition}

\begin{observation}
The function $\gamma$ is an edge-bijective homomorphism from $T[C(H)]$ to $Q$. 
\end{observation}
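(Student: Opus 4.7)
The plan is to verify three things in turn: that $\gamma$ is well-defined on $C(H)$, that it sends every edge of $T[C(H)]$ to an edge of $Q$, and that the induced map on edges is a bijection. The main preparatory step is to catalogue explicitly the edges of $T[C(H)]$; once this is done, the rest is a straightforward case check.

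First I would catalogue the edges. The vertex set $C(H)$ splits into a ``spine'' $x_{i_1}, x_{i_1+1}, \dots, x_{i_k}$ together with the vertex sets of the rays $R_1, \dots, R_k$, where each ray $R_j$ is attached to the spine at the junction $x_{i_j}$. I claim the edges of $T$ with both endpoints in $C(H)$ are exactly (a) the \emph{spine edges} $\{x_{i_j+t}, x_{i_j+t+1}\}$ with $0 \le t < \ell_j$ and (b) the \emph{ray edges} $\{r_j^{t-1}, r_j^t\}$ for $1 \le t \le d$ (with the convention $r_j^0 := x_{i_j}$). Internal ray vertices $r_j^t$ with $1 \le t < d$ have degree $2$ in $T$ and $r_j^d$ is a leaf, by the definition of a ray as a $2$-path ending at a leaf; so their only neighbours in $T$ are the adjacent ray vertices, which are already in $C(H)$. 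Internal spine vertices $x_{i_j+t}$ with $0 < t < \ell_j$ are inner vertices of the $\hal$-gadget, so by Definition~\ref{def:hal_number} they are either of degree $2$ in $T$ (with both neighbours on the spine) or sources whose extra neighbours lie in rays of length $\le d$ that are \emph{not} among the $R_j$ and hence do not intersect $C(H)$. The junctions $x_{i_j}$ have as their $C(H)$-neighbours exactly $x_{i_j-1}$ (if $j>1$), $x_{i_j+1}$ (if $j<k$), and $r_j^1$. This is exactly the advertised list.

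Well-definedness is then immediate: the only vertices where Definitions~\ref{def:Q_tau_caseH}(1) and~\ref{def:Q_tau_caseH}(2) could clash are the sources $x_{i_j}$ of the rays, but rule (2) of Definition~\ref{def:gamma_caseH} only prescribes images for $r_j^t$ with $1 \le t \le d$, so there is no conflict. The homomorphism property I would then check edge-by-edge against the catalogue: each spine edge $\{x_{i_j+t}, x_{i_j+t+1}\}$ maps to a consecutive pair along the subdivision path $P_j$ of $Q$, and each ray edge $\{r_j^{t-1}, r_j^t\}$ maps to a consecutive pair along the $v_j$-side of the subdivision path $P_{i,j}$, where $\{v_i,v_j\}$ is the unique non-Hamiltonian edge of $\Delta$ at $v_j$ (unique because $\Delta$ is cubic, so the non-Hamiltonian edges form a perfect matching). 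The shared endpoint $m(e)$ of the two halves of $P_{i,j}$ is hit correctly from both sides since $\gamma(r_i^d) = \gamma(r_j^d) = m(e)$ by construction.

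For edge-bijectivity, I would finish with a counting argument: the catalogue gives $\sum_{j=1}^{k-1} \ell_j + kd$ edges in $T[C(H)]$, and $Q$ has $\sum_{j=1}^{k-1} \ell_j$ edges in the paths $P_1,\dots,P_{k-1}$ plus $2d$ edges in each of the $k/2$ subdivision paths $P_{i,j}$, for the same total $\sum_{j=1}^{k-1} \ell_j + kd$. Since the explicit map described above surjects onto each path $P_j$ and $P_{i,j}$, it is a bijection. The step I expect to need the most care is the catalogue of edges: one has to use both the ``ray'' hypothesis (to rule out edges out of internal ray vertices) and the $\hal$-gadget hypothesis (to confine the extra neighbours of internal spine sources to rays outside $C(H)$). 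Everything downstream is then a bookkeeping check against Definitions~\ref{def:Q_tau_caseH} and~\ref{def:gamma_caseH}.
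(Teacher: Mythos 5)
Your proof is correct: the catalogue of edges of $T[C(H)]$, the case check against Definitions~\ref{def:Q_tau_caseH} and~\ref{def:gamma_caseH}, and the counting argument for bijectivity are all sound. The paper states this as an unproved observation (treating it as immediate from the remark that $Q$ is a quotient of $T[C(H)]$ obtained by identifying ray endpoints), and your verification is exactly the direct unfolding of the definitions that the paper implicitly relies on.
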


Let us provide the induced egde-bijection explicitly:
\begin{definition}\label{def:edge_subset_caseH}\label{def:gamE}
\label{def:Eprime} 
($E'$, $\gamma_{\text{E}}$)
Define $E':= E(T[C(H)])$, that is, $E'\subseteq E(T)$ contains all edges on the sub-path $x_{i_1},\dots,x_{i_k}$ of $H$ and all edges of the rays $R_1,\dots, R_k$. We write $\gamma_{\text{E}}: E' \to E(Q)$ for the edge-bijection from $E'$ to $E(Q)$ induced by the homomorphism $\gamma$. 
\end{definition}

Now let $(G,c)$ be a $Q$-coloured graph. We state the following fact explicitly, since it will be crucial in our construction:
\begin{observation}\label{obs:Q_coloured_E'}\label{obs:yes}
Let $(G,c)$ be a $Q$-coloured graph. 
The mapping $c_{\text E}\circ \gamma_{\text{E}}^{-1}$ 
is a map from $E(G)$ to $E'$.  
\end{observation}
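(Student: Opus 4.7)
The plan is to verify that the claimed composition type-checks by unpacking each component. Since the observation is labelled as such (and stated to be crucial for later use rather than nontrivial), the ``proof'' should consist of carefully recalling the domains and codomains of the two maps being composed, and checking that $\gamma_{\text{E}}^{-1}$ is defined on all of $E(Q)$.

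First I would recall that, by Definition~\ref{def:cE}, the edge-colouring $c_{\text{E}}$ associated with the $Q$-colouring $c$ of $G$ sends an edge $\{x,y\} \in E(G)$ to $\{c(x),c(y)\}$. Since $c$ is a homomorphism from $G$ to $Q$, this image is indeed an edge of $Q$, so $c_{\text{E}}$ is a well-defined map $E(G) \to E(Q)$. Second, I would recall that $\gamma$ (Definition~\ref{def:gamma_caseH}) has been observed to be an edge-bijective homomorphism from $T[C(H)]$ to $Q$; by Definition~\ref{def:Eprime}, $E' = E(T[C(H)])$, and the induced map $\gamma_{\text{E}} : E' \to E(Q)$ is a bijection. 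Therefore $\gamma_{\text{E}}^{-1}$ is a well-defined function $E(Q) \to E'$.

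Composing these two maps (in the order indicated, interpreting the composition so that each edge of $G$ is first assigned its $Q$-edge colour by $c_{\text{E}}$ and then pulled back to the unique corresponding edge of $T[C(H)]$ by $\gamma_{\text{E}}^{-1}$) yields a function $E(G) \to E'$, as claimed. There is no real obstacle here: the statement is a purely formal type-check, and the only thing to verify is that $\gamma_{\text{E}}^{-1}$ is defined on the entirety of $E(Q)$, which is precisely the content of $\gamma_{\text{E}}$ being a bijection. The observation is stated separately only because it will be invoked repeatedly when transferring colourings between $(G,c)$ and the edge subset $E'$ of $T$ in the forthcoming construction of $\hat{G}$.
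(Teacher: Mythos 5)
Your argument is correct and is exactly the (implicit) justification the paper relies on: $c_{\text{E}}$ lands in $E(Q)$ because $c$ is a homomorphism, and $\gamma_{\text{E}}^{-1}$ is totally defined on $E(Q)$ because $\gamma_{\text{E}}\colon E'\to E(Q)$ is a bijection. You also correctly note the intended reading of the composition (apply $c_{\text{E}}$ first, then pull back via $\gamma_{\text{E}}^{-1}$), which is the only point where the paper's notation could mislead.
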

Our goal is to
construct a graph $\hat{G}=\hat{G}(G,c,T,H)$ from $G$, and an edge-colouring $\hat{\gamma}: E(\hat{G}) \mapsto E(T)$
whose range is $E(T)$  such that
\[\oplus\embs{(\fracture{Q}{\tau},c_\tau)}{(G,c)} = \oplus\colsubs{T}{(\hat{G},\hat{\gamma})},\]
that is, the number of colour-preserving embeddings from the fractured graph $\fracture{Q}{\tau}$ to $(G,c)$ is equal, modulo $2$, to the number of subgraphs of $\hat{G}$ that are isomorphic to $T$ and that contain each edge-colour in $E(T)$ precisely once.

For what follows, let $V(\calR):=\cup_{j=1}^k V(R_j)$ be the set of all vertices of the rays $R_1,\dots,R_k$. We are now able to define $\hat{G}=\hat{G}(G,c,T,H)$; the construction is illustrated in Figure~\ref{fig:HatGDef}.
The definition uses the function $c_{\text{E}}$ introduced in Definition~\ref{def:cE} and the functions
 $\gamma$ and $\gamma_{\text{E}}$  introduced in Definitions~\ref{def:gamma_caseH}
 and \ref{def:gamE}, respectively. 
 It also uses the mapping $c_{\text E}\circ \gamma_{\text{E}}^{-1}$ 
 from $E(G)$ to $E'$ (see Observation~\ref{obs:yes}).

\begin{figure}[t!]
\centering
    \includegraphics[width=\textwidth]{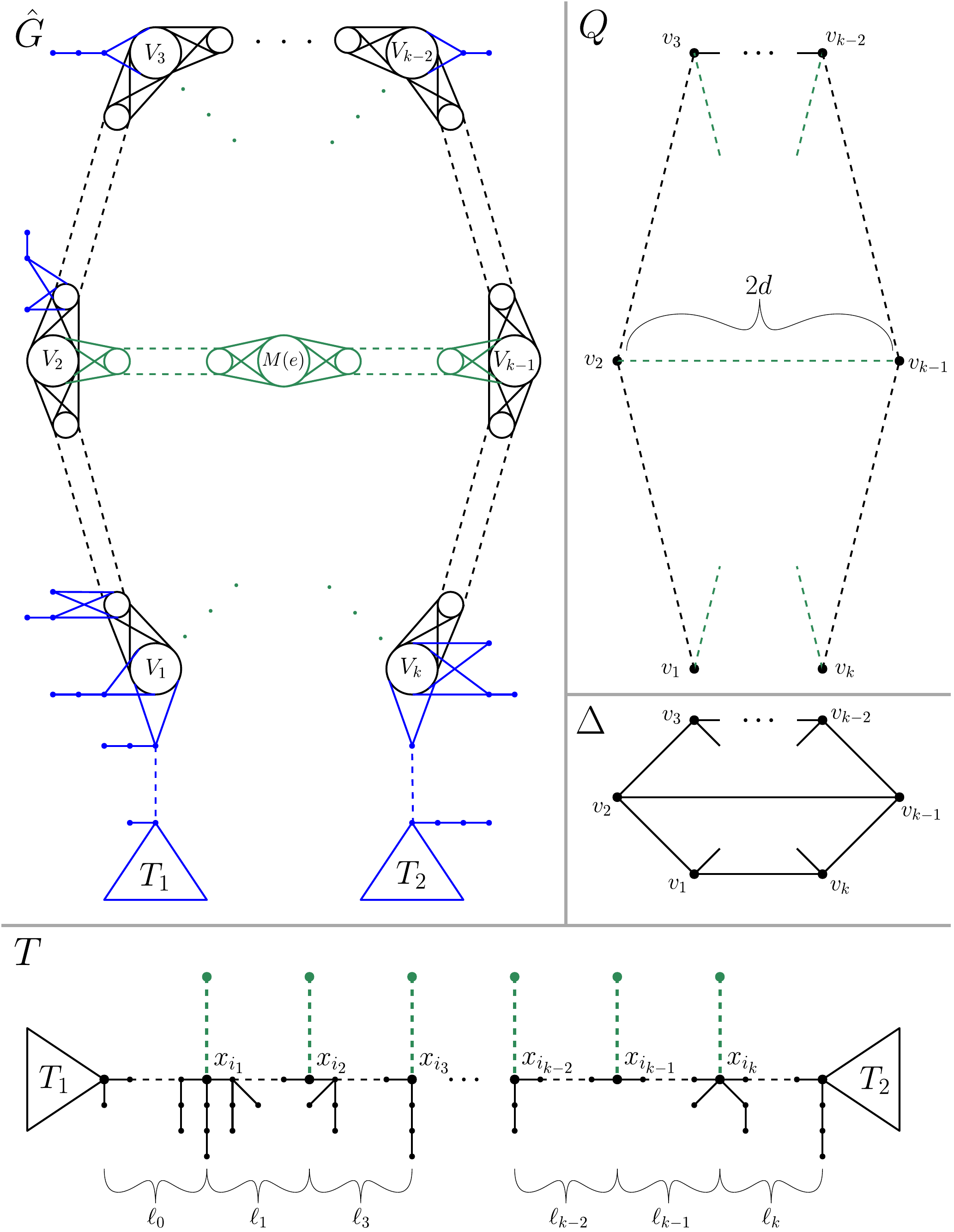}
    \caption{(\emph{Below}): The tree $T$ containing a closed strong $\hal$-gadget of order $d$; the green dashed lines are rays of length $d$. (\emph{Left}): The construction of $\hat{G}=\hat{G}(G,c,T,H)$; note that the removal of the vertices and edges coloured blue yields $G$ (see Definition~\ref{def:hatG_caseH}), and note that $G$ is $Q$-coloured as depicted. (\emph{Right}): The graphs $\Delta$ and $Q=Q(\Delta,T,H)$; we assume in the picture that $\{v_2,v_{k-1}\}$ is an edge of $\Delta$.}
    \label{fig:HatGDef}
\end{figure}

 \vbox{
\begin{definition}[$\hat{G}(G,c,T,H),\hat{\gamma}(G,c,T,H)$]\label{def:hatG_caseH} 
Let $(G,c)$ be a $Q$-coloured graph. The  pair $(\hat{G},\hat{\gamma}) = (\hat{G}(G,c,T,H),\hat{\gamma}(G,c,T,H))$ is an edge-coloured graph constructed as follows, where the co-domain of~$\hat{\gamma}$ is $E(T)$: 
\begin{itemize}
\item[(A)] The graph $\hat{G}$ contains $G$ as a subgraph. 
For each $e\in E(G)$, define $\hat{\gamma}(e) = \gamma^{-1}_{\text{E}}(c_E(e))$.
\item[(B)] The vertex set of $\hat{G}$ is the union of $V(G)$
and  $V(T)\setminus C(H)$.
\item[(C)] Pairs of vertices in $V(T)\setminus C(H)$ 
are connected by an edge in $\hat{G}$  if and only if they are adjacent in $T$. 
For each such edge~$e$, $\hat{\gamma}(e) = e$.

\item[(D)] The remaining edges of $\hat{G}$ are defined as follows.  For each edge $e\in E(T)$ that connects a vertex $z \in V(T)\setminus C(H)$   to a vertex $y\in C(H)$ there are corresponding edges in $\hat{G}$. These edges 
connect~$z$ to  all vertices $g\in V(G)$ such that $c(g)=\gamma(y)$ 
For each such edge $e'$ in~$\hat{G}$,
$\hat{\gamma}(e') = e $.

\end{itemize}
\end{definition}}

\noindent Observe that for each element $\colT\in\colsubs{T}{(\hat{G},\hat{\gamma})}$ 
the induced subgraph \[\colT[G]:=\colT[V(\colT)\cap V(G)]\] of $\colT$ is an edge-colourful subgraph in $G$, that is, $\colT[G]$ contains precisely one edge per edge-colour of $G$ 
under the edge colouring $\hat{\gamma}$
hence it contains precisely one edge 
per edge-colour of $G$ under
$\cE$.
  As shown in Section~3 in the full version~\cite{RothSW20arxiv} of~\cite{RothSW21}, $\colT[G]$ thus induces a fracture $\rho=\rho(\colT)$ of~$Q$: Two edges $\{v,w\}$ and $\{v,y\}$ of $Q$ are in 
the same block 
in the partition $\rho_v$ corresponding to   vertex $v$ of $Q$ 
if and only if the edges of $\colT[G]$
that are coloured $\gamE^{-1}(\{v,w\})$ and 
$\gamE^{-1}(\{v,y\})$ are adjacent.
In what follows, we show that $\rho$ must always be equal to $\tau(\Delta,T,H)$ (see Definition~\ref{def:Q_tau_caseH}).

\begin{lemma}\label{lem:unique_fracture_caseH}\label{lem:prev}
For every $\colT\in\colsubs{T}{(\hat{G},\hat{\gamma})}$ we have that $\rho(\colT) = \tau(\Delta,T,H)$.
\end{lemma}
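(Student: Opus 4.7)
The plan is to fix a colour-preserving graph isomorphism $\phi\colon T \to \colT$ and identify, vertex by vertex, the blocks of the induced fracture $\rho_v$. The target is that $\rho_v$ has a single block whenever $v$ is not of the form $m(e)$, and two singleton blocks when $v=m(e)$ for some non-Hamilton edge $e$ of $\Delta$.

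First I would catalogue which edges of $\hat{G}$ can carry each colour in $E(T)$. Edges coloured by $E'$ lie in $G$ by (A); edges of $E(T)\setminus E'$ with both endpoints in $V(T)\setminus C(H)$ are rigidly placed on their unique copy in $V(T)\setminus C(H)$ by (C); and an edge with colour $\{z,y\}$ where $z\in V(T)\setminus C(H)$ and $y\in C(H)$ must, by (D), connect $z$ to some $g\in V(G)$ with $c(g)=\gamma(y)$. Combined with the injectivity of $\phi$, the rigidity of the purely non-core edges forces $\phi$ to biject $V(T)\setminus C(H)$ with itself, and therefore to map $C(H)$ into $V(G)$.

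The main obstacle is the calibration claim $c(\phi(y))=\gamma(y)$ for every $y\in C(H)$. I would exploit the fact that $\gamma$ is injective on $C(H)$ apart from the identifications $\gamma^{-1}(m(e))=\{r_i^d, r_j^d\}$ for each non-Hamilton edge $e=\{v_i,v_j\}$. For any $y\in C(H)$ not of the form $r_j^d$, a short case check (junctions, inner path vertices, inner ray vertices) shows that $y$ has two $C(H)$-neighbours $y_1,y_2$ with $\gamma(y_1)\neq \gamma(y_2)$; the two $\colT[G]$-edges realising $\{y,y_1\}$ and $\{y,y_2\}$ then share the vertex $\phi(y)$, and the intersection of their colour-pairs $\{\gamma(y),\gamma(y_1)\}$ and $\{\gamma(y),\gamma(y_2)\}$ is exactly $\{\gamma(y)\}$, forcing $c(\phi(y))=\gamma(y)$. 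The remaining case $y=r_j^d$, whose only $C(H)$-neighbour is $r_j^{d-1}$, I would handle by first applying the previous argument to $r_j^{d-1}$ (whose two $C(H)$-neighbours have distinct $\gamma$-values) and then reading off $c(\phi(r_j^d))=m(e_j)$ from the edge-colour constraint on the $G$-edge realising $\{r_j^{d-1},r_j^d\}$.

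With the calibration in place, computing $\rho(\colT)$ becomes routine. For each vertex $v$ of $Q$ that is not an $m(e)$, the unique preimage $y=\gamma^{-1}(v)\in C(H)$ makes $\phi(y)$ the common $v$-coloured endpoint in $\colT[G]$ of every edge in $E_Q(v)$, so $\rho_v$ consists of a single block, matching $\tau_v$. For $v=m(e)$ with $e=\{v_i,v_j\}$, the two edges in $E_Q(m(e))$ correspond via $\gamma_{\text{E}}$ to $\{r_i^{d-1},r_i^d\}$ and $\{r_j^{d-1},r_j^d\}$; their respective $v$-coloured endpoints in $\colT[G]$ are $\phi(r_i^d)$ and $\phi(r_j^d)$, which are distinct by the injectivity of $\phi$, so $\rho_{m(e)}$ has two singleton blocks, matching $\tau_{m(e)}$. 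Hence $\rho(\colT)=\tau$.
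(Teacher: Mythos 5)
The proposal has a genuine gap at its very first step, and everything downstream depends on it. You ``fix a colour-preserving graph isomorphism $\phi\colon T\to\colT$'', but no such isomorphism is given: by definition, an element $\colT\in\colsubs{T}{(\hat{G},\hat{\gamma})}$ is only required to be \emph{abstractly} isomorphic to $T$ and to contain each edge colour exactly once. There is no guarantee that some isomorphism maps each edge $e\in E(T)$ to the edge of $\colT$ that $\hat{\gamma}$ colours with $e$ (already for a path with three edges, an edge-colourful copy carrying the colours in a permuted order along the path admits no colour-preserving isomorphism). Establishing that $\colT$ is, in effect, placed colour-preservingly is the entire content of the lemma, so positing $\phi$ at the outset begs the question. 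Concretely, your ``calibration'' step asserts that the two edges of $\colT[G]$ carrying the colours $\{y,y_1\}$ and $\{y,y_2\}$ share an endpoint because they are $\phi(\{y,y_1\})$ and $\phi(\{y,y_2\})$ --- but whether the edges of $\colT[G]$ realising two colours that share a vertex of $Q$ are themselves adjacent is precisely what the fracture $\rho(\colT)$ records, i.e.\ exactly the conclusion to be proved. Likewise, the claim that the rigidity of the non-core edges ``forces $\phi$ to biject $V(T)\setminus C(H)$ with itself'' does not follow: $\colT$ does contain the unique copy of $T[V(T)\setminus C(H)]$, but an abstract isomorphism $T\to\colT$ is free to map core vertices of $T$ onto that copy and non-core vertices into $V(G)$.

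The paper's proof is devoted to ruling out exactly these scrambled placements, and it does so using the structural guarantees of the \emph{closed strong} $\hal$-gadget: every fork and every ray of length greater than $d$ of $T$ (hence of $\colT$) lies outside the core, the segments between consecutive junctions have length $\ell_j>2d$, and $\colT$ must contain $T_1$ and $T_2$ outright. From this one argues that the unique $T_1$--$T_2$ path in $\colT$ traverses the colour classes $V_1,\dots,V_k$ exactly once and in the correct order (skipping a class or shortcutting through an inner length-$2d$ path would create a fork or an over-long ray inside $G$, a contradiction), which then forces each inner path to be split at its centre $m(e)$ and yields $\rho(\colT)=\tau$. None of these possibilities is addressed in your argument, so the proof does not go through as written.
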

\begin{proof}
To avoid notational clutter, we set $\rho:=\rho(\colT)$ and $\tau := \tau(\Delta,T,H)$.
Let $T_1$ and $T_2$ be the subtrees of $T$ attached to the ends of the $\hal$-gadget $H$ as shown in the bottom part of Figure~\ref{fig:HatGDef}. 

We  first give an overall intuition of the proof; consider Figure~\ref{fig:HinHatG} for an illustration. Since $\colT$ is isomorphic to $T$, there must be a (unique) path connecting $T_1$ and $T_2$ in $\hat{G}$ (recall that, since $\colT$ is \emph{edge-colourful} and since every edge in $T_1$ and $T_2$ has a different colour --- see (C) in Definition~\ref{def:hatG_caseH} --- $\colT$ must contain all edges in $T_1$ and $T_2$). We claim that this path must follow the outer cycle in $\hat{G}$, in which case the designated rays in $\calR$ of length $d$ at the junctions must follow the inwards direction and thus induce $\tau$. To see why the path connecting $T_1$ and $T_2$ must follow the outer cycle,
first recall that $V_j$ is the subset of $V(G)$ 
coloured by $c$ with $v_j$. 
Then recall 
that the path between $V_j$ and $V_{j+1}$  along the outer cycle in $\hat{G}$ has length $\ell_j \geq 2d+1$. Hence the designated rays in $\calR$ cannot be used to cover all edge colours in the path between $V_j$ and $V_{j+1}$.

\begin{figure}[t!]
    \centering
    \includegraphics[scale=0.9]{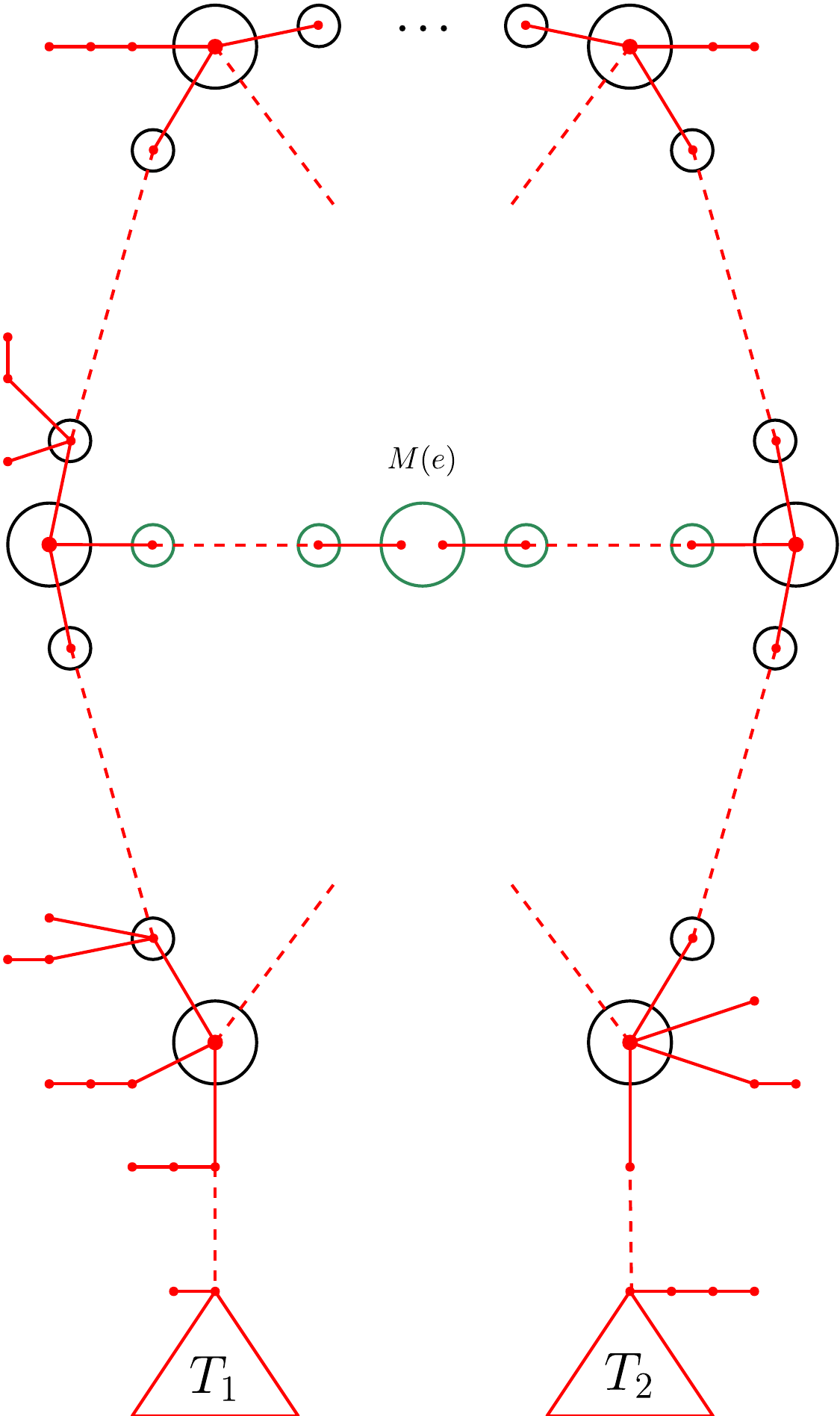}
    \caption{Illustration of Lemma~\ref{lem:unique_fracture_caseH}: The only possibility for an edge-colourful copy of $T$ to be embedded in $\hat{G}$ is depicted in red.  }
    \label{fig:HinHatG}
\end{figure}

We next provide a rigorous argument. 
Let $$S:= V(T_1) \cup V(T_2) \cup \{x_0,\dots,x_{i_1-1}\} \cup \{x_{i_k+1},\dots, x_{k+1}\}.$$ Note that $S$ is a subset of $V(T)\setminus V(H)$ hence it is a subset of $V(T)$
and of   $V(\hat{G})$.

We first claim that  every fork and every ray of length $>d$ of $T$ must be fully contained in the subgraph of $T$ induced by $S$.
This claim follows from the definition of closed strong $\hal$-gadgets. 
In particular, the condition of being closed implies that neither $x_{i_1}$ nor $x_{i_k}$ is a fork.

As a consequence, every fork and every ray of length greater than $d$ of $\colT$ must be contained in the subgraph of~$\hat{G}$ induced by $S$ as well. Additionally, this implies that none of the vertices in $\colT[G]$ can be a fork or the source of a ray of length $>d$ in $\colT$ --- otherwise, $\colT$ would have either more forks or more rays of length $>d$ than $T$, contradicting the fact that $\colT$ and $T$ are isomorphic.
 
Recall that $V_1,\dots,V_k$ denote the subsets of vertices of $G$ that are coloured by $c$ with $v_1,\dots,v_k$.
Now let $P$ be the (unique) path $P$ in $\colT$ that connects $T_1$ with $T_2$. Then, starting with $V_1$ and ending with $V_k$, the path $P$ must pass through a sequence of colour classes $V_1=V_{j_1},V_{j_2},\dots,V_{j_t}=V_k$ of $G$. 
The following claim formalises the idea that
  this sequence must correspond to the Hamilton cycle $v_1,\dots,v_k$ in $\Delta$.
 
{\bf Claim:\quad}
We have $t=k$ and $V_{j_i}=V_i$ for each $i\in[k]$.
 
Before proving the claim, we show that it implies the lemma.
Since, from the claim, $P$ must   follow the outer cycle, the fracture $\rho=\rho(\colT)$ induced by $\colT$  must split the inner paths of length $2d$ (otherwise $\colT$ would contain a cycle).  However, since there are no sources or rays of length greater than $d$ outside of $S$
   in $\colT$, $\rho$ must split all of the inner length-$2d$ paths
    at the central vertex $m(e)$. Furthermore, it cannot split additional vertices since this would disconnect $\colT$.  Thus, $\rho$ is  the fracture $\tau$, concluding the proof. $\blacksquare$

To conclude the proof, we now prove the claim.
Note first that $P$ cannot pass through any of the colour classes $V_i$ more than once as this would cause $\colT$ to use an edge-colour multiple times. 
Next assume for contradiction that $P$ misses some colour class $V_a$ for some $a\in[2,k-1]$ (i.e., we assume that $t<k$). Since $\colT$ is a connected tree 
containing all of the edge colours in~$Q$  
there must be an index  
$j_i\neq a$ and a vertex $u\in V_{j_i}\cap P$ such that $\colT$ contains a (unique) path $P_u$ from $u$ to a vertex  $w\in V_a$. 
In order to get the contradiction, root $\colT$ at $u$.  Construct a subtree $\colT(u)$ of $\colT$ as follows: For each neighbour $x$ of $u$ except the ancestor of $w$ on the path from $u$,
we delete $x$ and all of its descendants. Observe that the edge colours of $\colT(u)$ are disjoint from the edge-colours of $P$
and that $V(\colT(u))$ is disjoint from $S$. Now, if $\colT(u)$ is a path, then (using that $\ell_i>2d$), we obtain that $u$ is the source of a ray in $\colT$ of length greater than $d$,  contradicting the fact that every ray of length $>d$ of $\colT$ is in the subgraph of $\hat{G}$ induced by~$S$.   
Otherwise, $\colT(u)$ contains a fork, contradicting the fact that all forks of $\colT$ are in the subgraph of $\hat{G}$ induced by~$S$.

Having established that $t=k$ and that no $V_i$ is visited more than once, it remains to show that $P$ visits the colour classes in the correct order, that is $V_{j_i}=V_i$ for each $i\in[k]$. Assume for contradiction that this is not the case, which allows us to set
\[m := \min\{i \in[k]~|~V_{j_i} \neq V_i \} -1 \,.\]
Note that $m\geq 1$ since $j_1=1$. Let $z_{m}\in V_{m}\cap P$ and $z_{m+1}\in V_{m+1}\cap P$ and recall that $G$ contains colour classes $U^1_{m},\dots,U^{\ell_{m}-1}_{m}$ corresponding to the path \[P_{m}=v_{m},u^1_{m},\dots,u^{\ell_{m}-1}_{m},v_{m+1}\] in $Q$ (see Definition~\ref{def:Q_tau_caseH}). Let us now define the subtrees $\colT(m)$ and $\colT(m+1)$:
\begin{itemize}
    \item For $\colT(m)$ we root $\colT$ at $z_m$ and for each neighbour $x$ of $z_m$ in $\colT$, we delete $x$ and all of its descendants unless $x\in U^1_{m}$. 
    \item For $\colT(m+1)$ we root $\colT$ at $z_{m+1}$ and for each neighbour $x$ of $z_{m+1}$ in $\colT$, we delete $x$ and all of its descendants unless $x\in U^{\ell_m-1}_{m}$. 
\end{itemize}
Note that at least one of $\colT(m)$ and $\colT(m+1)$ must have depth greater than $d$ (if rooted at $z_m$ and $z_{m+1}$, respectively), since $\ell_m>2d$ and $\colT$ is edge-colourful with respect to $\hat{\gamma}$, that is, we have to make sure that we cover all of the edge colours  
\[\{v_{m},u^1_{m}\}, \{u^1_{m},u^2_{m} \},\dots,\{u^{\ell_{m}-1}_{m},v_{m+1}\} \]
Finally, regardless of which one of the two subtrees has depth greater than $d$, we will find either a fork, or the source of a ray of length greater than $d$ outside of the set $S$, yielding the desired contradiction and concluding the proof of the claim, and hence the proof of the lemma.

\end{proof}  

We are now able to prove the main lemma of this subsection.
\begin{lemma}\label{lem:success_Hgadgets} 
$\oplus\embs{(\fracture{Q}{\tau},c_\tau)}{(G,c)} = \oplus\colsubs{T}{(\hat{G},\hat{\gamma})}$.
\end{lemma}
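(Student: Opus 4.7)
The plan is to establish an explicit bijection
\[
\Phi\colon \embs{(\fracture{Q}{\tau},c_\tau)}{(G,c)} \to \colsubs{T}{(\hat{G},\hat{\gamma})},
\]
which yields the stronger equality $\#\embs{(\fracture{Q}{\tau},c_\tau)}{(G,c)} = \#\colsubs{T}{(\hat{G},\hat{\gamma})}$, and hence equality modulo $2$.

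The key preliminary observation is that $\fracture{Q}{\tau}$ is canonically isomorphic to $T[C(H)]$ via a lift of $\gamma$. Since $\tau$ splits precisely each $m(e)$ into two singleton blocks $m(e)^{B_1}, m(e)^{B_2}$, I would define $\psi\colon V(T[C(H)]) \to V(\fracture{Q}{\tau})$ to agree with $\gamma$ on every vertex not mapped to some $m(e)$, and to send $r_i^d$ and $r_j^d$ to the two distinct lifts $m(e)^{B_1}$ and $m(e)^{B_2}$ (each matched to whichever block contains the edge of $Q$ on the $v_i$-side, resp.\ $v_j$-side). Then $\psi$ is a graph isomorphism with $c_\tau \circ \psi = \gamma$. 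Furthermore, the group of colour-preserving automorphisms of $(\fracture{Q}{\tau}, c_\tau)$ is trivial: every vertex not of the form $m(e)^B$ is the unique vertex of its colour, and the two vertices of each pair $\{m(e)^{B_1}, m(e)^{B_2}\}$ are distinguished by the colour of their unique neighbour ($w_i^{d-1}$ versus $w_j^{d-1}$, which are distinct colours). Consequently, an embedding is uniquely determined by its image.

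The forward map $\Phi$ sends $\varphi$ to the subgraph $\colT$ of $\hat{G}$ with vertex set $\varphi(V(\fracture{Q}{\tau})) \cup (V(T) \setminus C(H))$ and edges of three types: (i) images $\{\varphi(u), \varphi(v)\}$ for each edge $\{u,v\} \in E(\fracture{Q}{\tau})$; (ii) edges of $T$ between pairs of vertices in $V(T) \setminus C(H)$; and (iii) for each edge $\{z,y\}$ of $T$ with $z \in V(T) \setminus C(H)$ and $y \in C(H)$, the edge $\{z, \varphi(\psi(y))\}$ of $\hat{G}$. Type (i) exists by the homomorphism property of $\varphi$; type (ii) by (C) of Definition~\ref{def:hatG_caseH}; type (iii) by (D), using $c(\varphi(\psi(y))) = c_\tau(\psi(y)) = \gamma(y)$. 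The bijection $y \mapsto \varphi(\psi(y))$ on $C(H)$, combined with the identity on $V(T) \setminus C(H)$, is then a graph isomorphism $T \to \colT$; injectivity is immediate since the two halves of $V(\colT)$ lie in $V(G)$ and $V(\hat{G}) \setminus V(G)$ respectively. Edge-colourfulness is checked edge-type by edge-type: for type (i), the identity $\hat{\gamma}(\{\varphi(u),\varphi(v)\}) = \gamE^{-1}(\cE(\{\varphi(u),\varphi(v)\})) = \gamE^{-1}(\{c_\tau(u), c_\tau(v)\})$, together with the fact that $\gamE$ is a bijection from $E'$ to $E(Q)$, shows that type-(i) edges cover $E(T[C(H)])$ bijectively; edges of type (ii) and (iii) each carry their own edge of $T$ as $\hat{\gamma}$-colour. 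Hence every edge colour of $T$ appears exactly once.

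The inverse map relies on Lemma~\ref{lem:unique_fracture_caseH}: given $\colT \in \colsubs{T}{(\hat{G},\hat{\gamma})}$, the lemma states that the fracture induced by $\colT[G]$ equals $\tau$, so $\colT[G]$ is isomorphic to $\fracture{Q}{\tau}$ in a colour-preserving way. By the triviality of colour-preserving automorphisms of $(\fracture{Q}{\tau}, c_\tau)$, this isomorphism is unique, and it defines $\Psi(\colT)$. The identities $\Phi \circ \Psi = \mathrm{id}$ and $\Psi \circ \Phi = \mathrm{id}$ then follow from the fact that the attachment of $V(T) \setminus C(H)$ is forced in both directions by the $\hat{\gamma}$-edge-colours. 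The main obstacle I expect in the write-up is the careful bookkeeping ensuring that the pair $m(e)^{B_1}, m(e)^{B_2}$ is consistently matched with $r_i^d, r_j^d$; this matching is forced by the distinct colours of the neighbouring subdivision vertices and is precisely where the asymmetry of the colouring kills the otherwise naive 2-to-1 ambiguity at each non-Hamilton edge of $\Delta$.
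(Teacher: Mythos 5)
Your proposal is correct and follows essentially the same route as the paper's proof: both reduce the statement to an exact bijection between colourful copies of $T$ in $\hat{G}$ and images of embeddings of $(\fracture{Q}{\tau},c_\tau)$, using the fact that a colour-preserving embedding is determined by its image, the isomorphism $\fracture{Q}{\tau}\cong T[C(H)]$, and Lemma~\ref{lem:unique_fracture_caseH} for the direction from $\colT$ to $\tau$. The only cosmetic differences are the orientation of the bijection and that you justify unique determination by the triviality of the colour-preserving automorphism group rather than by the paper's edge-by-edge propagation argument.
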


\begin{proof}
We start with the following claim from~\cite{RothSW20arxiv}.

{\bf Claim:} 
A colour-preserving embedding $\varphi \in \embs{(\fracture{Q}{\tau},c_\tau)}{(G,c)}$ is uniquely defined by its image (which is a subgraph of $(G,c)$).

For convenience, we provide a proof of the claim:
Consider in image $(G',c')$ of~$\varphi$ where $G'$ is 
  a subgraph
 of $G$ and  
$c' = c \mid_{V(G')}$. 
Let $e=\{u,v\}$ be an edge of $G'$ Then $c'(e)=\{c(u),c(v)\}$ is an edge of $Q$ since $c$ is a $Q$-colouring. Recall that $\fracture{Q}{\tau}$ is $Q$-coloured by the function $c_\tau$ that maps $w^B$ to $w$ for each $w\in V(Q)$ and block $B\in \tau_w$. Now 
recall the definition of fractured graphs (Definition~\ref{def:mrhoGeneral}) and  
let $B_1$ and $B_2$ be the blocks of $\tau_{c(u)}$ and $\tau_{c(v)}$ that contain $c(e)$. Then, 
since $\varphi$ is an embedding, it maps $c(u)^{B_1}$ to $u$ and $c(v)^{B_2}$ to $v$. Since $Q$ does not have isolated vertices, continuing this process over all edges of $G'$ defines~$\varphi$. This concludes the proof of the claim. $\blacksquare$

By the claim, it is sufficient to construct a bijection $b$ from elements in $\colsubs{T}{(\hat{G},\hat{\gamma})}$ to subgraphs $(G',c')$ that are
images of embeddings in $\embs{(\fracture{Q}{\tau},c_\tau)}{(G,c)}$. Given $\colT\in \colsubs{T}{(\hat{G},\hat{\gamma})}$ we set $b(\colT):=(\colT[G],c({\colT}))$
where $c({\colT})$ is the colouring 
of vertices of $\colT[G]$ which agrees with $\hat{\gamma}$ on the edges of $\colT[G]$.  In the rest of the proof,  we show that $b$ is the desired bijection.

First, we have to show that
for all $\colT$,
$(\colT[G],c(\colT))$ is the image of an embedding in $\embs{(\fracture{Q}{\tau},c_\tau)}{(G,c)}$. To this end, recall that $\colT[G]$ induces a fracture $\rho=\rho(\colT)$ of $Q$. By the definition of $\rho$,
$\colT[G]$ and $\fracture{Q}{\rho}$ are isomorphic and this isomorphism preserves the colours so 
$c_\rho$ agrees with $\hat{\gamma}$ on
the edges of $\fracture{Q}{\rho}$.
This implies that  $c_\rho$ and $c(\colT)$ are the same. 
So
$(\colT[G],c(\colT))$ is the image of an embedding in $\embs{(\fracture{Q}{\rho},c_\rho)}{(G,c)}$. 
Finally, 
Lemma~\ref{lem:prev}
guarantees that $\rho=\tau$.

Second, we will show that $b$ is injective. To this end, let $\colT_1 \neq \colT_2 \in \colsubs{T}{(\hat{G},\hat{\gamma})}$. Since $\colT_1$ and $\colT_2$ must both fully contain $V(T)\setminus C(H)$, and since both are edge-colourful (see Definition~\ref{def:hatG_caseH}), the only possibility for $\colT_1$ and $\colT_2$ not being equal is that they disagree on $G$, that is, $\colT_1[G]\neq \colT_2[G]$. This proves $b$ to be injective.

Finally, we will show that $b$ is surjective: Given any $(G',c')$
that is the image of an embedding $\varphi\in\embs{(\fracture{Q}{\tau},c_\tau)}{(G,c)}$,
we construct $\colT(G',c') \in \colsubs{T}{(\hat{G},\hat{\gamma})}$ with $b(\colT(G',c'))=(G',c')$   as follows. Observe first that  $G'$ is isomorphic to $T[C(H)]$ since  $\fracture{Q}{\tau}$ is, by definition of $\tau$, isomorphic to $T[C(H)]$: Splitting the inner paths of length $2d$ in $Q$ at their central vertices yields precisely $T[C(H)]$. 
Then $\colT(G',c')$ is obtained by adding the remainder of $T$ to $(G',c')$:  
\begin{enumerate}
    \item We add to $(G',c')$ all vertices in $V(T)\setminus C(H)$ (see (B) in Definition~\ref{def:hatG_caseH}).
    \item We add all edges between vertices in $V(T)\setminus C(H)$ that are present in $\hat{G}$ (see (C) in Definition~\ref{def:hatG_caseH}).
    \item Finally, we connect a vertex in $z$ in $V(T)\setminus C(H)$ with a vertex $w$ in $G'$ if and only if $z$ and $w$ are connected in $\hat{G}$ (see (D) in Definition~\ref{def:hatG_caseH}).
\end{enumerate}
The resulting subgraph $\colT(G',c')$ of $\hat{G}$ is clearly edge-colourful and isomorphic to $T$, concluding the proof.
\end{proof}

We are now able to establish hardness of $\oplus\subsprob(\mathcal{T})$ in case of unbounded $\hal$-number.
\begin{lemma}\label{lem:TreeHardnessCaseH}
Let $\mathcal{T}$ be a recursively enumerable class of trees of unbounded $\hal$-number. Then $\oplus\subsprob(\mathcal{T})$ is $\oplus\W{1}$-hard.
\end{lemma}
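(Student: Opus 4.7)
The plan is to reduce from $\oplus\cphomsprob(\mathcal{Q})$ for a suitable class $\mathcal{Q}$ of graphs with unbounded treewidth, invoking Theorem~\ref{thm:cphom_lower_bound} at the end.

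\emph{Setting up $\mathcal{Q}$.} First I would fix a computable infinite family $\Delta$ of cubic expander graphs on $k$ vertices (for infinitely many $k$), each equipped with a Hamiltonian cycle. Since $\mathcal{T}$ has unbounded $\hal$-number, there is a fixed $d$ such that arbitrarily long $\hal$-gadgets of order $d$ appear in trees of $\mathcal{T}$. For each sufficient $k$, Corollary~\ref{cor:strongHgadget} guarantees a length bound $L(k,d)$ such that any $\hal$-gadget of order $d$ and length $L(k,d)$ contains a closed strong $\hal$-gadget of some order $d'\le d$ with at least $k$ junctions. Using the recursive enumerability of $\mathcal{T}$, I would computably pick, for each such $k$, a tree $T_k\in\mathcal{T}$ and a closed strong $\hal$-gadget $H_k$ in $T_k$ with $k$ junctions, and set $L_k := Q(\Delta_k, T_k, H_k)$ as in Definition~\ref{def:Q_tau_caseH}. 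Let $\mathcal{Q}:=\{L_k\}$. Since $L_k$ is obtained from $\Delta_k$ by deleting one edge and subdividing the remaining edges---operations that preserve treewidth up to a constant for graphs of treewidth at least~$2$---we have $\mathsf{tw}(L_k)=\Omega(\mathsf{tw}(\Delta_k))=\Omega(k)$, so $\mathcal{Q}$ has unbounded treewidth. Moreover $\mathcal{Q}$ is recursively enumerable.

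\emph{The reduction.} Given an instance $(L,(G,c))$ of $\oplus\cphomsprob(\mathcal{Q})$, I would decode $(\Delta,T,H)$ with $L=Q(\Delta,T,H)$ in time depending only on $|L|$ (re-running the enumeration). Let $\tau=\tau(\Delta,T,H)$. For any $L$-coloured graph $(G',c')$ we can in polynomial time construct $(\hat{G}',\hat{\gamma}')$ via Definition~\ref{def:hatG_caseH}; by Lemma~\ref{lem:remove_edgecols} we can then compute $\oplus\colsubs{T}{(\hat{G}',\hat{\gamma}')}$ using the $\oplus\subsprob(\mathcal{T})$ oracle, and by Lemma~\ref{lem:success_Hgadgets} this equals $\oplus\embs{(\fracture{L}{\tau},c_\tau)}{(G',c')}$. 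Hence, using only the oracle for $\oplus\subsprob(\mathcal{T})$, I get access to the function $(G',c')\mapsto \oplus\embs{(\fracture{L}{\tau},c_\tau)}{(G',c')}$.

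\emph{Expressing embeddings as a combination of homomorphism counts.} Now I would apply standard Möbius inversion on the lattice of colour-respecting partitions of $V(\fracture{L}{\tau})$. Since colour classes correspond to blocks of $\tau_v$ and since vertices sharing a colour are never adjacent in $\fracture{L}{\tau}$ (as $L$ has no loops), every such partition is admissible and corresponds exactly to a fracture $\rho\succeq \tau$ obtained by merging blocks of $\tau_v$. Thus
\[
\#\embs{(\fracture{L}{\tau},c_\tau)}{(G',c')} \;=\; \sum_{\rho\succeq \tau} \mu(\tau,\rho)\cdot \#\homs{(\fracture{L}{\rho},c_\rho)}{(G',c')},
\]
where $\mu(\tau,\rho)=\prod_{v\in V(L)}(-1)^{k_v-1}(k_v-1)!$ and $k_v$ is the number of blocks of $\tau_v$ merged in $\rho_v$. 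The coefficient at the top fracture $\top$ (for which $\fracture{L}{\top}=L$ and $c_\top=\mathsf{id}_L$) is $\prod_v (-1)^{|\tau_v|-1}(|\tau_v|-1)!=(-1)^{k/2}$, because $|\tau_v|=2$ for exactly the $k/2$ midpoint vertices $m(e)$ and $|\tau_v|=1$ elsewhere. This coefficient is odd, which is the single nontrivial combinatorial check.

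\emph{Conclusion.} Taking this linear combination modulo $2$ and invoking Lemma~\ref{lem:complexity_monotonicty} with $Q:=L$ and the coefficient function $a$ just described (odd at $\top$), I obtain $\oplus\homs{(L,\mathsf{id}_L)}{(G,c)}$. This yields a parameterised Turing reduction $\oplus\cphomsprob(\mathcal{Q})\fptred \oplus\subsprob(\mathcal{T})$; together with Theorem~\ref{thm:cphom_lower_bound} and the unbounded treewidth of $\mathcal{Q}$, this proves the claim. The main obstacle is the Möbius-coefficient computation at $\top$; once the reduction from embeddings to homomorphism counts on fractures is set up, everything else is a direct combination of Lemmas~\ref{lem:success_Hgadgets}, \ref{lem:remove_edgecols} and~\ref{lem:complexity_monotonicty}.
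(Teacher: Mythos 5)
Your reduction for the main case matches the paper's: build $Q(\Delta,T,H)$ from cubic Hamiltonian graphs, use Lemma~\ref{lem:success_Hgadgets} together with Lemma~\ref{lem:remove_edgecols} to simulate the function $(G',c')\mapsto\oplus\embs{(\fracture{Q}{\tau},c_\tau)}{(G',c')}$ with the $\oplus\subsprob(\calT)$ oracle, expand embeddings into homomorphisms over the fracture lattice, check the top coefficient is odd (your computation $(-1)^{k/2}$ is right, since exactly the $k/2$ midpoints $m(e)$ have $|\tau_v|=2$), and apply Lemma~\ref{lem:complexity_monotonicty} and Theorem~\ref{thm:cphom_lower_bound}. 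That part is sound and is essentially the paper's argument.

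However, there is a genuine gap at the very first step: you invoke Corollary~\ref{cor:strongHgadget} without verifying its hypothesis that \emph{the longest $2$-path in $T$ has length $d$}. Unbounded $\hal$-number only guarantees arbitrarily long $\hal$-gadgets of some fixed order $d$; it does not guarantee that $2$-paths are bounded by $d$ (or by anything). Indeed, a bare path of length $B$ is a $\hal$-gadget of order $d$ and length $B$ for every $d$ via case (i) of Definition~\ref{def:hal_number}, so the class of all paths has unbounded $\hal$-number, is not matching splittable, yet contains no junctions whatsoever --- no strong $\hal$-gadget with $k\geq 1$ junctions exists in it, and your construction of $T_k$ and $H_k$ simply fails. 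Lemma~\ref{lem:theprevone} explicitly uses the $2$-path bound to derive a contradiction when the iterative junction search fails. The paper therefore splits the proof into two cases: if $\calT$ contains $2$-paths of unbounded length, it gives a separate reduction from counting $k$-cycles modulo $2$ (via counting $s$-$t$-paths of length $k$ modulo $2$, attaching the two end-subtrees $T_1,T_2$ to $s$ and $t$ and removing the ``contains all of $T_1,T_2$'' constraint by inclusion--exclusion); only in the complementary case, where all $2$-paths have length at most some constant $d$, does the strong-$\hal$-gadget machinery apply. You need to add that first case (or an equivalent argument) for the lemma to hold in full generality.
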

\begin{proof}
Assume first that $\calT$ contains trees with $2$-paths of unbounded length. 
In this case we reduce from the problem of counting $k$-cycles, modulo $2$, which was shown $\oplus\W{1}$-hard in~\cite{CurticapeanDH21}. In the first step,  this problem reduces to the problem of counting $s$-$t$-paths of length $k$, modulo $2$ as shown in Lemma 5.2 in the full version~\cite{PeyerimhoffR0SV21Arxiv} of~\cite{PeyerimhoffR0SV21MFCS}. In the second and final step, we can easily reduce from the problem of counting $s$-$t$-paths of length $k$, modulo $2$, to $\oplus\subsprob(\mathcal{T})$, as shown in Figure~\ref{fig:F_long2paths}: Concretely, let $(G,s,t,k)$ be a problem instance. Since $\calT$ contains trees with $2$-paths of unbounded length, we can find, in time only depending on $k$, a tree $T$ in $\calT$ containing a $2$-path $x_0,x_1,\dots,x_{k+1},x_{k+2}$ of length $k+2$. Let furthermore $T_1$ and $T_2$ be the subtrees of $T$ as depicted in Figure~\ref{fig:F_long2paths}. We construct a graph $G'$ from $G$ in two steps as follows: First, we add fresh vertices $x_0$ and $x_{k+2}$ and edges $\{x_0,s\}$ and $\{t,x_{k+2}\}$. Second, we add $T_1$ and $T_2$ and identify their roots with $x_0$ and $x_{k+2}$, respectively. The construction is depicted in Figure~\ref{fig:F_long2paths} as well. Now let $A$ be the set of subgraphs of $G'$ that are isomorphic to $T$ and that contain all edges of $T_1$ and $T_2$. It is easy to see that the cardinality of $A$ is equal to the number of $s$-$t$-paths of length $k$ in $G$. Thus it suffices to compute $|A|\mod 2$, using an oracle for $\oplus\subsprob(\mathcal{T})$. This can be achieved by a simple application of the inclusion-exclusion principle: Setting $S= E(T_1) \cup E(T_2)$, we have
\begin{equation}\label{eq:incl_excl_T1T2}
    |A| = \sum_{J\subseteq S} (-1)^{|J|} \cdot \#\subs{T}{G'\setminus J}\,,
\end{equation}
where $G'\setminus J$ is the graph obtained from $G'$ by deleting all edges in $J$. We can conclude the reduction by observing that the number of terms in~\eqref{eq:incl_excl_T1T2} only depends on $T$ and thus on $k$, and that our oracle to $\oplus\subsprob(\mathcal{T})$ allows us to evaluate~\eqref{eq:incl_excl_T1T2} modulo $2$.

For the remainder of the proof we can thus assume that the length of any $2$-path in any tree in $\calT$ is bounded by a constant $d$. Since $\calT$ has unbounded $\hal$-number, we obtain that the trees in $\calT$ contain $\hal$-gadgets of order $d$ of unbounded length. By Corollary~\ref{cor:strongHgadget} we obtain that for any positive integer $k$, there is
a value
$d'$ in the range $1\leq d' \leq d$ such that there is
a tree $T_k$ in $\calT$ which contains a strong $\hal$-gadget of order $d'$ with $k$ junctions.

Let $\mathcal{C}$ be a class of cubic Hamiltonian graphs of unbounded treewidth. Assume w.l.g.\ that, for each $k$, the class $\mathcal{C}$ contains at most one graph with $k$ vertices; otherwise we just keep one $k$-vertex graph with the largest treewidth among all $k$-vertex graphs in $\mathcal{C}$. For each $\Delta\in\mathcal{C}$ set $T_\Delta:=T_{|V(\Delta)|}$,  
that is $T_\Delta$ is contained in $\calT$ and contains a strong $\hal$-gadget $H_\Delta$ with at least $|V(\Delta)|$ junctions. Recall Definition~\ref{def:Q_tau_caseH} and set 
\[\mathcal{Q}:= \{Q(\Delta,T_\Delta,H_\Delta) ~|~\Delta \in \mathcal{C}\} \,.\]
Observe that $Q(\Delta,T_\Delta,H_\Delta)$ contains as minor the graph obtained from $\Delta$ by removing one edge. Since the removal of a single edge can decrease the treewidth only by a constant, and since treewidth is minor-monotone, we have that $\mathcal{Q}$ has unbounded treewidth.

By Theorem~\ref{thm:cphom_lower_bound} the problem $\oplus\cphomsprob(\mathcal{Q})$ is therefore $\oplus\W{1}$-hard. Thus it suffices to show that 

\[\oplus\cphomsprob(\mathcal{Q})\fptred \oplus\subsprob(\mathcal{T})  \,.\]

In the first step, we reduce the computation of $\oplus\homs{(Q,\mathsf{id}_Q)}{\star}$ to the computation of $\oplus\embs{(\fracture{Q}{\tau},c_\tau)}{\star}$; here, $\tau$ is the fracture defined in Definition~\ref{def:Q_tau_caseH}. To this end, it was shown in~\cite{PeyerimhoffRSSVW22} that 
\begin{equation}
    \oplus\embs{(\fracture{Q}{\tau},c_\tau)}{\star} = \sum_{\rho\geq \tau} \mu(\tau,\rho) \cdot \oplus\homs{(\fracture{Q}{\rho},c_\rho)}{\star}\,,
\end{equation}
where the relation ``$\geq$'' and the M\"obius function $\mu$ are over the lattice of fractures. We omit introducing these objects in detail, since we only require that the coefficient of the term $\oplus\homs{(\fracture{Q}{\top},c_\top)}{\star}$ (which is equal to $\oplus\homs{(Q,\mathsf{id}_Q)}{\star}$) in the above linear combination was shown in~\cite{PeyerimhoffRSSVW22} to be equal to
\[\prod_{v\in V(Q)} (-1)^{|\tau_v|-1}\cdot (|\tau_v|-1)!\,. \]
Since each partition $\tau_v$ has at most two blocks, the above term is odd. Thus, by Lemma~\ref{lem:complexity_monotonicty}, we can evaluate the term $\oplus\homs{(\fracture{Q}{\top},c_\top)}{\star}$ if we can evaluate the entire linear combination, that is, if we can evaluate $\oplus\embs{(\fracture{Q}{\tau},c_\tau)}{\star}$. 
It thus remains to show how we can evaluate $\oplus\embs{(\fracture{Q}{\tau},c_\tau)}{\star}$ using our oracle for $\oplus\subsprob(\mathcal{T})$.

To this end, we use Lemma~\ref{lem:success_Hgadgets}: Given any $Q=Q(\Delta,T_\Delta,H_\Delta)$-coloured graph $(G,c)$ for which we want to compute $\oplus\embs{(\fracture{Q}{\tau},c_\tau)}{(G,c)}$, we first construct $(\hat{G},\hat{\gamma})$ as in Definition~\ref{def:hatG_caseH}. Then Lemma~\ref{lem:success_Hgadgets} yields that
\[\oplus\embs{(\fracture{Q}{\tau},c_\tau)}{(G,c)} = \oplus\colsubs{T_\Delta}{(\hat{G},\hat{\gamma})}.\]

Finally, by Lemma~\ref{lem:remove_edgecols} we can compute $\oplus\colsubs{T_\Delta}{(\hat{G},\hat{\gamma})}$ in FPT time using an oracle for $\oplus\subs{T_\Delta}{\star}$. 
Since the size of $T_\Delta$ only depends on $Q$, and since, with input $Q$ we can find $T_\Delta$ (recall that $\mathcal{T}$ is recursively enumerable) this yields indeed a parameterised Turing-reduction and the proof is concluded.  
\end{proof}

\begin{figure}[t!]
    \centering
    \includegraphics[width=\textwidth]{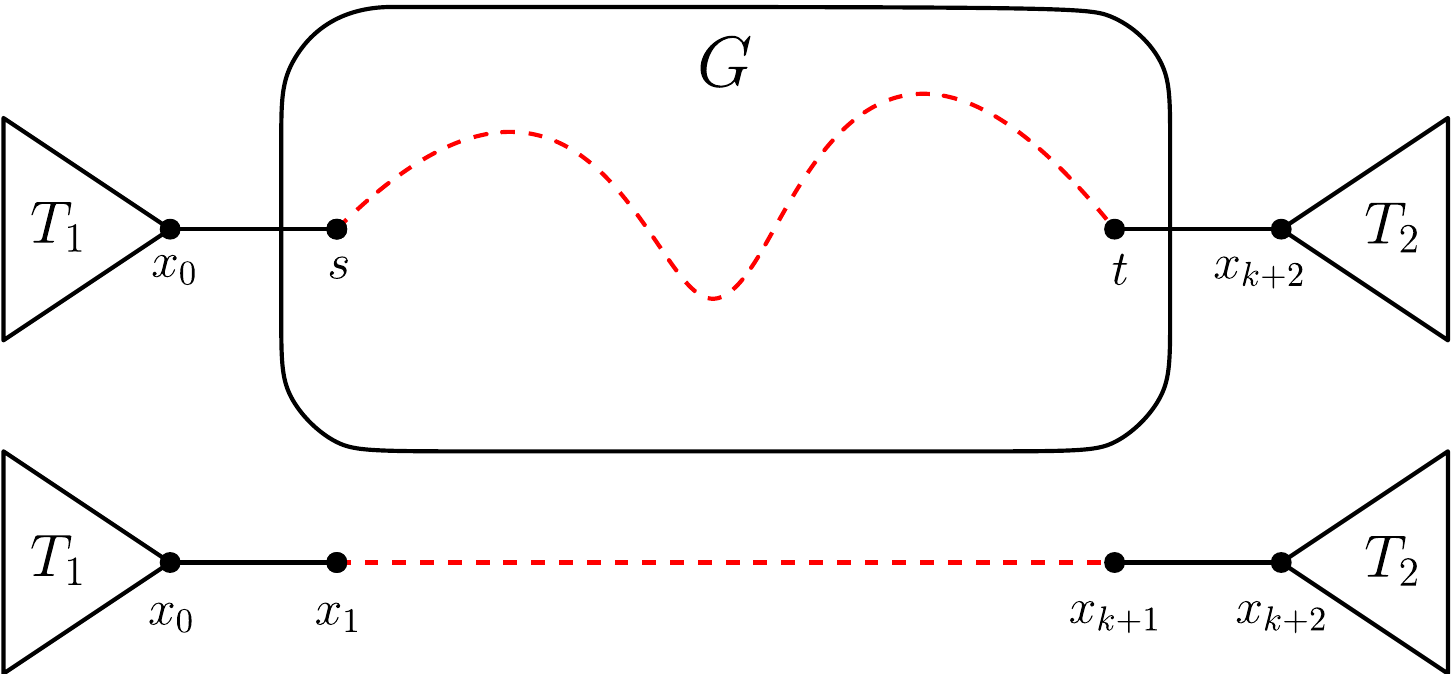}
    \caption{Reduction from counting $s$-$t$-paths of length $k$, modulo $2$, in a graph $G$ to counting copies of a tree $T$ with a $2$-path of length at least $k+2$.}
    \label{fig:F_long2paths}
\end{figure}

\subsection{Unbounded Star Number}\label{sec:caseStar}
We will use the same strategy as in  Subsection~\ref{sec:caseH}: Given a tree $T$ with large star number, we start with a properly chosen cubic graph $\Delta$, and we construct a graph $Q$ depending on $\Delta$ and $T$ which contains $\Delta$ as a minor. Then we show that for any $Q$-coloured graph $(G,c)$, we can construct an edge-coloured graph $(\hat{G},\hat{\gamma})$ such that $\oplus\colsubs{T}{(\hat{G},\hat{\gamma})}$ is equal to $\oplus\embs{(\fracture{Q}{\tau},c_\tau)}{(G,c)}$ for a particular fracture $\tau$.

To this end, let $T$ be a tree with star number (at least) $6k$ for some positive integer $k$. By definition of the star number, there is a $d\geq 3$ such that $T$ contains a vertex $s$ which is the source of $6k$ rays $R_1,\dots,R_{6k}$ of length precisely $d$. For each $i\in[6k]$, let $R_i=s,r_i^1,\dots,r_i^d$. Furthermore, let $T_s$ be the subtree of $T$ obtained by deleting the vertices $r_i^1,\dots,r_i^d$ for each $i\in[6k]$; consider Figure~\ref{fig:A_treewithstar} for an illustration.
\begin{figure}
    \centering
    \includegraphics[scale=0.8]{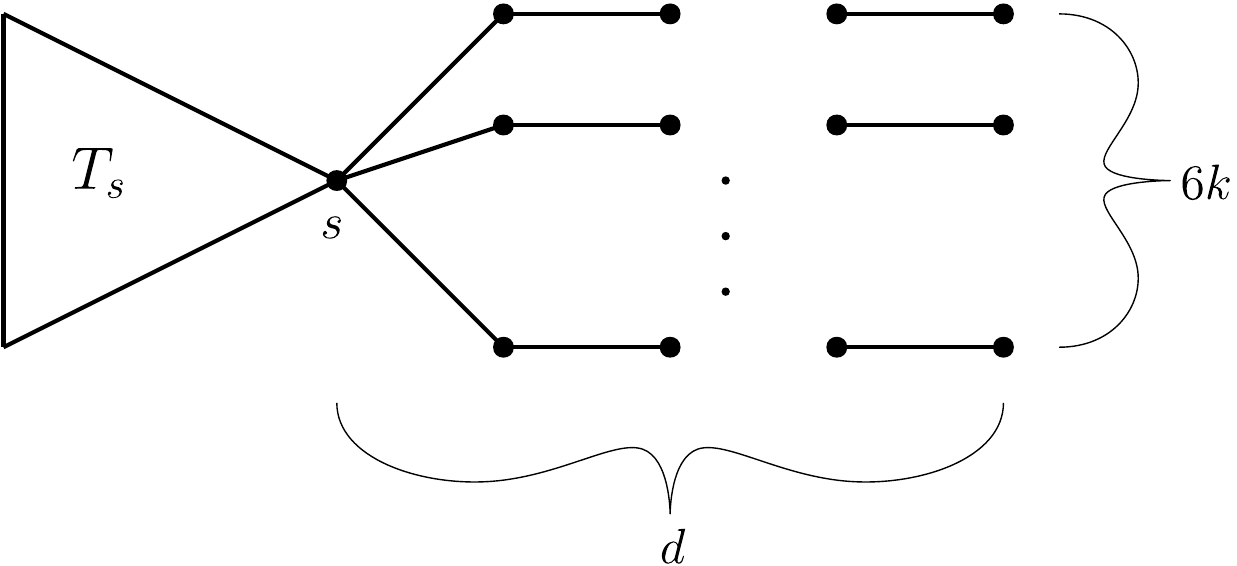}
    \caption{A tree with $\starnum_d(T)\geq 6k$. }
    \label{fig:A_treewithstar}
\end{figure}

\begin{definition}[$Q$]\label{def:Qstar}
Let $\Delta$ be cubic graph on $k$ vertices. We obtain $Q$ from $\Delta$ by substituting each vertex $v$ by a gadget depicted in Figure~\ref{fig:B_QCaseStar}. Afterwards, we connect the gadgets as follows: If $\{v,x\}$ is an edge of $\Delta$, then we identify the vertex $v_x$ in the gadget of $v$ and the vertex $x_v$ in the gadget of $x$. 
\end{definition}

\begin{figure}
    \centering
    \includegraphics[width=\textwidth]{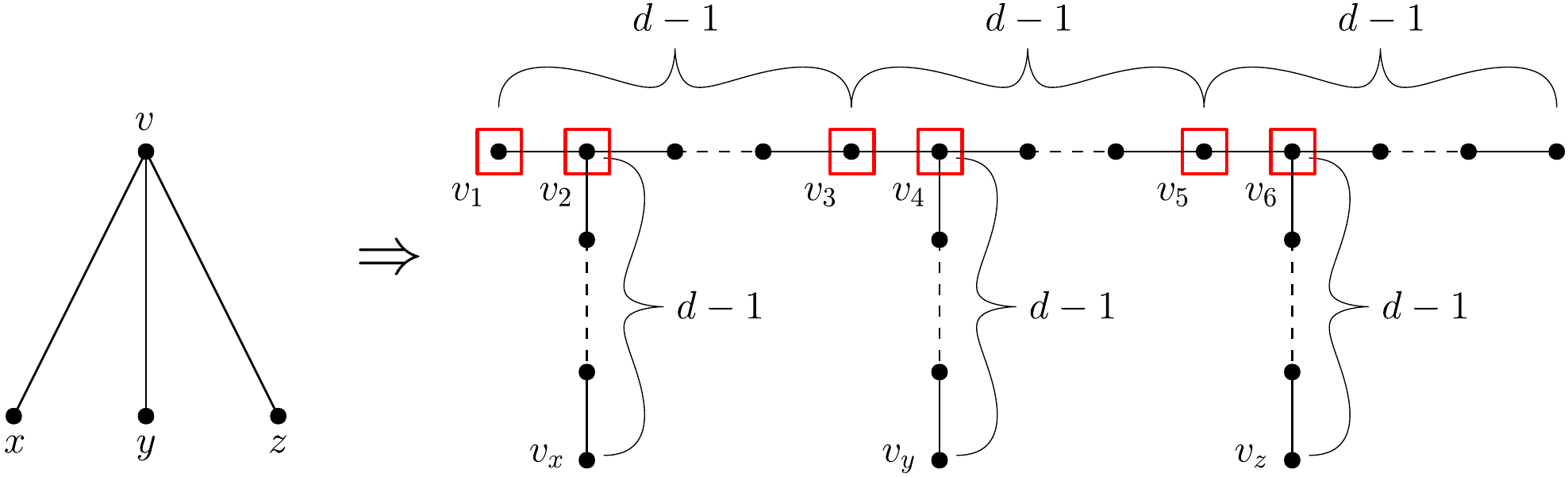}
    \caption{The construction of $Q$; the vertices $v_1,\dots,v_6$ on the gadget of $v$ are emphasized.}
    \label{fig:B_QCaseStar}
\end{figure}

\begin{observation}
$\Delta$ is a minor of $Q$.
\end{observation}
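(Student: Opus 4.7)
The plan is to exhibit a concrete minor model of $\Delta$ in $Q$. Recall that $Q$ is assembled by replacing each vertex $v \in V(\Delta)$ with a connected gadget $G_v$ carrying (at least) three distinguished ``port'' vertices, one per neighbour of $v$ in $\Delta$, and by identifying, for every edge $\{v,x\} \in E(\Delta)$, the port $v_x$ of $G_v$ with the port $x_v$ of $G_x$. Denote this common identified vertex by $m(\{v,x\})$. The only vertices that are shared between distinct gadgets are the vertices $m(e)$, one per edge $e$ of $\Delta$.

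First, I would fix an arbitrary linear order on $V(\Delta)$ and, for every edge $e=\{v,x\}$, assign the shared vertex $m(e)$ to whichever of the two gadgets corresponds to the smaller endpoint in this order. For each $v \in V(\Delta)$ define the branch set
\[
B_v \;=\; V(G_v) \setminus \{\, m(e) : e \text{ is incident to } v \text{ and was assigned to the other endpoint of } e\,\}.
\]
By construction the sets $\{B_v\}_{v \in V(\Delta)}$ are pairwise disjoint and every vertex of $Q$ lies in exactly one $B_v$. Inspection of the gadget in Figure~\ref{fig:B_QCaseStar} shows that $G_v$ remains connected after removing any subset of its (at most three) ports corresponding to neighbours of $v$ in $\Delta$, hence each $B_v$ induces a connected subgraph of $Q$.

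Second, for every edge $e = \{v,x\} \in E(\Delta)$ the identified vertex $m(e)$ lies in exactly one of $B_v, B_x$; say $m(e) \in B_v$. In the original gadget $G_x$, the port $x_v = m(e)$ is adjacent to at least one non-port vertex of $G_x$, which by definition belongs to $B_x$. Hence $Q$ contains an edge with one endpoint in $B_v$ and the other in $B_x$. This verifies the branch-set condition, and so $\{B_v\}_{v \in V(\Delta)}$ is a minor model of $\Delta$ in $Q$, completing the proof.

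The only non-trivial point is the connectivity check in step one, i.e.\ that the gadget of Figure~\ref{fig:B_QCaseStar} stays connected after deleting any subset of its ports. This is a direct inspection of the figure and is the only step that really uses the specific shape of the gadget; the rest of the argument is a generic ``blow-up/contraction'' observation that would work for any gadget replacement with a connected port structure.
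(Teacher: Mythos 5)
Your proposal is correct. The paper states this as an observation without proof, the implicit justification being that contracting each vertex gadget of Figure~\ref{fig:B_QCaseStar} to a single vertex recovers $\Delta$; your branch-set construction (assigning each shared port $m(e)$ to one endpoint, checking connectivity of each gadget minus its removed ports, and verifying one crossing edge per edge of $\Delta$) is exactly the standard formalisation of that argument.
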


The fracture $\tau$ of $Q$ that we will be interested in is defined as follows; Figure~\ref{fig:D_tauCaseStar} depicts the fractured graph $\fracture{Q}{\tau}$.

\begin{definition}[$\tau$]\label{def:taustar}
Let $Q$ be the graph defined in Definition~\ref{def:Qstar}.
\begin{itemize}
    \item For each edge $\{v,x\}$ of $\Delta$, the graph $Q$ contains a vertex $v_x(=x_v)$, which has degree $2$. We let $\tau_{v_x}$ be the partition consisting of $2$ singleton blocks.
    \item For each vertex $v$ of $\Delta$, the vertices $v_3$ and $v_5$ have degree $2$ in $Q$. We let $\tau_{v_3}$ and $\tau_{v_5}$ be the partitions consisting of $2$ singleton blocks.
    \item For each vertex $v$ of $\Delta$, the vertices $v_2$, $v_4$ and $v_6$ have degree $3$ in $Q$. For each $i\in\{2,4,5\}$ we let $\tau_{v_i}$ be the partition consisting of one block of size $2$ corresponding to the edges incident to $v_i$ from the left and the right, and one block of size $1$ corresponding to the edge incident to $v_i$ from below.
\end{itemize}
For all other vertices $u$ of $Q$, we let $\tau_u$ be the partition consisting only of one block.
\end{definition}

\begin{figure}
    \centering
    \includegraphics[width=\textwidth]{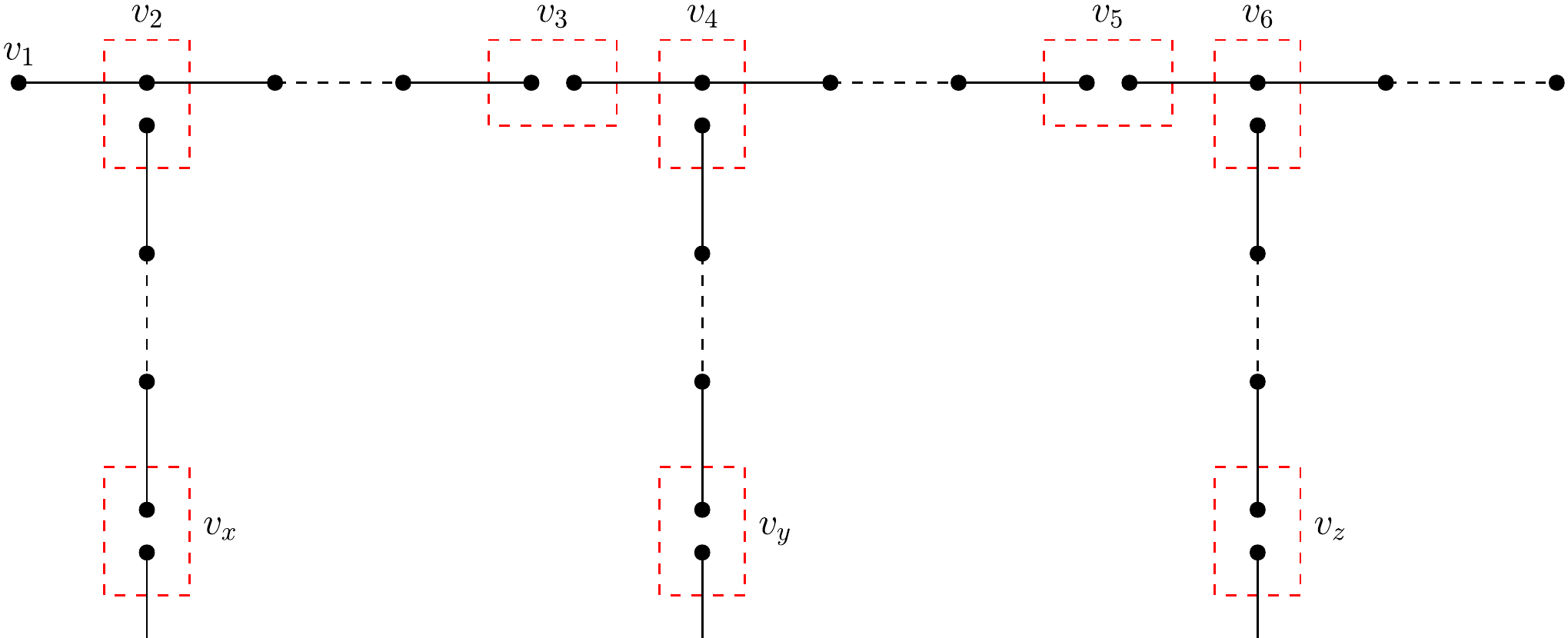}
    \caption{Illustration of the fractured graph $\fracture{Q}{\tau}$ via fracturing the vertex gadgets.}
    \label{fig:D_tauCaseStar}
\end{figure}

Analogously to the notion of a core in the case of unbounded $\hal$-number, we will identify a specific subgraph of the tree $T$  and we will use it to define the graph $\hat{G}$ later. 
\begin{definition}[$V'$]\label{def:coreStarnum}
Let $V'$ be the vertex subset of $T$ defined as follows:
\[V' := \left(\bigcup_{i\in[6k]} V(R_i)\right)\setminus \{s\} \,.\]
Furthermore, we set $E':=E(T[V'])$.
\end{definition}

Observe that $T[V']$ is a (disjoint) union of $6k$ paths of length $d-1$, where the vertices of the $i$-th path are $r_i^1,\dots,r_i^d$. Observe further that $V(T)= V(T_s) \dot \cup V'$ and that 
\begin{equation}\label{eq:edge_partition_stars}
    E(T)= E' ~\dot\cup~ E(T_s) ~\dot\cup~ \{ \{s,r_i^1\}~|~i \in[6k]\}\,.
\end{equation}

Next, note that the edges of $Q$ can be decomposed into $6k$ paths, each of length $d-1$: There are $k$ vertices of $\Delta$. For each vertex $v\in V(\Delta)$ the graph $Q$ contains, by definition, a gadget corresponding to $v$, the edges of which can be decomposed into $6$ paths $P_v^1,\dots,P_v^6$ of length $d-1$ (formally, the fractured graph $\fracture{Q}{\tau}$ yields precisely this decomposition; see Figure~\ref{fig:D_tauCaseStar}). Additionally, for each $v\in V(\Delta)$ and $i\in[6]$, the first vertex of $P_v^i$ is chosen to be $v_i$ as depicted in Figure~\ref{fig:B_QCaseStar}.

\begin{definition}[$\gamma,\gamma_E$]\label{def:gamma_gammaE_stars}
We define a function $\gamma:T[V'] \to V(Q)$ as follows. Recall that $T[V']$ is the union $6k$ paths $P'_j:= r_j^1,\dots,r_j^d$ for $j\in[6k]$. Fix any bijection $b:[6k] \to V(\Delta) \times [6]$. Then $\gamma$ maps $P'_j$ to $P_{v}^{i}$, where $b(j)=(v,i)$. In particular, we enforce that the first vertices of the paths are mapped onto each other, that is, $\gamma(r_j^1):=v^{i}$. Additionally, we define $\gamma_E:E' \to E(Q)$ by mapping $e$ to $\gamma(e)$.
\end{definition}

\begin{observation}
The function $\gamma$ is an edge-bijective homomorphism from $T[V']$ to $Q$. Specifically, $\gamma_E$ is a bijection.
\end{observation}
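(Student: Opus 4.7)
The plan is to verify the two assertions separately: first, that $\gamma$ is a homomorphism (edges of $T[V']$ map to edges of $Q$), and then that the induced map $\gamma_E$ is a bijection. Both parts reduce to aligning two natural path decompositions: the decomposition of $T[V']$ into the ray tails $P'_j = r_j^1,\ldots,r_j^d$ for $j\in[6k]$, and the decomposition of $Q$ into the paths $P_v^i$ for $(v,i)\in V(\Delta)\times[6]$.

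First, I would verify that $\gamma$ is a homomorphism. By Definition~\ref{def:coreStarnum}, $V'$ consists of the ray vertices excluding the common source $s$, so the edges of $T[V']$ are precisely the internal ray edges $\{r_j^t,r_j^{t+1}\}$ for $j\in[6k]$ and $1\le t\le d-1$; in particular, $T[V']$ is itself a disjoint union of the $P'_j$. For each $j$ with $b(j)=(v,i)$, the definition of $\gamma$ maps the sequence $r_j^1,\ldots,r_j^d$ onto the successive vertices of $P_v^i$, and consecutive vertices of $P_v^i$ are adjacent in $Q$. Hence every edge of $T[V']$ is sent to an edge of $Q$.

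Second, for edge-bijectivity of $\gamma_E$ I would argue by a cardinality plus path-by-path comparison. The $6k$ paths $P'_j$ are pairwise vertex-disjoint in $T[V']$, so their edge sets partition $E'$ into $6k$ blocks of size $d-1$, giving $|E'|=6k(d-1)$. On the $Q$ side, the fracture $\tau$ of Definition~\ref{def:taustar} witnesses that the paths $P_v^i$ partition $E(Q)$ into $6k$ blocks of size $d-1$ as well, so $|E(Q)|=6k(d-1)$. The bijection $b\colon[6k]\to V(\Delta)\times[6]$ pairs these two families, and within each pair $\gamma$ restricts to a bijection on the $d-1$ edges of the respective paths (it sends $r_j^t\mapsto (P_v^i)_t$, matching consecutive edges one-for-one). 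Summing over the $6k$ pairs yields the bijection $\gamma_E\colon E'\to E(Q)$.

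The one point deserving extra care, and which I would treat as the main (minor) obstacle, is verifying that the paths $P_v^i$ really do partition $E(Q)$ rather than overlap or leave edges uncovered. This follows from Definition~\ref{def:Qstar}: the gadgets for distinct vertices of $\Delta$ meet only at the identified endpoints $v_x = x_v$, and those endpoints lie at the ends of the relevant paths, not in their interiors, so the edge sets of the $P_v^i$ across all gadgets remain disjoint. Combined with the six-paths-per-gadget decomposition visible from the fractured graph $\fracture{Q}{\tau}$, this yields the required partition and completes the argument.
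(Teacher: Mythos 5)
Your proof is correct and matches the approach the paper intends: the observation is stated without proof precisely because it follows from the two path decompositions the paper has just set up (the $6k$ disjoint paths $P'_j$ covering $E'$ and the $6k$ paths $P_v^i$ covering $E(Q)$, each of length $d-1$), which is exactly the pairing your argument makes rigorous. Your extra care about the gadget paths meeting only at identified endpoints is the right point to check and is consistent with Definition~\ref{def:Qstar}.
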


Now let $(G,c)$ be a $Q$-coloured graph. We state the following explicitly, since it will be crucial in our reduction.
\begin{observation}
Let $(G,c)$ be a $Q$-coloured graph. The mapping $c_E \circ \gamma_E^{-1}$ is a map from $E(G)$ to $E'$.
\end{observation}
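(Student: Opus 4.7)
The plan is to verify the statement directly from the definitions already in place, and to note its role in the construction of $(\hat{G},\hat{\gamma})$ that will follow. The proof splits into two short routine steps: checking that $c_E$ lands in $E(Q)$, and checking that $\gamma_E^{-1}$ is a genuine function defined on all of $E(Q)$.

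For the first step, I would invoke Definition~\ref{def:cE} together with the assumption that $c$ is a $Q$-colouring, i.e., a homomorphism from $G$ to $Q$. Then for every edge $\{u,v\}\in E(G)$, the pair $\{c(u),c(v)\}$ is by definition an edge of $Q$, so $c_E$ is a well-defined map $E(G)\to E(Q)$. For the second step, I would appeal to the observation immediately preceding the statement, which asserts that $\gamma_E$ is a bijection from $E'$ to $E(Q)$; this in turn rests on two disjoint-path decompositions of the same cardinality. Namely, $T[V']$ decomposes as the disjoint union of the $6k$ paths $P'_j=r_j^1,\dots,r_j^d$, each contributing $d-1$ edges, while by Definition~\ref{def:Qstar} (together with the vertex-gadget structure depicted in Figure~\ref{fig:B_QCaseStar} and Figure~\ref{fig:D_tauCaseStar}) the edge set $E(Q)$ decomposes as the disjoint union of the $6k$ paths $P_v^i$, again of length $d-1$. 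The bijection $b\colon[6k]\to V(\Delta)\times[6]$ chosen in Definition~\ref{def:gamma_gammaE_stars} pairs these two families, and $\gamma$ maps each $P'_j$ onto $P_v^i$ with $(v,i)=b(j)$ in an edge-bijective fashion; gluing these local edge-bijections together (using that the $P_v^i$ are edge-disjoint in $Q$) yields the global bijection $\gamma_E$. Hence $\gamma_E^{-1}\colon E(Q)\to E'$ is well-defined.

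Composing the two gives a well-defined function $E(G)\to E'$, which is the claim. There is no real obstacle here; the observation is a bookkeeping lemma. Its anticipated use is exactly parallel to Observation~\ref{obs:Q_coloured_E'} in the $\hal$-gadget case: it will allow each edge of $G$ to be assigned an edge of $T$ lying in the ``core'' subset $E'$, which is precisely what is needed to define part~(A) of an analogue of Definition~\ref{def:hatG_caseH} in the star setting, so that edge-colourful copies of $T$ in $\hat{G}$ can be made to correspond to colour-preserving embeddings of $(\fracture{Q}{\tau},c_\tau)$ into $(G,c)$.
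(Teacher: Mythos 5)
Your proof is correct and matches the paper's (implicit) justification: the paper states this as an unproved observation, relying exactly on $c$ being a homomorphism (so $c_E$ lands in $E(Q)$) and on the immediately preceding observation that $\gamma_E$ is a bijection between $E'$ and $E(Q)$, which rests on the same matching decompositions into $6k$ edge-disjoint paths of length $d-1$ that you describe. Your reading of $c_E\circ\gamma_E^{-1}$ as the map sending $e\in E(G)$ to $\gamma_E^{-1}(c_E(e))$ is also the intended one, as confirmed by its use in part (A) of Definition~\ref{def:hatG_caseStar}.
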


Let us now construct a graph $\hat{G}$ from a $Q$-coloured graph $G$; an illustration is provided in Figure~\ref{fig:C_hatGCaseStar}.

\begin{definition}[$(\hat{G},\hat{\gamma})$]\label{def:hatG_caseStar}
Let $(G,c)$ be a $Q$-coloured graph. The graph $\hat{G}$ is an edge-coloured graph, with colouring $\hat{\gamma}\colon E(\hat{G}) \to E(T)$, constructed as follows: 
\begin{itemize}
\item[(A)] The graph $\hat{G}$ contains $G$ as a subgraph. For each $e\in E(G)$ we set $\hat{\gamma}(e)=\gamma_E^{-1}(c_E(e))$.
\item[(B)] The vertex set of $\hat{G}$ is the union of $V(G)$
and  $V(T_s)$, and pairs of vertices in $V(T_s)$ 
are connected by an edge in $\hat{G}$  if and only they are adjacent in $T$. 
For each such edge~$e$, $\hat{\gamma}(e) = e$. 
\item[(C)] The remaining edges of $\hat{G}$ are defined as follows. For each edge $e=\{s,r_j^1\}\in E(T)$, we connect $s$ to all vertices in $G$ that are coloured (by $c$) with $\gamma(r_j^1)$ (see Definition~\ref{def:gamma_gammaE_stars}), and for each of those newly added edges $e'$ we set $\hat{\gamma}(e'):=e$
\end{itemize}
\end{definition}
Observe that $\hat{\gamma}$ colours the edges of $\hat{G}$ with $E(T)$; the cases (A), (B), and (C) correspond, respectively, to the sets $E'$, $E(T_s)$ and $\{ \{s,r_i^1\}~|~i \in[6k]\}$ (see Equation~(\ref{eq:edge_partition_stars})).
Similarly to the case of unbounded $\hal$-gadgets,  for each element $\colT\in\colsubs{T}{(\hat{G},\hat{\gamma})}$ 
the induced subgraph \[\colT[G]:=\colT[V(\colT)\cap V(G)]\] of $\colT$ is an edge-colourful subgraph in $G$, that is, $\colT[G]$ contains precisely one edge per edge-colour of $G$ 
under the edge colouring $\hat{\gamma}$
hence it contains precisely one edge 
per edge-colour of $G$ under
$\cE$.
  As shown in Section~3 in the full version~\cite{RothSW20arxiv} of~\cite{RothSW21}, $\colT[G]$ thus induces a fracture $\rho=\rho(\colT)$ of~$Q$: Two edges $\{v,w\}$ and $\{v,y\}$ of $Q$ are in 
the same block 
in the partition $\rho_v$ corresponding to   vertex $v$ of $Q$ 
if and only if the edges of $\colT[G]$
that are coloured $\gamE^{-1}(\{v,w\})$ and 
$\gamE^{-1}(\{v,y\})$ are adjacent.
In what follows, we show that $\rho$ must always be equal to $\tau(\Delta,T,H)$ (see Definition~\ref{def:taustar}).

\begin{figure}
    \centering
    \includegraphics[scale=0.75]{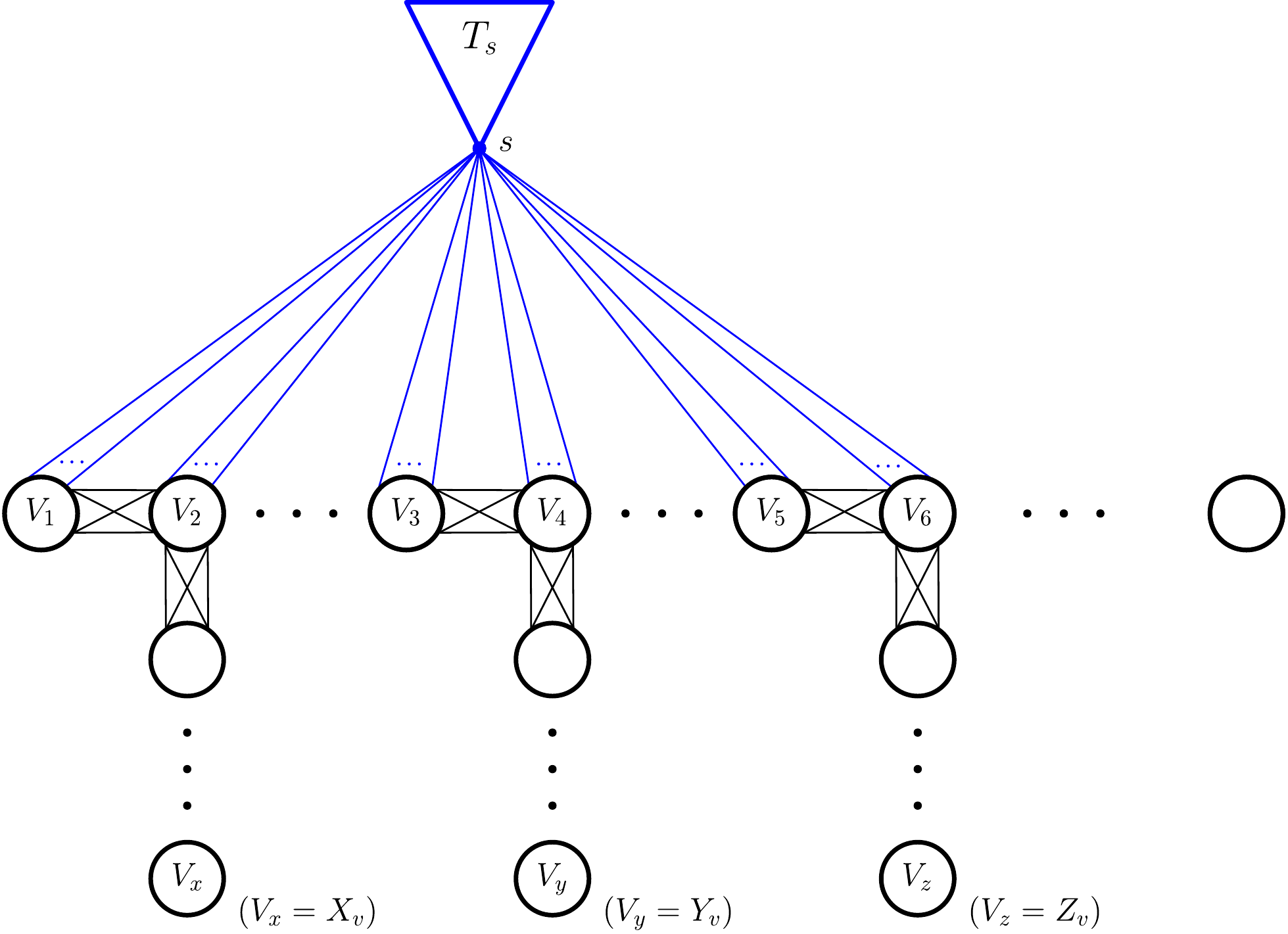}
    \caption{The construction of $\hat{G}$. The graph $G$ within $\hat{G}$ is depicted in black.}
    \label{fig:C_hatGCaseStar}
\end{figure}

\begin{lemma}\label{lem:unique_fracture_caseStar}
For every $\colT\in\colsubs{T}{(\hat{G},\hat{\gamma})}$ we have that $\rho(\colT) = \tau$.
\end{lemma}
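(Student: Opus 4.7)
The proof follows the pattern of Lemma~\ref{lem:unique_fracture_caseH}. First, every edge of $T_s$ has a unique colour under $\hat{\gamma}$, and that colour is realised in $\hat{G}$ only by the corresponding edge introduced in~(B) of Definition~\ref{def:hatG_caseStar}. Since $\colT$ is edge-colourful and isomorphic to $T$, the subgraph of $\colT$ spanning the colours in $E(T_s)$ must coincide with $T_s$ itself, as embedded in $\hat{G}$. In particular the source $s\in V(T)$ is mapped to the vertex $s$ of $\hat{G}$, and the $6k$ rays of $\colT$ emanating from $s$ must leave $V(T_s)$ immediately through the bridging edges added in~(C).

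Next, fix one such ray $R_j = s,r_j^1,\dots,r_j^d$ of $\colT$ and let $(v,i)=b(j)$. Its first edge carries the colour $\{s,r_j^1\}$, which by~(C) occurs in $\hat{G}$ only between $s$ and vertices of $V(G)$ with $c$-colour $\gamma(r_j^1)=v_i$; hence this edge terminates at some $x_j^1\in V(G)$ with $c(x_j^1)=v_i$. The remaining $d-1$ edges of the ray carry colours from $E'$, each of which, by~(A) and the bijectivity of $\gamE$, forces its $\cE$-image to be the corresponding edge of the path $P_v^i$ in $Q$. Thus $R_j$ is realised in $\colT$ by a path $x_j^1,\dots,x_j^d$ in $G$ whose sequence of $c$-colours traces $P_v^i$, and since $b$ is a bijection, each of the six paths $P_v^i$ in each $v$'s gadget is realised exactly once.

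Because the rays of $T$ are pairwise vertex-disjoint outside $s$, the paths $x_j^1,\dots,x_j^d$ they realise in $V(G)$ are pairwise vertex-disjoint. This alone pins down the fracture $\rho=\rho(\colT)$: at any $w\in V(Q)$, two edges of $Q$ incident to $w$ lie in the same block of $\rho_w$ exactly when their images in $\colT[G]$ share a vertex. Two edges belonging to different paths $P_v^i, P_v^{i'}$ come from different rays and hence from disjoint paths in $G$, so they lie in different blocks, while two edges that are consecutive along a single $P_v^i$ come from the same ray and meet at an internal vertex of its image, so they lie in the same block. It therefore suffices to check that grouping edges ``by which $P_v^i$ they belong to'' reproduces the partitions of Definition~\ref{def:taustar}: at each shared vertex $v_x$ and at $v_3,v_5$ (degree $2$, endpoints of two distinct paths), two singletons; at $v_2,v_4,v_6$ (degree $3$, internal to one path with a further path starting/ending there), one pair plus one singleton; and at every remaining gadget vertex (an internal vertex of a single $P_v^i$), a single block. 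In every case $\rho_w=\tau_w$, so $\rho=\tau$.

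The main obstacle is the case analysis at the degree-$3$ vertices $v_2,v_4,v_6$: one must verify that the two ``horizontal'' edges and the single ``vertical'' edge at such a vertex are distributed among the six paths $P_v^i$ in precisely the way that the fracture $\tau$ predicts, so that vertex-disjointness of the rays in $T$ yields exactly the block structure of $\tau$. This is the only place where the specific combinatorial design of the gadget in Definition~\ref{def:Qstar} and of the bijection $b$ really matters; everything else is a direct transcription of the argument in Lemma~\ref{lem:unique_fracture_caseH}.
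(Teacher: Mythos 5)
There is a genuine gap, and it sits exactly where the lemma's content lies. In your second paragraph you claim that, once the first edge of a ray of $\colT$ lands in the colour class $c^{-1}(v_i)$, the remaining $d-1$ edges are ``forced'' by (A) and the bijectivity of $\gamE$ to carry precisely the colours of the path $P_v^i$. Bijectivity of $\gamE$ only tells you that the colours available on $E(G)$ are, collectively, the edges of $Q$; it says nothing about which ray picks up which colours. At a degree-$3$ vertex of $Q$ such as $v_2$, a ray arriving at a vertex of $V_2$ has three admissible colours for its next edge (two ``horizontal'' ones and the ``downward'' one), and nothing in your argument rules out the ray turning horizontally, or two edges of different paths $P_v^i,P_v^{i'}$ being picked up by the \emph{same} ray, or a single colour class being traversed by a ray in its interior. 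Your third paragraph then becomes circular: ``two edges belonging to different paths $P_v^i,P_v^{i'}$ come from different rays'' is precisely the statement you are trying to prove, not a consequence of the vertex-disjointness of the rays of $T$. Flagging the degree-$3$ vertices as ``the main obstacle'' to be checked against the gadget design does not close this, because the verification is not a static property of the gadget: it depends on how $\colT$ sits inside $\hat{G}$.

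The paper closes this gap with two ingredients you do not use. First, it establishes that $\colT[G+s]$ is isomorphic to the $d$-stretch of $K_{1,6k}$ centred at $s$, so every vertex of $\colT$ in $V(G)$ lies on a length-$d$ ray from $s$ and every ray endpoint has degree $1$. Second, for each shared vertex $v_x$ of $Q$ (degree $2$, joining two gadgets), it observes that every vertex of the colour class $V_x$ has distance at least $d$ from $s$ in $\hat{G}$; hence the two edges of $\colT$ whose colours are the two $Q$-edges at $v_x$ must end at two \emph{distinct} degree-$1$ vertices of $V_x$, each necessarily the far endpoint of a length-$d$ ray. This pins down the three ``downward'' rays per gadget (covering the colours of $P_v^2,P_v^4,P_v^6$), and only then does a colour-accounting argument force the remaining three rays per gadget onto $P_v^1,P_v^3,P_v^5$. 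You need some version of this distance/length argument before the fracture computation in your final paragraph is legitimate.
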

\begin{proof}
Let $\colT\in\colsubs{T}{\hat{G},\hat{\gamma}}$. Since $\colT$ must include each of the edge colours given by $\hat{\gamma}$ (precisely) once, we have that $\colT$ must fully contain $T_s$. Note that $T_s$ fully contains $T$ except for $6k$ rays of length $d$, and the only way to attach those rays in $\hat{G}$ is via the vertex $s$.
Now consider the subgraph $\colT[G+s]$ of $\colT$ defined as follows:
\[\colT[G+s]:=\colT[(V(\colT)\cap V(G))\cup \{s\} ]\,.\]
Since $\colT$ includes all edge colours given by $\hat{\gamma}$, we have that $s$ must have degree $6k$ in $\colT[G+s]$: By (C) in Definition~\ref{def:hatG_caseStar}, the vertex $s$ must be connected (within $\colT[G+s]$) to one vertex in each of the colour classes $V_i=c^{-1}(v_i)$ for $v\in V(\Delta)$ and $i\in[6]$. Additionally, this implies the following:
\begin{observation}\label{obs:colTisoSTretchStar}
    $\colT[G+s]$ is isomorphic to the $d$-stretch of $K_{1,6k}$ with $s$ at the centre.
\end{observation}

In the remainder of the proof, we will show that the only way for $\colT$ to (colourfully) embed the $6k$ rays of length $d$ is as depicted in Figure~\ref{fig:E_EmbedTCaseStar}. Note that this will conclude the proof since the induced fracture of the depicted embedding is $\tau$.

Hence we proceed with proving the 
claim. 
We first consider, for each edge $\{v,x\}\in E(\Delta)$, the vertex $v_x =(x_v)$ of $Q$ (see Definition~\ref{def:Qstar} and Figure~\ref{fig:B_QCaseStar}). The vertex $v_x$ has two neighbours $n_v$ and $n_x$ in $Q$, where $n_v$ denotes the neighbour in the gadget of $v$ and $n_x$ denotes the neighbour in the gadget of $x$.
Recall that we write $V_x =c^{-1}(v_x), N_v =c^{-1}(n_v), N_x =c^{-1}(n_x) \subseteq V(G)$ for their colour class within $G$ (and thus within $\hat{G}$). Since $\colT$ is edge-colourful, it must contain precisely one edge $e_v$ between $V_x$ and $N_v$ and one edge $e_x$ between $V_x$ and $N_x$ (see (A) in Definition~\ref{def:hatG_caseStar}). Now observe that every vertex in $V_x$ has distance (at least) $d$ to $s$ within $\hat{G}$. This has two crucial consequences:
\begin{itemize}
    \item First, the endpoints of $e_v$ and $e_x$ inside $V_x$ cannot be equal: Otherwise, they could not be part of a ray of length precisely $d$ with source $s$, and this would contradict the previous observation that  $\colT[G+s]$ is isomorphic to the $d$-stretch of $K_{1,6k}$ with $s$ at the centre (Observation~\ref{obs:colTisoSTretchStar}).
    \item Hence, second, the endpoints of $e_v$ and $e_x$ inside $V_x$ both have degree $1$. Consequently, they must be the endpoints of two of the rays of length $d$. However, the only way for this to be true is them each being connected to $s$ as depicted in Figure~\ref{fig:E_EmbedTCaseStar};
    in all other cases, $\colT[G+s]$ cannot be isomorphic to the $d$-stretch of $K_{1,6k}$ with $s$ at the centre.
\end{itemize}
The second consequence  implies that the edge colours corresponding to the edges in the paths $P_v^2$, $P_v^4$, and $P_v^6$ are covered for each $v$ (recall that $\colT$ must include each edge colour precisely once).
Thus, the only possibility to include the remaining edge colours corresponding to the paths $P_v^1$, $P_v^3$, and $P_v^5$ while keeping $\colT[G+s]$ being isomorphic to the $d$-stretch of $K_{1,6k}$, is to embed, for each gadget, the remaining $3$ rays of length $d$ as depicted in Figure~\ref{fig:E_EmbedTCaseStar}. This concludes the proof.
\end{proof}

\begin{figure}
    \centering
    \includegraphics[scale=0.75]{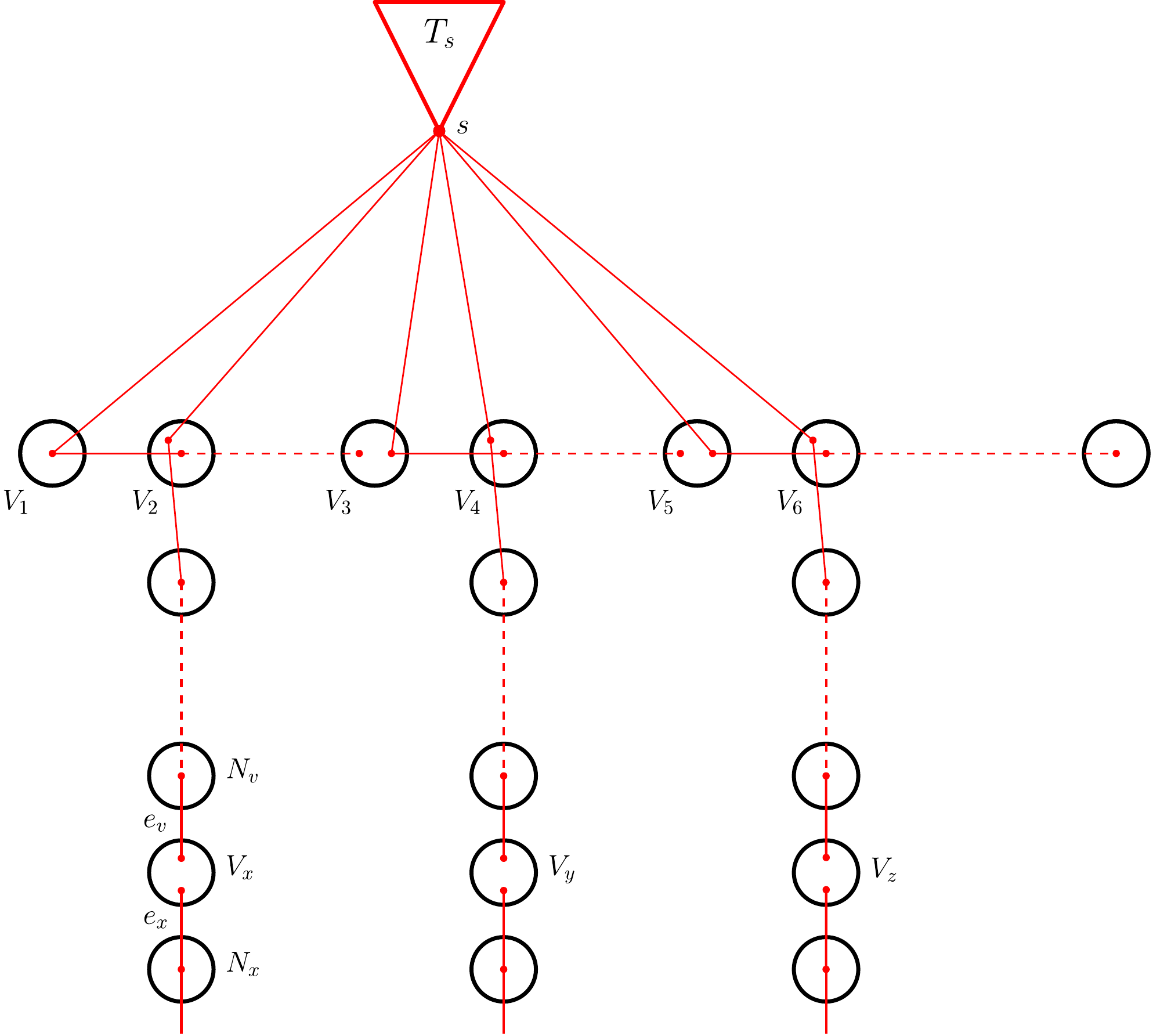}
    \caption{Illustration of the unique way to colourfully embed $T$ into $\hat{G}$. The induced fracture is~$\tau$.}
    \label{fig:E_EmbedTCaseStar}
\end{figure}

We are now able to prove the main lemma of this section.
\begin{lemma}\label{lem:success_stars}
$\oplus\embs{(\fracture{Q}{\tau},c_\tau)}{(G,c)} = \oplus\colsubs{T}{(\hat{G},\hat{\gamma})}$.
\end{lemma}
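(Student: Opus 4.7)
The plan is to follow the exact template of Lemma~\ref{lem:success_Hgadgets}, constructing a bijection between edge-colourful copies of $T$ in $(\hat{G},\hat{\gamma})$ and images of colour-preserving embeddings in $\embs{(\fracture{Q}{\tau},c_\tau)}{(G,c)}$. I would first reprove the key preparatory claim, namely that a colour-preserving embedding $\varphi \in \embs{(\fracture{Q}{\tau},c_\tau)}{(G,c)}$ is uniquely determined by its image subgraph $(G',c')$ of $(G,c)$. The argument is identical to the one in Lemma~\ref{lem:success_Hgadgets}: since $Q$ has no isolated vertices, every vertex of $\fracture{Q}{\tau}$ is incident to some edge, and for each edge $e=\{u,v\}$ of $G'$, the canonical colouring $c_\tau$ together with $\tau$ identifies the unique blocks $B_1\in\tau_{c(u)}$ and $B_2\in\tau_{c(v)}$ containing $c_E(e)$, forcing $\varphi$ to send $c(u)^{B_1}\mapsto u$ and $c(v)^{B_2}\mapsto v$.

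Next I would define the map $b$ sending $\colT \in \colsubs{T}{(\hat{G},\hat{\gamma})}$ to the pair $(\colT[G], c\mid_{V(\colT[G])})$ and verify it takes values in the set of images of embeddings in $\embs{(\fracture{Q}{\tau},c_\tau)}{(G,c)}$. The crucial input here is Lemma~\ref{lem:unique_fracture_caseStar}: the fracture $\rho(\colT)$ of $Q$ induced by $\colT[G]$ is forced to equal $\tau$, so $\colT[G]$ is, as a $Q$-coloured graph, isomorphic to $(\fracture{Q}{\tau}, c_\tau)$, which exhibits it as the image of a colour-preserving embedding.

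For injectivity, I would use the observation that every $\colT \in \colsubs{T}{(\hat{G},\hat{\gamma})}$ must contain all of $T_s$: by construction of $\hat{\gamma}$, each edge of $T_s$ carries a distinct colour in $E(T_s)\subseteq E(T)$ that appears in $\hat{G}$ only among vertices of $V(T_s)$ (see (B) in Definition~\ref{def:hatG_caseStar}), and edge-colourfulness forces all such colours to be used. Hence two distinct elements $\colT_1 \neq \colT_2$ must differ on their $G$-parts, giving $b(\colT_1)\neq b(\colT_2)$.

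For surjectivity, given an image $(G',c')$ of an embedding in $\embs{(\fracture{Q}{\tau},c_\tau)}{(G,c)}$, I would construct $\colT(G',c')$ by adding the vertices of $T_s$ to $(G',c')$, adding all edges between vertices of $T_s$ present in $\hat{G}$ (which are exactly the edges of $T_s$ by (B)), and for each edge $\{s,r_j^1\}\in E(T)$ adding a single edge from $s$ to the unique vertex of $\colT[G']$ in the colour class $V_{\gamma(r_j^1)}$ corresponding to the ``entry point'' of the $j$-th path of $\fracture{Q}{\tau}$. Since $\fracture{Q}{\tau}$ is isomorphic to $T[V']$ (a disjoint union of $6k$ paths of length $d-1$) and reattaching the central vertex $s$ with one edge to the first vertex of each of the $6k$ paths recovers exactly $T$, the resulting subgraph is isomorphic to $T$, edge-colourful under $\hat{\gamma}$, and satisfies $b(\colT(G',c'))=(G',c')$. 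The only mild subtlety — and the main obstacle — is ensuring that the ``entry points'' in $\colT[G']$ match up correctly with the endpoints prescribed in (C) of Definition~\ref{def:hatG_caseStar}, but this is immediate once one unpacks Definition~\ref{def:gamma_gammaE_stars}, since $\gamma(r_j^1)$ is precisely the vertex $v_i$ of $Q$ that serves as the start of the corresponding path $P_v^i$ in $\fracture{Q}{\tau}$.
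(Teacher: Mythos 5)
Your proposal is correct and follows essentially the same route as the paper: the paper's own proof of this lemma is a two-line deferral to the argument of Lemma~\ref{lem:success_Hgadgets} (unique determination of a colour-preserving embedding by its image, the bijection $\colT \mapsto \colT[G]$, and Lemma~\ref{lem:unique_fracture_caseStar} to force the induced fracture to be $\tau$), and you have simply spelled out those same steps in the star setting. The only point to phrase carefully is that a colour class $V_{\gamma(r_j^1)}$ may contain two vertices of $\colT[G']$ (since $\tau$ splits $v_i$ for $i\geq 2$), so in the surjectivity step $s$ must be attached to the image of the specific split-vertex starting the path $P_v^i$ rather than to "the unique vertex of the colour class" — which is what your parenthetical about the entry point already indicates.
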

\begin{proof}
Thanks to Lemma~\ref{lem:unique_fracture_caseStar}, the proof is  similar to the proof of Lemma~\ref{lem:success_Hgadgets}: Colour-preserving embeddings in $\embs{(\fracture{Q}{\tau},c_\tau)}{(G,c)}$ are uniquely identified by their image, and a bijection $b$ from $\colsubs{T}{(\hat{G},\hat{\gamma})}$ to images of colour-preserving embeddings in $\embs{(\fracture{Q}{\tau},c_\tau)}{(G,c)}$ is given by $b:\colT \mapsto \colT[G]$. 
\end{proof}

Similarly to the proof in Section~\ref{sec:caseH},  Lemma~\ref{lem:success_stars}  is sufficient for hardness.

\begin{lemma}\label{lem:TreeHardnessCaseStar}
Let $\mathcal{T}$ be a recursively class of trees of unbounded star number. Then $\oplus\subsprob(\mathcal{T})$ is $\oplus\W{1}$-hard.
\end{lemma}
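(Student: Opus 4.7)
The plan is to follow the same template used for Lemma~\ref{lem:TreeHardnessCaseH}, replacing the $\hal$-gadget apparatus with the star-gadget apparatus developed in Section~\ref{sec:caseStar}. Concretely, I would start by fixing an infinite, recursively enumerable family $\mathcal{C}$ of cubic graphs of unbounded treewidth (taking, for each vertex count, at most one representative, chosen with largest treewidth). For each $\Delta \in \mathcal{C}$, the assumption that $\mathcal{T}$ has unbounded star number together with recursive enumerability of $\mathcal{T}$ lets me algorithmically find some $d \geq 3$ and some $T_\Delta \in \mathcal{T}$ with $\starnum_d(T_\Delta) \geq 6|V(\Delta)|$. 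I then form $Q_\Delta = Q(\Delta, T_\Delta)$ as in Definition~\ref{def:Qstar} and set $\mathcal{Q} := \{Q_\Delta \mid \Delta \in \mathcal{C}\}$. Since $\Delta$ is a minor of $Q_\Delta$ and treewidth is minor-monotone, $\mathcal{Q}$ has unbounded treewidth, so by Theorem~\ref{thm:cphom_lower_bound}, $\oplus\cphomsprob(\mathcal{Q})$ is $\oplus\W{1}$-hard. It therefore suffices to exhibit a reduction $\oplus\cphomsprob(\mathcal{Q}) \fptred \oplus\subsprob(\mathcal{T})$.

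The reduction has three stages. First, given an input $(Q, (G,c))$ to $\oplus\cphomsprob(\mathcal{Q})$, I reduce the computation of $\oplus\homs{(Q, \mathsf{id}_Q)}{(G,c)}$ to that of $\oplus\embs{(\fracture{Q}{\tau}, c_\tau)}{(G,c)}$, where $\tau$ is the fracture of Definition~\ref{def:taustar}. As in the $\hal$ case, this uses the expansion
\[
    \oplus\embs{(\fracture{Q}{\tau},c_\tau)}{\star} = \sum_{\rho \geq \tau} \mu(\tau,\rho) \cdot \oplus\homs{(\fracture{Q}{\rho},c_\rho)}{\star}
\]
from~\cite{PeyerimhoffRSSVW22}, in which the coefficient of $\oplus\homs{(Q, \mathsf{id}_Q)}{\star}$ equals $\prod_{v\in V(Q)} (-1)^{|\tau_v|-1} (|\tau_v|-1)!$. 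Inspection of Definition~\ref{def:taustar} shows that every partition $\tau_v$ has either one block or two blocks, so every factor $(|\tau_v|-1)!$ is odd; hence the coefficient is odd, and Lemma~\ref{lem:complexity_monotonicty} lets me isolate the top term by evaluating the whole linear combination, i.e.\ the function $\oplus\embs{(\fracture{Q}{\tau},c_\tau)}{\star}$.

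Second, for the $Q$-coloured graph $(G,c)$ I build the edge-coloured graph $(\hat{G}, \hat{\gamma})$ via Definition~\ref{def:hatG_caseStar} and apply Lemma~\ref{lem:success_stars} to get
\[
    \oplus\embs{(\fracture{Q}{\tau}, c_\tau)}{(G,c)} = \oplus\colsubs{T_\Delta}{(\hat{G}, \hat{\gamma})}.
\]
Third, I strip the edge-colouring using Lemma~\ref{lem:remove_edgecols}: this computes $\oplus\colsubs{T_\Delta}{(\hat{G}, \hat{\gamma})}$ in time $2^{|T_\Delta|}\cdot |\hat{G}|^{O(1)}$ given oracle access to $\oplus\subs{T_\Delta}{\star}$, with every oracle query at most the size of $\hat{G}$. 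Since $|T_\Delta|$ depends only on $|Q|$ and $T_\Delta$ can be computed from $Q$ by the recursive enumerability of $\mathcal{T}$, the overall reduction is a valid parameterised Turing reduction to $\oplus\subsprob(\mathcal{T})$.

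The only subtlety, and hence the main thing to verify carefully, is that the parity argument on the top coefficient works in this setting: specifically, that each block of every $\tau_v$ has size at most $2$ so that $(|\tau_v|-1)!$ is odd. This is immediate from Definition~\ref{def:taustar} (the non-trivial partitions each consist of one block of size $2$ and one singleton, and the rest are two singletons or a single block), so no extra work is needed. All other steps are direct analogues of the argument in Lemma~\ref{lem:TreeHardnessCaseH}, but use the structural lemmas of Section~\ref{sec:caseStar} (in particular Lemma~\ref{lem:unique_fracture_caseStar} and Lemma~\ref{lem:success_stars}) in place of their $\hal$-gadget counterparts.
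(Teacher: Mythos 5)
Your proposal is correct and follows exactly the route the paper takes: the paper's own proof simply states that the argument of Lemma~\ref{lem:TreeHardnessCaseH} transfers verbatim once one substitutes the star-gadget objects $Q$, $\tau$, $(\hat{G},\hat{\gamma})$ and invokes Lemma~\ref{lem:success_stars}, which is precisely what you spell out. One tiny slip of terminology: oddness of $(|\tau_v|-1)!$ requires that the \emph{partition} $\tau_v$ have at most two \emph{blocks} (not that each block have size at most two), but your parenthetical verification against Definition~\ref{def:taustar} checks exactly the right condition, so the argument stands.
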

\begin{proof}
The proof is almost identical to the proof of Lemma~\ref{lem:TreeHardnessCaseH}, with the exception that we use $Q$, $\tau$, $\hat{G}$, and $\hat{\gamma}$ as defined in the current section, and that we rely on Lemma~\ref{lem:success_stars} for the identity
\[\oplus\embs{(\fracture{Q}{\tau},c_\tau)}{(G,c)} = \oplus\colsubs{T}{(\hat{G},\hat{\gamma})}.\]
The remainder of the proof transfers verbatim.
\end{proof}

\subsection{Unbounded Fork number}\label{sec:caseF}
We will rely on the same high-level strategy as the one that we used when the $C$-number or star number was unbounded: Given a tree $T$ with large $a$-$b$-fork number, we start with a properly chosen cubic graph $\Delta$, and we construct a graph $Q$ which depends on $T$ and $\Delta$, and which contains $\Delta$ as a minor. Afterwards, we show that for any $Q$-coloured graph $(G,c)$ we can construct an edge-coloured graph $(\hat{G},\hat{\gamma})$ 
where the co-domain of $\hat{\gamma}$ is $E(T)$ 
such that $\#\colsubs{T}{(\hat{G},\hat{\gamma})}$ is equal (modulo~$2$) to $\#\embs{(\fracture{Q}{\tau},c_\tau)}{(G,c)}$ for a particular fracture $\tau$ of $Q$. However, proving this equality will be more involved than it was in the previous cases: In Sections~\ref{sec:caseH} and~\ref{sec:caseStar}, we were able to prove, implicitly, that $\#\colsubs{T}{(\hat{G},\hat{\gamma})}=\#\embs{(\fracture{Q}{\tau},c_\tau)}{(G,c)}$, that is, we were able to establish equality, rather than equality modulo $2$. In the current case, we are not able to prove equality and must therefore rely on parity arguments, which makes the case slightly more involved. We start by fixing the following:
\begin{itemize}
    \item Positive integers $k$, $a$ and $b$ with $a\leq b$ and $k\geq 2$. 
    \item A tree $T$ with $\fork_{a,b}(T)\geq 2k$. By definition of forks (Definition~\ref{def:fork}), $T$ contains \emph{designated sources} $s_1^1,s_1^2,\dots,s_k^1,s_k^2$ such that for each $(i,j)\in[k]\times [2]$, the source $s_i^j$ is the source of two (distinct) rays $F_a(i,j)$ of length $a$ and $F_b(i,j)$ of length $b$. Additionally $\degnl(s_i^j)=1$. We assume w.l.o.g.\ that the designated sources are ordered by their leaf-degrees, that is
\begin{equation}\label{eq:order_sources}
    \degl(s_1^1)\geq \degl(s_1^2) \geq \dots \geq \degl(s_k^1) \geq \degl(s_k^2)\,.
\end{equation}
Consider Figure~\ref{fig:G} for an illustration of $T$, its designated sources, and the rays $F_a(i,j)$ and $F_b(i,j)$.
\item A $k$-vertex \emph{bipartite} cubic graph $\Delta$ with vertices $V(\Delta)=\{v_1,\dots,v_k\}$.
\item A proper $3$-edge-colouring $C:E(\Delta)\rightarrow \{s,m,\ell\}$ of $\Delta$.\footnote{That is, $C(e_1)\neq C(e_2)$ whenever $e_1\neq e_2$ share a vertex. Note that every cubic bipartite graph has a $3$-edge-colouring by Hall's Theorem.}
\end{itemize}

\begin{figure}
    \centering
    \includegraphics[scale=1]{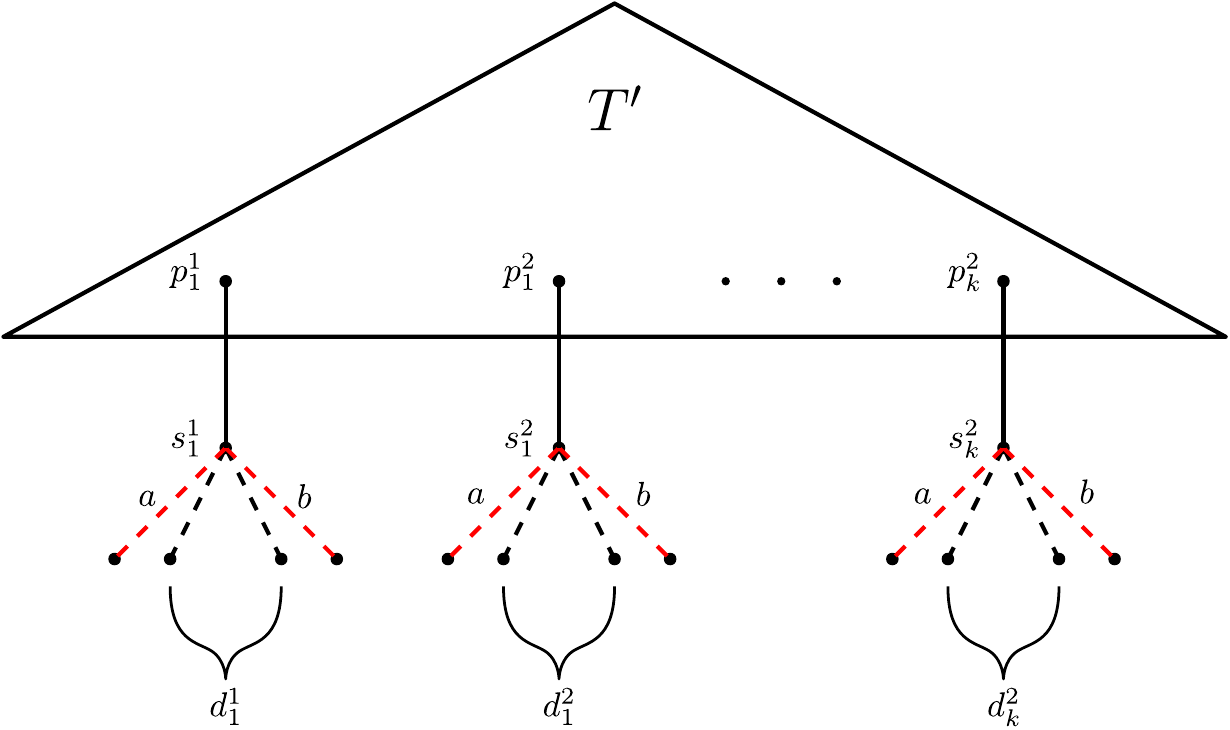}
    \caption{A tree $T$ with $\fork_{a,b}(T)\geq 2k$. Note that the parents of the $s_i^j$ are not necessarily distinct. The rays $F_a(i,j)$ and $F_b(i,j)$ are depicted in red.}
    \label{fig:G}
\end{figure}

We first note that, since there are at least $2k\geq 4$ sources in $T$, any pair of distinct sources must not be adjacent: Otherwise, the tree $T$ would either be disconnected, or one of the sources would have $\degnl$ at least $2$, both of which is a contradiction.
\begin{observation}\label{obs:sources_nonadjacent}
For any distinct pair $(i,j)\neq (i',j')$ we have that $s_i^j$ and $s_{i'}^{j'}$ are not adjacent in $T$.
\end{observation}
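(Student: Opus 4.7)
The plan is to argue by contradiction, leveraging the two structural constraints supplied by Definition~\ref{def:fork}: every designated source $s_i^j$ is a source of $T$ (so $\deg(s_i^j) > 2$) and satisfies $\degnl(s_i^j) = 1$. Suppose $s_i^j$ and $s_{i'}^{j'}$ were adjacent with $(i,j) \neq (i',j')$. Since $\deg(s_{i'}^{j'}) > 2$, the vertex $s_{i'}^{j'}$ cannot be an interior vertex of any ray of $T$ (those have degree exactly $2$) nor a leaf endpoint (degree $1$), so by Definition~\ref{def:ray} the neighbour $s_{i'}^{j'}$ does not lie on any ray with source $s_i^j$. Hence $s_{i'}^{j'}$ contributes to $\degnl(s_i^j)$, and by symmetry $s_i^j$ contributes to $\degnl(s_{i'}^{j'})$. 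Since $\degnl(s_i^j) = \degnl(s_{i'}^{j'}) = 1$, the edge $\{s_i^j, s_{i'}^{j'}\}$ is the \emph{unique} non-ray edge incident to either source.

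Let $S$ be the vertex set consisting of $s_i^j$, $s_{i'}^{j'}$, and all vertices lying on the designated rays with source in $\{s_i^j, s_{i'}^{j'}\}$. The preceding paragraph shows that no edge of $T$ crosses from $S$ to $V(T) \setminus S$: every edge at $s_i^j$ or $s_{i'}^{j'}$ is either the connecting edge or lies on one of their rays, while every other vertex of $S$ sits strictly inside a ray and hence has all of its neighbours in $S$ as well. On the other hand, because $k \geq 2$ the tree $T$ contains a further designated source $s_{i''}^{j''} \notin \{s_i^j, s_{i'}^{j'}\}$, and this source cannot lie inside a ray of $s_i^j$ or $s_{i'}^{j'}$ (again by the degree bound in Definition~\ref{def:ray}); so $s_{i''}^{j''} \notin S$. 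Thus $S$ is a proper nonempty union of connected components of $T$, contradicting the connectedness of the tree $T$. There is no real obstacle in this argument; the only point requiring a moment of care is the observation that a source cannot sit on another source's ray, which follows immediately from the degree conditions in Definition~\ref{def:ray}.
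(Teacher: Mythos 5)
Your proof is correct and follows essentially the same route as the paper, which argues in one sentence that adjacency of two designated sources would force either $\degnl \geq 2$ at one of them or a disconnection of $T$; you simply carry out that sketch in detail, using $\degnl(s_i^j)=\degnl(s_{i'}^{j'})=1$ to show the connecting edge exhausts both sources' non-ray edges and then deriving the disconnection from $k\geq 2$.
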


Next, we define the graph $Q$.
\begin{definition}[$Q$]\label{def:Q_caseF}\label{aaa}
The graph $Q$ is obtained from~$\Delta$ and~$C$ via substituting $v_i$ by the gadget depicted in Figure~\ref{fig:def_Q_fork} for each $i\in[k]$. Afterwards, for every edge $e=\{v_i,v_j\}$ of $\Delta$ we identify the vertex coloured with $C(e)$ in the gadget of $v_i$ with the vertex coloured with $C(e)$ in the gadget of $v_j$.
\end{definition}

\begin{figure}
    \centering
    \begin{tikzpicture}[scale=0.6]
    \node[circle,draw,fill] (1) at (0,0) {};
    \node[circle] (100) at (1,0) {$v_i^1$};
    \node[circle,draw,fill] (2) at (0,4) {};
    \node[circle] (200) at (1,4) {$v_i^2$};
    \draw[ultra thick,dashed] (1) -- node[right] {$a$} (2);
    \node[circle,draw, fill] (3) at (0,8) {};
    \node[circle] (500) at (0,9) {$\ell$};
    \draw[ultra thick, dashed] (3) --  node[right] {$b$}(2);
    \node[circle,draw, fill] (7) at (-3,-3) {};
    \node[circle] (300) at (-3,-4) {$s$};
    \draw[ultra thick, dashed] (1) --  node[below right] {$a$} (7);
    \node[circle,draw, fill] (9) at (3,-3) {};
    \node[circle] (400) at (3,-4) {$m$};
    \draw[ultra thick, dashed] (1) -- node[below left] {$b$} (9);
    \end{tikzpicture}
    \caption{A vertex gadget in the construction of $Q$ in Definition~\ref{def:Q_caseF}. A dashed line labelled with $a$ (resp.\ $b$) depicts a path of length $a$ (resp.\ $b$).}
    \label{fig:def_Q_fork}
\end{figure}
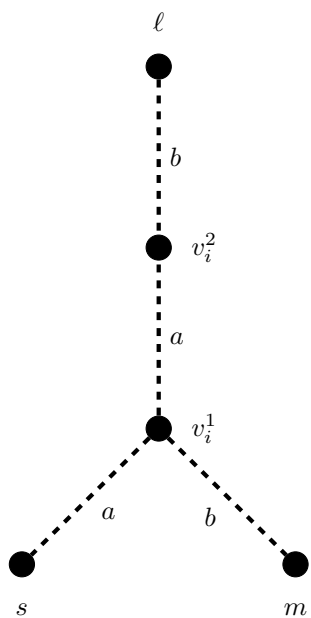

While Definition~\ref{aaa} will be useful in our proofs, we note the following easier equivalent way to define $Q$. 

\begin{observation}
The graph $Q$ is obtained from~$\Delta$ and~$C$ by substituting each edge of colour $s$ (of~$\Delta$) with a path of length $2a$, each edge of colour $m$ with a path of length $2b$, and each edge of colour $\ell$ with a path of length $2(a+b)$.
Consequently, $\Delta$ is a minor of $Q$.
\end{observation}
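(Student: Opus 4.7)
My plan is to read the observation as an equivalence between two descriptions of $Q$, and then to deduce the minor claim as an immediate consequence. The proof is essentially an unpacking of Definition~\ref{def:Q_caseF}, so I do not expect a serious obstacle; the main care needed is to argue that the paths between consecutive gadget centres have the stated lengths (and, in particular, do not get shortened by accidental identifications), for which the properness of the edge colouring $C$ is the key input.

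First, I would inspect the gadget in Figure~\ref{fig:def_Q_fork} for a single vertex $v_i\in V(\Delta)$ and compute the distances from the vertex $v_i^1$ (which I will think of as the ``centre'' of the gadget) to each of the three distinguished vertices that carry the labels $s$, $m$, $\ell$. Directly from the figure, these distances are $a$, $b$, and $a+b$ respectively: the $s$- and $m$-labelled vertices are endpoints of the two $2$-paths of length $a$ and $b$ attached at $v_i^1$, while the $\ell$-labelled vertex is reached by first traversing the length-$a$ path from $v_i^1$ to $v_i^2$ and then the length-$b$ path from $v_i^2$ to the $\ell$-labelled vertex.

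Next, for any edge $e=\{v_i,v_j\}\in E(\Delta)$ I would use Definition~\ref{def:Q_caseF} to conclude that the unique vertex carrying label $C(e)$ in the gadget of $v_i$ is identified with the unique vertex carrying label $C(e)$ in the gadget of $v_j$, and that no other edge incident to $v_i$ or $v_j$ triggers an identification involving these same labels, because $C$ is a proper $3$-edge colouring. The concatenation of the corresponding internal paths from $v_i^1$ to the identified vertex, and from the identified vertex to $v_j^1$, is therefore a path in $Q$ of length $2a$, $2b$, or $2(a+b)$ when $C(e)$ equals $s$, $m$, or $\ell$, respectively, and all its internal vertices have degree~$2$ in $Q$. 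This exactly matches the claim that $Q$ is the graph obtained from $\Delta$ by substituting each edge with a path of the stated length.

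Finally, to obtain the ``consequently'' clause, I would perform the standard minor operations: for every $e\in E(\Delta)$ contract the corresponding subdivision path in $Q$ back to a single edge, and for every $i\in[k]$ contract all remaining internal edges of the gadget at $v_i$ so that the gadget collapses to a single vertex. The result of these two rounds of edge contractions is literally $\Delta$, witnessing that $\Delta$ is a minor of $Q$ and completing the proof.
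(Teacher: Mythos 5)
Your proof is correct and is exactly the direct unpacking of Definition~\ref{def:Q_caseF} that the paper intends (the paper states this as an observation without proof): the three distances $a$, $b$, $a+b$ from $v_i^1$ in the gadget, the properness of $C$ guaranteeing each labelled vertex is identified exactly once, and the contraction of the subdivision paths to recover $\Delta$ as a minor.
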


The fracture $\tau$ of $Q$ that we will be interested in is defined as follows; Figure~\ref{fig:H} depicts the fractured graph $\fracture{Q}{\tau}$.
\begin{definition}[$\tau$]\label{def:tau_caseF}
Let $Q$ be the graph defined in Definition~\ref{def:Q_caseF}.
\begin{itemize}
    \item For each edge $e=\{v_i,v_j\}$ of $\Delta$, there is a vertex $C(e)\in\{s,m,\ell\}$ of degree $2$ that connects the gadgets of $v_i$ and $v_j$. We let $\tau_{C(e)}$ be the partition consisting of two singleton blocks.

    \item For each vertex $v_i$ of $\Delta$, the gadget of $v_i$ in $Q$ contains the vertex $v_i^1$ of degree $3$ which is connected to $s$ via a path of length $a$, to $m$ via a path of length $b$, and to $\ell$ via a path of length $a+b$. Let $e_s$, $e_m$, and $e_\ell$ be the first edges on those paths. We set
    \[ \tau_{v_i} = \{\{e_s,e_m\},\{e_\ell\}\} \,.\]
\end{itemize}
For all other vertices $u$ of $Q$, we let $\tau_u$ be the partition consisting only of one block.
\end{definition}

\begin{figure}
    \centering
    \includegraphics[scale=1]{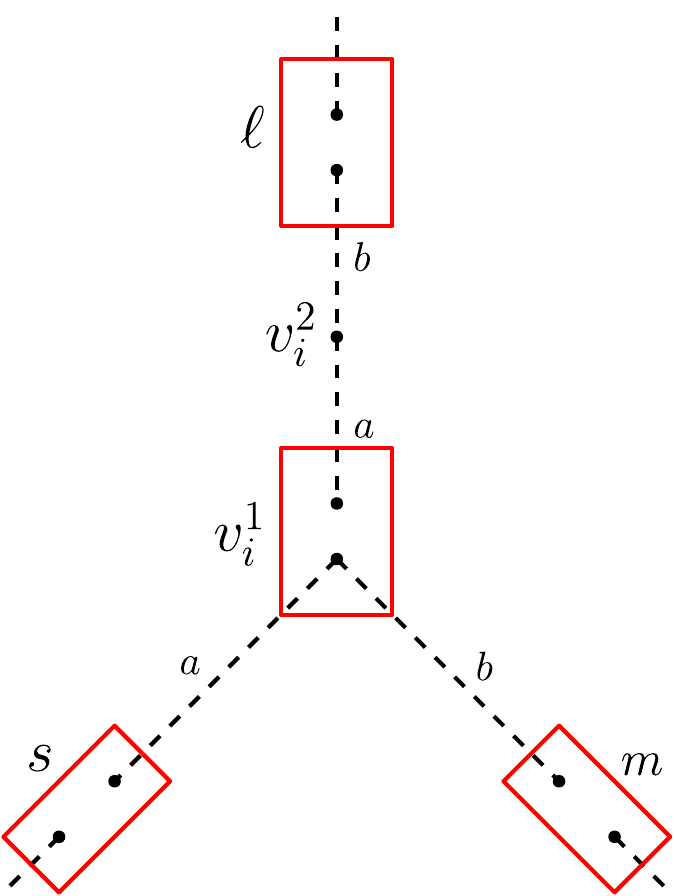}
    \caption{The fractured graph $\fracture{Q}{\tau}$. Note that the illustration only depicts the fracturing of a single vertex gadget.}
    \label{fig:H}
\end{figure}

\noindent Next we identify specific substructures of $T$ that will be necessary in the construction of $\hat{G}$.
\begin{definition}\label{def:substructs_forks}
Recall that $s_i^j$ with $(i,j)\in[k]\times[2]$ are the designated sources of $T$.
\begin{itemize}
    \item  $T'$ is the graph obtained from $T$ by deleting, for each $(i,j)\in[k]\times[2]$, the designated source $s_i^j$ as well as all rays with source $s_i^j$.
    \item For each $(i,j)\in[k]\times[2]$,  $p_i^j$ is the neighbour of $s_i^j$ which is not contained in a ray. Note that $p_i^j$ is unique by definition of forks. Note that $p_i^j\in V(T')$ and that the $p_i^j$ are not necessarily pairwise distinct.
    \item For each $(i,j)\in[k]\times[2]$,  $d_i^j = \degl(s_i^j)-2$, that is, $d_i^j$ is
    the number of rays with source $s_i^j$ minus $2$. Note that $d_i^j\geq 0$ since each $s_i^j$ is the source of $F_a(i,j)$ and $F_b(i,j)$.
    \item 
    $\displaystyle F:= \bigcup_{(i,j)\in[k]\times[2]} \left( F_a(i,j) \cup F_b(i,j) \right),$
    that is, $F$ is the subset of $V(T)$ that contains the vertices of the rays $F_a(i,j)$ and $F_b(i,j)$ (which includes $s_i^j$) for each $(i,j)\in[k]\times[2]$.
    \item  $E':= E(T[F])$.
\end{itemize}
An illustration of these notions is given in Figure~\ref{fig:G}. 
\end{definition}

Observe that $T[F]$ is a disjoint union of $2k$ paths of length $a+b$. Specifically, for each $(i,j)\in[k]\times[2]$ it contains the path
\[F_i^j := T[F_a(i,j) \cup F_b(i,j)] \,.\]

It turns out that $Q$ is isomorphic to a quotient graph of $T[F]$, since for each vertex $v_i$ of $\Delta$, the vertex gadget of $v_i$ decomposes into two paths of length $a+b$. In fact, this decomposition is given by the fractured graph $\fracture{Q}{\tau}$ (see Figure~\ref{fig:H}). Formally, we have the following:
\begin{observation}\label{obs:iso_forks}\label{abq}
$T[F]\cong \fracture{Q}{\tau} \cong 2kP_{a+b}$.
\end{observation}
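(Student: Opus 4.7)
The plan is to verify each of the two isomorphisms $T[F]\cong 2kP_{a+b}$ and $\fracture{Q}{\tau}\cong 2kP_{a+b}$ directly from the definitions, checking vertex and edge sets carefully.

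First I would handle $T[F]\cong 2kP_{a+b}$ as follows. For each $(i,j)\in[k]\times[2]$, the rays $F_a(i,j)$ and $F_b(i,j)$ share exactly one vertex, namely their common source $s_i^j$; otherwise they would create a cycle in the tree $T$. Hence $F_i^j=T[F_a(i,j)\cup F_b(i,j)]$ is itself a path of length $a+b$ (obtained by joining the two rays at $s_i^j$). I would then argue that the vertex sets $V(F_i^j)$ are pairwise disjoint: the designated sources $s_i^j$ are pairwise non-adjacent in $T$ by Observation~\ref{obs:sources_nonadjacent}, and any non-source vertex of a ray has degree at most~$2$ with its unique source neighbour in the ray, so it cannot lie on a second ray with a different source (this would again force a cycle). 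Finally, there are no edges of $T$ between distinct $F_i^j$ and $F_{i'}^{j'}$: such an edge would have to be incident to some~$s_i^j$ (the non-source vertices of rays only have ray-neighbours), but $\degnl(s_i^j)=1$ and the unique non-ray neighbour $p_i^j$ is excluded from $F$ by construction. This yields $T[F]=\bigsqcup_{i,j} F_i^j\cong 2kP_{a+b}$.

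Next I would analyse $\fracture{Q}{\tau}$ gadget-by-gadget. The fracture $\tau$ splits every port vertex $C(e)\in\{s,m,\ell\}$ (of degree~$2$ in~$Q$ since the $3$-edge-colouring $C$ is proper) into two singletons, which disconnects the $k$ gadgets from one another in $\fracture{Q}{\tau}$. It therefore suffices to show that each single vertex gadget, after applying~$\tau$, becomes a disjoint union of two paths of length $a+b$. Within the gadget for $v_i$, the only remaining non-trivial part of~$\tau$ is at $v_i^1$, where $\tau_{v_i}=\{\{e_s,e_m\},\{e_\ell\}\}$ creates two copies $(v_i^1)^{B_1}$ and $(v_i^1)^{B_2}$. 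The copy $(v_i^1)^{B_1}$ inherits the first edge of the length-$a$ path towards $s$ and the first edge of the length-$b$ path towards $m$; since all internal vertices of these two paths have degree~$2$ (so $\tau$ acts trivially on them), concatenation gives a single path from the $s$-copy to the $m$-copy of length $a+b$. The copy $(v_i^1)^{B_2}$ inherits only the first edge of the length-$(a+b)$ path towards $\ell$ (which passes through the degree-$2$ vertex $v_i^2$), giving a second path of length $a+b$ from $(v_i^1)^{B_2}$ to the $\ell$-copy.

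Summing over the $k$ gadgets yields $2k$ pairwise disjoint paths of length $a+b$, so $\fracture{Q}{\tau}\cong 2kP_{a+b}$. There is no real obstacle here; the only thing requiring minor care is tracking which port copies of $s$, $m$, $\ell$ belong to which gadget, which is immediate from Definition~\ref{def:mrhoGeneral}. Combined with the first part, this establishes $T[F]\cong \fracture{Q}{\tau}\cong 2kP_{a+b}$.
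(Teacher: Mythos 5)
Your proposal is correct and follows exactly the route the paper intends: the paper leaves this as an observation justified by the preceding remark that $T[F]$ decomposes into the $2k$ paths $F_i^j$ of length $a+b$ and that each vertex gadget of $Q$ decomposes under $\tau$ into two such paths (Figure~\ref{fig:H}), and your write-up simply fills in those same details. No issues.
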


Similarly to the previous two cases, we introduce functions $\gamma$ and $\gamE$ which we will need for defining the edge-colours of $\hat{G}$.

\begin{definition}[$\gamma,\gamE$]\label{def:gam_gamE_forks}
We define a function $\gamma:F \to V(Q)$ as follows:
\begin{enumerate}
    \item For each $i\in[k]$, $\gamma$ maps $F_i^1$ to the $(a+b)$-path in the gadget of $v_i$ from $s$ to $m$, such that $\gamma(s_i^1)=v_i^1$.
    \item For each $i\in[k]$, $\gamma$ maps $F_i^2$ to the $(a+b)$-path in the gadget of $v_i$ from $v_i^1$ to $\ell$, such that $\gamma(s_i^2)=v_i^2$.
\end{enumerate}
Furthermore, we write $\gamE:E'\to E(Q)$ by setting $\gamE(\{x,y\}):=\{\gamma(x),\gamma(y)\}$.
\end{definition}
Note that the definition of $\gamE$ is well-defined since $\gamma$ is a homomorphism by Observation~\ref{abq}. Concretely, $\gamma$ can be viewed as the composition of an isomorphism from $T[F]$ to $\fracture{Q}{\tau}$ and the $Q$-colouring $c_\tau$ of $\fracture{Q}{\tau}$ (see Definition~\ref{def:fracture_Q_cols}). Furthermore, $\gamE$ is clearly a bijection. Hence, similarly to the previous sections, we point out the following:
\begin{observation}
Let $(G,c)$ be a $Q$-coloured graph. The mapping $c_E \circ \gamE^{-1}$ is a map from $E(G)$ to $E'$.
\end{observation}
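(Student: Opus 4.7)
The statement is essentially a bookkeeping check that the composition $c_E \circ \gamE^{-1}$ is well-typed with the claimed domain and codomain, so my proof plan is very short. The plan is to verify the two pieces separately and then compose them, exactly mirroring the analogous observations in Sections~\ref{sec:caseH} and \ref{sec:caseStar} (see Observation~\ref{obs:Q_coloured_E'} and the analogous remark before Definition~\ref{def:hatG_caseStar}).

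First I would unpack the definitions. By Definition~\ref{def:gam_gamE_forks}, $\gamE$ is defined on $E'$ and sends $\{x,y\} \in E'$ to $\{\gamma(x),\gamma(y)\} \in E(Q)$. I would then argue that $\gamE$ is a bijection from $E'$ to $E(Q)$: this uses Observation~\ref{abq}, which says $T[F] \cong \fracture{Q}{\tau} \cong 2kP_{a+b}$. Concretely, $\gamma$ factors as an isomorphism $T[F] \to \fracture{Q}{\tau}$ followed by the canonical $Q$-colouring $c_\tau$ of $\fracture{Q}{\tau}$ (Definition~\ref{def:fracture_Q_cols}); since $c_\tau$ only identifies the copies of vertices of $Q$ corresponding to distinct blocks, and since no block of $\tau$ contains two edges that share an endpoint in $\fracture{Q}{\tau}$ corresponding to the same edge of $T[F]$, the induced map on edges is injective. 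Combined with the fact that $|E'| = 2k(a+b) = |E(Q)|$, this gives bijectivity. Hence $\gamE^{-1}\colon E(Q) \to E'$ is a well-defined function.

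Next I would observe that $c_E\colon E(G) \to E(Q)$: this is immediate from Definition~\ref{def:cE} together with the fact that $c$ is a homomorphism from $G$ to $Q$, so for every $\{u,v\} \in E(G)$ the pair $\{c(u),c(v)\}$ is an edge of $Q$. Composing (in the left-to-right reading used throughout the paper, i.e.\ first apply $c_E$ and then $\gamE^{-1}$, matching the convention of Observation~\ref{obs:Q_coloured_E'}) yields a well-defined map $E(G) \to E'$, which is exactly the statement.

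There is no real obstacle here: the content is entirely in Observation~\ref{abq} and Definition~\ref{def:gam_gamE_forks}, and the observation is recorded only because it will play a structural role in the upcoming definition of $\hat\gamma$ (precisely as $c_E \circ \gamma_E^{-1}$ did in clause~(A) of Definitions~\ref{def:hatG_caseH} and \ref{def:hatG_caseStar}). I would therefore expect the author's proof to be a single sentence, or even to omit the proof entirely.
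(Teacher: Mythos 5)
Your proposal is correct and matches the paper, which states this observation without proof, relying on exactly the two facts you verify: the sentence immediately preceding it notes that $\gamE$ is a bijection from $E'$ to $E(Q)$ (via Observation~\ref{abq} and the factorisation of $\gamma$ through an isomorphism onto $\fracture{Q}{\tau}$ followed by $c_\tau$), and $c_E$ maps $E(G)$ into $E(Q)$ because $c$ is a homomorphism. You also correctly read the paper's left-to-right composition convention, consistent with Observation~\ref{obs:Q_coloured_E'}.
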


We are now able construct a graph $\hat{G}$ from a $Q$-coloured graph $G$; an illustration is provided in Figure~\ref{fig:I}.

\begin{definition}[$(\hat{G},\hat{\gamma})$]\label{def:hatG_caseF}
Let $(G,c)$ be a $Q$-coloured graph. The pair $(\hat{G},\hat{\gamma})$ is an edge-coloured graph constructed as follows, where the co-domain of $\hat{\gamma}$ is $E(T)$.
\begin{itemize}
    \item[(A)] The graph $\hat{G}$ contains $G$ as a subgraph. For each $e\in E(G)$, define $\hat{\gamma}(e)=\gamE^{-1}(c_E(e))$.
    \item[(B)] The vertex set of $\hat{G}$ is the union of $V(G)$ and $V(T)\setminus F$.
    \item[(C)] Pairs of vertices in $V(T)\setminus F$ are connected by an edge in $\hat{G}$ if and only if they are adjacent in $T$. For each such edge $e$, we set $\hat{\gamma}(e)=e$.
    \item[(D)] The remaining edges of $\hat{G}$ are defined as follows.  For each edge $e\in E(T)$ that connects a vertex $z \in V(T)\setminus F$   to a vertex $y\in F$ there are corresponding edges in $\hat{G}$. These edges 
connect~$z$ to  all vertices $g\in V(G)$ such that $c(g)=\gamma(y)$ 
For each such edge $e'$ in~$\hat{G}$,
$\hat{\gamma}(e') = e $. 
\end{itemize}
\end{definition}
In (D), the only edges in $T$ connecting $z \in V(T)\setminus F$ to a vertex $y\in F$ satisfy that $y$ is one of the designated sources $s_i^j$, and $z$ is either $p_i^j\in V(T')$ or $z$ is contained in one of the $d_i^j$ rays with source $s_i^j$ that are not $F_a(i,j)$ or $F_b(i,j)$ (see Definition~\ref{def:substructs_forks}). 

\begin{figure}
    \centering
    \includegraphics[scale=0.8]{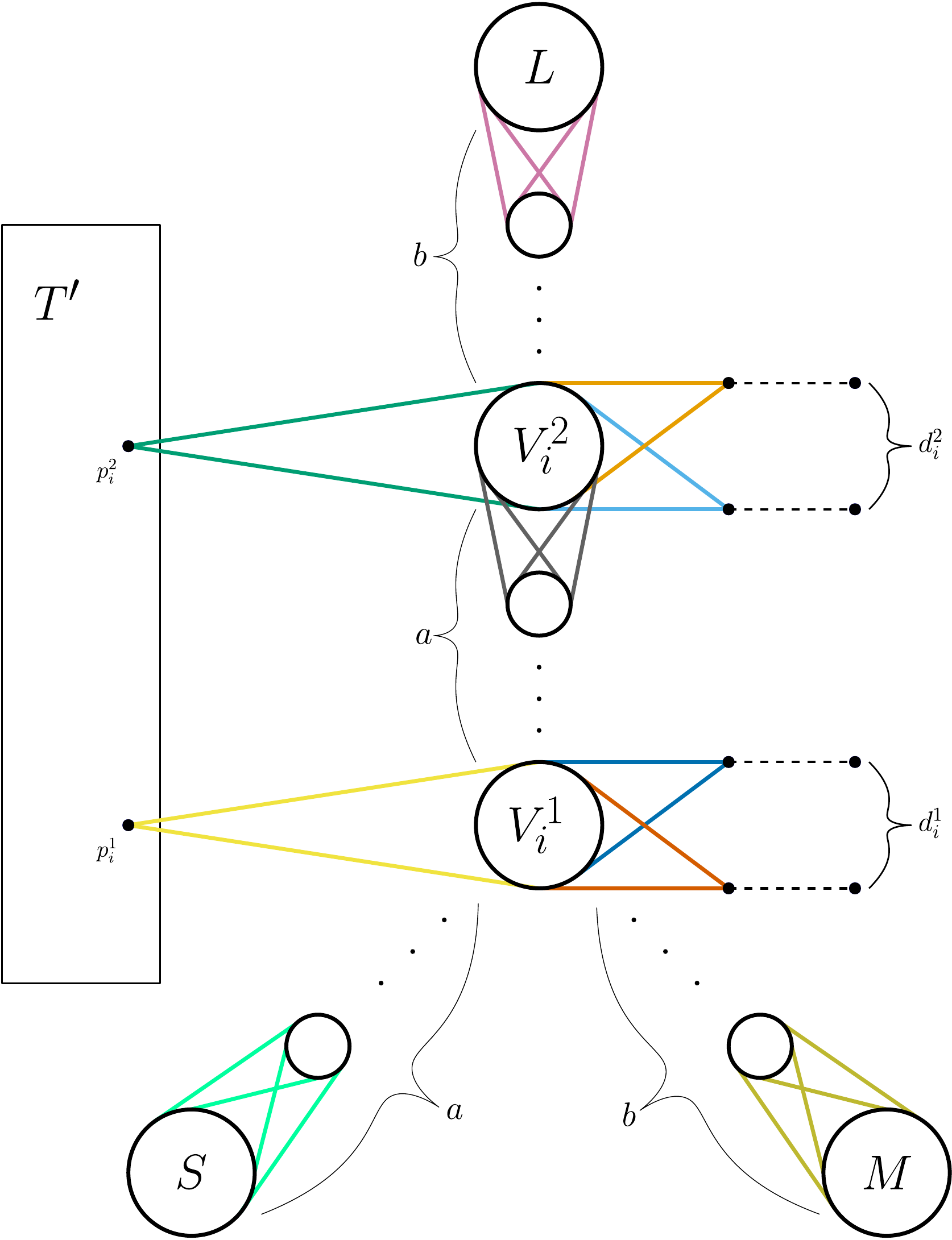}
    \caption{\small The graph $\hat{G}$. Depicted in the centre is the part of $G$ (within $\hat{G}$) that is coloured with the vertices of the $i$-th vertex gadget of $Q$.
    Depicted in black are the subtree $T'$ of $T$ (left), and, as dashed lines, the inner edges of the $d_i^1+d_i^2$ rays incident to $s_i^1$ and $s_i^2$ (right) --- here, the \emph{inner} edges are those that are not incident to the sources $s_i^1$ and $s_i^2$. Each edge of $\hat{G}$ fully contained in the black part has a unique colour w.r.t.\ $\hat{\gamma}$ (see Definition~\ref{def:hatG_caseF} (C)). Pairs consisting of remaining edges have the same colour (w.r.t.\ $\hat{\gamma}$) if and only if they are depicted with the same colour.}
    \label{fig:I}
\end{figure}

Similarly to the  other cases,  for each element $\colT\in\colsubs{T}{(\hat{G},\hat{\gamma})}$ 
the induced subgraph $\colT[G]:=\colT[V(\colT)\cap V(G)]$ of $\colT$ is an edge-colourful subgraph in $G$.
Also, $\colT[G]$ induces a fracture $\rho=\rho(\colT)$ of~$Q$ as follows. First, recall that $G$ is $Q$-coloured by $c$, and that $G$ is contained in $\hat{G}$ (see (A) in Definition~\ref{def:hatG_caseF}). Next note that $\colT[G]$ is a subgraph of $G$ that contains each edge colour in the image of $c_E \circ \gamE^{-1}$ precisely once. Since $\gamE$ is a bijection from $E'$ to $E(Q)$, we can thus equivalently view $\colT[G]$ as a subgraph of $G$ that contains each edge colour in the image of $c_E$ precisely once. This
fact allows us to define $\rho(\calT)$ in terms of the function~$c_E$ as follows. 

\begin{definition}[$\rho(\colT)$] \label{def:rho_caseF}
Let $\colT$ be an element of $ \colsubs{T}{(\hat{G},\hat{\gamma})}$.
The fracture $\rho = \rho(\colT)$ of~$Q$ is defined as follows.
Two edges $\{v,w\}$ and $\{v,y\}$ of $Q$ are in 
the same block 
in the partition $\rho_v$ corresponding to   vertex $v$ of $Q$ 
if and only if the edges of $\colT[G]$
that are coloured by $c_E$ with $\{v,w\}$ and 
$\{v,y\}$ are incident.
\end{definition}

With $(\hat{G},\hat{\gamma})$ defined, we can finally state formally the goal of this section. Recall that $(G,c)$ is a $Q$-coloured graph.
\begin{restatable}{lemma}{mainforks}\label{lem:main_forks}
Suppose that $|c^{-1}(v)|$ is odd for each $v\in V(Q)$. Then $\oplus\colsubs{T}{(\hat{G},\hat{\gamma})} = \oplus \embs{(\fracture{Q}{\tau},c_\tau)}{(G,c)}$.
\end{restatable}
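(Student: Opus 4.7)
The plan is to follow the template used in Lemmas~\ref{lem:success_Hgadgets} and~\ref{lem:success_stars}: set up a correspondence between $\colT\in\colsubs{T}{(\hat{G},\hat{\gamma})}$ and $\varphi\in\embs{(\fracture{Q}{\tau},c_\tau)}{(G,c)}$ such that the image of $\varphi$ is exactly $\colT[G]$. The anchor is the isomorphism $T[F]\cong\fracture{Q}{\tau}$ from Observation~\ref{abq}, together with the edge-bijection $\gamE:E'\to E(Q)$ of Definition~\ref{def:gam_gamE_forks}, which identifies the colour classes of the internal edges of $\hat{G}$ with the edges of~$Q$.

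First I would pin down the shape of any $\colT$. By rule~(C) of Definition~\ref{def:hatG_caseF}, every edge of $T[V(T)\setminus F]$ has a unique colour under~$\hat{\gamma}$, so edge-colourfulness forces $\colT$ to contain the whole external subgraph $\hat{G}[V(T)\setminus F]$. Any isomorphism $\phi:T\to\colT$ is then forced to act as the identity on $V(T)\setminus F$: the edges of $T[V(T)\setminus F]$ pin $\phi$ on every non-isolated vertex there, while rule~(D), which keys each external edge of $\hat{G}$ to a specific endpoint in $V(T)\setminus F$, pins $\phi$ on the remaining isolated vertices (for instance, leaves coming from length-$1$ extra rays). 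Consequently $\phi|_F$ is a colour-preserving edge-bijection from $T[F]$ onto $\colT[G]$, and $\colT[G]\cong\fracture{Q}{\tau}$.

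Next I would read off the induced fracture and write down the two maps. Setting $u_i^j:=\phi(s_i^j)$, the two edges of $\colT[G]$ at $u_i^1$ are the $\phi$-images of the first edges of $F_a(i,1)$ and of $F_b(i,1)$, and under $\gamE$ these correspond to the edges $e_s,e_m$ of $Q$ that share the block $\{e_s,e_m\}$ of $\tau_{v_i^1}$; meanwhile, the third edge $e_\ell$ at $v_i^1$ is realised at the \emph{other} $v_i^1$-coloured vertex $\phi(z)$, where $z$ is the leaf of $F_a(i,2)$ that $\gamma$ sends to $v_i^1$. A parallel but easier argument at each $v_i^2$ and at each shared vertex $s_e,m_e,\ell_e$ yields $\rho(\colT)=\tau$ for every~$\colT$. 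The forward map sends $\colT$ to the unique colour-preserving embedding with image $\colT[G]$ (uniqueness holds because the two $v_i^1$-coloured vertices of the image are distinguished by whether the $e_\ell$-coloured edge is incident to them), and the backward map sends $\varphi$ to the subgraph obtained by gluing $\hat{G}[V(T)\setminus F]$ to $\varphi(\fracture{Q}{\tau})$ via the unique colour-matching external edges at $u_i^1:=\varphi(v_i^{1,\{e_s,e_m\}})$ and $u_i^2:=\varphi(v_i^2)$.

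The main obstacle, and the reason for both the passage to modulo~$2$ and the oddness hypothesis ``$|c^{-1}(v)|$ is odd for every $v\in V(Q)$,'' is to verify that these two maps are mutually inverse up to an even error. In generic configurations they are genuine inverses and the equality in fact holds on the nose; in degenerate situations, however, symmetries of a fork (for instance when $a=b$ so the two designated rays at $s_i^j$ are indistinguishable as abstract subtrees) or alternative colour-consistent choices of an attachment vertex within a colour class $c^{-1}(v_i^j)$ may produce ``parasitic'' colourful copies of~$T$ that do not correspond to any embedding. I expect to show that such parasites organise into orbits whose sizes divide a factor of $|c^{-1}(v)|-1$ for some $v\in V(Q)$, which is even by hypothesis, so their contributions cancel modulo~$2$ and the claimed parity identity $\oplus\colsubs{T}{(\hat{G},\hat{\gamma})} = \oplus\embs{(\fracture{Q}{\tau},c_\tau)}{(G,c)}$ follows.
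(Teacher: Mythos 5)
Your high-level template is the right one (the map $\colT\mapsto\colT[G]$, a unique extension in the reverse direction, and the oddness hypothesis used to cancel ``parasitic'' copies modulo~$2$), and you have correctly guessed that the parasites come from alternative attachment vertices inside a colour class $c^{-1}(v_i^j)$. But the two load-bearing steps are missing or wrong. First, the rigidity argument of your second and third paragraphs does not go through: an isomorphism $\phi\colon T\to\colT$ is only a graph isomorphism and need not respect the edge-colouring $\hat{\gamma}$, so you cannot conclude that $\phi$ fixes $V(T)\setminus F$, nor that the $\phi$-image of the designated ray $F_a(i,1)$ is the set of edges of $\colT$ carrying that ray's colours; hence you cannot ``read off'' $\rho(\colT)=\tau$ from $\phi$. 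Indeed the degenerate copies (the paper's \emph{invalid trees}) are exactly configurations where a vertex $x\in V_i^j$ absorbs the attachment to $p_i^j$ and the $d_i^j$ extra rays while a different vertex $y\in V_i^j$ carries the $E'$-coloured edges, so $\phi$ does \emph{not} map $F$ onto $V(\colT)\cap V(G)$. Moreover, even if one knew $\colT[G]\cong\fracture{Q}{\tau}$ as abstract graphs, that is weaker than $\rho(\colT)=\tau$, which is what is needed for $\colT[G]$ to be the image of a \emph{colour-preserving} embedding of $(\fracture{Q}{\tau},c_\tau)$. Establishing $\rho(\colT)=\tau$ for the non-degenerate copies is the bulk of the paper's proof: an induction on the lengths of $2$-paths showing that all $2$-paths of $\colT$ longer than $b$ (then those of length $b$, then those between $a$ and $b$, then those of length $a$) are ``external'' except for the designated rays, followed by degree arguments and a comparison of the multisets of leaf-degrees of the sources of $T$ and $\colT$; validity of $\colT$ is used repeatedly in these degree arguments.

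Second, the cancellation step is only stated as a hope, and as stated it is not sound: an orbit whose size \emph{divides} the even number $|c^{-1}(v)|-1$ may well have odd size (e.g.\ size $1$). What is actually needed, and what the paper proves, is (i) a precise definition of the degenerate copies --- $\colT$ is invalid at $(i,j)$ if it meets $V_i^j$ in exactly two vertices $x,y$ with $x$ adjacent to $p_i^j$ and incident to no $E'$-coloured edge --- and (ii) an equivalence relation on them (replace $x$ by any other $x'\in V_i^j\setminus\{x,y\}$, which preserves edge-colourfulness because none of $x$'s edges carry $E'$-colours) whose classes have size \emph{exactly} $|c^{-1}(v_i^j)|-1$, which is even by hypothesis. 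Without the exact characterisation of the parasites, the exact even class size, and the proof that every valid copy induces precisely the fracture $\tau$, the parity identity is not established.
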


The proof requires some additional set-up. In particular, we need the condition that $|c^{-1}(v)|$ is odd to deal with the case in which what we call ``invalid trees'' arise. To this end, recall that $V_i^j = c^{-1}(v_i^j)$ denotes the set of vertices in $G$ that are coloured by $c$ with $v_i^j$. Since $G$ is a subgraph of $\hat{G}$ (see Definition~\ref{def:hatG_caseF}), we slightly abuse notation and write $V_i^j$ also for the subset of vertices in $\hat{G}$ corresponding to $V_i^j$ in $G$.

\begin{definition}\label{def:invalidTrees}
Let $\colT\in\colsubs{T}{(\hat{G},\hat{\gamma})}$ and let $(i,j)\in[k]\times [2]$. We call $\colT$ \emph{invalid at} $(i,j)$ if the following two conditions are met:
\begin{itemize}
    \item[(I)] $\colT$ contains precisely two vertices $x$ and $y$ in $V_i^j$.
    \item[(II)] $x$ is adjacent to $p_i^j$ and not incident in $\colT$ to any edge coloured with a colour in $E'$ (see Definition~\ref{def:hatG_caseF} (A)). 
\end{itemize}
Otherwise $\colT$ is called \emph{valid} at $(i,j)$. We call $\colT$ an \emph{invalid tree} if there exists a pair $(i,j)\in[k]\times[2]$ such that $\colT$ is invalid at $(i,j)$. Otherwise, we call $\colT$ a \emph{valid tree}. We write $\colsubsval{T}{(\widehat{G},\hat{\gamma})}$ for the set of all valid $\colT$ in $\colsubs{T}{(\widehat{G},\widehat{\gamma})}$.
\end{definition}
Consider Figure~\ref{fig:J} for an illustration of Definition~\ref{def:invalidTrees}.

\begin{figure}
    \centering
    \includegraphics[width=\textwidth]{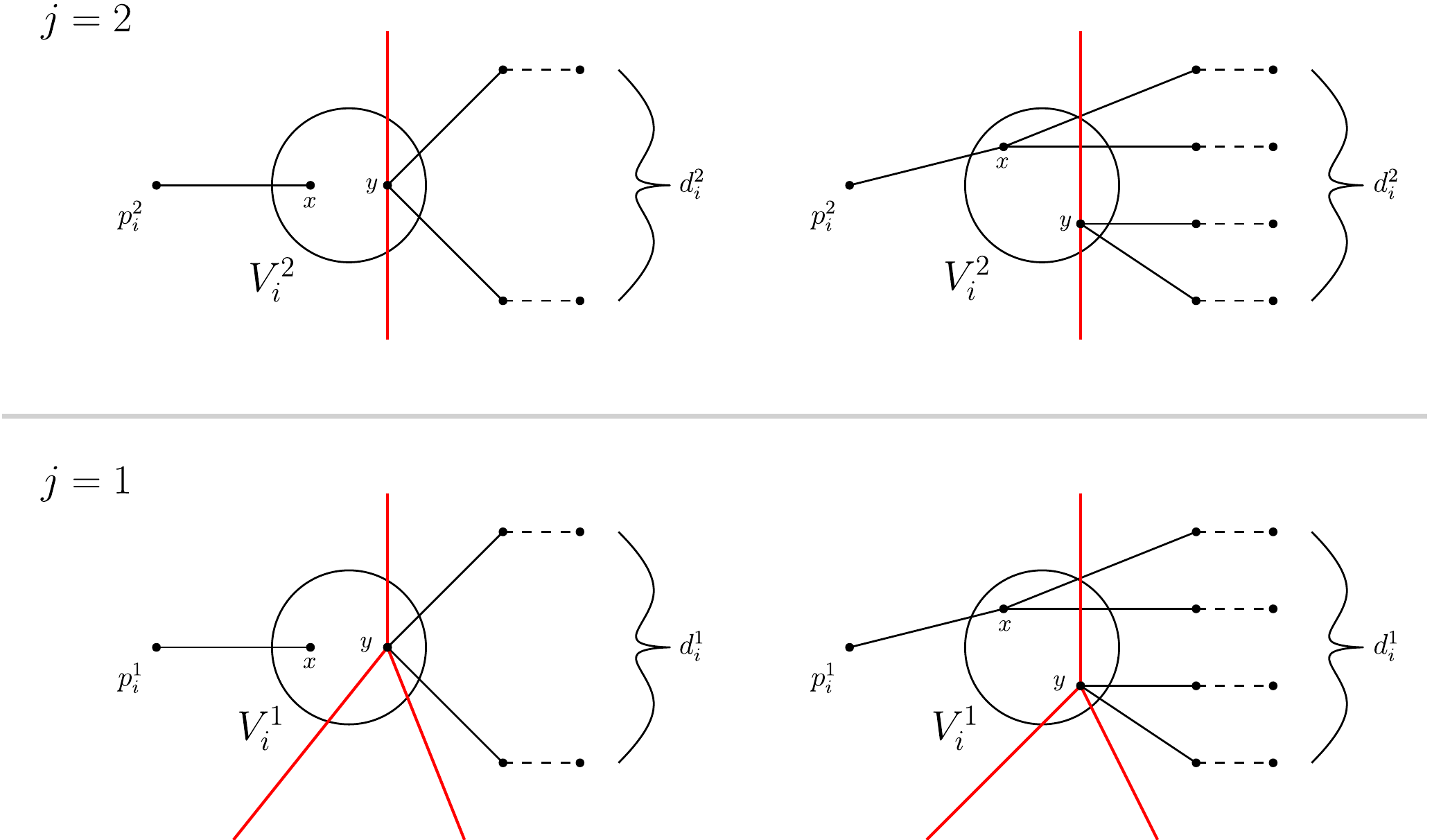}
    \caption{Illustration of the condition that yields invalid trees at $(i,1)$ (below) and $(i,2)$ (above). Edges contained in $E'$ are coloured red.}
    \label{fig:J}
\end{figure}

\begin{lemma}\label{lem:blah}
Suppose that $|c^{-1}(v)|$ is odd for each $v\in V(Q)$.
Then the number of invalid trees  $\colT\in\colsubs{T}{(\hat{G},\hat{\gamma})}$ is even.
\end{lemma}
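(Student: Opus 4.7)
I will construct a fixed-point-free involution $\phi$ on the set of invalid trees; this immediately forces the cardinality of that set to be even. The oddness hypothesis enters only through the fact that, for every $(i,j)\in[k]\times[2]$ and every $y\in V_i^j$, the set $V_i^j\setminus\{y\}$ has even cardinality and therefore admits a fixed-point-free involution $\sigma_{i,j,y}$; fix one such involution once and for all.

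Given an invalid $\colT$, let $(i_0,j_0)$ be the lexicographically smallest pair at which $\colT$ is invalid, and let $x,y$ be the two vertices of $\colT$ in $V_{i_0}^{j_0}$ guaranteed by (I). The labels are canonical: (II) identifies $x$ as the unique vertex not incident in $\colT$ to any edge of $\hat{\gamma}$-colour in $E'$, because edge-colourfulness forces every $E'$-colour incident to $v_{i_0}^{j_0}$ (of which there are at least two, coming from $F_a(i_0,j_0)$ and $F_b(i_0,j_0)$) to be carried by a single vertex of $V_{i_0}^{j_0}$, which must then be $y$. Now set $x':=\sigma_{i_0,j_0,y}(x)\in V_{i_0}^{j_0}\setminus\{x,y\}$, and let $\phi(\colT)$ be obtained from $\colT$ by deleting the vertex $x$ together with its unique incident edge $\{p_{i_0}^{j_0},x\}$ and adjoining the vertex $x'$ together with the edge $\{p_{i_0}^{j_0},x'\}$.

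I will verify four points. First, $\phi(\colT)\in\colsubs{T}{(\hat{G},\hat{\gamma})}$: by (II), $x$ is a leaf of $\colT$; the edge $\{p_{i_0}^{j_0},x'\}$ lies in $\hat{G}$ and receives the same $\hat{\gamma}$-colour $\{p_{i_0}^{j_0},s_{i_0}^{j_0}\}$ as $\{p_{i_0}^{j_0},x\}$ by construction (D) in Definition~\ref{def:hatG_caseF}; and $x'\notin V(\colT)$ by (I), so the result is again edge-colourful and isomorphic to $T$. Second, $\phi(\colT)$ is invalid at exactly the same pairs as $\colT$: the colour classes $V_{i'}^{j'}$ are pairwise disjoint, so invalidity at a pair other than $(i_0,j_0)$ is unaffected by the swap, whereas at $(i_0,j_0)$ the two vertices of $V_{i_0}^{j_0}$ present are now $x'$ and $y$ with $x'$ playing the useless role. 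In particular $(i_0,j_0)$ is still the lex smallest invalidity pair, and the canonical labelling of $\phi(\colT)$ produces the same $y$ and the new useless vertex $x'$. Third, $\phi$ is an involution, because $\sigma_{i_0,j_0,y}$ is. Fourth, $\phi$ is fixed-point-free, because $\sigma_{i_0,j_0,y}$ is.

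The only conceptual subtlety is verifying the stability of the canonical data $(i_0,j_0)$ and $y$ under $\phi$, and this follows immediately from the locality of the swap and the pairwise disjointness of the colour classes $\{V_{i'}^{j'}\}$; no further technical obstacle is anticipated.
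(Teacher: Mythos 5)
Your pairing strategy is essentially the paper's argument: the paper also groups the invalid trees by the lexicographically least pair $(i_0,j_0)$ at which they are invalid and then exchanges the ``useless'' vertex $x$ for another vertex $x'\in V_{i_0}^{j_0}\setminus\{x,y\}$, concluding that each orbit has size $|V_{i_0}^{j_0}|-1$, which is even; your fixed-point-free involution $\sigma_{i_0,j_0,y}$ is just an explicit way of realising that even count, and your stability checks for $(i_0,j_0)$ and $y$ are fine.

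There is, however, one step that fails as written: you assert that condition (II) makes $x$ a leaf of $\colT$ with unique incident edge $\{p_{i_0}^{j_0},x\}$, and your swap moves only that edge. Condition (II) only forbids $x$ from being incident to edges whose $\hat{\gamma}$-colour lies in $E'$, i.e.\ to edges of $G$. But $x\in V_{i_0}^{j_0}$ may additionally be incident in $\colT$ to type-(D) edges of $\hat{G}$ leading to the first vertices of the $d_{i_0}^{j_0}$ further rays with source $s_{i_0}^{j_0}$ (those other than $F_a(i_0,j_0)$ and $F_b(i_0,j_0)$); nothing in Definition~\ref{def:invalidTrees} forces all of these ray-attachment colours to sit at $y$. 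If any of them sit at $x$, deleting $x$ ``together with its unique incident edge'' is ill-defined, and transplanting only the $p_{i_0}^{j_0}$-edge produces a graph that is no longer edge-colourful or isomorphic to $T$. The repair is exactly what the paper does: delete $x$ with \emph{all} its incident edges and reattach $x'$ to every former neighbour of $x$. This is legitimate because, by (II), every edge of $\colT$ at $x$ is a type-(D) edge, and by construction (D) such an edge $\{x,u\}$ exists with the same colour $\{x',u\}$ for every $x'\in V_{i_0}^{j_0}$. With that correction your involution argument goes through verbatim.
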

\begin{proof}
For the proof, given two tuples $(i,j)$ and $(i',j')$ in $[k]\times[2]$ we write $(i',j')<(i,j)$ if $(i',j')$ is lexicographically smaller than $(i,j)$. Write $\mathcal{T}(i,j)$ for the set of all $\colT\in\colsubs{T}{(\hat{G},\hat{\gamma})}$ that are invalid at $(i,j)$ but valid on all pairs $(i',j')<(i,j)$. We will prove that $\mathcal{T}(i,j)$ is even for all $(i,j)\in[k]\times[2]$; this is sufficient for the lemma to hold.

Hence fix $(i,j)$, let $\colT\in\mathcal{T}(i,j)$, and let $x$ and $y$ be as in Definition~\ref{def:invalidTrees}. 
Since $V_i^j = c^{-1}(v_i^j)$ and for $j\in [2]$, $v_i^j$ is a vertex of~$Q$, the assumption in the statement of the lemma implies that $|V_i^j|$ is odd. 
Since $x$ and $y$ are distinct vertices in $V_i^j$, $V_i^j$ contains
additional vertices other than $x$ and $y$.
Fix a vertex $x'\in V_i^j\setminus \{x,y\}$. 
Obtain $\colTprime$ 
from $\colT$ by deleting $x$ (including edges incident to $x$) and by adding $x'$ and the edge $\{x',u\}$ for every $u$ that was adjacent to $x$ --- this is well-defined since $x$ is not incident to any edge coloured with a colour in $E'$, and by construction of $\hat{G}$ (see Definition~\ref{def:hatG_caseF} (C) and (D)) whenever $\{x,u\}\in E(\hat{G})$ is an edge not coloured with a colour in $E'$, then $\{x',u\}\in E(\hat{G})$ for every $x'\in V_i^j$. Additionally, $\{x,u\}$ and $\{x',u\}$ have the same edge-colour.
Hence, clearly, $\colTprime$ an edge-colourful subgraph of $\hat{G}$ that is isomorphic to $\colT$ (and thus to $T$). For this reason, we obtain that $\colTprime\in \mathcal{T}(i,j)$.

More generally, the   observation 
that $\colTprime\in \mathcal{T}(i,j)$
allows us to define an equivalence relation on $\mathcal{T}(i,j)$: Let $\colT$ and $\colTprime$ be elements of $\mathcal{T}(i,j)$, and let $x$ and $x'$ be the vertices in $\colT$ and $\colTprime$ that satisfy (II) in Definition~\ref{def:invalidTrees}. We set $\colT$ and $\colTprime$ to be equivalent if and only if one can obtained from the other by switching $x$ with $x'$ as defined above. The size of one equivalence class is precisely $|V_i^j|-1=|c^{-1}(v_i^j)|-1$, which is even by the premise of the lemma.
\end{proof}

For the proof of Lemma~\ref{lem:main_forks}, we need to establish some facts about rays and $2$-paths of elements $\colT\in \colsubsval{T}{(\hat{G},\hat{\gamma})}$,
which are
those    $\colT \in \colsubs{T}{(\hat{G},\hat{\gamma})}$ that are valid.
We encapsulate these facts in the next section.

\subsubsection{The Proof of Lemma~\ref{lem:main_forks}}
We first note that, thanks to Lemma~\ref{lem:blah}, it suffices to prove that
\[\#\colsubsval{T}{(\hat{G},\hat{\gamma})} = \#\embs{(\fracture{Q}{\tau},c_\tau)}{(G,c)} \,.\]
This requires some preparation. We
first fix the following objects (recall the definitions of $2$-path, Definition~\ref{def:2path} and ray, Definition~\ref{def:ray}).
\begin{itemize}
    \item $\colT$ is an element of $\colsubsval{T}{(\hat{G},\hat{\gamma})}$
    \item $\colT[G]$ is the graph obtained from $\colT[V(\colT)\cap V(G)]$ with isolated vertices removed. (In fact, our proof will show that, for valid trees $\colT\in \colsubsval{T}{(\hat{G},\hat{\gamma})}$, the induced subgraph $\colT[V(\colT)\cap V(G)]$ cannot have isolated vertices. However, at the current point of the proof, it is easiest to just remove them.)
\item For any positive integer $t$,   $\calR^t$ is the set of length-$t$ rays   in $T$.    $\calP^t$  is the set of length-$t$ $2$-paths in $T$ that  are not rays. 
\item  For any positive integer $t$,   $\colR^t$ is the set of length-$t$ rays in $\colT$ and $\colP^t$ is the set of $2$-paths in $\colT$ that are not rays.  Note that $|\calR^t|=|\colR^t|$ and $|\calP^t|=|\colP^t|$ for all $t$ since $T$ and $\colT$ are isomorphic.
\end{itemize}

We will also rely on the following notion of external rays and $2$-paths.
\begin{definition}\label{def:hi}
A $2$-path $P$   of $\colT$ is called \emph{external} if the following two conditions are satisfied.  
\begin{itemize}
    \item Except for the endpoints, none of the vertices of $P$ is contained in $V(G)$. 
    \item $P$ does not contain an edge of $G$.  
\end{itemize} 
\end{definition}

Definition~\ref{def:hi} applies whether or not $P$ is a ray.
The following lemmas establish that all $2$-paths of $\colT$ of length greater than $b$ must be external.

\begin{lemma}\label{lem:anotherlongRays}
Suppose that $t$ is an integer that is greater than $b$. Suppose that,
for all $t'>t$, every $2$-path in $\colR^{t'} \cup \colP^{t'}$ is external. Then every $2$-path in $\colR^{t} \cup \colP^{t}$ is external. 
\end{lemma}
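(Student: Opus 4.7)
The plan hinges on a structural observation about $\hat G$: each edge of the induced subgraph $T[V(T)\setminus F]$ receives a unique colour under $\hat\gamma$ (namely itself, by construction~(C) of Definition~\ref{def:hatG_caseF}), and this colour appears on no other edge of $\hat G$. Since $\colT$ is edge-colourful, it must contain the entire identity copy of $T[V(T)\setminus F]$ sitting on the $V(T)\setminus F$-vertices added in~(B); denote this subgraph of $\hat G$ by $\hat G_1$. Comparing $|V(\colT)|=|V(T)|=|V(T)\setminus F|+|F|$ with $|V(\hat G_1)|=|V(T)\setminus F|$, the remaining $|F|$ vertices of $\colT$ must sit inside $V(G)$.

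Next I would promote the colour-matching edge bijection $E(T)\to E(\colT)$ (sending $e$ to the unique edge of $\colT$ of colour $e$) to a graph isomorphism $\psi\colon T\to\colT$, and verify that $\psi$ identifies $V(T)\setminus F$ with $V(\hat G_1)$ and $F$ with $V(\colT)\cap V(G)$. For $u\in V(T)\setminus F$ having at least one neighbour in $V(T)\setminus F$ this is immediate, since the corresponding neighbouring edge is realised only between $\hat G_1$-copies. The delicate case is a vertex $u\in V(T)\setminus F$ all of whose $T$-neighbours lie in $F$ (e.g.\ a leaf at the end of an extra ray of length one from some fork $s_i^j$); a degree check handles it, because the only alternative placement of $\psi(u)$ would send $\psi(s_i^j)$ to the copy $u_{\hat G_1}$, whose degree in $\colT$ is at most $\deg_T(u)<3\leq\deg_T(s_i^j)$.

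With $\psi(F)\subseteq V(G)$ in hand, the lemma is essentially forced. Let $P\in\colR^t\cup\colP^t$ and suppose for contradiction that some interior vertex $y_i$ of $P$ lies in $V(G)$. Then $\psi^{-1}(y_i)\in F$ is a degree-$2$ vertex of $T$. By Observation~\ref{obs:iso_forks} we have $T[F]\cong 2kP_{a+b}$, and the only vertex on each path of degree greater than $2$ is the source $s_{i'}^{j'}$; hence every degree-$2$ vertex of $F$ lies internally in one of the rays $F_a(i',j')$ or $F_b(i',j')$, i.e., in a unique 2-path of $T$ of length $a$ or $b$. Transporting along $\psi$, the unique 2-path of $\colT$ containing $y_i$ as an interior vertex---which is $P$---must have the same length, so $t\in\{a,b\}$, contradicting $t>b\geq a$. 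Consequently every interior vertex of $P$ lies in $V(\hat G_1)$, and each edge of $P$ incident to an interior vertex is then of type~(C) or~(D) in Definition~\ref{def:hatG_caseF}, and hence not in $G$. Both conditions of Definition~\ref{def:hi} are satisfied, so $P$ is external.

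The main obstacle is the bookkeeping in the second paragraph---proving that $\psi$ really does send $V(T)\setminus F$ onto $V(\hat G_1)$, carefully handling the isolated short paths of $T[V(T)\setminus F]$ coming from extra rays at the sources. Once this identification is available, the contradiction in the third paragraph follows purely from $t>b$ and the fact that the rays in $T[F]$ have length at most $b$. Interestingly, this plan does not invoke the inductive hypothesis that 2-paths of length greater than $t$ are already external; I would keep that hypothesis available as a backup in case a subtlety in the identification of $\psi$ forces us to exclude long spurious 2-paths inside $V(G)$ as a preliminary step.
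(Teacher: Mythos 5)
There is a genuine gap, and it sits at the heart of your plan: the map $\psi$. You propose to ``promote the colour-matching edge bijection $E(T)\to E(\colT)$ to a graph isomorphism''. But that bijection is in general \emph{not} induced by any isomorphism: two edges that are incident in $T$ may be mapped to non-incident edges of $\colT$ (and vice versa). Indeed, measuring exactly this discrepancy is the whole point of the fracture $\rho(\colT)$; if the colour-matching bijection were automatically an isomorphism, $\rho(\colT)$ would be forced and most of Lemma~\ref{lem:main_forks} would be vacuous. If you instead take an arbitrary isomorphism $T\cong\colT$ (which does exist), nothing guarantees it sends $F$ onto $V(\colT)\cap V(G)$: a priori $\colT[G]$ could concatenate differently-coloured edges of $G$ into a long $2$-path whose preimage under the isomorphism lies partly in $V(T)\setminus F$, as long as the global isomorphism type is still $T$. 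Ruling out precisely such configurations is the content of the entire chain Lemma~\ref{lem:longRays}, Claims 1--7 in the proof of Lemma~\ref{lem:main_forks}; your ``degree check'' for the delicate case only shows that certain copies are leaves of $\colT$, not that the isomorphism respects the partition $V(T)=F\,\dot\cup\,(V(T)\setminus F)$. So your third paragraph assumes, in effect, the conclusion of the much harder structural analysis that this lemma is a stepping stone towards.

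Your own remark that the plan ``does not invoke the inductive hypothesis'' is the symptom. The hypothesis is essential: in the paper's argument one builds a bijection $f$ from the length-$t$ rays of $T$ to those of $\colT$ (and likewise $g$ for non-ray $2$-paths), and the only way to rule out that the $G$-vertex $x\in V_i^j$ attached to one of the $d_i^j$ extra rays has degree $2$ in $\colT$ --- which would splice that ray into a strictly longer non-external $2$-path --- is to invoke the assumption that all $2$-paths of length $t'>t$ are already external. The counting step ($f$ injective plus $|\calR^t|=|\colR^t|$, hence surjective onto $\colR^t$) is then what certifies that no further length-$t$ rays can hide inside $G$. Your proposal has no substitute for either the degree-$2$ exclusion or the counting step, so as written it would fail.
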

\begin{proof}
We first construct a bijection $f$ from $\calR^t$ to $\colR^t$. 
We will use this bijection to argue that every ray in $\colR^t$ is external.
In order to define the bijection,  consider a ray $R=r_0,r_1,\dots,r_t$  in $\calR^t$. Since $t>b\geq a$, $R$ is not one of the designated rays $F_a(i,j)$ and $F_b(i,j)$. If $r_0$ is not among the designated sources $s_i^j$, then, by the construction of~$\hat{G}$, $R$ is contained in $T'$ and thus 
$R\in \colR^t$. In this case $R$ is external and we set $f(R):=R$. Alternatively, suppose that $r_0 = s_i^j$ for some $i$ and $j$. Then $R$ must be one of the $d_i^j$ black rays in Figure~\ref{fig:G} (see Definition~\ref{def:substructs_forks}). 
By the construction of $\hat{G}$ and the fact that $\colT$ is edge-colourful, 
there is a vertex $x\in V_i^j$ such that $\colT$ contains the path $x,r_1,\ldots,r_t$.
In $\colT$, as in~$T$, the vertices $r_1,\ldots,r_{t-1}$ have degree~$2$ and the vertex~$r_t$ has degree~$1$.
Vertex $x$ cannot have degree $1$ in $\colT$ since this would disconnect $\colT$. 
Also, vertex $x$  $x$ cannot have degree $2$: To see this, assume for contradiction that $x$ has degree $2$. Then there is an integer $t'>t$ and a ray $R'\in \colR^{t'}$ the last vertices of which are $x,r_1,\dots,r_t$. Since $x$ is not an endpoint of the ray and since $x\in V(G)$, the ray $R'$ is not external, contradicting the premise of the lemma. Hence $x$ has degree at least $3$ and therefore $f(R):=x,r_1,\dots,r_t$ is an external ray of $\colT$.  The function~$f$ is   injective by construction. Since $\colT$ and $T$ are isomorphic,   $|\calR^t|=|\colR^t|$ and thus $f$~is a bijection. Since the image of~$f$ only contains external rays, we have shown that every element of~$\colR^t$ is external. 

Every ray in the image of $f$ has the property that its   degree-$1$ endpoint is not contained in $V(G)$. Since the image of $f$ is $\colR^t$, we obtain

\medskip

\noindent $(\ast)$ Every ray in $\colR^t$ has the property that its degree-1 endpoint is not contained in $V(G)$.

\medskip

To complete the proof, we show that every $2$-path in $\colP^t$ is external.  Following the same strategy that we used before, we construct a bijection $g$ from  $\calP^t$ to $\colP^t$. Every 2-path in the range of~$g$ is external, so we will conclude that every element of $\calP^t$ is external. In order to define the bijection, consider a $2$-path  $P=p_0,\dots,p_t$  in $\calP^t$. If neither of the endpoints of~$P$ is
among the designated sources $s_i^j$, then $P$ is contained in $T'$ and thus $P\in \calP^t$. In this case, $P$ is external and we set   $g(P):=P$. If exactly one endpoint of~$P$ is among the designated sources, say $p_0=s_i^j$, then there  is a vertex $x\in V_i^j$ such that $x,p_1,\dots,p_t$ is a path in $\colT$.
The vertices $p_1,\ldots, p_{t-1}$ have degree~$2$ in $\colT$ (as in~$T$) and the vertex $p_t$ has degree at least~$3$.

If $x$ has degree $1$ in~$\colT$, the ray $R=p_t,\dots,p_1,x$ is in $\colT$, and 
its degree-1 endpoint~$x$ is   in $V(G)$, contradicting $(\ast)$. Hence $x$ cannot have degree~$1$ in~$\colT$. Similarly, $x$ cannot have degree $2$, since this would create a 2-path longer than $t$ in $\colT$ that is not external, which contradicts   the premise of the lemma. Hence $x$ has degree at least $3$, and thus $g(P):=x,p_1,\dots,p_t$ is an external $2$-path in $\colP^t$.

For the last case, suppose that both endpoints of~$P$ are among the designated sources, say $p_0=s_i^j$ and $p_t=s_{i'}^{j'}$. Then there are $x$ and $y$ in, respectively, $V_i^j$ and $V_{i'}^{j'}$ such that $x,p_1,\dots,p_{t-1},y$ is a path in $\colT$. Again, $p_1,\dots,p_{t-1}$ must all have degree $2$ in $\colT$ as well. We show that both $x$ and $y$ have degree at least $3$ in $\colT$: If both have degree $1$, then $\colT$  is disconnected. If one of them has degree $1$ and the other one has degree at least $3$, then we created a ray of length $t$ whose degree-1 endpoint in in $V(G)$, contradicting $(\ast)$. If one has degree 
$1$ and the other one has degree $2$, then we found a ray longer than $t$ which is not external, contradicting the premise of the lemma. If one has degree~$2$ and the other has degree at least~$2$, then there is a non-external 2-path longer than $t$, again contradicting the premise of the lemma.
Thus, as desired, both must have degree at least $3$. Therefore, $g(P):=x,p_1,\dots,p_{t-1},y$ is an external $2$-path in $\colP^t$.  The function~$g$ is  injective by construction. Since $\colT$ and $T$ are isomorphic,   $|\calP^t|=|\colP^t|$ and thus $g$ is a bijection. Since the image of $g$ only contains external $2$-paths, we have shown that every element of $\colP^t$  is external, concluding the proof.
\end{proof}
 
\begin{lemma}\label{lem:longRays}
Suppose that $t$ is an integer that is greater than $b$. Then
 every $2$-path in $\colR^{t} \cup \colP^{t}$ is external.
\end{lemma}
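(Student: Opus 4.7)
The proof will proceed by downward (reverse) induction on $t$, with Lemma~\ref{lem:anotherlongRays} supplying the inductive step for free. The plan is as follows.

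First, I would establish the base case vacuously. Since $\colT$ is a finite graph isomorphic to $T$, there is some integer $N$ (for concreteness, $N = |V(T)|$) such that $\colR^{t'} \cup \colP^{t'} = \emptyset$ for every $t' > N$. Hence the statement ``every $2$-path in $\colR^{t'} \cup \colP^{t'}$ is external'' holds trivially for every $t' > N$, since there are no such $2$-paths to check.

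Next, I would perform downward induction on $t$ from $N$ down to $b+1$. Fix $t$ with $b < t \leq N$, and assume as induction hypothesis that for every $t' > t$, every $2$-path in $\colR^{t'} \cup \colP^{t'}$ is external. This is exactly the hypothesis required by Lemma~\ref{lem:anotherlongRays}, so applying that lemma immediately yields that every $2$-path in $\colR^t \cup \colP^t$ is external. This completes the induction and establishes the claim for all integers $t > b$.

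No step here is really an obstacle: all of the genuine combinatorial content (the bijections $f$ and $g$, and the case analysis on vertex degrees of alleged non-external rays or $2$-paths) is already encapsulated in Lemma~\ref{lem:anotherlongRays}. The only thing to be careful about is that the induction is well-founded, which is guaranteed by the finiteness of $\colT$, giving the vacuous base case at any length exceeding $|V(T)|$. Thus the lemma follows directly by iterating Lemma~\ref{lem:anotherlongRays}.
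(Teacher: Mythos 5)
Your proposal is correct and is essentially the same argument as the paper's: the paper also proves the lemma by downward induction (on $\tmax - t$, where $\tmax$ is the largest length of any nonempty $\colR^{t}\cup\colP^{t}$), with the base case holding vacuously and Lemma~\ref{lem:anotherlongRays} supplying every inductive step. The only cosmetic difference is that you anchor the induction at $N=|V(T)|$ rather than at $\tmax$, which changes nothing.
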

\begin{proof}

Let $\tmax$ be the maximum integer for which $\calR^\tmax \cup \calP^\tmax$ is nonempty. 
Let $\Phi_t$ be the proposition  
``$t\leq b$ or
every $2$-path in $\colR^{t} \cup \colP^{t}$ is external''.

We will show by induction on $\tmax - t$ that $\Phi_t$ holds.
The base case arises when $\tmax-t=0$, so $t=\tmax$. 
If  $\tmax \leq b$ then  $\Phi_{t}$ is satisfied.
Otherwise, for each $t'>t$, the set $\colR^{t'} \cup \colP^{t'}$ is empty and we can invoke Lemma~\ref{lem:anotherlongRays} to conclude  that $\Phi_{t}$ holds. 

For the induction step,  consider $t$ such that   $\tmax - t\geq 1$.
By the induction hypothesis, $\Phi_{t'}$ holds for all $t'\in \{t+1,\ldots, \tmax\}$.
If $t\leq b$ then $\Phi_t$ holds. Otherwise, 
for all $t' > t > b$, we know from $\Phi_{t'}$ that every 2-path in 
$\colR^{t'} \cup \colP^{t'}$ is external. We can then apply  Lemma~\ref{lem:anotherlongRays} to conclude that every $2$-path in $\colR^{t} \cup \colP^{t}$ is external.

\end{proof}

Before proceeding with the proof of Lemma~\ref{lem:main_forks}, we provide an overview of the central steps of the proof. Recall that
it suffices to prove that
\[\#\colsubsval{T}{(\hat{G},\hat{\gamma})} = \#\embs{(\fracture{Q}{\tau},c_\tau)}{(G,c)}\]
and that we have a fixed an element
$\colT$   of $\colsubsval{T}{(\hat{G},\hat{\gamma})}$ and proved various properties about it.

\begin{enumerate}[(1)]
\item \label{firstgoal}
Our goal is to show that $\colT$ is embedded in $\hat{G}$ in the following manner (see Figure~\ref{fig:K}). For each $(i,j)\in[k]\times[2]$, $\colT$ contains a ray $R_a(i,j)$ of length $a$ and a ray $R_b(i,j)$ of length $b$; those rays correspond to the designated rays $F_a(i,j)$ and $F_b(i,j)$ in $T$ (recall that $T$ and $\colT$ are isomorphic.)
\begin{enumerate}
\item \label{bul1}
$T'$ is part of $\colT$. 
\item \label{bul2} For every $i\in [k]$ and $j\in[2]$,
the  vertices $p_i^j$ in $T'$  is connected to a vertex $w_i^j$ of~$G$ with $c(w_i^j) = v_i^j = \gamma(s_i^j)$.  In $\colT$, the vertex $w_i^j$ is the source of $d_i^j$ rays 
other than $R_a(i,j)$ and $R_b(i,j)$. The vertices of these $d_i^j$ rays
  are not in~$T'$ and are not in~$G$. The edge colours of the edges in these rays in $\hat{\gamma}$   are the same as the edge-names in~$T$ (see Definition~\ref{def:hatG_caseF} (C)).
\item  \label{bul3}
The length-$a$ ray  $R_a(i,1)$ is a path in $\colT$ from $w_i^1$ to the vertex~$u_a(i,1)$ of~$G$
 with some colour~$c(u_a(i,1))$ (a vertex of $Q$).
 This colour $c(u_a(i,1))$ corresponds to the vertex ``$s$'' in the gadget of the vertex $v_i$ of $\Delta$ (see Definition~\ref{def:Q_caseF} and Figure~\ref{fig:def_Q_fork}). 
 \item \label{bul4} The length-$b$ ray $R_b(i,1)$ is a path in $\colT$ from $w_i^1$ to the vertex~$u_b(i,1)$ of~$G$
 with some colour~$c(u_b(i,1))$ (a vertex of $Q$).
 This colour $c(u_b(i,1))$ corresponds to the vertex ``$m$'' in the gadget of the vertex $v_i$ of $\Delta$ (see Definition~\ref{def:Q_caseF} and Figure~\ref{fig:def_Q_fork}). 
 
  \item \label{bul5} The length-$b$ ray $R_b(i,2)$ is a path in $\colT$ from $w_i^2$ to the vertex~$u_b(i,2)$ of~$G$
 with some colour~$c(u_b(i,2))$ (a vertex of $Q$).
 This colour $c(u_b(i,2))$ corresponds to the vertex ``$\ell$'' in the gadget of the vertex $v_i$ of $\Delta$ (see Definition~\ref{def:Q_caseF} and Figure~\ref{fig:def_Q_fork}).

 \item \label{bul6} The length-$a$ ray $R_a(i,2)$ is a path in $\colT$ from $w_i^2$ to the vertex~$u_a(i,2)\neq w_i^1$ of~$G$
 with some colour~$c(u_a(i,2))=\gamma(s_i^1)=v_i^1$ (recall that the colour is a vertex of $Q$).

  \item \label{bul7}
  For every edge $e=\{v_i, v_{i'}\}$ in~$\Delta$, $u_a(i,1) \neq u_a(i',1)$, $u_b(i,1) \neq u_b(i',1)$ and $u_b(i,2) \neq u_b(i',2)$.

\end{enumerate}

\item \label{secondgoal}
We now make some observations about the fracture $\rho= \rho(\colT)$ from Definition~\ref{def:rho_caseF}, given that $\colT$ is embedded in $\hat{G}$ as described in Item~\eqref{firstgoal}.

\begin{itemize}
\item The definition of $Q$ (Definition~\ref{def:Q_caseF}) tells us that, for every edge $e=\{v_i, v_{i'}\}$ in~$\Delta$, 
there is a degree-$2$ vertex $y$ of $Q$ that connects the gadgets of $v_i$ and $v_{i'}$.  
Vertex $y$ corresponds to the vertex
 $C(e)\in\{s,m,\ell\}$ in the two gadgets. Suppose without loss of generality that $C(e)=s$.
 The other cases are similar.
From~\eqref{bul3} the colour $C(e)=s$ is the same as $c(u_a(i,1))$ and $c(u_a(i',1))$. From~\eqref{bul2}  $c(w_i^1)=v_i^1$ and $c(w_{i'}^1)=v_{i'}^1$.
Since  $\colT$ is colourful and the embedding is as in \eqref{firstgoal}, the edges of the ray from $w_i^1$ to $u_a(i,1)$   have different edge colours to the ray from $w_{i'}^1$ to $u_a(i',1)$.
Thus, the edge in $G$ in the first ray that is adjacent to $u_a(i,1)$ (call it $e_i$) has a different colour from the edge n $G$ in the second ray that is adjacent to $u_a(i',1)$ (call it $e_{i'}$). Concretely, we have $c_E(e_i)=\{s,x\}$ and $c_E(e_{i'})=\{s,x'\}$ where $x$ and $x'$ are the neighbours of $s$ (in $Q$) in the gadgets of $v_i$ and $v_{i'}$, respectively. By~\eqref{bul7} we have $u_a(i,1) \neq u_a(i',1)$ and thus, by definition of $\rho$ (Definition~\ref{def:rho_caseF}), $\rho_y$ consists of two singleton blocks.
Similar arguments show that $\rho$ coincides with $\tau$ (see Definition~\ref{def:tau_caseF})
at every vertex of $Q$ that corresponds to vertex ``$s$'', ``$\ell$'' or ``$m$'' 
in any gadget corresponding to any vertex~$v_i$ of~$\Delta$.
\item We now continue with the vertices $v_i^1$ for $i\in[k]$ of $Q$. 
See Figure~\ref{fig:def_Q_fork} for the gadget containing $v_i^1$ in $Q$ and Figure~\ref{fig:K} for the graph $\hat{G}$. We will use ``$s$'', ``$\ell$'' and ``$m$''
as the names of these vertices in the gadget containing~$v_i^1$.
The vertex~$v_i^1$ has degree $3$ and is connected to $s$ via a path of length $a$, to $m$ via a path of length $b$ and to $\ell$ via a path of length $a+b$. Let $y_s$, $y_m$, and $y_\ell$ be the successors of $v_i^1$ on those paths, that is, the   edges incident to~$v_i^1$ in~$Q$ are $e_s:=\{v_i^1,y_s\}$, $e_m:=\{v_i^1,y_m\}$, an $e_\ell:=\{v_i^1,y_\ell\}$. Now, by \eqref{bul3} and \eqref{bul4}, the edges of $\colT$ that are coloured (by $c_E$) with $e_s$ and $e_m$ are $\{w_i^1,r_a\}$ and $\{w_i^1,r_b\}$, where $r_a$ and $r_b$ are the successors of $w_i^1$ on the rays $R_a(i,1)$ and $R_b(i,1)$, respectively. Furthermore, by \eqref{bul6}, the edge of $\colT$ that is coloured (by $c_E$) with $e_\ell$ is $\{u_a(i,2),\hat{r}\}$ where $\hat{r}$ is 
the vertex in the ray $R_a(i,2)$ that is adjacent to $u_a(i,2)$. Since $u_a(i,2)\neq w_i^1$ (by \eqref{bul6}), 
the edge $\{u_a(i,2),\hat{r}\}$ is not incident to either 
 $\{w_i^1,r_a\}$  or $\{w_i^1,r_b\}$. Thus $\rho_{v_i^1}=\{\{e_s,e_m\},\{e_\ell\}\}$ which coincides with $\tau_{v_i^1}$ 
 by Definition~\ref{def:tau_caseF}.
 So $\tau$ and $\rho$ coincide at vertex $v_i^1$. 
\item Next are the vertices $v_i^2$ for $i\in[k]$ (see Figure~\ref{fig:def_Q_fork}).  This case is easy. If $\colT$ is embedded as described in \eqref{firstgoal} (see Figure~\ref{fig:K}), then, for each $i\in[k]$, there is only one vertex of $\colT$ 
which is coloured by~$c$ with colour~$v_i^2$. This vertex is $w_i^2$. Thus every edge of $\colT$ whose edge colour includes $v_i^2$ is  incident to $w_i^2$. Hence $\rho_{v_i^2}$ only consists of one block, which coincides with $\tau_{v_i^2}$ by Definition~\ref{def:tau_caseF}.
\item Finally, every remaining vertex of $Q$ 
(see Figure~\ref{fig:def_Q_fork})
has degree $2$. Let $y$ be such a vertex and let $y_1$ and $y_2$ be the neighbours of $y$. Then the edges of $\colT$ coloured by $c_E$ with $\{y,y_1\}$ and $\{y,y_2\}$ must be successive edges on one of the rays $R_a(i,1)$, $R_b(i,1)$, $R_a(i,2)$, or $R_b(i,2)$.  So
these successive edges are both incident to the vertex of the ray that is coloured~$y$ by~$c$.   Thus $\rho_y$ only consists of one block, which coincides with $\tau_{y}$.
\end{itemize}
Since we have shown
that the fractures~$\rho$ and~$\tau$ coincide at every vertex of~$Q$, we conclude that
$\rho=\tau$.
 
 \item We next explain why it is useful to have $\rho=\tau$.
 Recall that our goal is
 to prove that
$\#\colsubsval{T}{(\hat{G},\hat{\gamma})} = \#\embs{(\fracture{Q}{\tau},c_\tau)}{(G,c)}$
and that  
$\colT$ is an element of $\colsubsval{T}{(\hat{G},\hat{\gamma})}$.
Our method will be to   show that the function $\beta$
defined by $\beta(\colT) = \colT[G]$  is a bijection from  $\colsubsval{T}{(\hat{G},\hat{\gamma})}$ to $\embs{(\fracture{Q}{\tau},c_\tau)}{(G,c)}$. 
It will turn out that this implies that the
  embedding $\rho$ coincides with $\tau$.

\item In order to prove Item~\eqref{firstgoal} we will proceed as follows.

\begin{enumerate}[(i)]
\item  We show that all $2$-paths (including rays) of $\colT$  are external, \emph{except} for $2k$ rays of length $b$ and $2k$ rays of length $a$. Note that we already established this claim for $2$-paths of lengths greater than $b$ in Lemma~\ref{lem:longRays}.
\item  \label{ss} Then we show that $\colT$   contains two degree-$1$ vertices  in each of the vertex sets $L$ and $M$ of $G$ (within $\hat{G}$) --- see Figure~\ref{fig:K}, recalling
  that, for each vertex gadget, the sets $L$ and $M$ denote the vertex subsets of $G$ that are coloured by $c$ with $\ell$ and $m$. 
The point of this is that we will also prove that $\colT$ has two degree-$1$ vertices in~$S$  (Item \ref{blahblah}) --- this will split off the part of $\colT$ corresponding to a single gadget, so we will only have to 
study the embedding of $\colT$ within each gadget. 
We  prove the claim about $L$ and $M$ by using the fact that $\colT$ is isomorphic to $T$ and that all $2$-paths longer than $b$  are external.
This implies that 
if $v_i$ and $v_{i'}$ are the two vertices of~$\Delta$ sharing this gadget then
the 2-paths between $V_i^2$ and $V_{i'}^2$ are covered by two rays in $\colT$, both of which end in $L$.

\item \label{zzz} We  next show that the  degree-$1$ vertices in \eqref{ss} are the endpoints of $2k$ rays of length $b$.
We have already seen that for each of the $k$ gadgets the endpoints of these rays are in $L$ and $M$.  For the $i$'th gadget, the sources are in    $V^1_i$ and $V^2_i$   
If $b>a$ then we show  that all remaining $2$-paths of length $b$ and also  all $2$-paths 
with lengths in $a+1,\ldots,b-1$ are external. The proof of this claim  relies on the same arguments as the proof of Lemma~\ref{lem:longRays}.
\item  
\label{blahblah}  
 Next, we  show that 
 for each gadget,
 $\colT$  contains two degree-$1$ vertices   in~$S$  --- see Figure~\ref{fig:K}. 
 The proof uses the fact that   all $2$-paths longer than $a$ that are not covered by~\eqref{zzz} are external.  
\item \label{pointfive}

We next show that 
the   degree-$1$ vertices in 
\eqref{blahblah}  are the endpoints of   $2k$ rays of length $a$. We have already seen that for each of the $k$ gadgets the endpoints of these rays are in $S$. For the $i$'th gadget, the source  is in  
  $V^1_i$.
\item 
The remaining details of the proof rely  on the fact that the tree $\colT$ is valid.

\end{enumerate}
\end{enumerate}

We now provide the proof in detail; for convenience, we also restate the lemma.

\mainforks*
\begin{proof}
We will prove that for any $\colT \in \colsubsval{T}{(\hat{G},\hat{\gamma})}$,
Item~\eqref{firstgoal} of the proof overview holds.

Using this fact and the argument from Item~\eqref{secondgoal} of the proof overview, we conclude that
for any $\colT \in \colsubsval{T}{(\hat{G},\hat{\gamma})}$,
$\rho(\colT)=\tau$.

Recall that every edge-colourful subgraph of $G$ induces a fracture of~$Q$. 

Let $G'$ be an element of $\embs{(\fracture{Q}{\tau},c_\tau)}{(G,c)}$. This means that $G'$ is
an edge-colourful subgraph of $G$ that induces~$\tau$.
We wish to see how  $G'$ can be extended to some $\colT' \in\colsubsval{T}{(\hat{G},\hat{\gamma})}$.
We know from Item~\eqref{firstgoal} that any $\colT'' \in \colsubsval{T}{(\hat{G},\hat{\gamma})}$
can only be embedded in $\hat{G}$ in one way, so $G'$ can only be extended in one way.
The details are as follows.
We claim that there is only one possible extension because $T'$ has to be included and item (b) of \eqref{firstgoal} ensures that, for each $j\in[2]$, the vertex $p_i^j$ is connected to $w_i^j$. The rest of \eqref{firstgoal} shows the unique way to include the rays, so the extension is unique.
   
Let $\beta$ be the function from $\colsubsval{T}{(\hat{G},\hat{\gamma})}$
that maps any element $\colT$ to $\colT[G]$. 
Note that $\colT[G] \in \embs{(\fracture{Q}{\tau},c_\tau)}{(G,c)}$ since $\rho(\colT)=\tau$ and $\rho(\colT)$ is a function of $\colT[G]$.
Let
$\beta'$ be the function that maps an element of $\embs{(\fracture{Q}{\tau},c_\tau)}{(G,c)}$
to its unique extension in $\colsubsval{T}{(\hat{G},\hat{\gamma})}$.  Note that $\beta \circ \beta'$ and $\beta' \circ \beta$ are both the identity. Therefore $\beta$ is a bijection and
$| \colsubsval{T}{(\hat{G},\hat{\gamma})}| = |\embs{(\fracture{Q}{\tau},c_\tau)}{(G,c)}|$.The lemma follows from Lemma~\ref{lem:blah}.
 
 \begin{figure}
    \centering
    \includegraphics[scale=0.8]{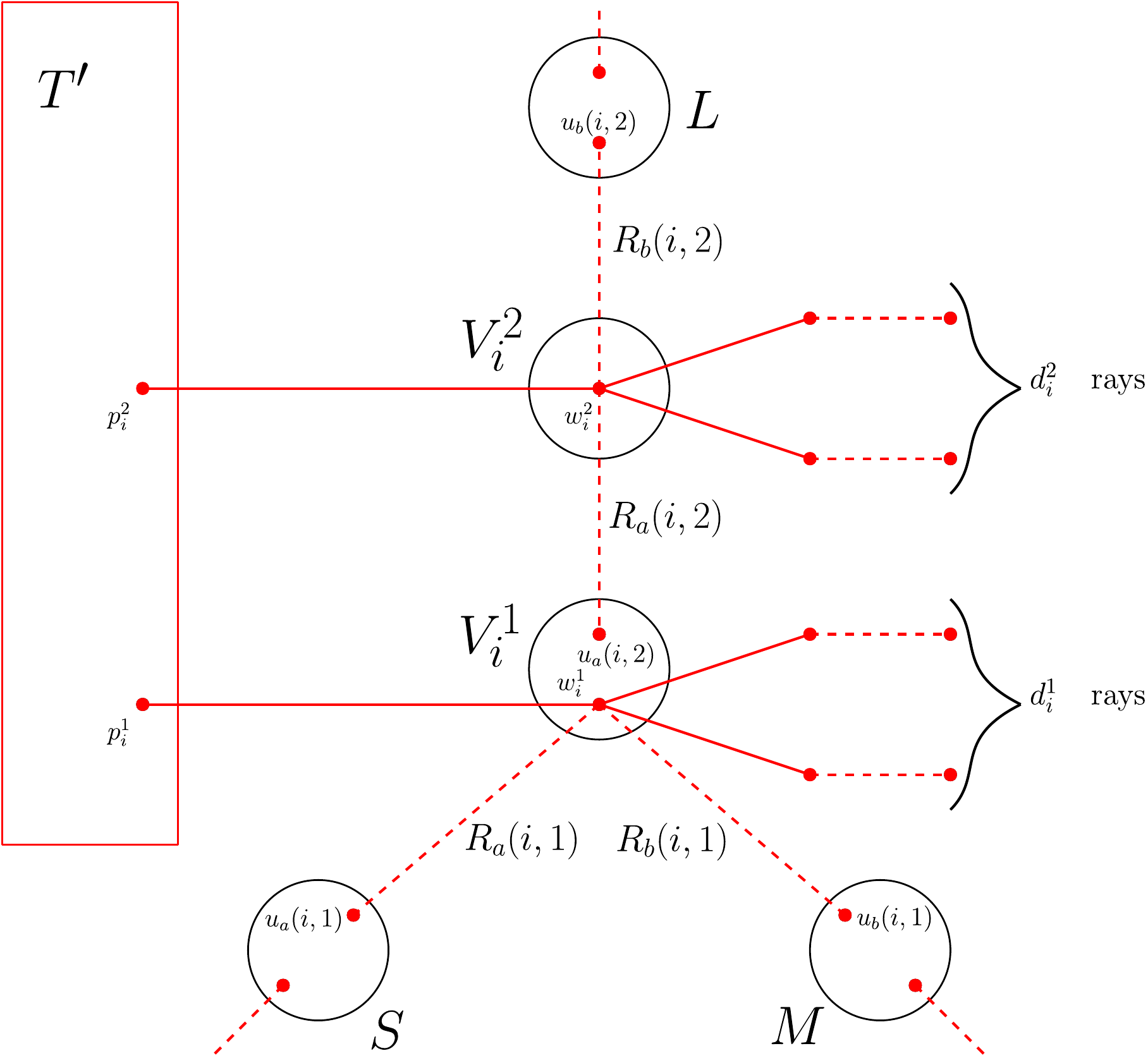}
    \caption{An embedding $\colT$ of $T$ in $\hat{G}$ that yields the fracture $\tau$. We will show that this is the only way to embed $T$ in $\hat{G}$ in such a way that each edge-colour is used precisely once. Note that dashed lines depict paths in $\colT$, and solid lines depict edges in $\colT$.}
    \label{fig:K}
\end{figure}

To finish the proof, we will 
fix $\colT \in \colsubsval{T}{(\hat{G},\hat{\gamma})}$ and we will show that
Item~\eqref{firstgoal} of the proof overview holds.  
Part (a) of \eqref{firstgoal} is trivial since  $\colT$ is edge-colourful so it contains $T'$.  
The first sentence of (b) is also trivial.
We will next focus on (c)--(g), noting along the way when the rest of (b) is proved.

Recall from Definition~\ref{def:Q_caseF} that, for each $i\in[k]$, the graph $Q$ contains 
\begin{itemize}
\item for each vertex $v_j$ such that $\Delta$ has an edge $e=\{v_i,v_j\}$ with $C(e)=m$,
a path $P_{i,j}$   of length $2b$ from $v_i^1$ to $v_j^1$, and
\item for each vertex $v_j$ such that $\Delta$ has an edge $e=\{v_i,v_j\}$ with $C(e) = \ell$,
a path $P_{i,j}$ of length $2b$ from $v_i^2$ to $v_{j}^2$.
\end{itemize}

Recall from Definition~\ref{def:cE} that $c_E$ maps edges of~$G$ to edges of~$Q$.
Furthermore, $G$ is a subgraph of~$\hat{G}$, see  Definition~\ref{def:hatG_caseF} (A).
Let  $\colT(i,j)$  be the subgraph of $\colT[G]$ induced by edges~$e$ of~$G$ such that $c_E(e)$ is in the path $P_{i,j}$   

By construction, $\colT(i,j)$ is the union of some number of paths.
We will next argue that it is the union of exactly two disjoint length-$b$ paths:
\begin{itemize}
\item If $\colT(i,j)$ has more than two components then at least one component is disconnected from $T'$ in $\colT$, contradicting the fact that $\colT$ is a tree. 
\item If $\colT(i,j)$ is a single path then it is contained in a 2-path of length at least $2b$. Since this $2$-path contains an edge in~$G$, it is not external (Definition~\ref{def:hi}). This contradicts Lemma~\ref{lem:longRays}. 
\item If $\colT(i,j)$ is the union of exactly two disjoint paths, one of which has length larger than~$b$
then this larger 2-path is contained in a 2-path that is not external contradicting Lemma~\ref{lem:longRays} 
\end{itemize}

\begin{figure}
    \centering
    \includegraphics[scale=0.85]{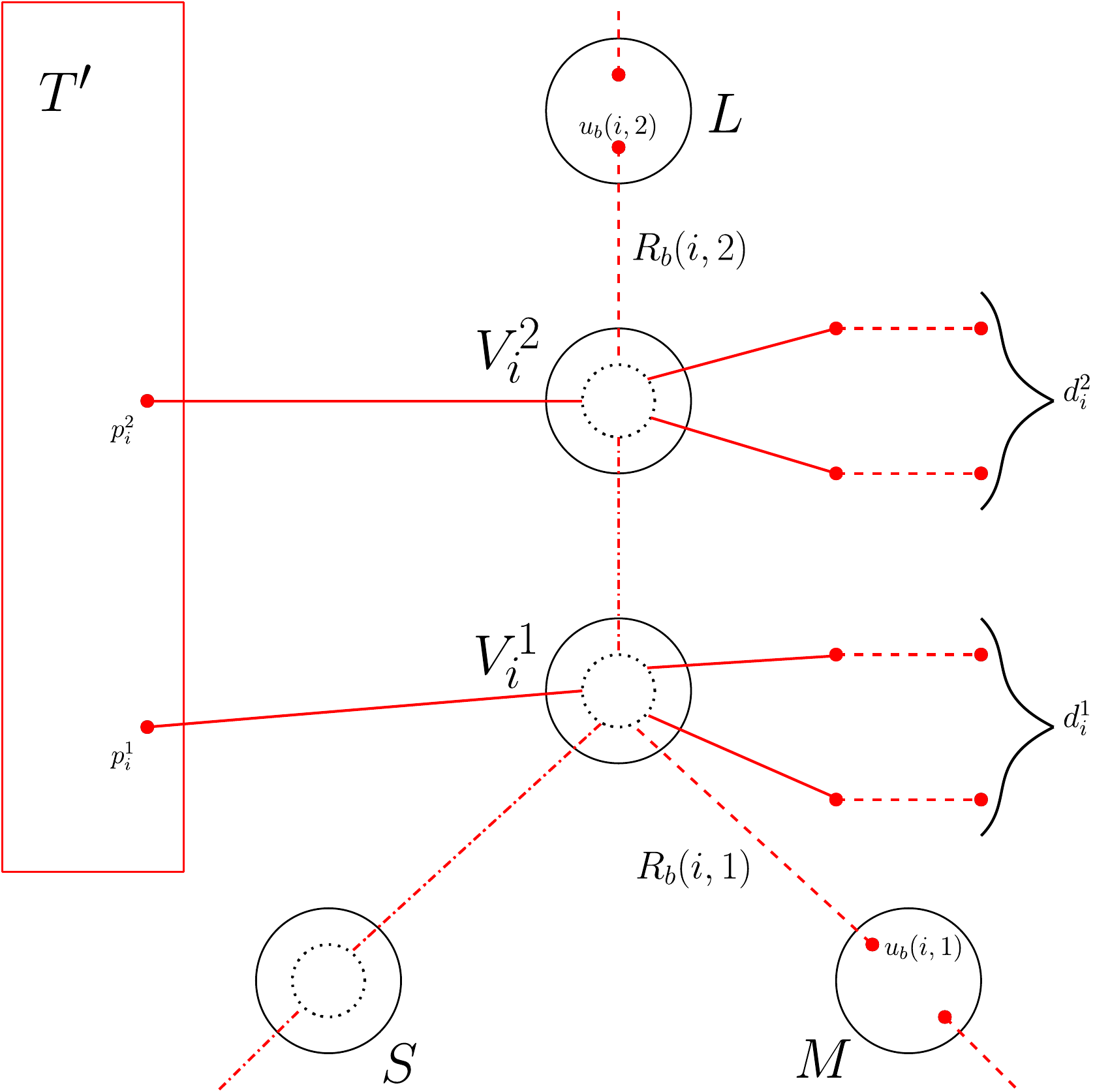}
    \caption{Illustration of the embedding of $\colT$ after the rays of length $b$ are analysed. Solid lines depict edges, dashed lines depict paths, and dash-dotted lines depict sequences of edges (the identification of the endpoints of which we have not yet been determined). Note that both $R_b(i,1)$ and $R_b(i,2)$ must be of length $b$. Except for those two rays, the identification of endpoints of the remaining edges that are incident to $G$ (within $\hat{G}$) has not been determined yet either; this is depicted by the dotted circles inside the colour classes. The fracture $\rho$ induced by $\colT$ will depend on the identification of the edges of $\colT$, both endpoints of which lie in $G$. The goal is to show that the endpoints have to be identified precisely as depicted in Figure~\ref{fig:K}.}
    \label{fig:L}
\end{figure}

What we have shown is that $T(i,j)$ consists of two length-$b$ paths.
For some $t\in \{1,2\}$, one of these paths is from $V_i^t$ and the other is from $V_j^t$.  
To be more precise and to fix the notation for $t=1$, we have now shown that,  for each $i\in[k]$, $\colT[G]$ contains a path $R_b(i,1)$ of length $b$ that starts at a vertex $w_i^1\in V_i^1$.
We refer to the other end of this path as $u_b(i,1)$. The vertex $u_b(i,1)$ has degree $1$ and is contained in $M$ (i.e., in $c^{-1}(m)$). 
We next argue 
that $w_i^1$  has degree at least $3$ in $\colT$. (See Figure~\ref{fig:L}.)
\begin{itemize}
\item If $w_i^1$ has degree $1$ in $\colT$ then $\colT$ is disconnected, contradicting the fact that it is a tree.
\item If $w_i^1$ has degree $2$ in $\colT$, then 
$\colT$ has a ray   of length at least $b+1$ that is not external, which is again a contradiction.
\end{itemize}
By the same reasoning, $\colT$ contains a ray $R_b(i,2)$ of length $b$ 
that starts at a vertex $w_i^2 \in V_i^2$ and ends at a vertex $u_b(i,2)$.
The ray $R_b(i,2)$ is contained in $\colT[G]$.

We have just finished parts (d) and (e) of~\eqref{firstgoal} and the part of (g) that concerns   length~$b$. So what we have shown corresponds to Figure~\ref{fig:L}. We would now like to prove parts (c) and (f) but unfortunately these are more difficult because we have to show where the rays with lengths between $a$ and $b$ are embedded so that we can argue about where the length-$a$ rays are embedded.

Define
 $\hat{\calR}:=\bigcup_{i=1}^k \{R_b(i,1),R_b(i,2) \}$. 
Recall that $k$, $a$, and $b$ are positive integers with $a\leq b$ and $k\geq 2$ and that $T$ has $\fork_{a,b}(T) \geq 2k$ and $\colT \cong T$.  
Also, $\colR^b$ is the set of length-$b$ rays in $\colT$ and $\calR^b$ is the set of length-$b$ rays in $T$.
(See Figure~\ref{fig:G}.) Using the notation that we have established,
we will prove the following claims.

\medskip

\noindent \textbf{Claim 1:} Let $P\in (\colR^{b} \setminus \hat{\calR}) \cup \colP^{b}$. If $a < b$   then $P$ is external.

We prove   Claim 1 for the case where $ P \in \colR^{b} \setminus \hat{\calR}$.  The other case is similar but easier.

Observe that $|\calR^{b}|\geq 2k$ since $\fork_{a,b}(T)\geq 2k$.  So $\calR^b$ can be partitioned as follows
\begin{itemize}
    \item $\calR^b[S]$ is the set of the $2k$ length-$b$ rays $F_b(i,j)$ whose sources are $s_1^1,\ldots, s_k^2$ and which   are depicted  as red dashed lines in   Figure~\ref{fig:G}.
    \item $\calR^b[T] = \calR^b\setminus \calR^b[S]$ contains the remaining rays of length $b$. 
\end{itemize}
Our goal is to show that all rays in $\colR^b \setminus \hat{\calR}$ are external. 
To do this, we first show that $|\calR^{b}[T]| = |\colR^{b}\setminus \hat{\calR}|$ and we then provide an injection from $\calR^{b}[T]$ to $ \colR^{b}\setminus \hat{\calR}$ in which all elements of the range are external rays.

To show that $|\calR^{b}[T]| = |\colR^{b}\setminus \hat{\calR}|$, first note that $|\calR^b|=|\colR^b|$ because $T$ and $\colT$ are isomorphic. We further have
$|\calR^b[S]| =|\hat{\calR}| = 2k$.

We next define the (injective) map from $\calR^{b}[T]$ to $\colR^{b}\setminus \hat{\calR}$ . 
  For any ray $R=r_0,r_1,\dots,r_b \in \calR^b[T]$ we proceed as follows. 
  \begin{itemize}
      \item  If $r_0$ is not among the designated sources $s_i^j$, then $R$  is fully contained in $T'$ (see Figure~\ref{fig:G}) and thus $R$ is a ray in $\colT$. We   map $R$ to itself. Note that $R$ is external since it is fully contained in $T'$.
      \item Otherwise, $r_0 =s_i^j$  and $R$  is one of the rays depicted as black dashed lines in Figure~\ref{fig:G}. Since $\colT$ is edge-colourful, and by construction of $\hat{G}$, $\colT$ contains a path $R'=x,r_1,\dots,r_b$ where $x\in V_i^j$. (See Figure~\ref{fig:I}.) 
      If $x$ has degree~$1$ in $\colT$ then $\colT$ is disconnected, which is not true. If $x$ has degree~$2$ in $\colT$ then $\colT$ has a non-external ray which is longer than~$b$, which is also a contradiction by Lemma~\ref{lem:longRays}. Thus, $x$ has degree at least~$3$ in $\colT$, and $R'$ is an external ray.
      We map $R$ to $R'$.
     \end{itemize}
This concludes the proof of Claim~1 for the case where 
$ P \in \colR^{b} \setminus \hat{\calR}$.  $\blacksquare$

\medskip

\noindent \textbf{Claim 2:} Suppose that
there is an integer $t'$ such that $a<t'<b  $. Suppose that $P\in \colR^{t'} \cup \colP^{t'}$. Then $P$ is external.
 
 In order to explain the proof of Claim~2, 
recall that we have established the following facts about $2$-paths in $\colT$ in Lemma~\ref{lem:longRays} and Claim~1.
\begin{itemize}
    \item Every $2$-path of length greater than $b$ is external.
    \item Every $2$-path of length $b$ is either a ray in $\hat{\calR}$ or is   external.
\end{itemize}
With those $2$-paths covered, the proof of Claim~2 is analogous to the proof of Lemma~\ref{lem:longRays}. $\blacksquare$

\medskip

Using Claims~1 and~2 we will now prove parts (c) and (f) of \eqref{firstgoal}.
For each $2$-path whose length is larger than~$a$, we have already shown that it is in $\hat{\calR}$ or we have shown that it is external. In order to prove (c) we will show that,
for each edge $\{v_i,v_{i'}\}$ of~$\Delta$ with colour $s$,
the sequence of edges in $\colT$ between $V_i^1$ and $V_{i'}^1$ is the union of two disjoint length-$a$ rays.
This is formalised as follows.
 
  Note that for each edge $\{v_i,v_j\}$ of $\Delta$ coloured by the $3$-edge-colouring $C$ with $s$, there is a path $P_{i,j}$ of length $2a$ from $v_i^1$ to $v_j^1$. Recall that $c_E$ maps edges of $G$ to edges of $Q$. We write $\colT(i,j)$ for the subgraph of $\colT[G]$ induced by edges $e$ of $G$ such that $c_E(e)$ is in the path $P_{i,j}$. By construction, $\colT(i,j)$ is the union of some number of paths.
We will next argue that it is the union of exactly two disjoint length-$a$ paths:
\begin{itemize}
\item If $\colT(i,j)$ has more than two components then at least one component is disconnected from $T'$ in $\colT$, contradicting the fact that $\colT$ is a tree. 
\item If $\colT(i,j)$ is a single path then it is contained in a 2-path of length at least $2a$. Since this $2$-path contains an edge in~$G$, it is not external (Definition~\ref{def:hi}). Additionally, it is not included in $\hat{\calR}$. This contradicts the aforementioned fact that each $2$-paths of length at least $a+1$ is external or included in the set $\hat{\mathcal{R}}$. 
\item If $\colT(i,j)$ is the union of exactly two disjoint paths, one of which has length larger than~$a$,
then this larger path yields a contradiction similarly to the previous case.
\end{itemize}

What we have shown is that $T(i,j)$ consists of two length-$a$ paths.
One of these paths is from $V_i^1$ and the other is from $V_j^1$.  
To be more precise and to fix the notation, we have now shown that,  for each $i\in[k]$, $\colT[G]$ contains a path $R_a(i,1)$ of length $a$ that starts at a vertex $\hat{w}_i^1\in V_i^1$.
We refer to the other end of this path as $u_a(i,1)$. The vertex $u_a(i,1)$ has degree $1$ and is contained in $S$ (i.e., in $c^{-1}(s)$). 
So we have established Part (c) of item~\eqref{firstgoal}.
Consider Figure~\ref{fig:M} for an illustration of all the information we gathered so far. (The vertices labelled $z_i^j$ and the edge set $E_i^a$ in the figure will be discussed below).

\begin{figure}
    \centering
    \includegraphics[scale=0.85]{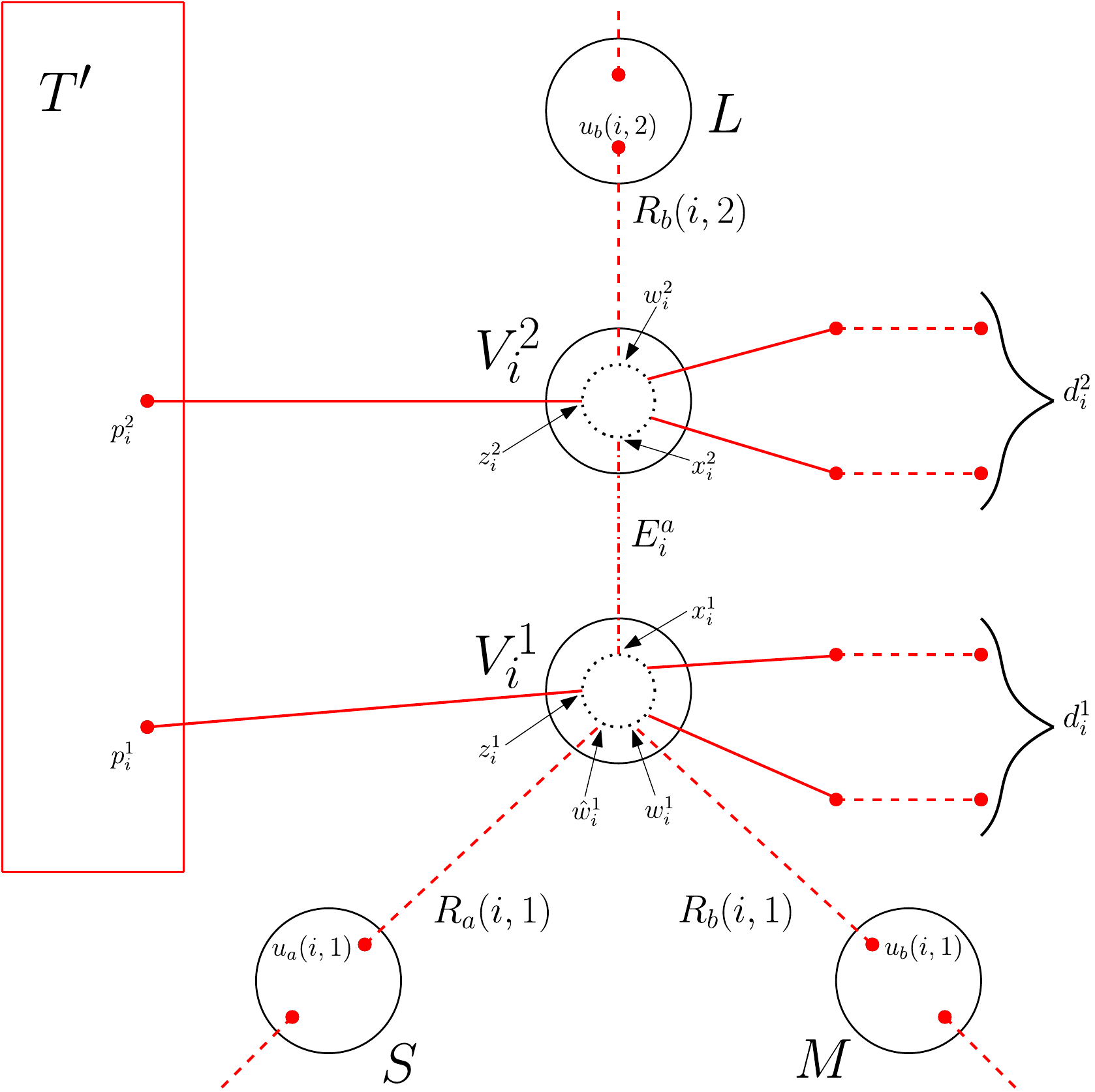}
    \caption{Depiction of the embedding of $\colT$ as established after Claim~2 (in the proof of Lemma~\ref{lem:main_forks}). Solid lines depict edges, dashed lines depict paths, and dash-dotted lines depict sequences of edges (the identification of the endpoints of which has not yet been determined). Note that we have not yet determined how the endpoints inside of the colour classes $V_i^1$ and $V_i^2$ are identified either; this is depicted by the dotted circles inside these colour classes. Proving that the embedding of $\colT$ is as depicted in Figure~\ref{fig:K} requires us to show that all endpoints in $V_i^2$ are identified, and that all endpoints in $V_i^1$, \textbf{except for} $x_i^1$, are identified.  }
    \label{fig:M}
\end{figure}

To finish the proof of item~\eqref{firstgoal} we will show part (f) and the rest of part (b). We take these together.
Recall that for every $i\in[k]$ there is a path 
$P_i^a = v_i^1, y_1,\dots,y_{a-1},v_i^2 $
of length $a$ in $Q$ from $v_i^1$ to $v_i^2$. Since $\colT$ is edge-colourful, it includes 
each of the colours of the edges on this path exactly once --- these colours are $\gamE^{-1}(\{v_i^1,y_1\})$,$\gamE^{-1}(\{y_1,y_2\})$, $\dots$,$\gamE^{-1}(\{y_{a-1},v_i^2 \})$.
Under the edge colouring~$c_E$, the same edges of $\colT$ are coloured with the colours  $\{v_i^1,y_1\}$, $\{y_1,y_2\}$, $\dots$, $\{y_{a-1},v_i^2 \}$.  

Let $e_1,\dots,e_a$ be the edges of $\colT$ with those colours; we write $E_i^a$ for this set of edges (as is depicted in Figure~\ref{fig:M}). We let $x_i^1$ be the vertex of $\colT$ which is contained in $V_i^1$ and incident to $e_1$, and we let $x_i^2$ be the vertex of $\colT$ which is contained in $V_i^2$ and incident to $e_a$. Let $z_i^1$ and $z_i^2$ be the vertices of $\colT$ in $V_i^1$ and $V_i^2$ that are adjacent to $p_i^1$ and $p_i^2$ --- those vertices are depicted in Figure~\ref{fig:M} and we point out that, a priori, $x_i^1$ might be equal to to $z_i^1$ and $x_i^2$ might be equal to $z_i^2$.

\medskip

\noindent \textbf{Claim 3:}   There are no vertices in $V(\colT) \cap V_i^1$ other than $z_i^1$, $x_i^1$, $w_i^1$, $\hat{w}_i^1$ and vertices in the $d_i^1$ rays.

To prove Claim 3, assume for contradiction that $z$ is such a vertex. 
Recall that $V_i^1$ is an independent set (because vertices in $V_i^1$ all receive the same colour under $c$.)
Since $\colT$ is connected, $z$ has a neighbour outside of $V_i^1$ but all of the edge colours incident to $V_i^1$ are already used. $\blacksquare$

The proof of the following claim is similar.

\medskip

\noindent \textbf{Claim 4:} There are no vertices in $V(\colT) \cap V_i^2$ other than $z_i^2$, $x_i^2$, $w_i^2$, and vertices in the $d_i^2$ rays. $\blacksquare$

\medskip

\noindent \textbf{Claim 5:} Both $z_i^1$ and $z_i^2$ have degree at least $3$ in $\colT$.
We prove the claim for $z_i^1$; an analogous argument applies for $z_i^2$. Assume first for contradiction that $z_i^1$ has degree $1$. Since $\colT$ is connected,  Claim 2.5 implies that $|V(\colT) \cap V_i^1| =2$
so $x_i^1 = w_i^1 = \hat{w}_i^1$ and the depicted vertices in the $d_i^1$ rays are also identified with this vertex. By Definition~\ref{def:invalidTrees}, $\colT$ is invalid, giving a
 contradiction.

Now assume for contradiction that $z_i^1$ has degree $2$. We consider two subcases: 
\begin{itemize}
    \item $z_i^1$ is identical to $x_i^1$. Then $\colT$ is disconnected, which yields a contradiction.
    \item $z_i^1$ is identical to $w_i^1$ or $\hat{w}_i^1$. This is an immediate contradiction since sources cannot have degree $2$ (recall that we already established $R_a(i,1)$ and $R_b(i,2)$ to be rays).
    \item $z_i$ is incident to the first edges of one of the additional $d_i^1$ outgoing paths. However, in this case, $\colT$ can only be connected if there is precisely one further vertex of $\colT$ in $V_i^1$ that is incident to all outgoing edges not covered by $z_i^1$. However, in this case, $\colT$ is an invalid tree, yielding the desired contradiction.
\end{itemize}
Since the three cases above are exhaustive, the proof of Claim 5 is concluded. $\blacksquare$

\medskip

Next we need the following property:

\noindent \textbf{Claim 6:} Let $t$ be a positive integer. If $t< a$ then each ray in $\colR^t$ is external.

For the proof, recall that $|\calR^t|=|\colR^t|$ since $T$ and $\colT$ are isomorphic. Note that each ray $R$ of length $t$ of $T$ is either fully contained in $T'$, or it is one of the $d_i^j$ black rays for some $(i,j)\in[k]\times[2]$. (See Figure~\ref{fig:G}) If $R$ is fully contained in $T'$, then $R$ is also contained in $\colR^t$ and it is external. 

If $R=r_0,r_1,\dots,r_t$ is one of the $d_i^j$ black rays, then $\colT$ contains a path $R'=y_0,r_1,\dots,r_t$ for some $y_0\in V_i^j$. Suppose that   $y_0$ has degree at least $3$ in $\colT$.
Then, as in Claim~1,   $R'$ is then an external ray, and we are finished.
We next consider the case where $y_0$ has degree~$1$ or~$2$ in $\colT$.  

If the degree is $1$, then $\colT$ is disconnected, leading to a contradiction. If the degree is $2$, then $y_0 \neq z_i^j$ by Claim 5. Thus, the only way for $\colT$ not being disconnected is $y_0=x_i^j$ and $\colT[E_i^a]$ is a path. However, then we obtained a ray of length at least $a+t$ which is neither external, nor in the set $\hat{\calR}$. Thus, we obtain a contradiction by either Claim~2 ($a+t<b$), or by Claim~1 ($a+t=b$), or by Lemma~\ref{lem:longRays} ($a+t>b$). This concludes the proof of Claim 6. $\blacksquare$

\medskip

Next, observe that $\colT$ cannot connect $z_i^1$ and $z_i^2$ via a path within $G$, that is, via a path containing the edges $E_i^a$: Otherwise $\colT$ would contain a cycle since $p_i^1$ and $p_i^2$ are connected by a path within $T'$. We will see that $z_i^1$ and $z_i^2$ are sources of $\colT$.

Let $\mathcal{S}$ be the set of all sources of $T$. Consider the \emph{multi-set} of leaf-degrees of $T$
\[\degl(\mathcal{S}) := \{\{\degl(s)~|~s\in\mathcal{S} \}\} \,.\]
Let $\colS$ be the set of all sources of $\colT$ and let $\degl(\colS)$ be the muti-set of leaf-degrees $\colT$.
Since $\colT$ and $T$ are isomorphic, the multi-sets $\degl(\mathcal{S})$ and $\degl(\colS)$ are equal. 

Suppose that  $s\in\mathcal{S}$ is a source of $T$ not among the designated sources $s_i^j$. Then $s$ is contained in $T'$, and it is also a source of $\colT$. 
Since all of the $z_i^j$ have degree at least $3$ in $\colT$ (by Claim 5), 
they cannot be part of further rays with source~$s$ in~$\colT$ so  $s$ has the same leaf-degree in $T$ and in $\colT$.

We next show that for each $i\in[k]$, the set $V_i^1 \cup V_i^2$ contains at least $2$ sources of $\colT$: Either $z_i^1$ is a source  or it is connected by a 2-path within $\colT[G]$ to another source. However, the only vertices reachable in $\colT[G]$ from $z_i^1$ that can have degree at least $3$ are contained in $V_i^2$. 
Similarly, either $x_i^2$ is a source or it is connected by a 2-path within $\colT[G]$ to a source in~$V_i^1$. 
We have already seen that $z_i^1$ cannot be connected to $z_i^2$ within $\colT[G]$. Thus the sources reachable from $z_i^1$ and $z_i^2$ within $\colT[G]$ must be distinct, and we have shown that for each $i\in[k]$, the set $V_i^1 \cup V_i^2$ contains at least $2$ sources of $\colT$. 

Since $\colT$ and $T$ have the same number of sources, and since $2k$ sources of $T$ are not contained in $T'$, we have thus shown that for each $i\in[k]$, the set $V_i^1 \cup V_i^2$ contains \emph{precisely} $2$ sources of $\colT$; let us denote those $2$ sources by $\hat{z}_i^1$ and $\hat{z}_i^2$. 

Now, consider the following subsets of $\mathcal{S}$ and $\colS$: 
\begin{itemize}
    \item ${\mathcal{S}'}:=\{s_1^1,s_1^2,\dots,s_k^1,s_k^2\}$ is the set of designated sources.
    \item ${\colSprime}:=\{\hat{z}_1^1,\hat{z}_1^2,\dots,\hat{z}_k^1,\hat{z}_k^2\}$ is the set of sources of $\colT$ in $G$ (within $\hat{G}$).
\end{itemize}
Since we already know that $\degl(\mathcal{S}\setminus {\mathcal{S}'}) = \degl(\colS \setminus {\colSprime})$ (those are the sources in $T'$), we require $\degl({\mathcal{S}'})=\degl(\colSprime)$ for $T$ and $\colT$ to be isomorphic.

What follows is the final claim within the proof of this lemma.

\medskip

\noindent \textbf{Claim 7:}
For all $i\in[k]$, the following five conditions are satisfied:
\begin{itemize}
    \item $\{z_i^1,z_i^2 \} = \{\hat{z}_i^1,\hat{z}_i^2\}$, that is, $z_i^1$ and $z_i^2$ are the two sources in $V_i^1\cup V_i^2$.
    \item $\colT$ contains precisely $2$ vertices in $V_i^1$: One is $z_i^1$ and one is $x_i^1$.
    \item $x_i^1$ has degree $1$. Further, $z_i^1$, $w_i^1$, $\hat{w}_i^1$ and all the endpoints of the $d_i^1$ rays are the same.
    \item $\colT$ contains precisely $1$ vertex in $V_i^2$. Further, $z_i^2$, $x_i^2$, $w_i^2$ and all endpoints of the $d_i^2$ rays are the same.
    \item $\colT[E_i^a]$ is a ray with source $z_i^2(=x_i^2=w_i^2)$.
\end{itemize}
Before proving Claim 7, we point out that~\eqref{bul2} and~\eqref{bul6} follow immediately from Claim 7; see Figure~\ref{fig:K} and observe that $\colT[E_i^a]$ becomes the ray $R_a(i,2)$, and $x_i^1$ becomes the endpoint $u_a(i,2)$ of $R_a(i,2)$ for each $i\in[k]$. Thus the proof of this lemma is concluded if Claim 7 is proved, which is done below:

\begin{itemize}
    \item  
  We first show that $\{z_i^1,z_i^2 \} = \{\hat{z}_i^1,\hat{z}_i^2\}$ for each $i\in[k]$. Let $\Phi=\sum_{s \in {\mathcal{S}'}}\degl(s)$ and $\colPhi= \sum_{s \in {\colSprime}}\degl(s)$. Observe that $\degl({\mathcal{S}'})=\degl({\colSprime})$ implies $\Phi=\colPhi$.
   
We start by observing that   \[\degl(\hat{z}_i^1) + \degl(\hat{z}_i^2) \leq (d_i^1 + 2) + (d_i^2 + 1) + 2.\]
There are $d_i^1$ rays from $V_i^1$ and also $R_a(i,1)$ and $R_b(i,1)$.
There are $d_i^2$ rays from $V_i^2$ and also $R_b(i,2)$.
There is also $E_i^a$ which could form two rays.

We next show that $E_i^a$ cannot form two rays.   
Assume for contradiction that is does. Since $\colT$ is connected,  $z_i^1,w_i^1,\hat{w}_i^1,x_i^1$ and all the endpoints of the $d_i^1$ rays are identical, and $z_i^2,w_i^2,x_i^2$ and all the endpoints of the $d_i^2$ rays are identical. 

Now, if $\colT[E_i^a]$ would be the disjoint union of two rays of length less than $a$ with sources $z_i^1$ and $z_i^2$ then those rays are  non-external rays of length less than $a$, contradicting Claim 6. 
We have now shown   
    \begin{equation}\label{eq:potential}
        \degl(\hat{z}_i^1) + \degl(\hat{z}_i^2) \leq (d_i^1 + 2) + (d_i^2 + 2)\,.
    \end{equation}

Next, note that by definition of the $d_i^j$ (see Figure~\ref{fig:G}), the following holds:
\begin{equation}\label{eq:leafdegrees}
    (d_i^1 + 2) + (d_i^2 + 2) = \degl(s_i^1) + \degl(s_i^2) 
\end{equation}

We have now shown that 
$$   \degl(\hat{z}_i^1) + \degl(\hat{z}_i^2) \leq \degl(s_i^1) + \degl(s_i^2).$$
Finally, we will show that $z_i^1$ and $z_i^2$ are sources to finish the first bullet point.

Consider $z_i^1$, and recall that is has degree at least $3$ by Claim 5, and assume for contradiction that it is not a source of $\colT$. Then $z_i^1=x_i^1$, and $\colT[E_i^a]$ is a path, and $x_i^2$ is source (since it is the only vertex in $V(\colT)\cap V_i^2$ that might have degree at least $3$, except for $z_i^2$). Note that this also implies that $z_i^2$ is a source. Thus $\{\hat{z}_i^1,\hat{z}_i^2\}=\{x_i^2,z_i^2\}$. In this case, we have \[\degl(\hat{z}_i^1) + \degl(\hat{z}_i^2) \leq d_i^2 + 1 < \degl(s_i^1) + \degl(s_i^2)\,.\]
Consequently, using~\eqref{eq:potential} and~\eqref{eq:leafdegrees}, we have $\colPhi < \Phi$, which is  a contradiction. Thus $z_i^1$ is a source of $\colT$, and a similar argument shows that $z_i^2$ is a source of $\colT$ as well.
 \item We  now prove the remaining items. In what follows, using the previous bulleted item, we can assume that w.l.o.g.\ $\hat{z}_i^1=z_i^1$ and $\hat{z}_i^2=z_i^2$ for all $i\in[k]$. First, recall that we ordered the $s_i^j$ by their leaf-degrees, that is \[\degl(s_1^1) \geq \degl(s_1^2) \geq \dots \geq \degl(s_k^2)\geq 2\,.\]

If $x_1^1$ were  equal to $z_1^1$, then $\colT$ can only be connected if there is only one vertex in $V_1^1$, that is, all edges incident to $V_1^1$ are in fact incident to $z_1^1$. However, in that case, we have $\degl(z_1^1)=\degl(s_1^1)+1$ (by construction of $\hat{G}$), and thus the multi-sets cannot be equal anymore. Hence $x_1^1 \neq z_1^1$. 

If $x_1^1$  had degree $2$, then  there would have been a ray of length at least $a+1$ that originates in $V_1^2$ (otherwise $\colT$ would have been disconnected). However, this ray would neither be external, nor among the rays in $\hat{\calR}$, contradicting either Lemma~\ref{lem:longRays} or the previous sequence of claims. Finally, if $x_1^1$ had degree at least $3$, then $\colT$ would have contained more sources than $T$, which also yields a contradiction.

This shows that $x_1^1$  has degree $1$. However, this implies that $\colT$ can only contain one vertex in $V_i^2$; otherwise $\colT$ would be disconnected. 
Note that we have just proved the remaining items of Claim 7 for $i=1$. Additionally, we have shown that $\degl(z_1^1)=\degl(s_1^1)$ and $\degl(z_1^2)=\degl(s_1^2)$
Hence we can remove those two numbers from the multi-sets and continue recursively with $i=2$. This concludes the proof of Claim 7, and thus the proof of the overall lemma.
\end{itemize}

\end{proof}

We are now ready to conclude the case for trees of unbounded fork number.
\begin{lemma}\label{lem:success_forks}
Let $\calT$ be a recursively enumerable class of trees of unbounded fork number. Then $\oplus\subsprob(\calT)$ is $\oplus\W{1}$-hard.
\end{lemma}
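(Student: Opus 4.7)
The plan is to emulate the proofs of Lemma~\ref{lem:TreeHardnessCaseH} and Lemma~\ref{lem:TreeHardnessCaseStar}: reduce from $\oplus\cphomsprob(\mathcal{Q})$ for a suitably chosen class $\mathcal{Q}$ of unbounded treewidth, via the constructions of Section~\ref{sec:caseF} combined with complexity monotonicity (Lemma~\ref{lem:complexity_monotonicty}) and edge-colour removal (Lemma~\ref{lem:remove_edgecols}).

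Before invoking those constructions, I first dispense with the case in which $\calT$ contains trees with $2$-paths of unbounded length: in that case I would reduce from counting $s$-$t$-paths modulo~$2$ exactly as in the first paragraph of the proof of Lemma~\ref{lem:TreeHardnessCaseH}, the argument there not relying on forks. Hence I may assume that the length of every $2$-path in every $T \in \calT$ is bounded by some constant~$d$; in particular every ray has length at most~$d$. Applying the pigeonhole principle to the finite set of pairs $(a,b) \in [d]^2$ and using that $\fork_{a,b}$ is unbounded over~$\calT$, I obtain fixed integers $1 \leq a \leq b \leq d$ and, for every~$k$, a tree $T_k \in \calT$ with $\fork_{a,b}(T_k) \geq 2k$, equipped with $2k$ pairwise non-adjacent designated $(a,b)$-forks.

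Next, let $\mathcal{C}$ be a class of cubic bipartite graphs of unbounded treewidth, each equipped with a proper $3$-edge-colouring (which exists by Hall's theorem). For each $\Delta \in \mathcal{C}$, write $k := |V(\Delta)|$, set $T_\Delta := T_k$, and let $Q_\Delta$ and $\tau_\Delta$ be the graph and fracture from Definitions~\ref{def:Q_caseF} and~\ref{def:tau_caseF}; set $\mathcal{Q} := \{Q_\Delta : \Delta \in \mathcal{C}\}$. Since each~$\Delta$ is a minor of~$Q_\Delta$ and treewidth is minor-monotone, $\mathcal{Q}$ has unbounded treewidth, so $\oplus\cphomsprob(\mathcal{Q})$ is $\oplus\W{1}$-hard by Theorem~\ref{thm:cphom_lower_bound}. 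The reduction then proceeds in three standard steps: (i) Möbius inversion over the fracture lattice expresses $\oplus\embs{(\fracture{Q}{\tau},c_\tau)}{\star}$ as a linear combination of terms $\oplus\homs{(\fracture{Q}{\rho},c_\rho)}{\star}$ in which the coefficient of the top term $\oplus\homs{(Q,\mathsf{id}_Q)}{\star}$ equals $\prod_{v\in V(Q)}(-1)^{|\tau_v|-1}(|\tau_v|-1)!$ and is therefore odd, since every $\tau_v$ of Definition~\ref{def:tau_caseF} has at most two blocks; Lemma~\ref{lem:complexity_monotonicty} then computes $\oplus\homs{(Q,\mathsf{id}_Q)}{\star}$ using oracle access to $\oplus\embs{(\fracture{Q}{\tau},c_\tau)}{\star}$. (ii) For each host-graph query, Lemma~\ref{lem:main_forks} equates $\oplus\embs{(\fracture{Q}{\tau},c_\tau)}{(G',c')}$ with $\oplus\colsubs{T_\Delta}{(\hat{G},\hat{\gamma})}$, where $(\hat{G},\hat{\gamma})$ is built from $(G',c')$ via Definition~\ref{def:hatG_caseF}. (iii) Lemma~\ref{lem:remove_edgecols} evaluates the latter using our oracle for $\oplus\subsprob(\calT)$. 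Recursive enumerability of $\calT$ lets us compute $T_\Delta$ from~$Q$ in time depending only on~$|Q|$, so the whole procedure is an FPT Turing reduction.

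The main obstacle, absent from the proofs for $\hal$-gadgets and stars, is enforcing the hypothesis of Lemma~\ref{lem:main_forks} that $|c^{-1}(v)|$ is odd for every $v \in V(Q)$. I would handle this by preprocessing: both~$Q$ and $\fracture{Q}{\tau}$ are free of isolated vertices (the latter because it is isomorphic to a disjoint union of paths by Observation~\ref{obs:iso_forks}), so an isolated vertex added to any host graph is never in the image of a colour-preserving homomorphism from~$Q$ nor of a colour-preserving embedding from $\fracture{Q}{\tau}$. Consequently, for each $v \in V(Q)$ whose colour class has even cardinality, I may add a single isolated vertex of colour~$v$ to the host; this yields a $Q$-coloured graph with all colour classes of odd cardinality and with every homomorphism and embedding count that the reduction cares about preserved. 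Applying this preprocessing to the initial host graph and to every host graph supplied to the embedding-count oracle by the complexity-monotonicity algorithm lets me legally invoke Lemma~\ref{lem:main_forks} throughout the reduction, completing the chain $\oplus\cphomsprob(\mathcal{Q}) \fptred \oplus\subsprob(\calT)$.
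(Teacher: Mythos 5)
Your proposal is correct and follows essentially the same route as the paper: the same reduction chain from $\oplus\cphomsprob(\mathcal{Q})$ for cubic bipartite expanders via M\"obius inversion, complexity monotonicity (Lemma~\ref{lem:complexity_monotonicty}) and edge-colour removal (Lemma~\ref{lem:remove_edgecols}), and the same key trick of padding each even colour class with a fresh isolated vertex (harmless because $\fracture{Q}{\tau}$ has no isolated vertices) so that the odd-cardinality hypothesis of Lemma~\ref{lem:main_forks} can be met. Your preliminary steps --- falling back to the $s$-$t$-path reduction when $2$-paths are unbounded, and otherwise fixing $a$ and $b$ by pigeonhole --- are details the paper leaves implicit, and you handle them correctly.
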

\begin{proof}
We proceed similarly to Lemma~\ref{lem:success_Hgadgets}. However, we have to take care of some subtleties. First, we start with a class $\mathcal{C}$ of cubic \emph{bipartite} graphs of unbounded treewidth.
Next, we wish to rely on   Lemma~\ref{lem:main_forks}  to obtain the identity 
\[\oplus\embs{(\fracture{Q}{\tau},c_\tau)}{(G,c)} = \oplus\colsubs{T}{(\hat{G},\hat{\gamma})},\]
where $\tau$ is the fracture defined in Definition~\ref{def:tau_caseF}. Unfortunately, Lemma~\ref{lem:main_forks} only yields the above identity if, for each $v\in V(Q)$, $|c^{-1}(v)|$ is odd, that is, each colour class of vertices of $G$ has odd cardinality. However, this property  can easily be achieved. Let $(G',c')$ be the $Q$-coloured graph obtained from $(G,c)$ by adding to each even colour class one fresh isolated vertex. Since $\fracture{Q}{\tau}$ does not have isolated vertices, this operation does not change the number of colour-preserving embeddings. In combination with Lemma~\ref{lem:main_forks} we thus obtain
\[\oplus\embs{(\fracture{Q}{\tau},c_\tau)}{(G,c)} = \oplus\embs{(\fracture{Q}{\tau},c_\tau)}{(G',c')} = \oplus\colsubs{T}{(\hat{G'},\hat{\gamma})}.\]
From here on, we can proceed analogously to the proof of Lemma~\ref{lem:success_Hgadgets}.
\end{proof}

\subsection{The Dichotomy Theorem for Trees}\label{sec:collect_trees}
We are now able to prove Theorem~\ref{thm:main}, i.e., an exhaustive and explicit parameterised complexity classification for counting trees modulo $2$:
\main*
\begin{proof}
The fixed-parameter tractability result, as well as the fact that $\oplus\subsprob(\calT)$ is always contained in $\oplus\W{1}$ were both shown in~\cite{CurticapeanDH21}. Hence, it remains to prove $\oplus\W{1}$-hardness if $\calT$ is not matching splittable.

By Lemma~\ref{lem:non-matchsplit-trees} each class $\calT$ of trees that is not matching splittable has unbounded $\hal$-number, unbounded star number, or unbounded fork number. Finally, each of these three cases yields $\oplus\W{1}$-hardness as established by Lemmas~\ref{lem:success_Hgadgets}, \ref{lem:success_stars}, and~\ref{lem:success_forks}.
\end{proof}

\section{Conclusion and Open Questions}

Given a class $\calH$ of patterns, the problem
$\oplus\subsprob(\calH)$ asks, given as input a graph $H\in \calH$ and an arbitrary graph~$G$, to count the subgraphs of~$G$ that are isomorphic to~$H$. 

This work is motivated by the conjecture of Curticapean, Dell and Husfeldt (Conjecture~\ref{conj:subsmod2})
that $\oplus \subsprob(\calH)$ is FPT if and only if $\calH$ is \emph{matching splittable}.

Recall that  
the \emph{matching-split number} of  $H$ is the minimum size of a set $S\subseteq V(H)$ such that $H\setminus S$ is a matching. The class $\calH$ is  \emph{matching splittable} if there is a positive integer $B$ such that the matching-split number of any $H\in\calH$ is at most $B$.  

In this work,
Theorem~\ref{thm:main_hereditary} proves the conjecture for every hereditary class of graphs.
Theorem~\ref{thm:main} proofs the conjecture for every class $\calH$ of trees.

Clearly, the most important task for related future work is to fully resolve the conjecture. Our work has shown that the use of edge-colours, formalised by the framework of fractured graphs, makes it possible to bypass the problem caused by  automorphism groups that have even cardinality. We think that this approach will be useful  for future work.

%\section*{Declaration of Conflicts of Interest}
%We do not have any conflicts of interest.

%%
%% Bibliography
%%

%% Please use bibtex, 

\bibliography{subgraphs}

\end{document}